\newif\ifextended\extendedtrue
\newcommand*{\symDefine}[2]{\newcommand{{#1}}{{#2}}}
\symDefine{\symMVP}{\text{MVP}}
\symDefine{\symUpdateEvent}{A}
\symDefine{\symMasked}{a}
\symDefine{\symBase}{b}
\symDefine{\symCount}{c}
\symDefine{\symNumExtraBits}{d}
\symDefine{\symTransformation}{f}
\symDefine{\symSomeFunc}{f}
\symDefine{\symRegContrib}{g}
\symDefine{\symRegContribCorr}{\symRegContrib_\text{corr}}
\symDefine{\symRegMartingale}{h}
\symDefine{\symHashBit}{h}
\symDefine{\symRegAddr}{i}
\symDefine{\symIndexJ}{j}
\symDefine{\symRegAddrOther}{j}
\symDefine{\symUpdateVal}{k}
\symDefine{\symUpdateValOther}{l}
\symDefine{\symIndexBit}{l}
\symDefine{\symIndexBitSimple}{\tilde{\symIndexBit}}
\symDefine{\symNumReg}{m}
\symDefine{\symCardinality}{n}
\symDefine{\symCardinalityMax}{\symCardinality_\text{max}}
\symDefine{\symCardinalityEstimatorCorrected}{\symCardinalityEstimator_\text{corr}}
\symDefine{\symCardinalityEstimatorML}{\symCardinalityEstimator_\text{ML}}
\symDefine{\symCardinalityEstimatorLow}{\symCardinalityEstimator_\text{low}}
\symDefine{\symCardinalityEstimatorHigh}{\symCardinalityEstimator_\text{high}}
\symDefine{\symCardinalityEstimator}{\hat{\symCardinality}}
\symDefine{\symCardinalityEstimatorAlt}{\symCardinalityEstimator_\text{alt}}
\symDefine{\symCardinalityEstimatorMartingale}{\symCardinalityEstimator_\text{martingale}}
\symDefine{\symPrecision}{p}
\symDefine{\symProbability}{\Pr}
\symDefine{\symRegisterBit}{q}
\symDefine{\symBitsForMax}{q}
\symDefine{\symRegister}{r}
\symDefine{\symRegisterSimple}{\tilde{\symRegister}}
\symDefine{\symSum}{s}
\symDefine{\symIndexS}{s}
\symDefine{\symMaxUpdateVal}{u}
\symDefine{\symMaxUpdateValSimple}{\tilde{\symMaxUpdateVal}}
\symDefine{\symVarianceFactor}{v}
\symDefine{\symMaxUpdateValMax}{w}
\symDefine{\symStatisticX}{X}
\symDefine{\symX}{x}
\symDefine{\symEventX}{X}
\symDefine{\symEventY}{Y}
\symDefine{\symY}{y}
\symDefine{\symZ}{z}
\symDefine{\symZEstimate}{\hat{z}}
\symDefine{\symLikelihood}{\mathcal{L}}
\symDefine{\symFisher}{\mathcal{I}}
\symDefine{\symShannon}{\mathcal{H}}
\symDefine{\symComplexity}{\mathcal{O}}
\symDefine{\symLikelihoodFuncExponentOne}{\alpha}
\symDefine{\symAlpha}{\alpha}
\symDefine{\symLikelihoodFuncExponentTwo}{\beta}
\symDefine{\symBeta}{\beta}
\symDefine{\symGamma}{\gamma}
\symDefine{\symGammaFunc}{\Gamma}
\symDefine{\symDensity}{\rho}
\symDefine{\symDensityUpdate}{\symDensity_\text{update}}
\symDefine{\symDensityRegister}{\symDensity_\text{reg}}
\symDefine{\symDensityRegisterSimple}{\tilde{\symDensity}_\text{reg}}
\symDefine{\symSomeConstant}{\kappa}
\symDefine{\symCubicContribFunc}{\psi}
\symDefine{\symContributionCoefficient}{\eta}
\symDefine{\symEstimationConstant}{\lambda}
\symDefine{\symNewConstant}{\omega}
\symDefine{\symSmallRangeCorrFunc}{\sigma}
\symDefine{\symLargeRangeCorrFunc}{\varphi}
\symDefine{\symZetaFunc}{\zeta}
\symDefine{\symStateChangeProbability}{\mu}
\symDefine{\symGRA}{\tau}
\DeclareMathOperator*{\symBias}{Bias}
\DeclareMathOperator*{\symVariance}{Var}
\DeclareMathOperator*{\symExpectation}{\mathbb{E}}
\def\mathcolor#1#{\@mathcolor{#1}}
\def\@mathcolor#1#2#3{%
 \protect\leavevmode
 \begingroup
 \color#1{#2}#3%
 \endgroup
}
\newtheorem{lemma}{Lemma}
\newcommand{\myAlg}[2][]{
 \ifthenelse{\isempty{#1}}%
 {\begin{figure}}% if #1 is empty
 {\begin{figure}[#1]}% if #1 is not empty
 \begingroup % trick algorithm2e into thinking we're in one column mode
 \csname @twocolumnfalse\endcsname
 \noindent
 \resizebox{\columnwidth}{!}{%
 \begin{minipage}{1.33\columnwidth}
 \begin{algorithm}[H]
 \DontPrintSemicolon
 {#2}
 \end{algorithm}
 \end{minipage}%
 }% <------------- end of \resizebox
 \endgroup
 \end{figure}
}
\begin{document}
\title{UltraLogLog: A Practical and More Space-Efficient Alternative to HyperLogLog for Approximate Distinct Counting}

\author{Otmar Ertl}
\affiliation{%
  \institution{Dynatrace Research}
  \city{Linz}
  \country{Austria}
}
\email{otmar.ertl@dynatrace.com}

\begin{abstract}
  Since its invention HyperLogLog has become the standard algorithm for approximate distinct counting. Due to its space efficiency and suitability for distributed systems, it is widely used and also implemented in numerous databases. This work presents UltraLogLog, which shares the same practical properties as HyperLogLog. It is commutative, idempotent, mergeable, and has a fast guaranteed constant-time insert operation. At the same time, it requires 28\% less space to encode the same amount of distinct count information, which can be extracted using the maximum likelihood method. Alternatively, a simpler and faster estimator is proposed, which still achieves a space reduction of 24\%, but at an estimation speed comparable to that of HyperLogLog. In a non-distributed setting where martingale estimation can be used, UltraLogLog is able to reduce space by 17\%. Moreover, its smaller entropy and its 8-bit registers lead to better compaction when using standard compression algorithms. All this is verified by experimental results that are in perfect agreement with the theoretical analysis which also outlines potential for even more space-efficient data structures. A production-ready Java implementation of UltraLogLog has been released as part of the open-source Hash4j library.
\end{abstract}

\maketitle

\pagestyle{plain}
\section{Introduction}
Many applications require counting the number of distinct elements in a data set or data stream. It is well-known that exact counting needs linear space \cite{Alon1999}. However, the space requirements can be drastically reduced, if approximate results suffice. \ac{HLL} \cite{Flajolet2007} with improved small-range estimation \cite{Heule2013, Qin2016, Zhao2016, Ertl2017, Ting2019} represents the state-of-the-art approximate distinct-count algorithm. It allows distributed counting of up to the order of $2^{64}\approx1.8\cdot10^{19}$ distinct elements with a relative standard error of $1.04/\sqrt{\symNumReg}$ using only $6\symNumReg$ bits \cite{Heule2013}. Therefore, it is nowadays offered by a big number of data stores as part of their query language (see e.g. documentation of Timescale, Redis, Oracle Database, Snowflake, Microsoft SQL Server, Google BigQuery, Vertica, Elasticsearch, Aerospike, Amazon Redshift, KeyDB, DuckDB, or Dynatrace Grail). Further applications of \ac{HLL} include query optimization \cite{Freitag2019, Pavlopoulou2022}, caching \cite{Wires2014}, graph analysis \cite{Boldi2011, Priest2018}, attack detection \cite{Chabchoub2014, Clemens2023}, network volume estimation \cite{Basat2018}, or metagenomics \cite{Baker2019, Marcais2019, Elworth2020, Breitwieser2018}.

\ac{HLL} is actually very simple as exemplified in \cref{alg:insertion_hll}. It typically consists of a densely packed array of 6-bit registers $\symRegister_0, \symRegister_1, \ldots, \symRegister_{\symNumReg-1}$ \cite{Heule2013} where the number of registers $\symNumReg$ is a power of 2, $\symNumReg = 2^\symPrecision$. The choice of the precision parameter $\symPrecision$ allows trading space for better estimation accuracy. Adding an element requires calculating a 64-bit hash value. $\symPrecision$ bits are used to choose a register for the update. The \ac{NLZ} of the remaining $64-\symPrecision$ bits are interpreted as a geometrically distributed update value with success probability $\frac{1}{2}$ and positive support, that is used to update the selected register, if its current value is smaller. Estimating the distinct count from the register values is more challenging, but can also be implemented using a few lines of code \cite{Flajolet2007, Ertl2017}.

The popularity of \ac{HLL} is based on following advantageous features that make it especially useful in distributed systems:
\begin{description}[style=unboxed,leftmargin=0cm]
  \item[Speed:] Element insertion is a fast and allocation-free operation with a constant time complexity independent of the sketch size. In particular, given the hash value of the element, the update requires only a few CPU instructions.
  \item [Idempotency:] Further insertions of the same element will never change the state. This is actually a natural property every count-distinct algorithm should support to prevent duplicates from changing the result.
  \item [Mergeability:] Partial results calculated over subsets can be easily merged to a final result. This is important when data is distributed or processed in parallel.
  \item [Reproducibility:] The result does not depend on the processing order, which often cannot be guaranteed in practice anyway. Reproducibility is achieved by a commutative insert operation and a commutative and associative merge operation.
  \item [Reducibility:] The state can be reduced to a smaller state corresponding to a smaller precision parameter. The reduced state is identical to that obtained by direct recording with lower precision. This property allows adjusting the precision without affecting the mergeability with older records.
  \item [Estimation:] A fast and robust estimation algorithm ensures nearly unbiased estimates with a relative standard error bounded by a constant over the full range of practical distinct counts.
  \item [Simplicity:] The implementation requires only a few lines of code. The entire state can be stored in a single byte array of fixed length which makes serialization very fast and convenient. Furthermore, add and in-place merge operations do not require any additional memory allocations.
\end{description}
To our knowledge \ac{HLL} is so far the most space-efficient practical data structure having all these desired properties. Space-efficiency can be measured in terms of the \ac{MVP} \cite{Pettie2021a}, which is the relative variance of the (unbiased) distinct-count estimate $\symCardinalityEstimator$ multiplied by the storage size in bits
\begin{equation}
  \label{equ:def_mvp}
  \symMVP := \symVariance(\symCardinalityEstimator/\symCardinality)\times(\text{storage size in bits}),
\end{equation}
where $\symCardinality$ is the true distinct count.
If the \ac{MVP} is asymptotically (for sufficiently large distinct counts) a constant specific to the data structure, it can be used for comparison as it eliminates the general inverse dependence of the relative estimation error on the root of the storage size. Most \ac{HLL} implementations use 6-bit registers \cite{Heule2013} to support distinct counts beyond the billion range, resulting in a \ac{MVP} of 6.48 \cite{Pettie2021a}. A recent theoretical work conjectured a general lower bound of 1.98 for the \ac{MVP} of sketches supporting mergeability and reproducibility \cite{Pettie2021}, which shows the potential for improvement. Many different approaches have been proposed to beat the space efficiency of \ac{HLL}, but they all sacrificed at least one of the properties listed above.

\subsection{Related Work}

\myAlg{
  \caption{Inserts an element with 64-bit hash value $\langle\symHashBit_{63} \symHashBit_{62} \ldots \symHashBit_{0}\rangle_2$ into a \acl*{HLL} consisting of $\symNumReg = 2^\symPrecision$ ($\symPrecision\geq 2$) 6-bit registers $\symRegister_0, \symRegister_1, \ldots, \symRegister_{\symNumReg-1}$ with initial values $\symRegister_\symRegAddr = 0$.}
  \label{alg:insertion_hll}
  $\symRegAddr\gets \langle\symHashBit_{63} \symHashBit_{62}\ldots\symHashBit_{64-\symPrecision}\rangle_2$\Comment*[r]{extract register index}
  $\symMasked\gets \langle\,\underbracket[0.5pt][1pt]{0\ldots 0}_{\scriptscriptstyle\symPrecision}\!\symHashBit_{63-\symPrecision}\symHashBit_{62-\symPrecision} \ldots\symHashBit_{0}\rangle_2$\Comment*[r]{mask register index bits}
  $\symUpdateVal \gets \FuncNLZ{$\symMasked$} -\symPrecision + 1$\Comment*[r]{update value $\symUpdateVal \in [1, 65-\symPrecision]$}
  \Comment*[r]{function $\FuncNLZ$ returns the \acl*{NLZ}}
  $\symRegister_\symRegAddr \gets \max(\symRegister_\symRegAddr, \symUpdateVal)$\Comment*[r]{update register}
}

Lossless compression of \acf{HLL} can significantly reduce the storage size \cite{Durand2004,Scheuermann2007,Lang2017}. Since the compressed state prevents random access to registers as required for insertion, bulking is needed to realize at least amortized constant update times. The required buffer partially cancels out the memory savings. Recent techniques avoid buffering of insertions. The Apache Data Sketches library \cite{ApacheDataSketches} provides an implementation using 4 bits per register to store the most frequent values relative to a global offset. Out of range values are kept separately in an associative array. Overall, this leads to a smaller \ac{MVP} but also to a more expensive insert operation. Its runtime is proportional to the memory size in the worst case, because all registers must be updated whenever the global offset is increased. HyperLogLogLog \cite{Karppa2022} takes this strategy to the extreme with 3-bit registers and achieves a space saving of around 40\% at the expense of an insert operation which, except for very large numbers, has been reported to be on average more than an order of magnitude slower compared to \ac{HLL} \cite{Karppa2022}.

Interestingly, lossless compression of \ac{PCSA} \cite{Flajolet1985}, a less-space efficient (when uncompressed) predecessor of \ac{HLL} also known as FM-sketch, yields a smaller \ac{MVP} than compression of \ac{HLL} \cite{Scheuermann2007, Lang2017}. The \ac{CPC} sketch as part of the Apache Data Sketches library \cite{ApacheDataSketches} uses this finding. The serialized representation of the \ac{CPC} sketch achieves a \ac{MVP} of around 2.31 \cite{SketchesFeatureMatrix} that is already quite close to the conjectured lower bound of 1.98 \cite{Pettie2021}. However, the need of bulked updates to achieve amortized constant-time insertions more than doubles the memory footprint which also makes serialization significantly slower than for the original \ac{HLL}. Similar to all compressed variants of \ac{HLL}, the insert operation of the \ac{CPC} sketch takes time proportional to the sketch size in the worst case.

\begin{table}[t]
  \caption{Notation}
  \label{tab:notation}
  \scriptsize
  \begin{tabular*}{\linewidth}{@{\extracolsep{\fill}} ll}
    \toprule
    Symbol
    &
    Comment
    \\
    \midrule
    $\symCardinality$
    &
    distinct count
    \\
    $\symCardinalityEstimator$
    &
    distinct count estimate
    \\
    $\symBase$
    &
    base, $\symBase > 1$, defines distribution of update values, compare \eqref{equ:geometric}
    \\
    $\symPrecision$
    &
    precision parameter
    \\
    $\symNumReg$
    &
    number of registers, $\symNumReg = 2^\symPrecision$
    \\
    $\symMaxUpdateValMax$
    &
    maximum possible update value, cf. \cref{sec:stat_model}
    \\
    $\symBitsForMax$
    &
    number of bits used for storing the maximum occurred update values, $2^\symBitsForMax > \symMaxUpdateValMax$
    \\
    $\symNumExtraBits$
    &
    number of additional register bits to indicate updates with smaller values
    \\
    $\symRegister_\symRegAddr$
    &
    value of $\symRegAddr$-th register , $0\leq \symRegAddr < \symNumReg$, $0\leq \symRegister_\symRegAddr < (\symMaxUpdateValMax+1) 2^{\symNumExtraBits}$
    \\
    $\symMaxUpdateVal_\symRegAddr$
    &
    maximum occurred update value for $\symRegAddr$-th register, $\symMaxUpdateVal_\symRegAddr = \lfloor\symRegister_\symRegAddr/2^\symNumExtraBits\rfloor$, $0 \leq \symMaxUpdateVal_\symRegAddr \leq \symMaxUpdateValMax$
    \\
    $\symCount_\symIndexJ$
    &
    number of registers with value $\symIndexJ$, $\symCount_{\symIndexJ} := |\lbrace \symRegAddr\vert \symRegister_\symRegAddr = \symIndexJ \rbrace|$
    \\
    $\symDensityRegister$
    &
    \acf*{PMF} of register values, see \eqref{equ:register_pmf}
    \\
    $\symDensityRegisterSimple$
    &
    approximated \acs*{PMF} of register values, see \eqref{equ:simple_register_pmf}
    \\
    $\symLikelihood$
    &
    likelihood function, $\symLikelihood = \symLikelihood(\symCardinality\vert\symRegister_0\ldots \symRegister_{\symNumReg-1})$
    \\
    $\symFisher$
    &
    Fisher information, $\symFisher = \symExpectation(-\partial^2 /\partial \symCardinality^2\ln\symLikelihood)$, see \eqref{equ:fisher}
    \\
    $\symShannon$
    &
    Shannon entropy, $\symShannon=\symExpectation(-\log_2 \symLikelihood)$, see \eqref{equ:shannon_entropy}
    \\
    $\symGRA$
    &
    free parameter of \acf*{GRA} estimators
    \\
    $\symZ_\symUpdateVal$
    &
    short notation for $\exp(-\symCardinality(\symBase-1)/(\symNumReg\symBase^{\symUpdateVal}))$, compare \eqref{equ:prob_event_a}
    \\
    $\symGammaFunc$
    &
    gamma function, $\symGammaFunc(\symX):=\int_0^\infty \symY^{\symX-1}e^{-\symY}d\symY$
    \\
    $\symZetaFunc$
    &
    Hurvitz zeta function, $\symZetaFunc(\symX, \symY) := \sum_{\symMaxUpdateVal=0}^\infty (\symMaxUpdateVal + \symY)^{-\symX}= \frac{1}{\symGammaFunc(\symX)}\int_0^\infty \frac{\symZ^{\symX-1}e^{-\symY\symZ}}{1-e^{-\symZ}}d\symZ$
    \\
    $\symStateChangeProbability$
    &
    state change probability
    \\
    $\langle\ldots\rangle_2$
    &
    binary representation, e.g. $\langle 110\rangle_2 = 6$
    \\
    $\lfloor\ldots\rfloor$
    &
    floor function, e.g. $\lfloor 3.7\rfloor = 3$
    \\
    \bottomrule
  \end{tabular*}
\end{table}

Lossy compression of \ac{HLL} has also been proposed \cite{Xiao2020, Xiao2017}. However, like other approaches such as HyperBitBit/HyperBit \cite{Sedgewick2022}, they trade idempotency for less space and are therefore risky to use \cite{Pettie2021}. In contrast, sacrificing mergeability for less space is of greater practical interest \cite{Chen2011, Helmi2012,Lu2023}. When the data is not distributed and a merge operation is not actually needed, the \ac{MVP} of \ac{HLL} can be reduced by 36\% down to $4.16$ by martingale also known as \ac{HIP} estimation \cite{Ting2014, Cohen2015}. The theoretical limit of \ac{MVP} for non-mergeable sketches is at most 1.63 \cite{Pettie2021a}, and is also nearly reached by the serialized representation of the \ac{CPC} sketch \cite{SketchesFeatureMatrix}. Data structures with in-memory representations that can be updated in constant time in the worst case are \ac{HLL} with vectorized counters \cite{Bruschi2021} and the martingale curtain sketch which achieves a \ac{MVP} of 2.31 \cite{Pettie2021a}. As the insertions are not commutative, all these non-mergeable approaches also do not support reproducibility.

The only mergeable data structure we know of that is more space-efficient than \ac{HLL} while having essentially the same properties, in particular constant-time worst-case updates, is \ac{EHLL} \cite{Ohayon2021}. It extends the \ac{HLL} registers from 6 to 7 bits to store not only the maximum update value, but also whether there was an update with a value smaller by one. This additional information can be used to obtain more accurate estimates. In particular, the \ac{MVP} is reduced by 16\% to 5.43. The only thing missing to make it really practical is an estimator for small distinct counts. As with the original \ac{HLL} \cite{Flajolet2007}, it was only proposed to switch to the linear probabilistic count estimator \cite{Whang1990}. However, this is problematic because the estimation error in the transition region can be large \cite{Heule2013, Ertl2017}. Nevertheless, \ac{EHLL} motivated us to generalize its basic idea by extending \ac{HLL} registers even further.

\subsection{Summary of Contributions}
We first describe a data structure that generalizes the known data structures \acf{HLL} \cite{Flajolet2007}, \acf{EHLL} \cite{Ohayon2021}, and \acf{PCSA} \cite{Flajolet1985}. We derive analytic expressions for the Fisher information and the Shannon entropy as functions of the data structure parameters. Although these expressions reveal even more space-efficient configurations that could be the subject of future research, the focus of this work is on a setting that leads to a very practical data sketch called \acf{ULL} with a \ac{MVP} of 4.63 which is 28\% below that of \ac{HLL}. Since the Shannon entropy is also 24\% smaller for the same estimation error, \ac{ULL} is also more compact when using lossless compression. Moreover, our experimental results indicate that standard compression algorithms additionally benefit from the 8-bit register size of \ac{ULL}.

To extract all the information contained in the \ac{ULL} sketch, we applied the \ac{ML} method that achieves an estimation error as theoretically predicted by the Cram\'er-Rao bound \cite{Casella2002}. Alternatively, we present a faster approach based on a further generalization of the recently proposed \acf{GRA} estimator \cite{Wang2023}. Even though our theoretical analysis shows a smaller estimation efficiency than the \ac{ML} estimator, the \ac{MVP} with a value of 4.94 still corresponds to a 24\% space reduction compared to \ac{HLL}. As the \ac{GRA} estimator, the basic version of our new estimator works only for distinct counts that are neither too small nor too large. Therefore, we developed two additional estimators specific to those ranges that are also easy to evaluate. Using a novel approach, they are seamlessly combined with the basic estimator to cover the full range of distinct count values. We also analyzed martingale estimation, which can be used for non-distributed data, and found that in this case \ac{ULL} reduces the \ac{MVP} by 17\% to 3.47 compared to \ac{HLL}.

All theoretically derived estimators were verified by intensive simulations, which all show perfect agreement with the theoretically predicted estimation errors. In particular, we use a technique that allows verification for distinct counts on the order of $2^{64}$, which is not possible using traditional simulations. Finally, we also present the results of speed benchmarks, which show that \ac{ULL} is similarly fast as \ac{HLL}.

An implementation of \ac{ULL} is publicly available as part of the Hash4j open-source Java library at \url{https://github.com/dynatrace-oss/hash4j}.
Detailed instructions together with the necessary source code to reproduce all presented results and figures can be found at \url{https://github.com/dynatrace-research/ultraloglog-paper}. A version of this paper extended by an appendix with mathematical derivations and proofs is also available \cite{Ertl2023}.

\section{Generalized Data Structure}
\label{sec:data_structure}

We start by introducing a data structure for approximate distinct counting that generalizes \acf{HLL} \cite{Flajolet2007}, \acf{EHLL} \cite{Ohayon2021}, and \acf{PCSA} \cite{Flajolet1985}. As those, it consists of $\symNumReg$ registers which are initially set to zero. For every added element a uniformly distributed hash value is computed. This hash value is used to extract a uniform random register index $\symRegAddr\in[0, \symNumReg)$ and some geometrically distributed integer value with \ac{PMF}
\begin{equation}
  \label{equ:geometric}
  \symDensityUpdate(\symUpdateVal) = (\symBase - 1)\symBase^{-\symUpdateVal} \qquad \symUpdateVal \geq 1, \symBase > 1,
\end{equation}
parameterized by the base parameter $\symBase$, that is used to update the $\symRegAddr$-th register.
Each register consists of $\symBitsForMax + \symNumExtraBits$ bits. $\symBitsForMax$ bits are used to store the maximum update value $\symMaxUpdateVal_\symRegAddr$ seen so far. The remaining $\symNumExtraBits$ bits indicate whether there have been any updates with values $\symMaxUpdateVal_\symRegAddr-1,\ldots,\symMaxUpdateVal_\symRegAddr-\symNumExtraBits$, respectively. Obviously, this update procedure is idempotent meaning that further occurrences of the same elements will never change any register state. As a consequence, the final state of this data structure can be used to estimate the number of inserted distinct elements $\symCardinality$.

Every register state can be described by an integer value $\symRegister_\symRegAddr$ with $0\leq \symRegister_\symRegAddr <2^{\symBitsForMax + \symNumExtraBits}$. We assume that the most significant $\symBitsForMax$ bits of $\symRegister_\symRegAddr$ are used to store $\symMaxUpdateVal_\symRegAddr$, which can therefore be simply obtained by $\symMaxUpdateVal_\symRegAddr = \lfloor\symRegister_\symRegAddr/2^\symNumExtraBits\rfloor$. As example for $\symBitsForMax=6$ and $\symNumExtraBits=2$, $\symRegister_\symRegAddr=\langle00011010\rangle_2$ would mean that the largest update value was $\symMaxUpdateVal_\symRegAddr = \langle000110\rangle_2 = 6$. The right-most $\symNumExtraBits$ bits indicate that the register was also already updated with a value of $5$ but not yet with $4$.
If the register gets further updated with value 8, the state would become $\symRegister_\symRegAddr=\langle00100001\rangle_2$ where the first $\symBitsForMax=6$ bits encode $\symMaxUpdateVal_\symRegAddr = \langle001000\rangle_2 = 8$ and the left-most $\symNumExtraBits=2$ bits indicate that there was no update with value 7 but one with 6. Information about smaller update values is lost.
If $\symMaxUpdateVal_\symRegAddr\leq \symNumExtraBits$, there are only $\symMaxUpdateVal_\symRegAddr-1$ smaller update values and therefore only $\symMaxUpdateVal_\symRegAddr-1$ of the $\symNumExtraBits$ extra bits are relevant and some values like $\symRegister_\symRegAddr=\langle00001001\rangle_2$ for $\symBitsForMax=6$ and $\symNumExtraBits=2$ cannot be attained. Enumerating just possible states would lead to a slightly more compact encoding. However, for the sake of simplicity and also to avoid special cases, we refrain from this small improvement.

In practice, the number of registers $\symNumReg$ is usually some power of 2, $\symNumReg = 2^\symPrecision$ with $\symPrecision$ being the precision parameter. In this way, a uniform random register index can be chosen by just taking $\symPrecision$ bits from the hash value. Furthermore, the parameter $\symBase$ is often 2 such that the update value $\symUpdateVal$ can be easily obtained from the \acf{NLZ} of the remaining hash bits and therefore usually requires just a single CPU instruction. Obviously, the cases $\symBase=2$, $\symNumExtraBits=0$ and $\symBase=2$, $\symNumExtraBits=1$ correspond to \ac{HLL} and \ac{EHLL}, respectively. Furthermore, since \ac{PCSA} effectively keeps track of any update values, the stored information corresponds to our generalized data structure with $\symBase=2$, $\symNumExtraBits\rightarrow\infty$. (\ac{PCSA} typically uses just 64 bits for each register which is sufficient as update values greater than 64 are unlikely for real-world distinct counts.) Another advantage of choosing $\symNumReg$ as a power of 2 and $\symBase=2$ is that the data structure can be implemented in such a way that it can later be reduced to a smaller precision parameter \cite{Ertl2017}. The result will then be identical as if the smaller precision parameter was chosen from the beginning. This is important for migration scenarios where precision needs to be changed in a way that is compatible and mergeable with historical data.

\subsection{Statistical Model}
\label{sec:stat_model}

For simplification, we use the common Poisson approximation \cite{Flajolet2007, Ertl2017, Wang2023} that the number of inserted distinct elements is not fixed, but follows a Poisson distribution with mean $\symCardinality$. As a consequence, since updates are evenly distributed over all registers,
the number of updates with value $\symUpdateVal$ per register is again Poisson distributed with mean $\symCardinality \symDensityUpdate(\symUpdateVal)/\symNumReg=\symCardinality(\symBase-1)/(\symNumReg\symBase^{\symUpdateVal})$.
The probability that a register was updated with value $\symUpdateVal$ at least once, denoted by event $\symUpdateEvent_\symUpdateVal$, is therefore
\begin{equation}
  \label{equ:prob_event_a}
  \textstyle
  \symProbability(\symUpdateEvent_\symUpdateVal) =
  1 - \symZ_\symUpdateVal\quad \text{with $\symZ_\symUpdateVal := \exp(-\symCardinality(\symBase-1)/(\symNumReg\symBase^{\symUpdateVal}))$}.
\end{equation}
The probability that $\symMaxUpdateVal$ was the largest update value, which implies that there were no updates with values greater than $\symMaxUpdateVal$, is given by $\symProbability( \symUpdateEvent_\symMaxUpdateVal\wedge \bigwedge_{\symUpdateVal=\symMaxUpdateVal+1}^\infty \overline\symUpdateEvent_{\symUpdateVal}) = (1-\symZ_\symMaxUpdateVal) \prod_{\symUpdateVal=\symMaxUpdateVal+1}^\infty\symZ_{\symUpdateVal} = \symZ_{\symMaxUpdateVal}^{\frac{1}{\symBase-1}}
  (1-\symZ_{\symMaxUpdateVal})
$.
The Poisson approximation results in registers that are independent and identically distributed. For the generalized data structure the corresponding \acf{PMF} can be written as
\begin{align}
  \label{equ:register_pmf}
   & \symDensityRegister(\symRegister\vert\symCardinality) = \symProbability(\symRegister_\symRegAddr = \symRegister) =
  \\
   & \begin{cases}
       \scriptstyle\symZ_0^{\frac{1}{\symBase-1}}
        &
       \scriptstyle\symRegister = 0,
       \\
       \scriptstyle\symZ_{\symMaxUpdateVal}^{\frac{1}{\symBase-1}}
       (1-\symZ_{\symMaxUpdateVal})
       \prod_{\symIndexJ = 1}^{\symMaxUpdateVal-1}
       \symZ_{\symMaxUpdateVal-\symIndexJ}^{1-\symIndexBit_\symIndexJ}
       (1- \symZ_{\symMaxUpdateVal-\symIndexJ})^{ \symIndexBit_\symIndexJ}
        &
       \scriptstyle\symRegister = \symMaxUpdateVal 2^\symNumExtraBits + \langle\symIndexBit_1\ldots\symIndexBit_{\symMaxUpdateVal-1}\rangle_2 2^{\symNumExtraBits + 1- \symMaxUpdateVal},\ 1 \leq \symMaxUpdateVal\leq \symNumExtraBits,
       \\
       \scriptstyle\symZ_{\symMaxUpdateVal}^{\frac{1}{\symBase-1}}
       (1-\symZ_{\symMaxUpdateVal})
       \prod_{\symIndexJ = 1}^\symNumExtraBits
       \symZ_{\symMaxUpdateVal-\symIndexJ}^{1-\symIndexBit_\symIndexJ}
       (1- \symZ_{\symMaxUpdateVal-\symIndexJ})^{ \symIndexBit_\symIndexJ}
        &
       \scriptstyle\symRegister = \symMaxUpdateVal 2^\symNumExtraBits + \langle\symIndexBit_1\ldots\symIndexBit_{\symNumExtraBits}\rangle_2,\ \symNumExtraBits+1 \leq \symMaxUpdateVal< \symMaxUpdateValMax,
       \\
       \scriptstyle(1-\symZ^{\frac{1}{\symBase-1}}_{\symMaxUpdateValMax-1})
       \prod_{\symIndexJ = 1}^\symNumExtraBits
       \symZ_{\symMaxUpdateValMax-\symIndexJ}^{1-\symIndexBit_\symIndexJ}
       (1- \symZ_{\symMaxUpdateValMax-\symIndexJ})^{ \symIndexBit_\symIndexJ}
        &
       \scriptstyle\symRegister = \symMaxUpdateValMax 2^\symNumExtraBits + \langle\symIndexBit_1\ldots\symIndexBit_{\symNumExtraBits}\rangle_2,
       \\
       \scriptstyle 0
        &
       \scriptstyle \text{else.}
     \end{cases}\nonumber
\end{align}
This formula takes into account that update values are limited to the range $[1, \symMaxUpdateValMax]$.
The upper limit is on the one hand a consequence of the number of register bits reserved for storing the maximum update value. If $\symBitsForMax$ bits are used, the update values must be truncated at $2^\symBitsForMax - 1$ because higher update values cannot be stored. On the other hand, the way the geometrically distributed integer values are typically determined also leads to an upper limit.
For example, the update values in \cref{alg:insertion_hll} do not exceed $65 - \symPrecision$, which results from extracting the update value and the register index from a single 64-bit hash value.

For our theoretical analysis, we consider a simplified model. We assume that registers are initially set to $-\infty$ and that there are no restrictions on the update values. In particular, there are also updates with non-positive values $\symUpdateVal\leq 0$ with corresponding (virtual) events $\symUpdateEvent_\symUpdateVal$ occurring with probabilities according to \eqref{equ:prob_event_a}. Then the \ac{PMF} for a register simplifies to
\begin{multline}
  \label{equ:simple_register_pmf}
  \symDensityRegisterSimple(\symRegister\vert\symCardinality) =
  \symProbability(\symRegister_\symRegAddr = \symRegister)
  =
  \symZ_{\symMaxUpdateVal}^{\frac{1}{\symBase-1}}
  (1-\symZ_{\symMaxUpdateVal})
  \prod_{\symIndexJ = 1}^\symNumExtraBits
  \symZ_{\symMaxUpdateVal-\symIndexJ}^{1-\symIndexBit_\symIndexJ}
  (1- \symZ_{\symMaxUpdateVal-\symIndexJ})^{ \symIndexBit_\symIndexJ}\\
  \text{with}\ \symRegister = \symMaxUpdateVal2^\symNumExtraBits + \langle\symIndexBit_1\ldots\symIndexBit_{\symNumExtraBits}\rangle_2.
\end{multline}
If all register values are in the range $[(\symNumExtraBits+1) 2^\symNumExtraBits, \symMaxUpdateValMax 2^\symNumExtraBits)$, which is usually the case for not too small and not too large distinct counts, \eqref{equ:register_pmf} and \eqref{equ:simple_register_pmf} are equivalent. As a consequence, the following theoretical results will also hold for \eqref{equ:register_pmf} when the distinct count is in the intermediate range.

\subsection{Theoretical Analysis}

Given the register states $\symRegister_0,\ldots,\symRegister_{\symNumReg-1}$, the log-likelihood function can be expressed as
\begin{equation}
  \label{equ:log_likelihood}
  \textstyle
  \ln\symLikelihood =
  \ln\symLikelihood(\symCardinality\vert \symRegister_0,\ldots,\symRegister_{\symNumReg-1})
  =
  \sum_{\symRegAddr=0}^{\symNumReg-1} \ln \symDensityRegister(\symRegister_\symRegAddr\vert\symCardinality).
\end{equation}
When assuming the simplified \ac{PMF} \eqref{equ:simple_register_pmf},
the Fisher information can be written as (see \ifextended \cref{lem:fisher} in \cref{app:proofs}\else extended paper \cite{Ertl2023}\fi)
\begin{equation}
  \label{equ:fisher}
  \symFisher = \symExpectation\!\left(-\frac{\partial^2 }{\partial \symCardinality^2}\ln\symLikelihood\right)
  \approx
  \frac{\symNumReg}{\symCardinality^2}
  \frac{1}{\ln \symBase}\symZetaFunc\!\left(2, 1 + \frac{\symBase^{-\symNumExtraBits}}{\symBase-1}\right)
\end{equation}
using the Hurvitz zeta function $\symZetaFunc$ as defined in \cref{tab:notation}.
The approximation is based on the fact that the Fisher information is a periodic function of $\log_\symBase \symCardinality$ with period 1 and tiny relative amplitude that can be ignored in practice \ifextended (see \cref{lem:approximation} in \cref{app:proofs})\else as shown in the extended paper \cite{Ertl2023}\fi.
Formula \eqref{equ:fisher} matches the results reported for the special cases of generalized \ac{HLL} ($\symNumExtraBits = 0$) and generalized \ac{PCSA} ($\symNumExtraBits \rightarrow\infty$) \cite{Pettie2021}.

According to Cram\'er-Rao \cite{Casella2002} the reciprocal of the Fisher information $\symFisher$ is a lower bound for the variance of any unbiased estimator. Our experiments will show that for a sufficiently large number of registers $\symNumReg$, this lower bound can be actually reached using \acf{ML} estimation. The corresponding asymptotic \ac{MVP} is given by
\begin{equation}
  \label{equ:mvp_uncompressed}
  \symMVP = \symNumReg(\symBitsForMax + \symNumExtraBits) \symVariance\!\left(\frac{\symCardinalityEstimator}{\symCardinality}\right)
  =
  \frac{ \symNumReg(\symBitsForMax + \symNumExtraBits)}{\symFisher \symCardinality^2}
  \approx
  \frac{(\symBitsForMax + \symNumExtraBits)\ln \symBase}{\symZetaFunc(2, 1 + \frac{\symBase^{-\symNumExtraBits}}{\symBase-1})}.
\end{equation}
$(\symBitsForMax + \symNumExtraBits)$ is the number of bits used for a single register. $\symBitsForMax$ bits are used for storing the maximum update value and $\symNumExtraBits$ is the number of extra bits as already described before.
\begin{figure}[t]
  \centering
  \includegraphics[width=\linewidth]{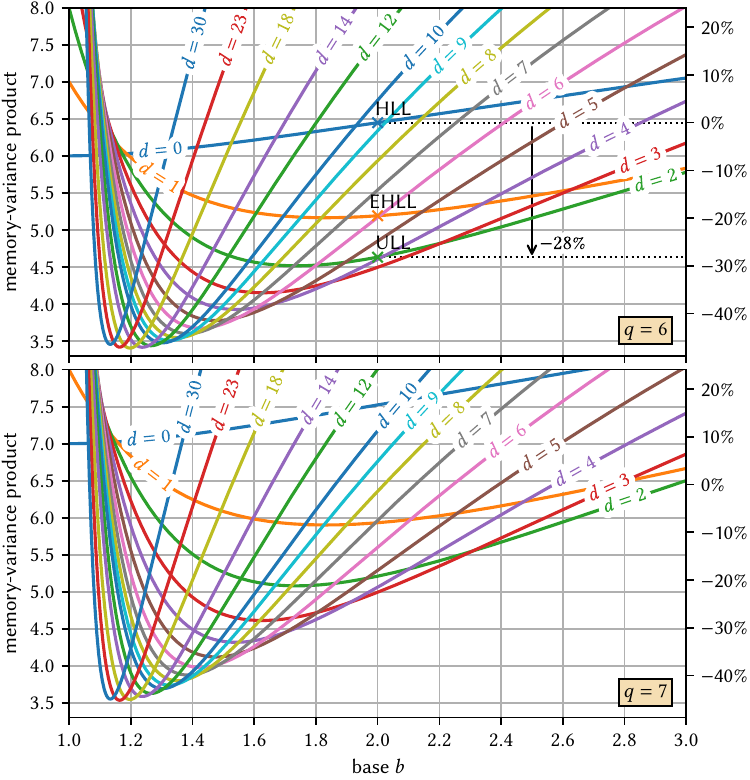}
  \caption{\boldmath The theoretical asymptotic \acf*{MVP} \eqref{equ:mvp_uncompressed} over the base $\symBase$ for $\symBitsForMax=6$ and $\symBitsForMax=7$ and various values of $\symNumExtraBits$ when assuming a memory footprint of $\symNumReg(\symBitsForMax+\symNumExtraBits)$ bits. The top chart shows the 28\% improvement of \acf*{ULL} over \acf*{HLL}.}
  \label{fig:mvp_lower_bound}
\end{figure}
The \ac{MVP} \eqref{equ:mvp_uncompressed} is plotted in \cref{fig:mvp_lower_bound} over the base $\symBase$ for $\symBitsForMax=6$ and $\symBitsForMax=7$ and various values of $\symNumExtraBits$ and allows comparing the memory-efficiencies of different configurations, when assuming that the state takes overall $\symNumReg(\symBitsForMax+\symNumExtraBits)$ bits and is not further compressed.

It is also interesting to study the case where the state is ideally compressed, which means that the number of bits needed to store the state is given by its Shannon entropy.
Considering the resulting \ac{MVP}, also called \ac{FISH} number \cite{Pettie2021}, again allows a better comparison.
Using the simplified \ac{PMF} \eqref{equ:simple_register_pmf}, the Shannon entropy can be approximated by (see \ifextended \cref{lem:shannon} in \cref{app:proofs}\else extended paper \cite{Ertl2023}\fi)
\begin{align}
  \label{equ:shannon_entropy}
  \symShannon
   & =
  \symExpectation(-\log_2\symLikelihood)
  \nonumber
  \\
   & \approx
  \textstyle
  \frac{\symNumReg}{(\ln 2) (\ln \symBase)}
  \left(
  \left(1+\frac{\symBase^{-\symNumExtraBits}}{\symBase-1}\right)^{\!-1}+
  \int_{0}^1 \symZ^{\frac{\symBase^{-\symNumExtraBits}}{\symBase-1}}
  \frac{
    (1-\symZ)
    \ln(1-\symZ)
  }{\symZ\ln \symZ}
  d\symZ
  \right).
\end{align}
Combining \eqref{equ:fisher} and \eqref{equ:shannon_entropy} gives for the \ac{MVP} under the assumption of an efficient estimator and optimal compression
\begin{equation}
  \label{equ:mvp_compressed}
  \symMVP = \symShannon\symVariance\!\left(\frac{\symCardinalityEstimator}{\symCardinality}\right) = \frac{\symShannon}{\symFisher \symCardinality^2}
  \approx
  \textstyle
  \frac{
    (1+\frac{\symBase^{-\symNumExtraBits}}{\symBase-1})^{-1}+
    \int_{0}^1 \symZ^{\frac{\symBase^{-\symNumExtraBits}}{\symBase-1}}
    \frac{
      (1-\symZ)
      \ln(1-\symZ)
    }{\symZ\ln \symZ}
    d\symZ
  }{
    \symZetaFunc(2, 1 + \frac{\symBase^{-\symNumExtraBits}}{\symBase-1})\ln 2 }
  .
\end{equation}
This function is plotted in \cref{fig:mvp_compressed} over the base $\symBase$ for various values of $\symNumExtraBits$. The results agree with the conjecture that the \ac{MVP} of sketches supporting mergeability and reproducibility is fundamentally bounded by \cite{Pettie2021}
\begin{equation*}
  \lim_{\frac{\symBase^{-\symNumExtraBits}}{\symBase-1}\rightarrow 0} \frac{\symShannon}{\symFisher \symCardinality^2}
  \approx
  \frac{1+\int_{0}^1
    \frac{
      (1-\symZ)
      \ln(1-\symZ)
    }{\symZ\ln \symZ}
    d\symZ}{\symZetaFunc(2, 1)\ln 2} \approx 1.98.
\end{equation*}
% https://www.wolframalpha.com/input?i=%281%2BIntegrate%5B%281-z%29*ln%281-z%29%2F%28z*ln%28z%29%29%2C%7Bz%2C0%2C1%7D%5D%29%2F%28zeta%282%2C1%29*ln%282%29%29

\subsection{Choice of Parameters}
\label{sec:parameter-choice}

The data structure proposed in \cref{sec:data_structure} has four parameters, the base $\symBase$, the number of registers $\symNumReg=2^\symPrecision$, the number of register bits $\symBitsForMax$ reserved for storing the maximum update value, and the number of additional bits $\symNumExtraBits$ to indicate the occurrences of the $\symNumExtraBits$ next smaller update values relative to the maximum. The previous theoretical results allow us to find parameters that lead to a small \ac{MVP}. Here we can essentially leave $\symNumReg$ aside since it can be used to define the accuracy/space tradeoff, but has essentially no effect on the \ac{MVP} according to \eqref{equ:mvp_uncompressed} and \eqref{equ:mvp_compressed}.

The parameters $\symBase$ and $\symBitsForMax$ define the operating range of the data structure and must be chosen such that it is very unlikely that all registers get saturated. In other words, the fraction of registers with
$\symRegister_\symRegAddr \geq \symMaxUpdateValMax 2^\symNumExtraBits$ must be small. This requires according to \eqref{equ:register_pmf} that $\symProbability(\symRegister_\symRegAddr < \symMaxUpdateValMax 2^\symNumExtraBits) = \symZ^{\frac{1}{\symBase-1}}_{\symMaxUpdateValMax-1} = \exp(-\frac{\symCardinality}{\symNumReg\symBase^{\symMaxUpdateValMax-1}})\approx 1$ or $\symCardinality \ll \symNumReg \symBase^{\symMaxUpdateValMax-1}$.
Hence, the maximum supported distinct count $\symCardinalityMax$ can be roughly estimated by $\symCardinalityMax \approx \symNumReg \symBase^{\symMaxUpdateValMax-1}$.
If $\symMaxUpdateValMax$ is limited by the register size ($\symMaxUpdateValMax = 2^\symBitsForMax - 1$, cf. \cref{sec:stat_model}), we get $\symCardinalityMax \approx \symNumReg \symBase^{2^\symBitsForMax - 2}$. For \ac{HLL} with $\symNumReg=256$ registers, $\symBase=2$, and a registers size of $\symBitsForMax=5$ bits, as originally proposed, we have $\symCardinalityMax \approx 275\ \text{billions}$ which might not be sufficient in all situations. Therefore, most \ac{HLL} implementations use nowadays $\symBitsForMax=6$ \cite{Heule2013}, which is definitely sufficient for any realistic counts. Some implementations even use a whole byte per register ($\symBitsForMax=8$), but mainly to have more convenient register access in memory \cite{Snowflake}.

Bearing in mind the operating range defined by $\symNumReg$, $\symBase$, and $\symBitsForMax$, we explore the theoretical \ac{MVP} \eqref{equ:mvp_uncompressed} for different configurations as shown in \cref{fig:mvp_lower_bound}. First we consider the case $\symBase=2$ which also covers \ac{HLL} ($\symBase=2$, $\symNumExtraBits=0$) and \ac{EHLL} ($\symBase=2$, $\symNumExtraBits=1$) and is of particular practical interest due to the very fast mapping of hash values to update values as discussed in \cref{sec:data_structure}. For $\symBase=2$, we need to choose $\symBitsForMax=6$, if we want to make sure that any realistic distinct counts can be handled. According to \cref{fig:mvp_lower_bound} the optimal choice for $\symBase=2$ would be $\symNumExtraBits=3$ resulting in a theoretical \ac{MVP} of $4.4940$.
Despite its slightly larger theoretical \ac{MVP} of $4.6313$, the case $\symNumExtraBits=2$ is very attractive from a practical point of view since a register takes $\symBitsForMax + \symNumExtraBits=8$ bits and fits perfectly into a single byte.
This enables fast updates as registers can be directly accessed and modified when stored in a byte array. This is why we picked $\symBase=2$, $\symNumExtraBits=2$, $\symBitsForMax=6$ as configuration for our new data structure called \acf{ULL}. Its theoretical \ac{MVP} is 28\% smaller than that of \ac{HLL} with a register size of $\symBitsForMax=6$ and a \ac{MVP} of $6.4485$. This even corresponds to a 46\% improvement, when compared to \ac{HLL} implementations that use $\symBitsForMax=8$ bits per register \cite{Snowflake} leading to $\symMVP\approx 8.5981$.

\cref{fig:mvp_lower_bound} shows that the \ac{MVP} could be further reduced by choosing bases $\symBase$ other than 2. For $\symBitsForMax=6$, a minimum \ac{MVP} of $3.4030$ is achieved for $\symBase\approx 1.1976$ and $\symNumExtraBits=18$. However, a small $\symBase$ leads to a small operating range. For example, $\symBase=1.1976$ yields $\symCardinalityMax\approx 18\,\text{millions}$ for $\symNumReg=256$ which is too small for many applications. However, the working range can be extended again by increasing $\symBitsForMax$. For $\symBitsForMax=7$ the minimum $\symMVP \approx 3.5338$ is obtained for $\symNumExtraBits=23$ and $\symBase\approx1.1642$ leading to $\symCardinalityMax\approx 53\,\text{billions}$ for $\symNumReg=256$.
From a practical point of view the configuration $\symBitsForMax=7$, $\symNumExtraBits=9$, $\symBase=\sqrt{2}$ with $\symMVP\approx3.9025$ could be an interesting choice for future research.
The operating range would be similar to \ac{ULL} or \ac{HLL}, registers would take exactly two bytes ($\symBitsForMax + \symNumExtraBits = 16$), and the mapping of hash values to update values for $\symBase=\sqrt{2}$ according to \eqref{equ:geometric} could be accomplished without expensive logarithm evaluations or table lookups. The update value can be obtained by doubling the \acf{NLZ} and adding 0 or 1 depending on the remaining hash bits, whose value range is divided into two parts in the ratio $1:\sqrt{2}$. However, a single 64-bit hash value might not be sufficient in this case.

\begin{figure}[t]
  \centering
  \includegraphics[width=\linewidth]{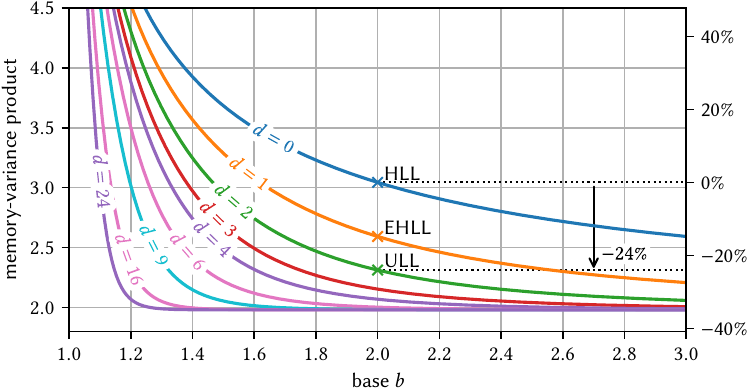}
  \caption{\boldmath The theoretical asymptotic \acf*{MVP} \eqref{equ:mvp_compressed} over the base $\symBase$ for various values of $\symNumExtraBits$ under the assumption of optimal lossless compression. The \acs*{MVP} of \acf*{ULL} is 24\% smaller than that of \acf*{HLL}.}
  \label{fig:mvp_compressed}
\end{figure}

\cref{fig:mvp_compressed} shows the \ac{MVP} under the assumption of optimal lossless compression as given by \eqref{equ:mvp_compressed}. The \ac{ULL} configuration with $\symNumExtraBits=2$ leads to $\symMVP\approx2.3122$ which is 24\% less than for \ac{HLL} with $\symNumExtraBits=0$ and $\symMVP\approx3.0437$.
Therefore, \ac{ULL} is potentially superior to any variant of \ac{HLL} that uses lossless compression techniques \cite{ApacheDataSketches, Karppa2022}, if similar techniques are also applied to \ac{ULL}.

\section{Distinct Count Estimation}

The theoretical results show clearly that \acf{ULL} as a special case of the presented generalized data structure with $\symBase=2$, $\symNumExtraBits=2$ and $\symBitsForMax=6$ encodes distinct count information, uncompressed and compressed, more efficiently than \ac{HLL} with $\symBase=2$, $\symNumExtraBits=0$ and $\symBitsForMax=6$.
However, an open question is whether this information can also be readily accessed using a simple and robust estimation procedure, which will be the focus of the following sections.

\subsection{Maximum-Likelihood Estimator}
Since we know the \acf{PMF} \eqref{equ:register_pmf} when using the Poisson approximation, we can simply use the \acf{ML} method. For $\symBase=2$, the log-likelihood function \eqref{equ:log_likelihood} is shaped like
\begin{equation*}
  \textstyle
  \ln\symLikelihood
  =-\frac{\symCardinality}{\symNumReg} \symLikelihoodFuncExponentOne+
  \sum_{\symMaxUpdateVal=1}^{\symMaxUpdateValMax-1}\symLikelihoodFuncExponentTwo_\symMaxUpdateVal \ln(1 - e^{-\frac{\symCardinality}{\symNumReg 2^\symMaxUpdateVal}} ),
\end{equation*}
where for \ac{ULL} with $\symNumExtraBits=2$ the coefficients $\symLikelihoodFuncExponentOne$ and $\symLikelihoodFuncExponentTwo_\symMaxUpdateVal$ are given by
\begin{align*}
   &
  \scriptstyle\symLikelihoodFuncExponentOne
  \ =\ \symCount_0 + \frac{\symCount_4}{2}+\frac{3 \symCount_8 + \symCount_{10}}{4} + \left(\sum_{\symMaxUpdateVal=3}^{\symMaxUpdateValMax-1} \frac{7\symCount_{4\symMaxUpdateVal}+3\symCount_{4\symMaxUpdateVal + 1}+5\symCount_{4\symMaxUpdateVal + 2}+\symCount_{4\symMaxUpdateVal + 3}}{2^\symMaxUpdateVal}\right)
  +\frac{3\symCount_{4\symMaxUpdateValMax}+\symCount_{4\symMaxUpdateValMax+1}+2\symCount_{4\symMaxUpdateValMax+2}}{2^{\symMaxUpdateValMax-1}},
  \\
   &
  \scriptstyle\symLikelihoodFuncExponentTwo_1
  \ =\ \symCount_4+\symCount_{10}+\symCount_{13}+\symCount_{15},
  \quad
  \symLikelihoodFuncExponentTwo_2
  \ =\ \symCount_8+\symCount_{10}+\symCount_{14}+\symCount_{15}+\symCount_{17}+\symCount_{19},
  \\
   &
  \scriptstyle \symLikelihoodFuncExponentTwo_\symMaxUpdateVal
  \ =\ \symCount_{4\symMaxUpdateVal}+\symCount_{4\symMaxUpdateVal+1}+\symCount_{4\symMaxUpdateVal+2}+\symCount_{4\symMaxUpdateVal+3}+\symCount_{4\symMaxUpdateVal+6}+\symCount_{4\symMaxUpdateVal+7}+\symCount_{4\symMaxUpdateVal+9}+\symCount_{4\symMaxUpdateVal+11}
  \quad \text{for $3\leq\symMaxUpdateVal\leq\symMaxUpdateValMax-2$},
  \\
   &
  \scriptstyle\symLikelihoodFuncExponentTwo_{\symMaxUpdateValMax-1}
  \ =\ \symCount_{4\symMaxUpdateValMax-4}+\symCount_{4\symMaxUpdateValMax-3}+\symCount_{4\symMaxUpdateValMax-2}+\symCount_{4\symMaxUpdateValMax-1}+\symCount_{4\symMaxUpdateValMax}+\symCount_{4\symMaxUpdateValMax+1}+2\symCount_{4\symMaxUpdateValMax+2}+2\symCount_{4\symMaxUpdateValMax+3}.
\end{align*}
$\symCount_{\symIndexJ} := |\lbrace \symRegAddr\vert \symRegister_\symRegAddr = \symIndexJ \rbrace|$ is the number of registers having value $\symIndexJ$.

As the corresponding \ac{ML} equation has the same shape as that for \ac{HLL}, we can reuse the numerically robust solver we developed based on the secant method \cite{Ertl2017}. This algorithm avoids the evaluation of expensive mathematical functions, but is still somewhat costly as the solution needs to be found iteratively.
The \ac{ML} estimate $\symCardinalityEstimatorML$ can be further improved by correcting for the first-order bias \cite{Cox1968}. Applying the correction factor as derived in \ifextended \cref{lem:first_order_ml_bias} in \cref{app:proofs} \else the extended paper \cite{Ertl2023} \fi under the assumption of the simplified \ac{PMF} \eqref{equ:simple_register_pmf} gives for \ac{ULL} with $\symBase=2$ and $\symNumExtraBits=2$
\begin{equation}
  \label{equ:ml_bias_correction}
  \textstyle\symCardinalityEstimator = \symCardinalityEstimatorML \left(1 + \frac{1}{\symNumReg} \frac{3(\ln 2) \symZetaFunc(3,\frac{5}{4})}{2 (\symZetaFunc(2,\frac{5}{4}))^2}\right)^{\!-1}
  \approx \symCardinalityEstimatorML \left(1 + \frac{0.48147}{\symNumReg}\right)^{\!-1}
  % https://www.wolframalpha.com/input?i=3*ln%282%29*zeta%283%2C5%2F4%29%2F%282+*+%28zeta%282%2C5%2F4%29%29%5E2%29
\end{equation}
where $\symZetaFunc$ denotes again the Hurvitz zeta function.

The \ac{ML} method is known to be asymptotically efficient as $\symNumReg\rightarrow\infty$. The experimental results presented later in \cref{sec:estimation_error} show that the \ac{ML} estimate really matches the theoretically predicted \ac{MVP} \eqref{equ:mvp_uncompressed}.
For \ac{HLL}, estimators have been found that are easier to compute than the \ac{ML} estimator while giving almost equal estimates over the whole value range \cite{Ertl2017}. This was our motivation to look for simpler estimators for \ac{ULL}.

\subsection{\acs*{GRA} Estimator}
Recently, the \acf{GRA} estimator was proposed, which can be easily computed and is more efficient than existing estimators for \acf{PCSA} and \acf{HLL} \cite{Wang2023}. Therefore, we investigated if this estimation approach is also suitable for our generalized data structure and in particular for \ac{ULL}. The basic idea is to sum up $\symBase^{-\symGRA\symUpdateVal}$ with some constant $\symGRA>0$ for all update values $\symUpdateVal$ that we know with certainty, based on the current register value $\symRegister = \symMaxUpdateVal 2^\symNumExtraBits + \langle\symIndexBit_1\ldots\symIndexBit_{\symNumExtraBits}\rangle_2$, could not have occurred previously. The corresponding statistic for our generalized data structure can be expressed as $\sum_{\symUpdateVal=\symMaxUpdateVal-\symNumExtraBits}^{\symMaxUpdateVal-1} (1-\symIndexBit_{\symMaxUpdateVal-\symUpdateVal}) \symBase^{-\symGRA\symUpdateVal} + \sum_{\symUpdateVal=\symMaxUpdateVal+1}^\infty \symBase^{-\symGRA\symUpdateVal} = \symBase^{-\symGRA\symMaxUpdateVal}(\frac{1}{\symBase^\symGRA-1}
  +
  \sum_{\symIndexS=1}^{\symNumExtraBits}
  (1-\symIndexBit_{\symIndexS})
  \symBase^{\symGRA\symIndexS})$.
The analysis of the first two moments of this statistic \ifextended (cf. \cref{lem:expectation_gra_reg_contrib,lem:moment_two_gra_reg_contrib} in \cref{app:proofs}) \fi under the assumption of the simplified \ac{PMF} \eqref{equ:simple_register_pmf}, together with the delta method (see \ifextended \cref{lem:cardinality_estimator} in \cref{app:proofs}\else extended paper \cite{Ertl2023}\fi) yields the distinct count estimator
\begin{equation}
  \label{equ:gra_estimator}
  \textstyle
  \symCardinalityEstimator
  =
  \symNumReg^{1+\frac{1}{\symGRA}}
  \cdot
  \left(\sum_{\symRegAddr=0}^{\symNumReg-1} \symRegContrib(\symRegister_\symRegAddr)
  \right)^{\!-\frac{1}{\symGRA}}
  \cdot
  \left(1 + \frac{1+\symGRA}{2}\frac{\symVarianceFactor}{\symNumReg}\right)^{\!-1}
\end{equation}
where the individual register contributions are defined as
\begin{equation}
  \label{equ:gra_reg_contribution}
  \textstyle
  \symRegContrib(\symRegister)
  :=
  \frac{(\symBase-1+\symBase^{-\symGRA})^{\symGRA}\ln \symBase}{\symGammaFunc(\symGRA)}
  \symBase^{-\symGRA\symMaxUpdateVal}
  \left(\frac{1}{\symBase^\symGRA - 1} + \sum_{\symIndexS=1}^\symNumExtraBits (1 - \symIndexBit_\symIndexS) \symBase^{\symIndexS \symGRA}\right)
\end{equation}
with $\symRegister = \symMaxUpdateVal 2^\symNumExtraBits + \langle\symIndexBit_1\ldots\symIndexBit_{\symNumExtraBits}\rangle_2$ and
\begin{equation*}
  \textstyle
  \symVarianceFactor
  =
  {\textstyle
  \frac{1}{\symGRA^2}}\left({\textstyle\frac{\symGammaFunc(2\symGRA)\ln\symBase}{(\symGammaFunc(\symGRA))^2}}\left(
  1 + {\textstyle\frac{2\symBase^{-\symGRA\symNumExtraBits} }{\symBase^{\symGRA}-1 }}
  +
  \sum_{\symIndexS = 1}^\symNumExtraBits
  \frac{2\symBase^{-\symGRA\symIndexS} }{\textstyle \left(1+\frac{(\symBase-1)\symBase^{-\symIndexS}}{\symBase-1+\symBase^{-\symNumExtraBits}}\right)^{\!2\symGRA}}
  \right)
  -1\right).
\end{equation*}
The last factor of the estimator \eqref{equ:gra_estimator} comes from applying the second-order delta method \cite{Casella2002} and corrects some bias similar to the last factor in \eqref{equ:ml_bias_correction}.

\begin{figure}[t]
  \centering
  \includegraphics[width=\linewidth]{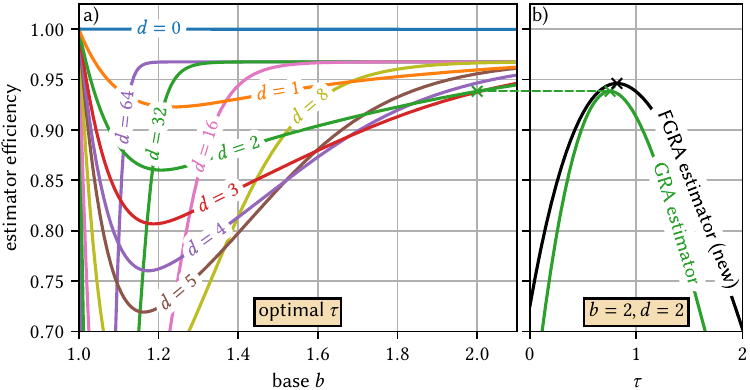}
  \caption{\boldmath a) The asymptotic \acs*{GRA} estimator efficiency over the base $\symBase$ for $\symBitsForMax=6$ and various values of $\symNumExtraBits$. b) The efficiencies of the \acs*{GRA} estimator and our proposed \acs*{FGRA} estimator as a function of $\symGRA$ for $\symBase=2$ and $\symNumExtraBits=2$. Crosses indicate optimal choices of $\symGRA$.}
  \label{fig:gra_efficiency}
\end{figure}

As $\symGRA$ is a free parameter, it is ideally chosen to minimize the variance approximated by (see \ifextended \cref{lem:cardinality_estimator} in \cref{app:proofs}\else extended paper \cite{Ertl2023}\fi)
\begin{equation}
  \label{equ:variance_gra}
  \symVariance(\symCardinalityEstimator/\symCardinality)
  \approx
  \symVarianceFactor/\symNumReg
  +
  \symComplexity(\symNumReg^{-2}).
\end{equation}
For \ac{ULL} with $\symBase=2$ and $\symNumExtraBits=2$, numerical minimization gives $\symVarianceFactor \approx 0.616990$ for $\symGRA\approx0.755097$.
In this case \eqref{equ:gra_reg_contribution} can be written as
\begin{equation}
  \label{equ:reg_contrib_constants}
  \symRegContrib(\symRegister) := 2^{-\symGRA\lfloor\symRegister/4\rfloor} \symContributionCoefficient_{\symRegister \bmod 4}
\end{equation}
with coefficients $\symContributionCoefficient_0 \approx 4.841356$, $\symContributionCoefficient_1 \approx 2.539198$, $\symContributionCoefficient_2 \approx 3.477312$, $\symContributionCoefficient_3 \approx 1.175153$.
The corresponding \ac{MVP} is $8\symVarianceFactor = 4.935917$ which is slightly greater than the theoretical \ac{MVP} \eqref{equ:mvp_uncompressed} with a value of $4.631289$. Hence, the efficiency of the \ac{GRA} estimator is $93.8\%$.

\cref{fig:gra_efficiency}a shows the asymptotic \ac{GRA} estimator efficiency as $\symNumReg\rightarrow\infty$ relative to \eqref{equ:mvp_uncompressed} over the base $\symBase$ for $\symBitsForMax=6$ and for various values of $\symNumExtraBits$. The \ac{GRA} estimator is very efficient for $\symNumExtraBits=0$ and therefore for \ac{HLL}. It can be clearly seen that its efficiency is lower for other configurations with $\symNumExtraBits\geq 1$. Therefore, we investigated if we can find a more efficient estimator for \ac{ULL}, that is as simple and cheap to evaluate as the \ac{GRA} estimator.

\subsection{New \acs*{FGRA} Estimator}

Our idea is to further generalize the \ac{GRA} estimator, by choosing not only $\symGRA$ but also the coefficients $\symContributionCoefficient_0$, $\symContributionCoefficient_1$, $\symContributionCoefficient_2$, and $\symContributionCoefficient_3$ in \eqref{equ:reg_contrib_constants} such that the variance is minimized. This is why we call the new estimator \ac{FGRA} estimator.
The optimal values for $\symContributionCoefficient_0$, $\symContributionCoefficient_1$, $\symContributionCoefficient_2$, and $\symContributionCoefficient_3$ for fixed $\symGRA$ are given by \ifextended (see \cref{lem:expectation_gamma_new,lem:fgra_second_moment,lem:fgra_minimum,lem:cardinality_estimator} in \cref{app:proofs}) \else \cite{Ertl2023} \fi
\begin{equation}
  \label{equ:our_estimator_contrib_coefficients}
  \textstyle
  \symContributionCoefficient_{\symIndexJ} = \frac{\ln \symBase}{\symGammaFunc(\symGRA)}\frac{\symNewConstant_\symIndexJ(\symGRA) }{\symNewConstant_\symIndexJ(2\symGRA) }\left(
  \frac{\symNewConstant_0^2(\symGRA) }{\symNewConstant_0(2\symGRA)}
  +
  \frac{\symNewConstant_1^2(\symGRA) }{\symNewConstant_1(2\symGRA)}
  +
  \frac{\symNewConstant_2^2(\symGRA) }{\symNewConstant_2(2\symGRA)}
  +
  \frac{\symNewConstant_3^2(\symGRA) }{\symNewConstant_3(2\symGRA)}
  \right)^{\!-1}
\end{equation}
where the functions $\symNewConstant_0$, $\symNewConstant_1$, $\symNewConstant_2$, and $\symNewConstant_3$ are defined as
\begin{align*}
  \scriptstyle\symNewConstant_0(\symGRA)
   &
  \scriptstyle\ :=\ \frac{1}{(\symBase^3-\symBase+1)^{\symGRA}}-\frac{1}{\symBase^{3\symGRA}},
  \\
  \scriptstyle\symNewConstant_1(\symGRA)
   &
  \scriptstyle\ :=\ \frac{1}{(\symBase^2-\symBase+1)^{\symGRA}}-\frac{1}{\symBase^{2\symGRA}}-\frac{1}{(\symBase^3-\symBase+1)^{\symGRA}}+\frac{1}{\symBase^{3\symGRA}}, \\
  \scriptstyle\symNewConstant_2(\symGRA)
   &
  \scriptstyle\ :=\ \frac{1}{(\symBase^3-\symBase^2+1)^{\symGRA}}-\frac{1}{(\symBase^3-\symBase^2+\symBase)^{\symGRA}}
  -\frac{1}{(\symBase^3-\symBase+1)^{\symGRA}}+\frac{1}{\symBase^{3\symGRA}},
  \\
  \scriptstyle\symNewConstant_3(\symGRA)
   &
  \scriptstyle\ :=\
  \frac{1}{(\symBase^3-\symBase+1)^{\symGRA}}
  -\frac{1}{(\symBase^3-\symBase^2+1)^{\symGRA}}+\frac{1}{(\symBase^3-\symBase^2+\symBase)^{\symGRA}} -\frac{1}{\symBase^{3\symGRA}}
  - \frac{1}{(\symBase^2-\symBase+1)^{\symGRA}}+\frac{1}{\symBase^{2\symGRA}}
  +1-\frac{1}{\symBase^{\symGRA}}.
\end{align*}
The minimum variance is then given by \eqref{equ:variance_gra} with
\begin{equation*}
  \textstyle
  \symVarianceFactor
  =
  \frac{1}{\symGRA^2}\left(\frac{\symGammaFunc(2\symGRA) \ln\symBase}{(\symGammaFunc(\symGRA))^2}\left(
    \frac{\symNewConstant_0^2(\symGRA) }{\symNewConstant_0(2\symGRA)}
    +
    \frac{\symNewConstant_1^2(\symGRA) }{\symNewConstant_1(2\symGRA)}
    +
    \frac{\symNewConstant_2^2(\symGRA) }{\symNewConstant_2(2\symGRA)}
    +
    \frac{\symNewConstant_3^2(\symGRA) }{\symNewConstant_3(2\symGRA)}
    \right)^{\!-1} - 1\right).
\end{equation*}
Numerical minimization of this expression with regard to $\symGRA$ for the \ac{ULL} case with $\symBase=2$ and $\symNumExtraBits=2$ yields $\symGRA\approx 0.819491$ for which $\symVarianceFactor\approx 0.611893$ and the corresponding coefficients following \eqref{equ:our_estimator_contrib_coefficients} are
\begin{equation*}
  \symContributionCoefficient_0\approx 4.663135,
  \symContributionCoefficient_1\approx 2.137850,
  \symContributionCoefficient_2\approx 2.781145,
  \symContributionCoefficient_3\approx 0.982408.
\end{equation*}
The resulting $\symMVP = 8\symVarianceFactor \approx 4.895145$ corresponds to an efficiency of $94.6\%$, which is a small improvement over the \ac{GRA} estimator as shown in \cref{fig:gra_efficiency}b.

An interesting but not optimal choice would be $\symGRA=1$, since this would make exponentiations in \eqref{equ:gra_estimator} and \eqref{equ:reg_contrib_constants} particularly cheap. The corresponding coefficients would be
$\symContributionCoefficient_0\approx 6.037409$,
$\symContributionCoefficient_1\approx 2.415940$,
$\symContributionCoefficient_2\approx 3.364340$,
$\symContributionCoefficient_3\approx 0.934924$,
which result in $\symVarianceFactor \approx 0.617163$ and a \ac{MVP} of $8\symVarianceFactor \approx 4.937304$ corresponding to an efficiency of $93.8\%$. In contrast, the efficiency of the \ac{GRA} estimator would already drop to $91.3\%$ if $\symGRA=1$ is chosen.

\subsection{Corrections for \acs*{GRA}/\acs*{FGRA} Estimators}
\label{sec:small_large_corr}
Both the \ac{GRA} as well as our newly proposed \ac{FGRA} estimator assume that register values are distributed according to \eqref{equ:simple_register_pmf} rather than \eqref{equ:register_pmf}.
However, as mentioned in \cref{sec:stat_model}, this assumption is not valid, if a significant number of register values lies outside of $[(\symNumExtraBits+1) 2^\symNumExtraBits, \symMaxUpdateValMax 2^\symNumExtraBits)$. For small (large) distinct counts, there are register values below (above) this range, and the presented plain \ac{GRA}/\ac{FGRA} estimators would no longer work without the corrections introduced below.

Previous work has often not properly addressed this problem.
For example, the estimation error of the original \ac{HLL} estimator \cite{Flajolet2007} exceeded the predicted asymptotic error significantly for certain distinct counts. To be useful in practice, the distinct count estimate should be within the same error bounds over the entire range. While the estimator for \ac{HLL} was patched in the meantime \cite{Heule2013, Qin2016, Zhao2016, Ertl2017, Ting2019}, recent algorithms such as \ac{EHLL} \cite{Ohayon2021} or HyperLogLogLog \cite{Karppa2022} still suggest, like the original \ac{HLL} algorithm \cite{Flajolet2007}, to switch between different estimators. However, the corresponding transitions are not seamless which results in discontinuous estimation errors \cite{Ertl2017}.
In the following we present a technique, which generalizes ideas presented in our earlier works \cite{Ertl2017,Ertl2021}, to compose estimators of kind \eqref{equ:gra_estimator} with specific estimators for small and large distinct counts such that the resulting estimator works seamlessly over the whole value range.

A register state $\symRegister= \symMaxUpdateVal2^\symNumExtraBits + \langle\symIndexBit_1\ldots\symIndexBit_{\symNumExtraBits}\rangle_2$ allows to draw conclusions about the occurrence of certain events $\symUpdateEvent_\symUpdateVal$ as introduced in \cref{sec:stat_model}. For example, if $\symMaxUpdateVal < \symMaxUpdateValMax$, we know for sure that $\symUpdateEvent_\symMaxUpdateVal$ has occurred while events $\symUpdateEvent_\symUpdateVal$ with $\symUpdateVal > \symMaxUpdateVal$ have not occurred. Bit $\symIndexBit_\symIndexS$ indicates the occurrence of $\symUpdateEvent_{\symMaxUpdateVal-\symIndexS}$ as long as $\symMaxUpdateVal>\symIndexS$. However, we know nothing about events $\symUpdateEvent_\symUpdateVal$ with $\symUpdateVal < \symMaxUpdateVal - \symNumExtraBits$ or $\symUpdateVal\leq 0$.

We can derive the \ac{PMF} $\symDensityRegisterSimple(\symRegisterSimple\vert \symRegister,\symCardinality)$, conditioned on the secured information about event occurrences we know from $\symRegister$, for register values $\symRegisterSimple$ following the simple model \eqref{equ:simple_register_pmf}. For the case $\symNumExtraBits+1 \leq \symMaxUpdateVal < \symMaxUpdateValMax$ corresponding to $(\symNumExtraBits+1) 2^\symNumExtraBits \leq \symRegister < \symMaxUpdateValMax 2^\symNumExtraBits$, $\symRegisterSimple$ and $\symRegister$ will always be equal, which yields $\symDensityRegisterSimple(\symRegisterSimple\vert \symRegister,\symCardinality) = [\symRegisterSimple = \symRegister]$ when using the Iverson bracket notation. A complete derivation of $\symDensityRegisterSimple(\symRegisterSimple\vert \symRegister,\symCardinality)$ for \ac{ULL} with $\symNumExtraBits=2$ and $\symBase=2$ is given in \ifextended \cref{app:proofs} under \cref{lem:conditional_pmf}\else the extended paper \cite{Ertl2023}\fi.

Assuming that the distinct count is equal to the estimate $\symCardinalityEstimatorAlt$ obtained by an alternative distinct count estimator, we define the corrected register contribution $\symRegContribCorr(\symRegister)$, to be used in \eqref{equ:gra_estimator} as replacement for $\symRegContrib(\symRegister)$, as the expected register contribution of $\symRegContrib(\symRegisterSimple)$ where $\symRegisterSimple$ follows the conditional \ac{PMF} $\symDensityRegisterSimple(\symRegisterSimple\vert \symRegister, \symCardinality = \symCardinalityEstimatorAlt)$
\begin{equation}
  \label{equ:corrected_contributions_general}
  \textstyle
  \symRegContribCorr(\symRegister) = \sum_{\symRegisterSimple=-\infty}^\infty \symDensityRegisterSimple(\symRegisterSimple\vert \symRegister,\symCardinality = \symCardinalityEstimatorAlt)\symRegContrib(\symRegisterSimple).
\end{equation}
This approach leads to $\symRegContribCorr(\symRegister) = \symRegContrib(\symRegister)$ for the case $(\symNumExtraBits+1) 2^\symNumExtraBits \leq \symRegister < \symMaxUpdateValMax 2^\symNumExtraBits$. Hence, if all registers are in this intermediate range, the estimator \eqref{equ:gra_estimator} remains unchanged. This is expected as the \acp{PMF} \eqref{equ:register_pmf} and \eqref{equ:simple_register_pmf} are equivalent in this case. However, the corrected contributions $\symRegContribCorr(\symRegister)$ of very small and very large register values will differ significantly from $\symRegContrib(\symRegister)$. For \ac{ULL} with $\symNumExtraBits=2$ and $\symBase=2$ where $\symRegContrib(\symRegister)$ follows \eqref{equ:reg_contrib_constants}, the corrected register contributions can be written as
\begin{equation}
  \label{equ:corrected_contributions}
  \scriptstyle\symRegContribCorr(\symRegister) =
  \begin{cases}
    \scriptstyle\symSmallRangeCorrFunc(\symZEstimate_0)
     &
    \scriptstyle\symRegister = 0,
    \\
    \scriptstyle2^{-\symGRA } \symCubicContribFunc(\symZEstimate_0)
     &
    \scriptstyle\symRegister = 4,
    \\
    \scriptstyle4^{-\symGRA} (\symZEstimate_0 (\symContributionCoefficient_0 -\symContributionCoefficient_1) + \symContributionCoefficient_1)
     &
    \scriptstyle\symRegister = 8,
    \\
    \scriptstyle4^{-\symGRA} (\symZEstimate_0 (\symContributionCoefficient_2 -\symContributionCoefficient_3) + \symContributionCoefficient_3)
     &
    \scriptstyle\symRegister = 10,
    \\
    \scriptstyle2^{-\symGRA\lfloor\symRegister/4\rfloor} \symContributionCoefficient_{\symRegister \bmod 4}
     &
    \scriptstyle \symRegister \in [12, 4\symMaxUpdateValMax),
    \\
    \scriptstyle\frac{\symZEstimate_\symMaxUpdateValMax(1+\sqrt{\symZEstimate_\symMaxUpdateValMax}) \symContributionCoefficient_{0}
      + 2^{-\symGRA}\sqrt{\symZEstimate_{\symMaxUpdateValMax}}
      (\symZEstimate_{\symMaxUpdateValMax} (\symContributionCoefficient_0 -\symContributionCoefficient_2)
      + \symContributionCoefficient_2)
      + \symLargeRangeCorrFunc(\sqrt{\symZEstimate_{\symMaxUpdateValMax}})
    }{2^{\symGRA\symMaxUpdateValMax}(1+\sqrt{\symZEstimate_{\symMaxUpdateValMax}})(1+\symZEstimate_{\symMaxUpdateValMax})}
     &
    \scriptstyle\symRegister = 4\symMaxUpdateValMax,
    \\
    \scriptstyle\frac{\symZEstimate_\symMaxUpdateValMax(1+\sqrt{\symZEstimate_\symMaxUpdateValMax}) \symContributionCoefficient_{1}
      + 2^{-\symGRA}\sqrt{\symZEstimate_{\symMaxUpdateValMax}}
      (\symZEstimate_{\symMaxUpdateValMax} (\symContributionCoefficient_0 -\symContributionCoefficient_2)
      + \symContributionCoefficient_2)
      + \symLargeRangeCorrFunc(\sqrt{\symZEstimate_{\symMaxUpdateValMax}})
    }{2^{\symGRA\symMaxUpdateValMax}(1+\sqrt{\symZEstimate_{\symMaxUpdateValMax}})(1+\symZEstimate_{\symMaxUpdateValMax})}
     &
    \scriptstyle\symRegister = 4\symMaxUpdateValMax+1,
    \\
    \scriptstyle\frac{\symZEstimate_\symMaxUpdateValMax(1+\sqrt{\symZEstimate_\symMaxUpdateValMax}) \symContributionCoefficient_{2}
      + 2^{-\symGRA}\sqrt{\symZEstimate_{\symMaxUpdateValMax}}
      (\symZEstimate_{\symMaxUpdateValMax} (\symContributionCoefficient_1-\symContributionCoefficient_3)
      + \symContributionCoefficient_3)
      + \symLargeRangeCorrFunc(\sqrt{\symZEstimate_{\symMaxUpdateValMax}})
    }{2^{\symGRA\symMaxUpdateValMax}(1+\sqrt{\symZEstimate_{\symMaxUpdateValMax}})(1+\symZEstimate_{\symMaxUpdateValMax})}
     &
    \scriptstyle\symRegister = 4\symMaxUpdateValMax+2,
    \\
    \scriptstyle\frac{\symZEstimate_\symMaxUpdateValMax(1+\sqrt{\symZEstimate_\symMaxUpdateValMax}) \symContributionCoefficient_{3}
      + 2^{-\symGRA}\sqrt{\symZEstimate_{\symMaxUpdateValMax}}
      (\symZEstimate_{\symMaxUpdateValMax} (\symContributionCoefficient_1-\symContributionCoefficient_3)
      + \symContributionCoefficient_3)
      + \symLargeRangeCorrFunc(\sqrt{\symZEstimate_{\symMaxUpdateValMax}})
    }{2^{\symGRA\symMaxUpdateValMax}(1+\sqrt{\symZEstimate_{\symMaxUpdateValMax}})(1+\symZEstimate_{\symMaxUpdateValMax})}
     &
    \scriptstyle\symRegister = 4\symMaxUpdateValMax+3
  \end{cases}
\end{equation}
(see \ifextended \cref{lem:corr_reg_contrib} in \cref{app:proofs}\else extended paper \cite{Ertl2023}\fi).
The functions $\symCubicContribFunc$, $\symSmallRangeCorrFunc$, and $\symLargeRangeCorrFunc$ are given by
\begin{align}
  \label{equ:psi_func}
  \scriptstyle
  \symCubicContribFunc(\symZ)
                                & \scriptstyle:=
  \symZ(\symZ(\symZ(\symContributionCoefficient_0 -\symContributionCoefficient_1-\symContributionCoefficient_2+\symContributionCoefficient_3)
  +
  (\symContributionCoefficient_2-\symContributionCoefficient_3))
  +
  (\symContributionCoefficient_1-\symContributionCoefficient_3))
  +
  \symContributionCoefficient_3,
  \\
  \label{equ:sigma_func}
  \scriptstyle
  \symSmallRangeCorrFunc(\symZ) & \scriptstyle := \frac{1}{\symZ} \sum_{\symMaxUpdateVal=0}^\infty 2^{\symGRA \symMaxUpdateVal} (\symZ^{2^{\symMaxUpdateVal}}-\symZ^{2^{\symMaxUpdateVal+1}})
  \symCubicContribFunc(\symZ^{2^{\symMaxUpdateVal+1}}),
  \\
  \label{equ:phi_func}
  \scriptstyle
  \symLargeRangeCorrFunc(\symZ) & \scriptstyle := \frac{4^{-\symGRA}}{1-\symZ}\sum_{\symMaxUpdateVal=0}^\infty 2^{-\symGRA \symMaxUpdateVal}
  (\symZ^{2^{-\symMaxUpdateVal-1}}-\symZ^{2^{-\symMaxUpdateVal}})
  \symCubicContribFunc( \symZ^{2^{-\symMaxUpdateVal}})
  \\
                                &
  \scriptstyle
  =
  \frac{4^{-\symGRA}}{2-2^{-\symGRA}}
  \left(
  \frac{2\symCubicContribFunc(\symZ)\sqrt{\symZ}}{1+\sqrt{\symZ}}
  +
  \sum_{\symMaxUpdateVal=1}^\infty
  \frac{\symZ^{2^{-\symMaxUpdateVal-1}}\left(2\symCubicContribFunc(\symZ^{2^{-\symMaxUpdateVal}})-
      (\symZ^{2^{-\symMaxUpdateVal-1}} + \symZ^{2^{-\symMaxUpdateVal}})
      \symCubicContribFunc(\symZ^{2^{-\symMaxUpdateVal+1}})
      \right)}{2^{\symGRA \symMaxUpdateVal}\prod_{\symIndexJ=1}^{\symMaxUpdateVal+1} (1 + \symZ^{2^{-\symIndexJ}})}
  \right).\nonumber
\end{align}
Furthermore, $\symZEstimate_0$ and $\symZEstimate_\symMaxUpdateValMax$ are defined as $\symZEstimate_0:=e^{-\frac{\symCardinalityEstimatorAlt}{\symNumReg}}$ and $\symZEstimate_\symMaxUpdateValMax:=e^{-\frac{\symCardinalityEstimatorAlt}{\symNumReg 2^\symMaxUpdateValMax}}$ and are therefore estimates of $\symZ_0 = e^{-\frac{\symCardinality}{\symNumReg}}$ and $\symZ_\symMaxUpdateValMax = e^{-\frac{\symCardinality}{\symNumReg 2^\symMaxUpdateValMax}}$, respectively. According to \eqref{equ:corrected_contributions}, $\symZEstimate_0$ is only needed for register values from $\lbrace 0, 4, 8, 10\rbrace$. Therefore, the estimator $\symZEstimate_0$ must work well only for small distinct counts. Similarly, the estimator $\symZEstimate_\symMaxUpdateValMax$ must work only for large distinct counts, when there are registers greater than or equal to $4\symMaxUpdateValMax$. Corresponding estimators for both cases are presented in the next sections.

The numerical evaluation of $\symSmallRangeCorrFunc$ is very cheap, because only basic mathematical operations are involved and the infinite series converges quickly. The convergence is slowest for arguments $\symZEstimate_0$ close to 1. The largest arguments smaller than 1 occur for distinct counts equal to 1, which implies $\symZEstimate_0\approx e^{-\frac{1}{\symNumReg}}$. However, even in this extreme case, empirical analysis showed numerical convergence after the first $\symPrecision+7$ terms for any precision $\symPrecision\in[3, 26]$ when using double-precision floating-point arithmetic. For not too small values of $\symPrecision$, the estimation costs are dominated by the iteration over all $\symNumReg = 2^{\symPrecision}$ registers to sum up the individual register contributions.

When using 64-bit hash values, the maximum update value is given by $\symMaxUpdateValMax=65-\symPrecision$ (see \cref{sec:stat_model}). In this case, registers with values $\geq 4\symMaxUpdateValMax$ are very unlikely in practice and the evaluation of $\symLargeRangeCorrFunc$ is rarely needed. Nevertheless, it also can be computed efficiently with the second (more complex appearing) expression for $\symLargeRangeCorrFunc$, which converges numerically after at most 22 terms for any $\symPrecision\in[3, 26]$.

\subsection{Estimator for Small Distinct Counts}
\label{sec:small_estimator}
The original \ac{HLL} algorithm \cite{Flajolet2007} switches, in case of small distinct counts with many registers equal to zero, to the estimator
\begin{equation}
  \label{equ:linear_conting_estimator}
  \symCardinalityEstimatorLow = \symNumReg \ln(\symNumReg/\symCount_0),
\end{equation}
known from probabilistic linear counting \cite{Whang1990}. $\symCount_0 := |\lbrace \symRegAddr\vert \symRegister_\symRegAddr = 0 \rbrace|$ denotes the number of registers that have never been updated. This estimator corresponds to the \ac{ML} estimator, when considering just the number of registers with $\symRegister_\symRegAddr = 0$ and using $\symProbability(\symRegister_\symRegAddr = 0) = \symZ_0 = e^{-\frac{\symCardinality}{\symNumReg}}$ which follows from \eqref{equ:register_pmf} for $\symBase=2$. For \ac{HLL}, the combination of this estimator with \eqref{equ:corrected_contributions_general} leads to the same correction terms as we have previously derived in a different way \cite{Ertl2017}. The resulting so-called \ac{CR} estimator was shown to be nearly as efficient as the \ac{ML} estimator.

This finding provides confidence that the same approach also works for \ac{ULL} to correct the \ac{GRA}/\ac{FGRA} estimator. Even though \eqref{equ:linear_conting_estimator} could be used again to estimate small distinct counts, we found a simple estimator that is able to exploit more information. We consider the four smallest possible register states $\symRegister_\symRegAddr \in \lbrace 0, 4, 8, 10 \rbrace$ with corresponding probabilities following \eqref{equ:register_pmf} and apply the \ac{ML} method to estimate $\symZ_0 = e^{-\frac{\symCardinality}{\symNumReg}}$. As shown in \ifextended \cref{app:proofs} under \cref{lem:small_range_estimator} \else the extended paper \cite{Ertl2023} \fi the resulting \ac{ML} estimator can be written as
\begin{equation}
  \label{equ:small_range_estimator}
  \symZEstimate_0 =
  \textstyle
  (
  (
  \sqrt{\symBeta^2 + 4\symAlpha \symGamma} - \symBeta
  )/(
  2 \symAlpha
  )
  )^{4}.
\end{equation}
Here $\symAlpha$, $\symBeta$, $\symGamma$ are defined as
\begin{equation*}
  \scriptstyle\symAlpha\ :=\ \symNumReg + 3(\symCount_0 + \symCount_4 + \symCount_8 + \symCount_{10}), \quad
  \symBeta \ :=\ \symNumReg - \symCount_0 - \symCount_4,
  \quad
  \symGamma \ :=\ 4\symCount_0 + 2\symCount_4 + 3\symCount_8 + \symCount_{10}
\end{equation*}
and $\symCount_\symIndexJ := |\lbrace \symRegAddr\vert \symRegister_\symRegAddr = \symIndexJ \rbrace|$ is again the number of registers with value $\symIndexJ$.

\subsection{Estimator for Large Distinct Counts}
\label{sec:large_estimator}
An estimator for large distinct counts can be found in a similar way through \ac{ML} estimation when considering the largest 4 register states $\symRegister_\symRegAddr \in \lbrace 4\symMaxUpdateValMax, 4\symMaxUpdateValMax+1, 4\symMaxUpdateValMax+2, 4\symMaxUpdateValMax+3\rbrace$.
The \ac{ML} estimator for $\symZ_\symMaxUpdateValMax = e^{-\frac{\symCardinality}{\symNumReg 2^\symMaxUpdateValMax}}$ can be written as
\ifextended (see \cref{app:proofs} under \cref{lem:large_range_estimator}) \else \cite{Ertl2023} \fi
\begin{equation}
  \label{equ:large_range_estimator}
  \symZEstimate_\symMaxUpdateValMax = \textstyle
  \sqrt{
    (
    \sqrt{\symBeta^2 + 4\symAlpha \symGamma} - \symBeta
    )/(
    2 \symAlpha)
  }.
\end{equation}
Here $\symAlpha$, $\symBeta$, $\symGamma$ are defined as
\begin{align*}
  \scriptstyle\symAlpha
   &
  \scriptstyle\ :=\ \symNumReg + 3(\symCount_{4\symMaxUpdateValMax} + \symCount_{4\symMaxUpdateValMax+1} + \symCount_{4\symMaxUpdateValMax+2} +\symCount_{4\symMaxUpdateValMax+3}), \quad
  \symBeta\ :=\ \symCount_{4\symMaxUpdateValMax} + \symCount_{4\symMaxUpdateValMax+1} + 2\symCount_{4\symMaxUpdateValMax+2} +2\symCount_{4\symMaxUpdateValMax+3},
  \\
  \scriptstyle\symGamma
   &
  \scriptstyle\ :=\ \symNumReg + 2\symCount_{4\symMaxUpdateValMax} + \symCount_{4\symMaxUpdateValMax+2} - \symCount_{4\symMaxUpdateValMax+3}.
\end{align*}

\subsection{Martingale Estimator}

\begin{figure}
  \centering
  \includegraphics[width=\linewidth]{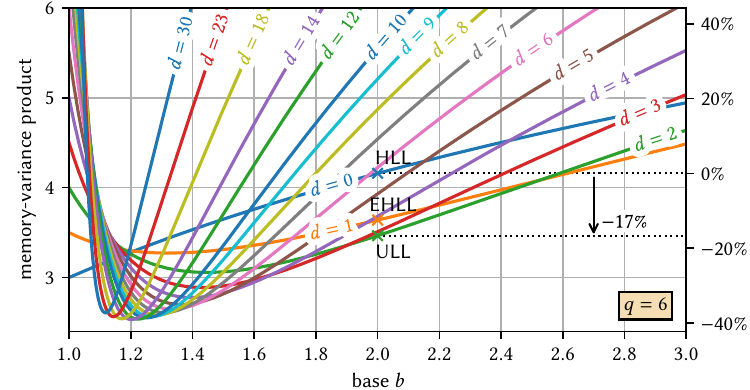}
  \caption{\boldmath The \acf*{MVP} \eqref{equ:mvp_martingale} as a function of the base $\symBase$ for $\symBitsForMax=6$ and various values of $\symNumExtraBits$ when using martingale estimation. The \acs*{MVP} of \acf*{ULL} is 17\% smaller compared to \acf*{HLL}.}
  \label{fig:var_martingale}
\end{figure}

If the data is not distributed and merging of sketches is not needed, the distinct count can be estimated in an online fashion by incremental updates, which is known as martingale or \acf{HIP} estimation \cite{Ting2014, Cohen2015} and which even leads to smaller estimation errors. This estimation approach can also be used for \ac{ULL}. It keeps track of the state change probability $\symStateChangeProbability$. Initially, $\symStateChangeProbability=1$ as the first update will certainly change the state. Whenever a register is changed, the probability of state changes for further elements decreases. The probability, that a new unseen element changes the \ac{ULL} state is given by
\begin{equation*}
  \symStateChangeProbability(\symRegister_0, \ldots, \symRegister_{\symNumReg-1}) = \sum_{\symRegAddr=0}^{\symNumReg-1} \symRegMartingale(\symRegister_\symRegAddr).
\end{equation*}
$\symRegMartingale(\symRegister_\symRegAddr)$ returns the probability that register $\symRegister_\symRegAddr$ is changed with the next new element. Obviously, we have $\symRegMartingale(0) = \frac{1}{\symNumReg}$ and $\symRegMartingale(4\symMaxUpdateValMax+3) = 0$ for the smallest and largest possible states, respectively. In the general case we have (see \ifextended \cref{lem:martingale_contrib} in \cref{app:proofs}\else extended paper \cite{Ertl2023}\fi)
\begin{align}
  \label{equ:martingale_contrib}
  \textstyle
  \symRegMartingale(0)
   &
  \textstyle
  = \frac{1}{\symNumReg},
  \quad
  \symRegMartingale(4) = \frac{1}{2\symNumReg},
  \quad
  \symRegMartingale(8) = \frac{3}{4\symNumReg},
  \quad
  \symRegMartingale(10) = \frac{1}{4\symNumReg},
  \nonumber
  \\
  \textstyle
  \symRegMartingale(\symRegister)
   &
  \textstyle
  =
  \frac{7 -2\symIndexBit_1 - 4\symIndexBit_2}{2^\symMaxUpdateVal\symNumReg}\quad \text{for $\symRegister = 4\symMaxUpdateVal + \langle\symIndexBit_1\symIndexBit_2\rangle_2$ and $3\leq \symMaxUpdateVal < \symMaxUpdateValMax$},
  \\
  \textstyle
  \symRegMartingale(\symRegister)
   &
  \textstyle
  = \frac{3 - \symIndexBit_1 -2\symIndexBit_2}{2^{\symMaxUpdateValMax-1}\symNumReg}\quad \text{for $\symRegister = 4\symMaxUpdateValMax + \langle\symIndexBit_1\symIndexBit_2\rangle_2$}.\nonumber
\end{align}

The martingale estimator is incremented with every state change by $\frac{1}{\symStateChangeProbability}$ prior the update as demonstrated by \cref{alg:martingale}. $\symStateChangeProbability$ itself can also be incrementally adjusted, such that the whole update takes constant time. The values of $\symRegMartingale$ can be stored in a lookup table. The martingale estimator is unbiased and optimal \cite{Pettie2021a}, but cannot be used if the data is distributed and mergeability is required. For our generalized data structure introduced in \cref{sec:data_structure}, the asymptotic \ac{MVP} can be derived as (see \ifextended \cref{lem:mvp_martingale} in \cref{app:proofs}\else extended paper \cite{Ertl2023}\fi)
\begin{equation}
  \label{equ:mvp_martingale}
  \textstyle\symMVP \approx (\symBitsForMax + \symNumExtraBits)\frac{1}{2}\ln(\symBase)(1+\frac{\symBase^{-\symNumExtraBits}}{\symBase-1}).
\end{equation}
This expression is consistent with previous specific results reported for \ac{HLL} \cite{Pettie2021a} and \ac{EHLL} \cite{Ohayon2021} and is plotted in \cref{fig:var_martingale} for $\symBitsForMax=6$ and various values of $\symBase$ and $\symNumExtraBits$. For $\symBase=2$, the \ac{ULL} configuration with $\symNumExtraBits=2$ is even optimal. \ac{ULL} achieves $\symMVP \approx 3.4657$ which is 17\% less than for \ac{HLL} with $\symMVP \approx 4.1589$.

\section{Practical Implementation}
As registers of \ac{ULL} have $\symBitsForMax + \symNumExtraBits = 8$ bits, they can be stored in a plain byte array. \cref{alg:insertion} shows the update procedure for inserting an element. The actual input is the hash value of the element from which the register index $\symRegAddr$ and an update value $\symUpdateVal$ are obtained. As for other probabilistic data structures it is essential to use a high-quality hash function such as WyHash \cite{Yi}, Komihash \cite{Vaneev}, or PolymurHash \cite{Peters} whose outputs are in practice indistinguishable from true random integers. If 64-bit hash values are used, the maximum update value is limited to $\symMaxUpdateValMax=65-\symPrecision$ leading to a maximum distinct count of roughly $\symCardinalityMax \approx \symNumReg 2^{\symMaxUpdateValMax-1} =2^{64}$ (cf. \cref{sec:parameter-choice}) which is more than sufficient for any realistic application.

\cref{alg:insertion} updates the $\symRegAddr$-th register by first unpacking its value to a 64-bit value $\symX = \langle\symX_{63}\symX_{62}\ldots\symX_1\symX_0\rangle_2$ (see \cref{alg:unpack}) where bit $\symX_{\symIndexJ+1}$ indicates the occurrence of update value $\symIndexJ$. After setting the corresponding bit for the new update value $\symUpdateVal$, $\symX$ is finally packed again into a register value (see \cref{alg:pack}), preserving just the information about the maximum update value $\symMaxUpdateVal$ and whether update values $\symMaxUpdateVal-1$ and $\symMaxUpdateVal-2$ have already occurred.

Although the update algorithm for \ac{HLL} (cf. \cref{alg:insertion_hll}) looks simpler, in practice the implementation complexity is similar if one also includes the bit twiddling required to pack 6-bit registers densely into a byte array.
\ac{ULL} insertions can be implemented entirely branch-free, as exemplified by the implementation in our Hash4j Java library (see \url{https://github.com/dynatrace-oss/hash4j}), which however uses the transformation $\symRegister \rightarrow \symRegister + 4\symPrecision - 8$ for non-zero registers, so that the largest possible register state is always $(4\symMaxUpdateValMax + 3) + 4\symPrecision - 8 = 255$, the maximum value of a byte.

\cref{alg:estimation} summarizes \ac{FGRA} estimation including corrections for small and large distinct counts as described in \cref{sec:small_large_corr}. For intermediate counts, if registers are all in the range $[12, 4\symMaxUpdateValMax)$, the algorithm just sums up the individual register contributions $\symRegContrib(\symRegister_\symRegAddr)$ given by \eqref{equ:reg_contrib_constants}. The values of $\symRegContrib$ can be pre-computed and stored in a lookup table for all possible register values. Following \eqref{equ:gra_estimator}, the final estimate is obtained by exponentiating the sum of all register contributions by $-\frac{1}{\symGRA}$ and multiplying by the precomputed factor $\symEstimationConstant_\symPrecision$. For small distinct counts, when there are any registers smaller than $12$, thus $\symCount_0+ \symCount_4+ \symCount_8+ \symCount_{10}>0$, the small range correction branch is executed, which requires computing the estimator described in \cref{sec:small_estimator}. Similarly, for large distinct counts with registers greater than or equal to $4\symMaxUpdateValMax$, equivalent to $\symCount_{4\symMaxUpdateValMax}+ \symCount_{4\symMaxUpdateValMax+1}+ \symCount_{4\symMaxUpdateValMax+2}+ \symCount_{4\symMaxUpdateValMax+3}>0$, the large range correction is applied based on the estimator presented in \cref{sec:large_estimator}.

\subsection{Merging and Downsizing}
\ac{ULL} sketches can be merged which is important if the data is distributed over space and/or time and partial results must be combined. The final result will only depend on the set of added elements and not on the insertion or merge order.
It is even possible to merge \ac{ULL} sketches with different precisions. For that, the sketch with higher precision must be reduced first to the lower precision of the other sketch.

Downsizing from precision $\symPrecision'$ to $\symPrecision<\symPrecision'$ is realized by merging the information of batches of $2^{\symPrecision' - \symPrecision}$ consecutive registers. The resulting register value depends not only on the individual register values, but also on the least significant $\symPrecision' - \symPrecision$ bits of the register indices. Only the value of the first register of a batch, whose index has $\symPrecision' - \symPrecision$ trailing zeros, is relevant. All other registers need to be treated differently, as the $\symPrecision' - \symPrecision$ trailing index bits would have affected the calculation of the \acf{NLZ} computation in \cref{alg:insertion} when recorded with lower precision $\symPrecision$.

\cref{alg:merge} shows in detail how an \ac{ULL} sketch with precision $\symPrecision'$ can be merged in-place into another one with precision $\symPrecision \leq \symPrecision'$. The algorithm simplifies a lot when the precisions are equal, because $\symPrecision'=\symPrecision$ implies $2^{\symPrecision' - \symPrecision} = 1$ and consequently the inner loop can be skipped entirely. In this case, and also due to the byte-sized registers, the \ac{ULL} merge operation is very well suited for \ac{SIMD} processing.

\cref{alg:merge} can also be applied to just downsize a \ac{ULL} sketch from precision $\symPrecision'$ to $\symPrecision$, if the sketch is simply added to an empty \ac{ULL} sketch with precision $\symPrecision$ and all registers equal to zero, $\symRegister_\symRegAddr = 0$. In this case the first unpack operation can be skipped as it would always return zero.

\myAlg{
  \caption{Incrementally updates the martingale estimate $\symCardinalityEstimatorMartingale$ and the state change probability $\symStateChangeProbability$ whenever a register is altered from $\symRegister$ to $\symRegister'$ ($\symRegister<\symRegister'$). Initially, $\symCardinalityEstimatorMartingale = 0$ and $\symStateChangeProbability=1$.}
  \label{alg:martingale}
  $\symCardinalityEstimatorMartingale\gets \symCardinalityEstimatorMartingale + \frac{1}{\symStateChangeProbability}$\Comment*[r]{update estimate}
  $\symStateChangeProbability\gets \symStateChangeProbability - \symRegMartingale(\symRegister) + \symRegMartingale(\symRegister')$ \Comment*[r]{update state change probability, see \eqref{equ:martingale_contrib}}
}

\myAlg{
  \caption{Inserts an element with 64-bit hash value $\langle\symHashBit_{63} \symHashBit_{62} \ldots \symHashBit_{0}\rangle_2$ into an UltraLogLog with registers $\symRegister_0, \symRegister_1, \ldots, \symRegister_{\symNumReg-1}$, initial values $\symRegister_\symRegAddr = 0$, and $\symNumReg = 2^\symPrecision$ ($\symPrecision \geq 3$).}
  \label{alg:insertion}
  $\symRegAddr\gets \langle\symHashBit_{63} \symHashBit_{62}\ldots\symHashBit_{64-\symPrecision}\rangle_2$\Comment*[r]{extract register index}
  $\symMasked\gets \langle\,\underbracket[0.5pt][1pt]{0\ldots 0}_{\scriptscriptstyle\symPrecision}\!\symHashBit_{63-\symPrecision}\symHashBit_{62-\symPrecision} \ldots\symHashBit_{0}\rangle_2$\Comment*[r]{mask register index bits}
  $\symUpdateVal \gets \FuncNLZ{$\symMasked$} -\symPrecision + 1$\Comment*[r]{update value $\symUpdateVal \in [1, 65-\symPrecision]$}
  $\symX \gets $\FuncUncompress{$\symRegister_\symRegAddr$}\Comment*[r]{see \cref{alg:unpack} for $\FuncUncompress$}
  $\symX \gets \symX$ \KwOr $2^{\symUpdateVal+1}$\Comment*[r]{bitwise or-operation}
  $\symRegister_\symRegAddr \gets $\FuncCompress{$\symX$}\Comment*[r]{see \cref{alg:pack} for $\FuncCompress$}
}

\myAlg{
  \caption{Packs a 64-bit value $\symX\geq 4$ into an 8-bit register. $\symX=\langle\symX_{63}\symX_{62}\ldots\symX_1\symX_0\rangle_2$ is interpreted as bitset where bit $\symX_{\symUpdateVal+1}$ indicates the occurrence of update value $\symUpdateVal$.}
  \label{alg:pack}
  \Fn{\FuncCompress{$\symX$}}{
    $\symMaxUpdateVal \gets 62 - \FuncNLZ{$\symX$}$\Comment*[r]{extract maximum update value}
    \KwRet $4\symMaxUpdateVal + \langle\symX_{\symMaxUpdateVal }\symX_{\symMaxUpdateVal - 1}\rangle_2$ \Comment*[r]{return 8-bit register value, is always $\geq 4$}
  }
  \
}

\myAlg{
  \caption{Unpacks an 8-bit register $\symRegister = \langle \symRegisterBit_7\symRegisterBit_6\ldots \symRegisterBit_0 \rangle_2$ into a 64-bit integer indicating the occurrence of update values if interpreted as bitset.}
  \label{alg:unpack}
  \Fn{\FuncUncompress{$\symRegister$}}{
    \lIf(\Comment*[f]{special case for initial state when $\symRegister=0$}){$\symRegister < 4$}{\KwRet 0}
    $\symMaxUpdateVal\gets \langle \symRegisterBit_7\symRegisterBit_6\ldots\symRegisterBit_2\rangle_2$\Comment*[r]{get maximum update value, same as $\lfloor \symRegister/4\rfloor$}
    \KwRet $\langle\,\underbracket[0.5pt][1pt]{0\ldots 0}_{\scriptscriptstyle 62 - \symMaxUpdateVal}\!1\symRegisterBit_1\symRegisterBit_0\!\underbracket[0.5pt][1pt]{0\ldots 0}_{\scriptscriptstyle \symMaxUpdateVal - 1}\,\rangle_2$ \Comment*[r]{return 64-bit value, is always $\geq 4$}
  }
}

\myAlg{
  \caption{Estimates the number of distinct elements from a given UltraLogLog with registers $\symRegister_0, \symRegister_1, \ldots, \symRegister_{\symNumReg-1}$ and $\symNumReg = 2^\symPrecision$ ($\symPrecision \geq 3$) using the \acs*{FGRA} estimator. The constants are defined as $\symMaxUpdateValMax := 65 - \symPrecision$, $\symGRA := 0.8194911375910897$, $\symVarianceFactor := 0.6118931496978437$, $\symContributionCoefficient_0 := 4.663135422063788$,
    $\symContributionCoefficient_1 := 2.1378502137958524$,
    $\symContributionCoefficient_2 := 2.781144650979996$,
    $\symContributionCoefficient_3 := 0.9824082545153715$, and $\symEstimationConstant_\symPrecision:= \symNumReg^{1+\frac{1}{\symGRA}} / (1 + \frac{1+\symGRA}{2}\frac{\symVarianceFactor}{\symNumReg})$.
  }
  \label{alg:estimation}
  $\symSum \gets 0$\Comment*[r]{used to sum up $\sum_{\symRegAddr=0}^{\symNumReg-1}\symRegContribCorr(\symRegister_\symRegAddr)$, see \eqref{equ:corrected_contributions}}
  $(\symCount_0, \symCount_4, \symCount_8, \symCount_{10},\symCount_{4\symMaxUpdateValMax}, \symCount_{4\symMaxUpdateValMax+1}, \symCount_{4\symMaxUpdateValMax+2}, \symCount_{4\symMaxUpdateValMax+3}) \gets (0,0,0,0,0,0,0,0)$\;
  \For(\Comment*[f]{iterate over all $\symNumReg$ registers}){$\symRegAddr\gets 0$ \KwTo $\symNumReg-1$}{
    \uIf{$\symRegister_\symRegAddr< 12 $}{
      \lIf{$\symRegister_\symRegAddr=0 $}{$\symCount_0\gets\symCount_0 +1$}
      \lIf{$\symRegister_\symRegAddr=4 $}{$\symCount_4\gets\symCount_4 +1$}
      \lIf{$\symRegister_\symRegAddr=8 $}{$\symCount_8\gets\symCount_8 +1$}
      \lIf{$\symRegister_\symRegAddr=10 $}{$\symCount_{10}\gets\symCount_{10} +1$}
    }\uElseIf{$\symRegister_\symRegAddr< 4\symMaxUpdateValMax$}{
      $\symSum \gets \symSum + \symRegContrib(\symRegister_\symRegAddr)$\Comment*[r]{see \eqref{equ:reg_contrib_constants}, $\symRegContrib(\symRegister_\symRegAddr)$ can be precomputed}
    }\Else{
      \lIf{$\symRegister_\symRegAddr=4\symMaxUpdateValMax $}{$\symCount_{4\symMaxUpdateValMax}\gets\symCount_{4\symMaxUpdateValMax} +1$}
      \lIf{$\symRegister_\symRegAddr=4\symMaxUpdateValMax + 1 $}{$\symCount_{4\symMaxUpdateValMax+1}\gets\symCount_{4\symMaxUpdateValMax+1} +1$}
      \lIf{$\symRegister_\symRegAddr=4\symMaxUpdateValMax + 2 $}{$\symCount_{4\symMaxUpdateValMax+2}\gets\symCount_{4\symMaxUpdateValMax+2} +1$}
      \lIf{$\symRegister_\symRegAddr=4\symMaxUpdateValMax +3 $}{$\symCount_{4\symMaxUpdateValMax+3}\gets\symCount_{4\symMaxUpdateValMax+3} +1$}
    }
  }
  \If(\Comment*[f]{small range correction}){$\symCount_0+ \symCount_4+ \symCount_8+ \symCount_{10}>0$}{
    $\symAlpha\gets \symNumReg + 3(\symCount_0 + \symCount_4 + \symCount_8 + \symCount_{10})$\;
    $\symBeta\gets\symNumReg - \symCount_0 - \symCount_4$\;
    $\symGamma\gets 4\symCount_0 + 2\symCount_4 + 3\symCount_8 + \symCount_{10}$\;
    $\symZ\gets (
      (\sqrt{\symBeta^2 + 4\symAlpha\symGamma} - \symBeta)/(2 \symAlpha)
      )^{4}$\Comment*[r]{see \eqref{equ:small_range_estimator}}
    \lIf(\Comment*[f]{for $\symSmallRangeCorrFunc$ see \eqref{equ:sigma_func}}){$\symCount_0 > 0$}{$\symSum \gets \symSum + \symCount_0\cdot \symSmallRangeCorrFunc(\symZ)$}
    \lIf(\Comment*[f]{for $\symCubicContribFunc$ see \eqref{equ:psi_func}}){$\symCount_4 > 0$}{$\symSum \gets \symSum + \symCount_4 \cdot 2^{-\symGRA }\cdot \symCubicContribFunc(\symZ)$}
    \lIf{$\symCount_8 > 0$}{$\symSum \gets \symSum + \symCount_8 \cdot (4^{-\symGRA} \cdot (\symZ \cdot (\symContributionCoefficient_0 -\symContributionCoefficient_1) + \symContributionCoefficient_1))$}
    \lIf{$\symCount_{10} > 0$}{
      $\symSum \gets \symSum + \symCount_{10}\cdot (4^{-\symGRA} \cdot (\symZ \cdot (\symContributionCoefficient_2 -\symContributionCoefficient_3) + \symContributionCoefficient_3))$
    }
  }
  \If(\Comment*[f]{large range correction}){$\symCount_{4\symMaxUpdateValMax}+ \symCount_{4\symMaxUpdateValMax+1}+ \symCount_{4\symMaxUpdateValMax+2}+ \symCount_{4\symMaxUpdateValMax+3}>0$}{
    $\symAlpha \gets\symNumReg + 3(\symCount_{4\symMaxUpdateValMax} + \symCount_{4\symMaxUpdateValMax+1} + \symCount_{4\symMaxUpdateValMax+2} +\symCount_{4\symMaxUpdateValMax+3})$\;
    $\symBeta \gets \symCount_{4\symMaxUpdateValMax} + \symCount_{4\symMaxUpdateValMax+1} + 2\symCount_{4\symMaxUpdateValMax+2} +2\symCount_{4\symMaxUpdateValMax+3}$\;
    $\symGamma \gets \symNumReg + 2\symCount_{4\symMaxUpdateValMax} + \symCount_{4\symMaxUpdateValMax+2} - \symCount_{4\symMaxUpdateValMax+3}$\;
    $\symZ\gets \sqrt{
        (\sqrt{\symBeta^2 + 4\symAlpha \symGamma} - \symBeta) / (
        2 \symAlpha
        )}$\Comment*[r]{see \eqref{equ:large_range_estimator}}
    $\symZ'\gets \sqrt{\symZ}$\;
    $\symSum'\gets \symZ\cdot (1+ \symZ')\cdot (\symContributionCoefficient_0\cdot \symCount_{4\symMaxUpdateValMax}+ \symContributionCoefficient_1 \cdot\symCount_{4\symMaxUpdateValMax+1}+\symContributionCoefficient_2\cdot \symCount_{4\symMaxUpdateValMax+2}+ \symContributionCoefficient_3\cdot\symCount_{4\symMaxUpdateValMax+3})$\;
    $\symSum'\gets \symSum' + 2^{-\symGRA}\cdot\symZ'\cdot
      (\symZ\cdot (\symContributionCoefficient_0 -\symContributionCoefficient_2)
      + \symContributionCoefficient_2)\cdot(\symCount_{4\symMaxUpdateValMax} + \symCount_{4\symMaxUpdateValMax+1})$\;
    $\symSum'\gets \symSum' + 2^{-\symGRA}\cdot\symZ'\cdot
      (\symZ\cdot (\symContributionCoefficient_1 -\symContributionCoefficient_3)
      + \symContributionCoefficient_3)\cdot(\symCount_{4\symMaxUpdateValMax+2} + \symCount_{4\symMaxUpdateValMax+3})$\;
    $\symSum'\gets \symSum' + \symLargeRangeCorrFunc(\symZ')\cdot (\symCount_{4\symMaxUpdateValMax}+ \symCount_{4\symMaxUpdateValMax+ 1}+ \symCount_{4\symMaxUpdateValMax+2}+ \symCount_{4\symMaxUpdateValMax+3})$\Comment*{for $\symLargeRangeCorrFunc$ see \eqref{equ:phi_func}}
    $\symSum \gets \symSum + \symSum' / (2^{\symGRA\symMaxUpdateValMax}\cdot(1+\symZ')\cdot(1+\symZ))$\;
  }
  \KwRet $ \symEstimationConstant_\symPrecision\cdot \symSum^{-1/\symGRA}$\Comment*[r]{return distinct count estimate, see \eqref{equ:gra_estimator}}
}

\subsection{Compatibility to HyperLogLog}
When using the same hash function for elements, \ac{ULL} can be implemented in a way that is compatible with an existing \ac{HLL} implementation meaning that an \ac{ULL} sketch can be mapped to a \ac{HLL} sketch of same precision by just dropping the last 2 register bits corresponding to the transformation $\lfloor\symRegister_\symRegAddr^{(\text{\acs*{ULL}})}/4\rfloor \rightarrow \symRegister_\symRegAddr^{(\text{\acs*{HLL}})}$. This allows to migrate to \ac{ULL}, even if there is historical data that was recorded using \ac{HLL} and still needs to be combined with newer data. Our Hash4j library also contains a \ac{HLL} implementation that is compatible with its \ac{ULL} implementation.

\section{Experiments}
Various experiments have been conducted to confirm the theoretical results and to demonstrate the practicality of \ac{ULL}.
For better reproduction, instructions and source code for all experiments including figure generation have been published at \url{https://github.com/dynatrace-research/ultraloglog-paper}. The simulations used the \ac{ULL} and \ac{HLL} implementations available in our open-source Hash4j library (v0.17.0) and were executed on an Amazon EC2 c5.metal instance running Ubuntu Server 22.04 LTS.

\myAlg{
\caption{Merges an UltraLogLog with registers $\symRegister'_0, \symRegister'_1, \ldots, \symRegister'_{\symNumReg'-1}$ and $\symNumReg' = 2^{\symPrecision'}$ into another UltraLogLog with registers $\symRegister_0, \symRegister_1, \ldots, \symRegister_{\symNumReg-1}$ and $\symNumReg = 2^{\symPrecision}$ where $\symPrecision\leq\symPrecision'$. This algorithm also allows to downsize an existing UltraLogLog by merging it into an empty UltraLogLog $(\symRegister_\symRegAddr = 0)$ with smaller precision parameter $\symPrecision<\symPrecision'$.
}
\label{alg:merge}
$\symRegAddrOther\gets 0$\;
\For{$\symRegAddr\gets 0$ \KwTo $\symNumReg-1$}{
$\symX \gets \FuncUncompress{$\symRegister_\symRegAddr$}\ \KwOr\ (\FuncUncompress{$\symRegister'_{\symRegAddrOther}$} \cdot 2^{\symPrecision' - \symPrecision})$\;
\Comment*[r]{bitwise or-operation, see \cref{alg:unpack} for $\FuncUncompress$}
$\symRegAddrOther\gets \symRegAddrOther + 1$\;
\For{$\symUpdateValOther\gets 1$ \KwTo $2^{\symPrecision' - \symPrecision}-1$}{
\If{$\symRegister'_{\symRegAddrOther} \neq 0$}{
$\symUpdateVal\gets\FuncNLZ{$\symUpdateValOther$} + \symPrecision' - \symPrecision - 63$
\Comment*[r]{$\symUpdateValOther$ is assumed to have 64 bits}
$\symX \gets \symX$ \KwOr $2^{\symUpdateVal+1}$\Comment*[r]{bitwise or-operation}
}
$\symRegAddrOther\gets \symRegAddrOther + 1$\;
}
\lIf(\Comment*[f]{see \cref{alg:pack} for $\FuncCompress$}){$\symX \neq 0$}{$\symRegister_{\symRegAddr} \gets $\FuncCompress{$\symX$}}
}
}

Extensive empirical tests \cite{Urban} have shown that the output of modern hash functions such as WyHash \cite{Yi}, Komihash \cite{Vaneev}, or PolymurHash \cite{Peters} can be considered like uniform random values. Otherwise, field-tested probabilistic data structures like \ac{HLL} would not work. This fact allows us to simplify the experiments and perform them without real or artificially generated data. Insertion of a new element can be simulated by simply generating a 64-bit random value to be used directly as the hash value of the inserted element in \cref{alg:insertion}.

\subsection{Estimation Error}
\label{sec:estimation_error}

To simulate the estimation error for a predefined distinct count value, the estimate is computed after updating the \ac{ULL} sketch using \cref{alg:insertion} with a corresponding number of random values and finally compared against the true distinct count. By repeating this process with many different random sequences, in our experiments 100\,000, the bias and the \ac{RMSE} can be empirically determined. However, this approach becomes computationally infeasible for distinct counts beyond 1 million and we need to switch to a different strategy.

After the first million of insertions, for which a random value was generated each time, we just generate the waiting time (the number of distinct count increments) until a register is processed with a certain update value the next time.
For each insertion, the probability that a register is updated with any possible value $\symUpdateVal\in[1,65-\symPrecision]$ is given by $1/(\symNumReg 2^{\min(\symUpdateVal, 64-\symPrecision)})$. Therefore, the number of distinct count increments until a register is updated with a specific value $\symUpdateVal$ the next time is geometrically distributed with corresponding success probability.
In this way, we determine the next update time for each register and for each possible update value. Since the same update value can only modify a register once, we do not need to consider further updates which might occur with the same value for the same register.
Knowing these $\symNumReg \times (65-\symPrecision)$ distinct count values in advance, where the state may change, enables us to make large distinct count increments, resulting in a huge speedup. This eventually allowed us to simulate the estimation error for distinct counts up to values of $10^{21}$ and also to test the presented estimators over the entire operating range.

\cref{fig:estimation-error} shows the empirically determined relative bias and \ac{RMSE} as well as the theoretical \ac{RMSE} given by $\sqrt{\symMVP/(8\symNumReg)}$ for the \ac{FGRA}, \ac{ML}, and the martingale estimator for precisions $\symPrecision\in\lbrace 8, 12, 16\rbrace$.
For intermediate distinct counts, for which the assumptions of our theoretical analysis hold, perfect agreement with theory is observed.
For small distinct counts, the difference in efficiency between \ac{ML} and \ac{FGRA} is more significant, but not particularly relevant in practice, as the estimation errors in both cases are well below the theoretically predicted errors.
Interestingly, the estimation error also decreases slightly near the end of the operating range, which is as predicted on the order of $2^{64} \approx 1.8\cdot 10^{19}$.
The estimators are essentially unbiased. The tiny bias which appears for small precisions for the \ac{ML} and \ac{FGRA} estimators can be ignored in practice as it is much smaller than the \ac{RMSE}.

Previous experiments have already confirmed that the proposed estimators for \ac{HLL} \cite{Ertl2017,Pettie2021a} and \ac{EHLL} \cite{Ohayon2021} do not undercut and at best reach the corresponding theoretical \acp{MVP} given by \eqref{equ:mvp_uncompressed} and \eqref{equ:mvp_martingale}.
Therefore, the perfect agreement with the theory for \ac{ULL} observed in our experiments finally proves the claimed and theoretically predicted improvements in storage efficiencies over the state of the art as shown in \cref{fig:mvp_lower_bound,fig:var_martingale}.

\begin{figure}[t]
  \centering
  \includegraphics[width=\linewidth]{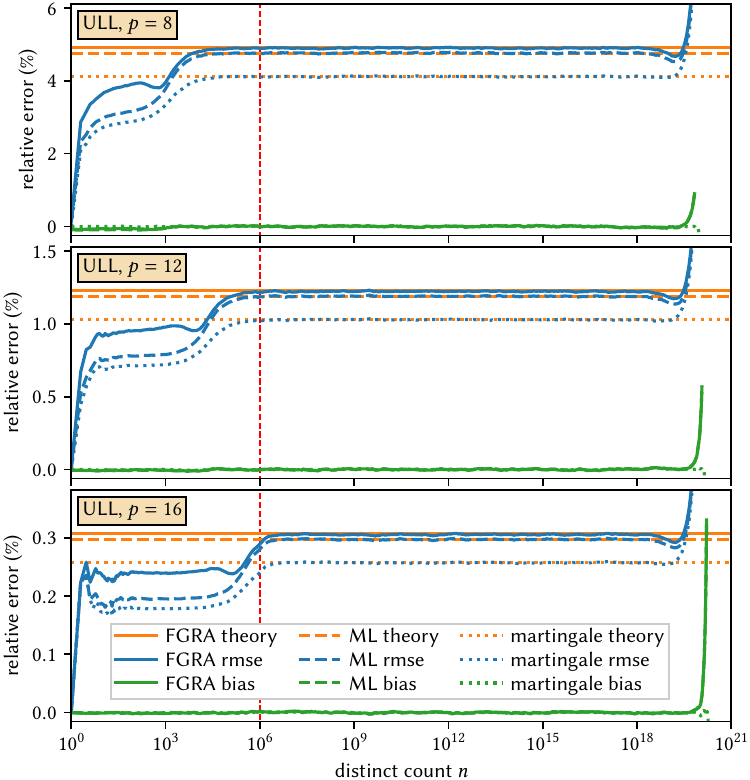}
  \caption{\boldmath The relative bias and the \acs*{RMSE} for the \acs*{FGRA}, \acs*{ML}, and the martingale estimator for precisions $\symPrecision\in\lbrace 8, 12, 16\rbrace$ obtained from 100\,000 simulation runs. The theoretically predicted errors perfectly match the experimental results. Individual insertions were simulated up to a distinct count of 10$^{6}$ before switching to the fast simulation strategy.}
  \label{fig:estimation-error}
\end{figure}

\subsection{Compression}

\begin{figure}[t]
  \centering
  \includegraphics[width=\linewidth]{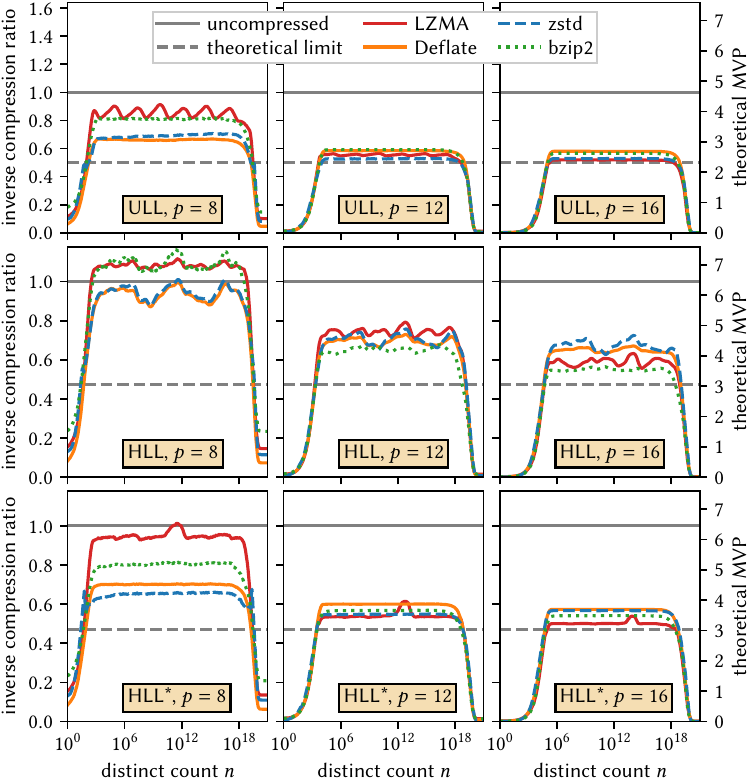}
  \caption{\boldmath The average inverse compression ratio for \ac{HLL} and \ac{ULL} over the distinct count for various standard compression algorithms. For \ac{HLL}*, the 6-bit registers are mapped to individual bytes before compression. The gray dashed line indicates the theoretical lower bound for intermediate distinct counts given by \eqref{equ:mvp_compressed}.}
  \label{fig:compression}
\end{figure}

To analyze the compressibility of the state, we generated 100 random sketches for predefined distinct counts and applied various standard compression algorithms (LZMA, Deflate, zstd, and bzip2) from the Apache Compress Java library \cite{ApacheCompress}. The corresponding average inverse compression ratios over the distinct count are shown in \cref{fig:compression} for \ac{HLL} and \ac{ULL}.
The results indicate that the algorithms generally work better for \ac{ULL} than \ac{HLL}, since the compressed size is closer to the theoretical limit given by the Shannon entropy \eqref{equ:shannon_entropy}. Interestingly, the compression for \ac{HLL} improves, if its 6-bit registers are first represented as individual bytes (cf. \ac{HLL}* in \cref{fig:compression}). Even though this leads to compression ratios that are sometimes better than for \ac{ULL}, \ac{ULL} is still more memory-efficient overall due to the significant lower \ac{MVP} when uncompressed.

The theoretical lower bound applies only to intermediate distinct counts under the validity of the simple model \eqref{equ:simple_register_pmf}. The observed high compressibility at small and large distinct counts is due to the large number of initial and saturated registers, respectively. Therefore, many \ac{HLL} implementations support a sparse mode that encodes only non-zero registers \cite{Heule2013}. We expect that this and other lossless compression techniques developed for \ac{HLL} \cite{ApacheDataSketches,Karppa2022} can also be applied to \ac{ULL}, but obviously at the cost of slower updates and memory reallocations.

\subsection{Performance}
\begin{figure}[t]
  \centering
  \includegraphics[width=\linewidth]{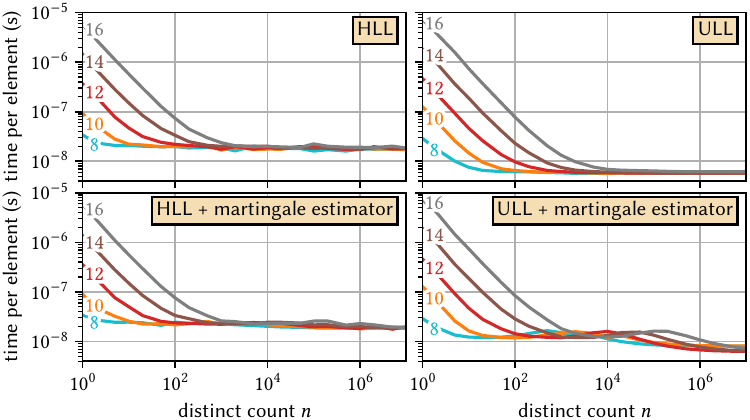}
  \caption{\boldmath Average insertion time per element when initializing a \acs*{HLL} or \acs*{ULL} sketch with $\symPrecision\in\lbrace8,10,12,14,16\rbrace$ and adding the given number of distinct elements.}
  \label{fig:add-performance}
\end{figure}

Since low processing costs are also critical for practical use, we measured the average time for inserting a given number of distinct elements into \ac{HLL} and \ac{ULL} sketches configured with precisions $\symPrecision\in\lbrace8,10,12,14,16\rbrace$. To reduce the impact of variable processor frequencies, Turbo Boost was disabled on the used Amazon EC2 c5.metal instance by setting the processor P-state to 1 \cite{AmazonPState}. The benchmarks were executed using OpenJDK 21.0.2.

The measurements shown in \cref{fig:add-performance} also include sketch initialization as well as the generation of random numbers which were used instead of hash values as described before.
The initialization costs dominate for small distinct counts, which are proportional to $\symNumReg = 2^\symPrecision$ due to the allocated register array. For large distinct counts the initialization costs can be neglected, and the average time per insertion converges to a value that is essentially independent of $\symPrecision$. There, it becomes apparent that \ac{HLL} insertions are significantly slower, mainly caused by the overhead of accessing the 6-bit registers packed in a byte array. If the insertions are accompanied by martingale estimator updates following \cref{alg:martingale}, the difference is less clear. Only for large distinct counts, where changes of register values become less frequent, \ac{ULL} is significantly faster again.

We also investigated the estimation costs for precisions $\symPrecision\in[8,16]$ and $\symCardinality \in \lbrace 1,2,5,10,20,50,\ldots, 10^7\rbrace$ as shown in \cref{fig:estimation-performance}. We considered the \ac{ML} estimator for both, the new \ac{FGRA} estimator for \ac{ULL}, and the \acf{CR} estimator for \ac{HLL} \cite{Ertl2017}. The latter corresponds to the \ac{GRA} estimator with $\symGRA=1$ which is for \ac{HLL} almost as efficient as the \ac{ML} and optimal \ac{GRA} estimators \cite{Wang2023}. \ac{ML} estimation is more expensive for \ac{ULL} than for \ac{HLL} for larger distinct counts. The \ac{CR} and the \ac{FGRA} estimator are most of the time significantly faster than their \ac{ML} counterparts.
The costs of the \ac{CR} estimator are roughly constant, while the \ac{FGRA} estimator peaks briefly before it falls back on the same level as the \ac{CR} estimator for equal $\symPrecision$. Our investigations showed that this peak is related to the more difficult branch prediction in \cref{alg:estimation}, when significant portions of registers have values from $\lbrace 0,4,8,10\rbrace$ and also values greater than or equal to $12$.

A fair comparison must take into account that an \ac{ULL} sketch with same precision generally leads to smaller errors. According to \eqref{equ:def_mvp} the theoretical relative errors for \ac{HLL} and \ac{ULL} are given by
$\sqrt{\symMVP^\text{(\acs*{HLL})}/(6\cdot 2^{\symPrecision^\text{(\acs*{HLL})}})}$ and $\sqrt{\symMVP^\text{(\acs*{ULL})}/(8\cdot 2^{\symPrecision^\text{(\acs*{ULL})}})}$. They are approximately equal if $\symPrecision^\text{(\acs*{HLL})} \approx \symPrecision^\text{(\acs*{ULL})} + 0.8$, which means that an \ac{ULL} with $\symPrecision=8$ is rather compared to a \ac{HLL} with $\symPrecision=9$.
% https://www.wolframalpha.com/input?i=ln%286.448539931661554*8%2F%284.935917157413099*6%29%29%2Fln%282%29
\cref{fig:add-performance-over-error} shows the average estimation time for $\symCardinality=10^6$ and averaged over the cases $\symCardinality \in \lbrace1,2,5,10,20,50,\ldots, 10^7\rbrace$, which shows that \ac{FGRA} estimation from an \ac{ULL} is often faster (except for the peaks) than \ac{CR} estimation from an \ac{HLL} of equivalent precision.

\begin{figure}[t]
  \centering
  \includegraphics[width=\linewidth]{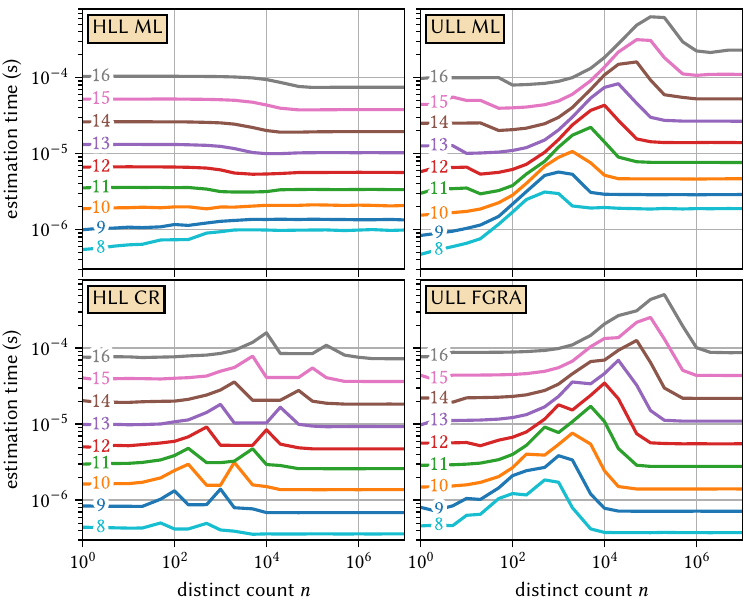}
  \caption{\boldmath Average estimation time over the true distinct count for \acs*{HLL} and \acs*{ULL} for various precisions $\symPrecision\in[8,16]$.}
  \label{fig:estimation-performance}
\end{figure}
\begin{figure}[t]
  \centering
  \includegraphics[width=\linewidth]{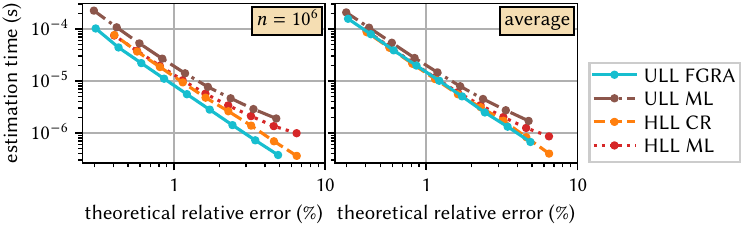}
  \caption{\boldmath
    Estimation time for $\symCardinality=10^6$ (left) and averaged over the cases $\symCardinality \in \lbrace1,2,5,10,20,50,\ldots, 10^7\rbrace$ (right) versus the theoretical relative error when varying $\symPrecision\in [8,16]$.}
  \label{fig:add-performance-over-error}
\end{figure}

\section{Future Work}
Although the \ac{ML} estimator has been shown to be efficient and achieves the Cram\'er-Rao bound, it is slower than the \ac{FGRA} estimator, which in turn has a worse efficiency of 94.6\%. Improving the \ac{ML} solver could be a way to achieve better estimation performance. Possibly, an alternative estimation method may also lead to a faster and even more efficient estimator.
Another interesting question is whether parameter choices with smaller \acp{MVP}, as discussed in \cref{sec:parameter-choice}, can be turned into practical data structures with efficient estimation algorithms. Finally, since \ac{HLL} has also been successfully applied for set similarity estimation \cite{Ertl2017, Nazi2018, Ertl2021}, what is the space efficiency of \ac{ULL} in this context, and can a fast and robust estimation algorithm be found for that as well?

\section{Conclusion}

We derived the \ac{ULL} sketch as a special case of a generalized data structure unifying \ac{HLL}, \ac{EHLL}, and \ac{PCSA}. The theoretically predicted space savings of 28\%, 24\%, and 17\% over \ac{HLL} when using the \ac{ML}, \ac{FGRA}, and martingale estimator, respectively, were perfectly matched by our experiments. The byte-sized registers lead to faster recording speed as well as better compressibility. Since \ac{ULL} also has the same properties as \ac{HLL} (constant-time insertions, idempotency, mergeability, reproducibility, reducibility), efficient and robust estimation with similar execution speed is possible, and even compatibility with \ac{HLL} can be achieved, we believe that \ac{ULL} has the potential to become the new standard algorithm for approximate distinct counting.

\bibliographystyle{ACM-Reference-Format}
\bibliography{bibliography}

\ifextended

  \clearpage

  \appendix

  \section{Proofs}
  \label{app:proofs}

  \begin{lemma}
    \label{lem:derivatives}
    If $\symZ_\symMaxUpdateVal := e^{-\frac{\symCardinality(\symBase-1)}{\symNumReg\symBase^{\symMaxUpdateVal}}}$ with $\symCardinality, \symNumReg > 0$ and $\symBase > 1$ the following identities hold:
    \begin{align}
      \symZ_\symMaxUpdateVal^\symBase
       & = \symZ_{\symMaxUpdateVal-1},\nonumber
      \\
      \ln\symZ_\symMaxUpdateVal
       & = -\frac{\symCardinality}{\symNumReg}\frac{\symBase-1}{\symBase^\symMaxUpdateVal}, \nonumber
      \\
      \frac{\partial}{\partial\symCardinality}\symZ_\symMaxUpdateVal
       & =\frac{1}{\symCardinality}\symZ_\symMaxUpdateVal\ln\symZ_\symMaxUpdateVal,
      \nonumber                                                                                       \\
      \frac{\partial}{\partial\symCardinality}\ln\symZ_\symMaxUpdateVal
       & =\frac{1}{\symCardinality}\ln\symZ_\symMaxUpdateVal,
      \nonumber                                                                                       \\
      \frac{\partial^2}{\partial\symCardinality^2}\ln\symZ_\symMaxUpdateVal
       & = 0,
      \nonumber                                                                                       \\
      \frac{\partial}{\partial\symCardinality}\ln(1-\symZ_\symMaxUpdateVal)
       & =
      -\frac{1}{\symCardinality}\frac{\symZ_\symMaxUpdateVal\ln\symZ_\symMaxUpdateVal}{1-\symZ_\symMaxUpdateVal},
      \nonumber                                                                                       \\
      \frac{\partial^2}{\partial\symCardinality^2}\ln(1-\symZ_\symMaxUpdateVal)
       & =
      -\frac{1}{\symCardinality^2}\frac{\symZ_\symMaxUpdateVal\ln^2 \symZ_\symMaxUpdateVal}{(1-\symZ_\symMaxUpdateVal)^2},
      \nonumber                                                                                       \\
      \frac{\partial^3}{\partial\symCardinality^3}\ln(1-\symZ_\symMaxUpdateVal)
       & =
      -\frac{1}{\symCardinality^3}\frac{\symZ_\symMaxUpdateVal(1+\symZ_\symMaxUpdateVal)\ln^3\symZ_\symMaxUpdateVal}{(1-\symZ_\symMaxUpdateVal)^3}.
      \nonumber
    \end{align}
  \end{lemma}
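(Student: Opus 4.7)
The proof is essentially a direct calculus exercise, so my plan is to exploit the fact that all identities follow from two basic observations about the exponent: $\ln \symZ_\symMaxUpdateVal = -\frac{\symCardinality(\symBase-1)}{\symNumReg\symBase^\symMaxUpdateVal}$ is linear in $\symCardinality$, and its coefficient scales by a factor of $\symBase$ when $\symMaxUpdateVal$ decreases by one.

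First I would dispatch the two algebraic identities. The relation $\symZ_\symMaxUpdateVal^\symBase = \symZ_{\symMaxUpdateVal-1}$ is immediate from raising $e^{-\frac{\symCardinality(\symBase-1)}{\symNumReg\symBase^\symMaxUpdateVal}}$ to the power $\symBase$, since $\symBase/\symBase^\symMaxUpdateVal = 1/\symBase^{\symMaxUpdateVal-1}$. The expression for $\ln\symZ_\symMaxUpdateVal$ is just the definition read off from the exponent.

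Next I would handle the first-order derivatives with respect to $\symCardinality$. Because $\ln \symZ_\symMaxUpdateVal = \symCardinality \cdot C_\symMaxUpdateVal$ for the constant $C_\symMaxUpdateVal = -(\symBase-1)/(\symNumReg\symBase^\symMaxUpdateVal)$, we have $\partial_\symCardinality \ln\symZ_\symMaxUpdateVal = C_\symMaxUpdateVal = \ln\symZ_\symMaxUpdateVal / \symCardinality$, which establishes the fourth identity and kills the second derivative (giving the fifth). The third identity then follows from $\partial_\symCardinality \symZ_\symMaxUpdateVal = \symZ_\symMaxUpdateVal \cdot \partial_\symCardinality\ln\symZ_\symMaxUpdateVal$.

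Finally, I would compute the three derivatives of $\ln(1-\symZ_\symMaxUpdateVal)$ by successive application of the chain and quotient rules, substituting $\partial_\symCardinality \symZ_\symMaxUpdateVal = \symZ_\symMaxUpdateVal \ln\symZ_\symMaxUpdateVal / \symCardinality$ at each step. The first gives $-\frac{\symZ_\symMaxUpdateVal \ln\symZ_\symMaxUpdateVal}{\symCardinality(1-\symZ_\symMaxUpdateVal)}$; differentiating once more and recognizing that each application introduces a factor of $\ln\symZ_\symMaxUpdateVal/\symCardinality$ in the numerator and another $(1-\symZ_\symMaxUpdateVal)$ in the denominator yields the $\symZ_\symMaxUpdateVal \ln^2\symZ_\symMaxUpdateVal / (1-\symZ_\symMaxUpdateVal)^2$ form. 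The only place requiring care is the third derivative, where differentiating $\symZ_\symMaxUpdateVal/(1-\symZ_\symMaxUpdateVal)^2$ via the quotient rule produces the factor $(1+\symZ_\symMaxUpdateVal)$ in the numerator; this is the one step where a sign error or a missing term is easiest to make, so I would double-check it by expanding $\partial_\symCardinality[\symZ_\symMaxUpdateVal(1-\symZ_\symMaxUpdateVal)^{-2}]$ explicitly. Apart from this bookkeeping, no step presents any real obstacle.
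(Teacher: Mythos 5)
Your proposal is correct and matches the paper's approach: the paper simply states that the proof is straightforward basic calculus, and your sketch carries out exactly that computation (linearity of $\ln\symZ_\symMaxUpdateVal$ in $\symCardinality$, then chain/quotient rule for the derivatives of $\ln(1-\symZ_\symMaxUpdateVal)$), with the key $(1+\symZ_\symMaxUpdateVal)$ factor in the third derivative handled correctly.
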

  \begin{proof}
    The proof is straightforward using basic calculus.
  \end{proof}

  \begin{lemma}
    \label{lem:approximation}
    If $\symZ_\symMaxUpdateVal := e^{-\frac{\symCardinality(\symBase-1)}{\symNumReg\symBase^{\symMaxUpdateVal}}}$ ($\symCardinality, \symNumReg > 0$ and $\symBase > 1$) and $\symSomeFunc$ is a smooth function on $(0,1)$ with $\symSomeFunc(0+)=0$ and $\symSomeFunc(1-)=0$,
    the series $\sum_{\symMaxUpdateVal=-\infty}^\infty \symSomeFunc(\symZ_{\symMaxUpdateVal})$ is a periodic function of $\log_\symBase(\symCardinality)$ with
    period 1. Furthermore, if the relative amplitude can be neglected, the following approximation can be used
    \begin{equation*}
      \sum_{\symMaxUpdateVal=-\infty}^\infty \symSomeFunc(\symZ_{\symMaxUpdateVal}) \approx
      \frac{1}{\ln\symBase}
      \int_{0}^\infty
      \frac{\symSomeFunc(e^{-\symY})}{\symY}d\symY
      =
      -\frac{1}{\ln\symBase}\int_{0}^1 \frac{\symSomeFunc(\symZ)}{\symZ\ln\symZ}d\symZ.
    \end{equation*}
  \end{lemma}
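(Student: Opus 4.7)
I would separate the lemma into its two claims. First I would establish the exact periodicity; then I would show that the stated integral is the mean value of the periodic function over one period, so that the approximation $\approx$ is precisely the hypothesis that the oscillation about that mean is negligible.

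For the periodicity, the substitution $\symCardinality \mapsto \symBase\symCardinality$ in the definition of $\symZ_\symMaxUpdateVal$ yields
\begin{equation*}
\symZ_\symMaxUpdateVal(\symBase\symCardinality)
= \exp\!\Bigl(-\tfrac{\symBase\,\symCardinality(\symBase-1)}{\symNumReg\symBase^\symMaxUpdateVal}\Bigr)
= \exp\!\Bigl(-\tfrac{\symCardinality(\symBase-1)}{\symNumReg\symBase^{\symMaxUpdateVal-1}}\Bigr)
= \symZ_{\symMaxUpdateVal-1}(\symCardinality).
\end{equation*}
Since the sum ranges over all integers, a shift of the summation index gives $\sum_\symMaxUpdateVal \symSomeFunc(\symZ_\symMaxUpdateVal(\symBase\symCardinality)) = \sum_\symMaxUpdateVal \symSomeFunc(\symZ_\symMaxUpdateVal(\symCardinality))$, which in the variable $t := \log_\symBase\symCardinality$ is precisely periodicity with period $1$. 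Absolute summability, needed to justify the reindexing, follows from $\symSomeFunc(0+) = \symSomeFunc(1-) = 0$ together with smoothness: as $\symMaxUpdateVal\to\infty$ one has $\symZ_\symMaxUpdateVal\to 1$ geometrically, and as $\symMaxUpdateVal\to-\infty$ one has $\symZ_\symMaxUpdateVal\to 0$ doubly exponentially, so both tails of $\symSomeFunc(\symZ_\symMaxUpdateVal)$ decay very rapidly.

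For the integral identity I would fix $\symCardinality_0>0$ and set $c_0 := \symCardinality_0(\symBase-1)/\symNumReg$. The mean of the periodic function over one period in $t$ is
\begin{equation*}
\int_0^1 \sum_{\symMaxUpdateVal=-\infty}^\infty \symSomeFunc\bigl(e^{-c_0\symBase^{t-\symMaxUpdateVal}}\bigr)\,dt
= \int_{-\infty}^\infty \symSomeFunc(e^{-c_0\symBase^s})\,ds,
\end{equation*}
where sum and integral have been interchanged by absolute summability and the unit intervals $[-\symMaxUpdateVal,1-\symMaxUpdateVal]$ produced by $s := t-\symMaxUpdateVal$ tile $\mathbb{R}$. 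The substitution $\symY = c_0\symBase^s$, with $ds = d\symY/(\symY\ln\symBase)$, then gives $\tfrac{1}{\ln\symBase}\int_0^\infty \symSomeFunc(e^{-\symY})/\symY\,d\symY$, and a further change of variables $\symZ = e^{-\symY}$ produces the second form in the statement. Since this value is independent of $\symCardinality_0$, it really is the mean of the periodic function.

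The main obstacle is turning the informal clause ``the relative amplitude can be neglected'' into something quantitative. The clean route is Mellin transform analysis in the style of Flajolet and Sedgewick: the Fourier coefficients of the periodic function in $t$ equal the Mellin transform of $\symSomeFunc(e^{-\symY})/\symY$ evaluated at the imaginary arguments $2\pi i k/\ln\symBase$ for $k\neq 0$. For all $\symSomeFunc$ appearing in this paper (the integrands behind the Fisher information and the Shannon entropy) these coefficients decay extremely rapidly in $|k|$, so the amplitude of the oscillation about the mean is far below any practically observable magnitude. Rather than chasing a sharp general bound, I would invoke this classical machinery abstractly and verify numerically that the dominant $k=\pm 1$ coefficient is negligible compared to the mean for the parameter ranges used throughout the paper.
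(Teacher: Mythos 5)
Your proposal is correct and follows essentially the same route as the paper's proof: periodicity via the index shift induced by $\symCardinality\mapsto\symBase\symCardinality$, and the approximation obtained by averaging the periodic series over one period and transforming the resulting integral by the substitutions $\symY = \symCardinality(\symBase-1)\symBase^{s}/\symNumReg$ and $\symZ=e^{-\symY}$. Your added remarks on absolute summability and the Mellin/Fourier justification of the negligible amplitude go slightly beyond the paper, which instead verifies the smallness of the oscillation numerically, but the core argument is the same.
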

  \begin{proof}
    The periodicity of the series is shown by
    \begin{multline*}
      \sum_{\symMaxUpdateVal=-\infty}^\infty \symSomeFunc(\symZ_{\symMaxUpdateVal})
      =
      \sum_{\symMaxUpdateVal=-\infty}^\infty \symSomeFunc\!\left(e^{-\frac{\symCardinality(\symBase-1)}{\symNumReg\symBase^{\symMaxUpdateVal}}}\right)
      \\
      =
      \sum_{\symMaxUpdateVal=-\infty}^\infty \symSomeFunc\!\left(e^{-\frac{\symBase-1}{\symNumReg\symBase^{\symMaxUpdateVal-\log_\symBase(\symCardinality)}}}\right)
      =
      \sum_{\symMaxUpdateVal=-\infty}^\infty \symSomeFunc\!\left(e^{-\frac{\symBase-1}{\symNumReg\symBase^{\symMaxUpdateVal-(1+\log_\symBase(\symCardinality))}}}\right).
    \end{multline*}
    If its relative amplitude can be neglected, we can approximate
    \begin{multline*}
      \sum_{\symMaxUpdateVal=-\infty}^\infty \symSomeFunc(\symZ_{\symMaxUpdateVal})
      =
      \sum_{\symMaxUpdateVal=-\infty}^\infty \symSomeFunc\!\left(e^{-\frac{\symBase-1}{\symNumReg\symBase^{\symMaxUpdateVal-\log_\symBase(\symCardinality)}}}\right)
      \\
      \approx
      \int_{0}^1 \sum_{\symMaxUpdateVal=-\infty}^\infty \symSomeFunc\!\left(e^{-\frac{\symBase-1}{\symNumReg\symBase^{\symMaxUpdateVal-\symX}}}\right) d\symX
      =
      \int_{-\infty}^\infty\symSomeFunc\!\left(e^{-\frac{(\symBase-1)\symBase^{\symX}}{\symNumReg}}\right) d\symX
      \\
      =
      \frac{1}{\ln\symBase}
      \int_{0}^\infty
      \frac{\symSomeFunc(e^{-\symY})}{\symY}d\symY
      =
      -\frac{1}{\ln\symBase}\int_{0}^1 \frac{\symSomeFunc(\symZ)}{\symZ\ln\symZ}d\symZ.
    \end{multline*}
  \end{proof}

  \begin{lemma}
    \label{lem:identity_pmf}
    For the \ac{PMF} given in \eqref{equ:simple_register_pmf} the following identity holds:
    \begin{multline*}
      \sum_{\symIndexBit_1,\ldots,\symIndexBit_\symNumExtraBits\in\lbrace 0,1\rbrace}
      \symDensityRegisterSimple(\symMaxUpdateVal 2^\symNumExtraBits + \langle\symIndexBit_1\ldots\symIndexBit_{\symNumExtraBits}\rangle_2\vert\symCardinality)
      \prod_{\symIndexS=1}^\symNumExtraBits
      \symIndexBit_\symIndexS^{\symAlpha_\symIndexS}
      (1-\symIndexBit_\symIndexS)^{\symBeta_\symIndexS}
      \\ =
      \symZ_{\symMaxUpdateVal}^{\frac{1}{\symBase-1}}
      (1-\symZ_{\symMaxUpdateVal})
      \prod_{\symIndexS=1}^\symNumExtraBits
      (1-\symZ_{\symMaxUpdateVal-\symIndexS})^{\symAlpha_\symIndexS} \symZ_{\symMaxUpdateVal-\symIndexS}^{\symBeta_\symIndexS}
    \end{multline*}
    where $\symAlpha_\symIndexS, \symBeta_\symIndexS \in \lbrace 0, 1 \rbrace$, $\symAlpha_\symIndexS + \symBeta_\symIndexS \leq 1$, and $0^0$ is defined as 1.
    As special cases, we have
    \begin{align*}
       & \sum_{\symIndexBit_1,\ldots,\symIndexBit_\symNumExtraBits\in\lbrace 0,1\rbrace}
      \symDensityRegisterSimple(\symMaxUpdateVal 2^\symNumExtraBits + \langle\symIndexBit_1\ldots\symIndexBit_{\symNumExtraBits}\rangle_2\vert\symCardinality)
      \symIndexBit_\symIndexS
      =
      \symZ_{\symMaxUpdateVal}^{\frac{1}{\symBase-1}}
      (1-\symZ_{\symMaxUpdateVal})(1-\symZ_{\symMaxUpdateVal-\symIndexS}),
      \\
       & \sum_{\symIndexBit_1,\ldots,\symIndexBit_\symNumExtraBits\in\lbrace 0,1\rbrace}
      \symDensityRegisterSimple(\symMaxUpdateVal 2^\symNumExtraBits + \langle\symIndexBit_1\ldots\symIndexBit_{\symNumExtraBits}\rangle_2\vert\symCardinality)
      (1-\symIndexBit_\symIndexS)
      =
      \symZ_{\symMaxUpdateVal}^{\frac{1}{\symBase-1}}
      (1-\symZ_{\symMaxUpdateVal})\symZ_{\symMaxUpdateVal-\symIndexS},
      \\
       & \sum_{\symIndexBit_1,\ldots,\symIndexBit_\symNumExtraBits\in\lbrace 0,1\rbrace}
      \symDensityRegisterSimple(\symMaxUpdateVal 2^\symNumExtraBits + \langle\symIndexBit_1\ldots\symIndexBit_{\symNumExtraBits}\rangle_2\vert\symCardinality)
      =
      \symZ_{\symMaxUpdateVal}^{\frac{1}{\symBase-1}}
      (1-\symZ_{\symMaxUpdateVal}).
    \end{align*}
  \end{lemma}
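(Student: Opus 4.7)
The plan is to exploit the fact that the summand factorizes across the index $\symIndexJ$, which reduces the $2^\symNumExtraBits$-term sum to a product of $\symNumExtraBits$ independent binary sums. The prefactor $\symZ_{\symMaxUpdateVal}^{1/(\symBase-1)}(1-\symZ_{\symMaxUpdateVal})$ in \eqref{equ:simple_register_pmf} does not involve any $\symIndexBit_\symIndexJ$ and pulls out of the sum immediately. What remains is
\begin{equation*}
\sum_{\symIndexBit_1,\ldots,\symIndexBit_\symNumExtraBits\in\{0,1\}}
\prod_{\symIndexJ=1}^{\symNumExtraBits}
\symZ_{\symMaxUpdateVal-\symIndexJ}^{1-\symIndexBit_\symIndexJ}(1-\symZ_{\symMaxUpdateVal-\symIndexJ})^{\symIndexBit_\symIndexJ}\,
\symIndexBit_\symIndexJ^{\symAlpha_\symIndexJ}(1-\symIndexBit_\symIndexJ)^{\symBeta_\symIndexJ},
\end{equation*}
which, since each factor depends only on its own $\symIndexBit_\symIndexJ$, equals
\begin{equation*}
\prod_{\symIndexJ=1}^{\symNumExtraBits}
\sum_{\symIndexBit\in\{0,1\}}
\symZ_{\symMaxUpdateVal-\symIndexJ}^{1-\symIndexBit}(1-\symZ_{\symMaxUpdateVal-\symIndexJ})^{\symIndexBit}\,
\symIndexBit^{\symAlpha_\symIndexJ}(1-\symIndexBit)^{\symBeta_\symIndexJ}.
\end{equation*}

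Next I would evaluate each inner two-term sum by a small case analysis on $(\symAlpha_\symIndexJ,\symBeta_\symIndexJ)\in\{(0,0),(1,0),(0,1)\}$, the three cases permitted by the constraint $\symAlpha_\symIndexJ+\symBeta_\symIndexJ\leq 1$. Writing $\symZ:=\symZ_{\symMaxUpdateVal-\symIndexJ}$ for brevity and using the convention $0^0=1$: for $(0,0)$ the sum equals $\symZ+(1-\symZ)=1$; for $(1,0)$ only the $\symIndexBit=1$ term survives, giving $1-\symZ$; for $(0,1)$ only the $\symIndexBit=0$ term survives, giving $\symZ$. In each case the value matches $(1-\symZ)^{\symAlpha_\symIndexJ}\symZ^{\symBeta_\symIndexJ}$, so reassembling the product yields the claimed right-hand side. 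The three special cases listed at the end of the lemma then follow by substituting $(\symAlpha_\symIndexS,\symBeta_\symIndexS)=(1,0)$, $(0,1)$, or all zeros (with all other indices set to $(0,0)$).

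There is no real obstacle here; the only subtlety worth stating explicitly is the convention $0^0=1$, which is what makes the $(0,0)$ case collapse cleanly and is essential for the uniform formula on the right-hand side. The argument is essentially a one-line factorization followed by a three-case table, so I would keep the write-up short and emphasize the factorization step.
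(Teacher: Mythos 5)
Your proposal is correct and follows essentially the same route as the paper's proof: substitute the PMF, pull out the bit-independent prefactor, factorize the sum over all bit vectors into a product of single-bit sums, and resolve each binary sum by the case analysis on $(\symAlpha_\symIndexS,\symBeta_\symIndexS)\in\lbrace(0,0),(1,0),(0,1)\rbrace$ using $0^0=1$. No gaps; the write-up can indeed be kept this short.
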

  \begin{proof}
    \begin{align*}
       & \sum_{\symIndexBit_1,\ldots,\symIndexBit_\symNumExtraBits\in\lbrace 0,1\rbrace}
      \symDensityRegisterSimple(\symMaxUpdateVal 2^\symNumExtraBits + \langle\symIndexBit_1\ldots\symIndexBit_{\symNumExtraBits}\rangle_2\vert\symCardinality)
      \prod_{\symIndexS=1}^\symNumExtraBits
      \symIndexBit_\symIndexS^{\symAlpha_\symIndexS}
      (1-\symIndexBit_\symIndexS)^{\symBeta_\symIndexS}
      \\
       & =
      \sum_{\symIndexBit_1,\ldots,\symIndexBit_\symNumExtraBits\in\lbrace 0,1\rbrace}
      \symZ_{\symMaxUpdateVal}^{\frac{1}{\symBase-1}}
      (1-\symZ_{\symMaxUpdateVal})
      \prod_{\symIndexS = 1}^\symNumExtraBits
      \symZ_{\symMaxUpdateVal-\symIndexS}^{1-\symIndexBit_\symIndexS}
      (1- \symZ_{\symMaxUpdateVal-\symIndexS})^{ \symIndexBit_\symIndexS}
      \symIndexBit_\symIndexS^{\symAlpha_\symIndexS}
      (1-\symIndexBit_\symIndexS)^{\symBeta_\symIndexS}
      \\
       & =
      \symZ_{\symMaxUpdateVal}^{\frac{1}{\symBase-1}}
      (1-\symZ_{\symMaxUpdateVal})
      \prod_{\symIndexS = 1}^\symNumExtraBits
      \sum_{\symIndexBit_\symIndexS\in\lbrace 0,1\rbrace}
      \symZ_{\symMaxUpdateVal-\symIndexS}^{1-\symIndexBit_\symIndexS}
      (1- \symZ_{\symMaxUpdateVal-\symIndexS})^{ \symIndexBit_\symIndexS}
      \symIndexBit_\symIndexS^{\symAlpha_\symIndexS}
      (1-\symIndexBit_\symIndexS)^{\symBeta_\symIndexS}
      \\
       & =
      \symZ_{\symMaxUpdateVal}^{\frac{1}{\symBase-1}}
      (1-\symZ_{\symMaxUpdateVal})
      \prod_{\symIndexS = 1}^\symNumExtraBits
      \symZ_{\symMaxUpdateVal-\symIndexS}
      0^{\symAlpha_\symIndexS}
      +
      (1- \symZ_{\symMaxUpdateVal-\symIndexS})
      0^{\symBeta_\symIndexS}
      \\
       & =
      \symZ_{\symMaxUpdateVal}^{\frac{1}{\symBase-1}}
      (1-\symZ_{\symMaxUpdateVal})
      \prod_{\symIndexS=1}^\symNumExtraBits
      (1-\symZ_{\symMaxUpdateVal-\symIndexS})^{\symAlpha_\symIndexS} \symZ_{\symMaxUpdateVal-\symIndexS}^{\symBeta_\symIndexS}.
    \end{align*}
    The last equality can be verified by considering all possible cases $(\symAlpha_\symIndexS, \symBeta_\symIndexS)\in\lbrace(0, 0),(1, 0),(0, 1) \rbrace$.
  \end{proof}

  \begin{lemma}
    \label{lem:fisher}
    The Fisher information $\symFisher$ of random variables $\symRegister_0,\symRegister_1,\ldots,\symRegister_{\symNumReg-1}$ that are distributed according to \eqref{equ:simple_register_pmf}, can be approximated with \cref{lem:approximation} by
    \begin{equation*}
      \symFisher = \symExpectation\!\left(-\frac{\partial^2 \ln\symLikelihood}{\partial \symCardinality^2}\right)
      \approx
      \frac{\symNumReg}{\symCardinality^2}
      \frac{1}{\ln \symBase}\symZetaFunc\!\left(2, 1 + \frac{\symBase^{-\symNumExtraBits}}{\symBase-1}\right).
    \end{equation*}
    $\symZetaFunc$ denotes the Hurvitz zeta function $\symZetaFunc(\symX, \symY) := \sum_{\symMaxUpdateVal=0}^\infty (\symMaxUpdateVal + \symY)^{-\symX}= \frac{1}{\symGammaFunc(\symX)}\int_0^\infty \frac{\symZ^{\symX-1}e^{-\symY\symZ}}{1-e^{-\symZ}}d\symZ$.
  \end{lemma}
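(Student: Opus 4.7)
My plan is to compute the single-register expected second log-derivative, sum over the $\symNumReg$ iid registers, simplify the resulting series by telescoping, and finally recognize the Hurwitz zeta function via its integral representation.

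First I would write $\ln\symLikelihood = \sum_{\symRegAddr=0}^{\symNumReg-1} \ln \symDensityRegisterSimple(\symRegister_\symRegAddr\vert\symCardinality)$ and, since the registers are iid under the simplified model, reduce the computation to $\symFisher = \symNumReg\, \symExpectation(-\partial_\symCardinality^2 \ln \symDensityRegisterSimple(\symRegister_0\vert\symCardinality))$. Expanding the logarithm of \eqref{equ:simple_register_pmf} gives a sum of terms of the form $\frac{1}{\symBase-1}\ln\symZ_\symMaxUpdateVal$, $\ln(1-\symZ_\symMaxUpdateVal)$, $(1-\symIndexBit_\symIndexJ)\ln\symZ_{\symMaxUpdateVal-\symIndexJ}$, and $\symIndexBit_\symIndexJ\ln(1-\symZ_{\symMaxUpdateVal-\symIndexJ})$. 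By \cref{lem:derivatives}, the derivatives of $\ln\symZ_\symMaxUpdateVal$ vanish at second order, so only the $\ln(1-\cdot)$ terms survive, contributing $\frac{1}{\symCardinality^2}\frac{\symZ_\symMaxUpdateVal \ln^2\symZ_\symMaxUpdateVal}{(1-\symZ_\symMaxUpdateVal)^2}$ for the leading factor and analogous terms weighted by $\symIndexBit_\symIndexJ$ for the extra bits.

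Next I would take the expectation by summing over $\symMaxUpdateVal\in\mathbb{Z}$ and $\symIndexBit_1,\ldots,\symIndexBit_\symNumExtraBits\in\{0,1\}$. The special cases of \cref{lem:identity_pmf} collapse the bit sums cleanly: the bare contribution yields $\symZ_\symMaxUpdateVal^{1/(\symBase-1)}(1-\symZ_\symMaxUpdateVal)$, while each $\symIndexBit_\symIndexJ$-weighted contribution yields $\symZ_\symMaxUpdateVal^{1/(\symBase-1)}(1-\symZ_\symMaxUpdateVal)(1-\symZ_{\symMaxUpdateVal-\symIndexJ})$. After dividing out the squared denominators, the per-register Fisher information takes the form
\begin{equation*}
\tfrac{1}{\symCardinality^2}\sum_{\symMaxUpdateVal\in\mathbb{Z}}\symZ_\symMaxUpdateVal^{1/(\symBase-1)}\Bigl[\tfrac{\symZ_\symMaxUpdateVal\ln^2\symZ_\symMaxUpdateVal}{1-\symZ_\symMaxUpdateVal}+\sum_{\symIndexJ=1}^{\symNumExtraBits}\tfrac{(1-\symZ_\symMaxUpdateVal)\symZ_{\symMaxUpdateVal-\symIndexJ}\ln^2\symZ_{\symMaxUpdateVal-\symIndexJ}}{1-\symZ_{\symMaxUpdateVal-\symIndexJ}}\Bigr].
\end{equation*}
The crucial step, which I expect to be the main obstacle, is simplifying this double expression into a single sum. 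I would shift the summation index $\symMaxUpdateVal\mapsto\symMaxUpdateVal+\symIndexJ$ in the inner terms and use $\symZ_{\symMaxUpdateVal+\symIndexJ}=\symZ_\symMaxUpdateVal^{\symBase^{-\symIndexJ}}$ (from \cref{lem:derivatives}) to express everything in terms of $\symZ_\symMaxUpdateVal$. After collecting, the bracket becomes $\symZ_\symMaxUpdateVal \ln^2\symZ_\symMaxUpdateVal$ times $\frac{\symZ_\symMaxUpdateVal^{1/(\symBase-1)}}{1-\symZ_\symMaxUpdateVal} + \sum_{\symIndexJ=1}^\symNumExtraBits \frac{\symZ_\symMaxUpdateVal^{\symBase^{-\symIndexJ}/(\symBase-1)}-\symZ_\symMaxUpdateVal^{\symBase^{-(\symIndexJ-1)}/(\symBase-1)}}{1-\symZ_\symMaxUpdateVal}$, where I used the factorization $\symZ^{a}(1-\symZ^b)=\symZ^{a}-\symZ^{a+b}$ with $a=\symBase^{-\symIndexJ}/(\symBase-1)$ and $b=\symBase^{-\symIndexJ}$, noting $a+b=\symBase^{-(\symIndexJ-1)}/(\symBase-1)$. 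This sum telescopes down to the single term $\symZ_\symMaxUpdateVal^{\symBase^{-\symNumExtraBits}/(\symBase-1)}/(1-\symZ_\symMaxUpdateVal)$, so the whole expression reduces to $\sum_{\symMaxUpdateVal\in\mathbb{Z}} \symSomeFunc(\symZ_\symMaxUpdateVal)$ with $\symSomeFunc(\symZ):=\symZ^{1+c}\ln^2\symZ/(1-\symZ)$ and $c:=\symBase^{-\symNumExtraBits}/(\symBase-1)$.

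Finally, I would verify $\symSomeFunc(0+)=\symSomeFunc(1-)=0$ so that \cref{lem:approximation} applies, giving $\sum_\symMaxUpdateVal \symSomeFunc(\symZ_\symMaxUpdateVal)\approx -\frac{1}{\ln\symBase}\int_0^1 \frac{\symZ^{c}\ln\symZ}{1-\symZ}\,d\symZ$. Substituting $\symY=-\ln\symZ$ converts this into $\frac{1}{\ln\symBase}\int_0^\infty \frac{\symY e^{-(1+c)\symY}}{1-e^{-\symY}}\,d\symY$, which by the integral definition of the Hurwitz zeta function (and $\symGammaFunc(2)=1$) equals $\frac{1}{\ln\symBase}\symZetaFunc(2,1+c)$. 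Multiplying by $\symNumReg/\symCardinality^2$ yields the claimed approximation.
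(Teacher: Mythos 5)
Your proposal is correct and follows essentially the same route as the paper's proof: reduce to a single register, apply the derivative identities of \cref{lem:derivatives} so only the $\ln(1-\symZ_\symMaxUpdateVal)$ terms contribute, collapse the bit sums via \cref{lem:identity_pmf}, shift the index and telescope to obtain $\sum_\symMaxUpdateVal \symZ_\symMaxUpdateVal^{1+\symBase^{-\symNumExtraBits}/(\symBase-1)}\ln^2(\symZ_\symMaxUpdateVal)/(1-\symZ_\symMaxUpdateVal)$, and finish with \cref{lem:approximation} and the integral representation of the Hurwitz zeta function. The only cosmetic difference is that you telescope in terms of powers $\symZ_\symMaxUpdateVal^{\symBase^{-\symIndexJ}/(\symBase-1)}$ while the paper writes the same cancellation as differences $\symZ_{\symMaxUpdateVal+\symIndexJ}^{1/(\symBase-1)}-\symZ_{\symMaxUpdateVal+\symIndexJ-1}^{1/(\symBase-1)}$.
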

  The numerical evaluations presented in \cref{fig:fisher_information_max_relative_error} indicate that the relative error introduced by \cref{lem:approximation} is smaller than $10^{-4}$ for any $\symNumExtraBits$ and $\symBase\leq 2$ and vanishes as $\symBase\rightarrow 1$.
  \begin{proof}
    Since the registers are independent under the Poisson approximation, it is sufficient to consider a single register and multiply the result by $\symNumReg$
    \begin{align*}
      \symFisher & = \symExpectation\!\left(-\frac{\partial^2 \ln\symLikelihood}{\partial \symCardinality^2}\right)
      =
      -\symNumReg\sum_{\symRegister=-\infty}^\infty
      \symDensityRegisterSimple(\symRegister\vert\symCardinality)
      \frac{\partial^2 \ln\symDensityRegisterSimple(\symRegister\vert\symCardinality)}{\partial \symCardinality^2}
      \\
                 & =
      \frac{\symNumReg}{\symCardinality^2}
      \sum_{\symMaxUpdateVal=-\infty}^\infty
      \sum_{\symIndexBit_1,\ldots,\symIndexBit_{\symNumExtraBits}\in\lbrace 0,1\rbrace} \symDensityRegisterSimple( \symMaxUpdateVal 2^\symNumExtraBits + \langle\symIndexBit_1\ldots\symIndexBit_{\symNumExtraBits}
      \rangle_2\vert\symCardinality)
      \cdot
      \\
                 & \quad\cdot
      \left(\frac{\symZ_\symMaxUpdateVal \ln^2\symZ_\symMaxUpdateVal}{(1-\symZ_\symMaxUpdateVal)^2}
      +
      \sum_{\symIndexJ = 1}^\symNumExtraBits
      \symIndexBit_\symIndexJ
      \frac{\symZ_{\symMaxUpdateVal-\symIndexJ}\ln^2\symZ_{\symMaxUpdateVal-\symIndexJ}}{(1-\symZ_{\symMaxUpdateVal-\symIndexJ})^2}\right).
    \end{align*}
    Here we used \eqref{equ:simple_register_pmf} and \cref{lem:derivatives} for the second-order derivatives. With the help of \cref{lem:identity_pmf} we can write
    \begin{align*}
       & \symFisher=
      \frac{\symNumReg}{\symCardinality^2}
      \sum_{\symMaxUpdateVal=-\infty}^\infty
      \symZ_\symMaxUpdateVal^{\frac{1}{\symBase-1}}(1-\symZ_\symMaxUpdateVal)
      \left(\frac{\symZ_\symMaxUpdateVal \ln^2\symZ_\symMaxUpdateVal}{(1-\symZ_\symMaxUpdateVal)^2}
      +
      \sum_{\symIndexJ = 1}^\symNumExtraBits
      \frac{\symZ_{\symMaxUpdateVal-\symIndexJ}\ln^2\symZ_{\symMaxUpdateVal-\symIndexJ}}{1-\symZ_{\symMaxUpdateVal-\symIndexJ}}\right)
      \\
       & =
      \frac{\symNumReg}{\symCardinality^2}
      \sum_{\symMaxUpdateVal=-\infty}^\infty
      \left(\frac{\symZ_\symMaxUpdateVal^{\frac{\symBase}{\symBase-1}}\ln^2\symZ_\symMaxUpdateVal
      }{1-\symZ_\symMaxUpdateVal}
      +
      \sum_{\symIndexJ = 1}^\symNumExtraBits
      \frac{\symZ_\symMaxUpdateVal^{\frac{1}{\symBase-1}}(1-\symZ_\symMaxUpdateVal)\symZ_{\symMaxUpdateVal-\symIndexJ} \ln^2\symZ_{\symMaxUpdateVal-\symIndexJ}}{1-\symZ_{\symMaxUpdateVal-\symIndexJ}}\right)
      \\
       & =
      \frac{\symNumReg}{\symCardinality^2}
      \sum_{\symMaxUpdateVal=-\infty}^\infty
      \left(\frac{\symZ_\symMaxUpdateVal^{\frac{1}{\symBase-1}}}{1-\symZ_\symMaxUpdateVal}
      +
      \sum_{\symIndexJ = 1}^\symNumExtraBits
      \frac{\symZ_{\symMaxUpdateVal+\symIndexJ}^\frac{1}{\symBase-1}-\symZ_{\symMaxUpdateVal+\symIndexJ-1}^\frac{1}{\symBase-1}}{1-\symZ_{\symMaxUpdateVal}}\right)
      \symZ_\symMaxUpdateVal \ln^2\symZ_\symMaxUpdateVal
      \\
       & =
      \frac{\symNumReg}{\symCardinality^2}
      \sum_{\symMaxUpdateVal=-\infty}^\infty
      \symZ_{\symMaxUpdateVal+\symNumExtraBits}^\frac{1}{\symBase-1}
      \frac{\symZ_\symMaxUpdateVal \ln^2\symZ_\symMaxUpdateVal}{1-\symZ_\symMaxUpdateVal}
      =
      \frac{\symNumReg}{\symCardinality^2}
      \sum_{\symMaxUpdateVal=-\infty}^\infty
      \frac{\symZ_\symMaxUpdateVal^{1+ \frac{\symBase^{-\symNumExtraBits}}{\symBase-1}}\ln^2 \symZ_\symMaxUpdateVal}{1-\symZ_\symMaxUpdateVal}.
    \end{align*}
    Using \cref{lem:approximation} with $\symSomeFunc(\symX) = \frac{\symX^{1+ \frac{\symBase^{-\symNumExtraBits}}{\symBase-1}}\ln^2 \symX}{1-\symX}$ finally gives
    \begin{align*}
      \symFisher & \approx
      \frac{\symNumReg}{\symCardinality^2}
      \frac{1}{\ln \symBase}
      \int_{0}^{\infty}
      \frac{e^{-\symY(1+ \frac{\symBase^{-\symNumExtraBits}}{\symBase-1})} \symY}{1-e^{-\symY}} d\symY
      =
      \frac{\symNumReg}{\symCardinality^2}
      \frac{1}{\ln \symBase}\symZetaFunc\!\left(2, 1 + \frac{\symBase^{-\symNumExtraBits}}{\symBase-1}\right).
    \end{align*}
  \end{proof}

  \begin{figure}
    \centering
    \includegraphics[width=\linewidth]{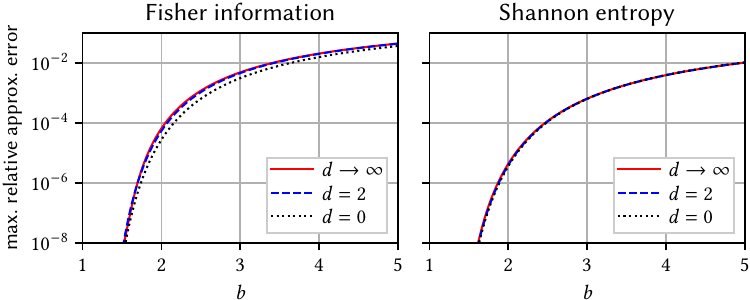}
    \caption{\boldmath Maximum relative approximation errors introduced by \cref{lem:approximation} for the Fisher information $\symFisher$ and the Shannon entropy $\symShannon$ over $\symBase$ for various values of $\symNumExtraBits$.}
    \label{fig:fisher_information_max_relative_error}
  \end{figure}

  \begin{lemma}
    \label{lem:shannon}
    The Shannon entropy $\symShannon$ of random variables $\symRegister_0,\symRegister_1,\ldots,\symRegister_{\symNumReg-1}$ that are distributed according to \eqref{equ:simple_register_pmf} can be approximated with \cref{lem:approximation} by
    \begin{equation*}
      \symShannon
      \approx
      \frac{\symNumReg}{(\ln 2)(\ln \symBase) }
      \left(
      \left(1+\frac{\symBase^{-\symNumExtraBits}}{\symBase-1}\right)^{\!-1}+
      \int_{0}^1 \symZ^{\frac{\symBase^{-\symNumExtraBits}}{\symBase-1}}
      \frac{
        (1-\symZ)
        \ln(1-\symZ)
      }{\symZ\ln(\symZ)}
      d\symZ
      \right).
    \end{equation*}
  \end{lemma}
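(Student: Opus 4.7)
The plan is to reduce the single-register entropy to two explicit sums over $\symMaxUpdateVal\in\mathbb{Z}$ and then apply \cref{lem:approximation} to each. By the independence of the registers under the Poisson approximation, $\symShannon = (\symNumReg/\ln 2)\,\symExpectation[-\ln\symDensityRegisterSimple(\symRegister_0\vert\symCardinality)]$. Taking the logarithm of \eqref{equ:simple_register_pmf} separates a part depending only on $\symMaxUpdateVal$ from $\symNumExtraBits$ pieces that are linear in the extra bits $\symIndexBit_\symIndexJ$, so the three special cases of \cref{lem:identity_pmf} evaluate each expectation in closed form. In particular, each extra-bit index $\symIndexJ$ contributes $\symZ_\symMaxUpdateVal^{1/(\symBase-1)}(1-\symZ_\symMaxUpdateVal)\bigl[\symZ_{\symMaxUpdateVal-\symIndexJ}\ln\symZ_{\symMaxUpdateVal-\symIndexJ} + (1-\symZ_{\symMaxUpdateVal-\symIndexJ})\ln(1-\symZ_{\symMaxUpdateVal-\symIndexJ})\bigr]$.

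I would next shift $\symMaxUpdateVal\mapsto \symMaxUpdateVal-\symIndexJ$ inside each extra-bit sum and exploit $\symZ_{\symMaxUpdateVal+\symIndexJ} = \symZ_\symMaxUpdateVal^{\symBase^{-\symIndexJ}}$ from \cref{lem:derivatives}. The weight rewrites as $\symZ_\symMaxUpdateVal^{\symBase^{-\symIndexJ}/(\symBase-1)} - \symZ_\symMaxUpdateVal^{\symBase^{-(\symIndexJ-1)}/(\symBase-1)}$, and summing over $\symIndexJ=1,\ldots,\symNumExtraBits$ telescopes to $\symZ_\symMaxUpdateVal^{\symBase^{-\symNumExtraBits}/(\symBase-1)}-\symZ_\symMaxUpdateVal^{1/(\symBase-1)}$. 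Combined with the $\symMaxUpdateVal$-part, every $\symZ_\symMaxUpdateVal^{1/(\symBase-1)}$-weighted $\ln(1-\symZ_\symMaxUpdateVal)$ cancels, and a reindexing in $\symMaxUpdateVal$ using $\symZ_\symMaxUpdateVal^{\symBase/(\symBase-1)} = \symZ_{\symMaxUpdateVal-1}^{1/(\symBase-1)}$ together with $\symBase\ln\symZ_\symMaxUpdateVal = \ln\symZ_{\symMaxUpdateVal-1}$ annihilates the residual $\symZ_\symMaxUpdateVal^{1/(\symBase-1)}\ln\symZ_\symMaxUpdateVal$ contributions. What survives is
\begin{equation*}
\symExpectation[-\ln\symDensityRegisterSimple] = -\sum_{\symMaxUpdateVal\in\mathbb{Z}} \symZ_\symMaxUpdateVal^{\frac{\symBase^{-\symNumExtraBits}}{\symBase-1}+1}\ln\symZ_\symMaxUpdateVal - \sum_{\symMaxUpdateVal\in\mathbb{Z}} \symZ_\symMaxUpdateVal^{\frac{\symBase^{-\symNumExtraBits}}{\symBase-1}}(1-\symZ_\symMaxUpdateVal)\ln(1-\symZ_\symMaxUpdateVal).
\end{equation*}

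Finally, \cref{lem:approximation} applies to each surviving sum. With the test function $\symSomeFunc(\symZ) = -\symZ^{\symBase^{-\symNumExtraBits}/(\symBase-1)+1}\ln\symZ$, the first sum reduces after an elementary integration to $\frac{1}{\ln\symBase}\bigl(1+\frac{\symBase^{-\symNumExtraBits}}{\symBase-1}\bigr)^{-1}$, while with $\symSomeFunc(\symZ) = -\symZ^{\symBase^{-\symNumExtraBits}/(\symBase-1)}(1-\symZ)\ln(1-\symZ)$ the second directly yields the claimed integral. Both test functions obviously satisfy the boundary conditions $\symSomeFunc(0+)=\symSomeFunc(1-)=0$. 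Multiplying by $\symNumReg/\ln 2$ produces the asserted formula.

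The main obstacle is the double telescoping in the middle step: first in $\symIndexJ$ to collapse all $\symNumExtraBits$ extra-bit weights, and then in $\symMaxUpdateVal$ to eliminate the otherwise intractable $\symZ_\symMaxUpdateVal^{1/(\symBase-1)}\ln\symZ_\symMaxUpdateVal$ residues. Tracking signs across the many $\ln(1-\symZ_\symMaxUpdateVal)$ cancellations is the most delicate part; everything else is a direct application of \cref{lem:derivatives,lem:identity_pmf,lem:approximation}.
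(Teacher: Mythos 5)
Your proposal is correct and follows essentially the same route as the paper's proof: independence under the Poisson approximation, \cref{lem:identity_pmf} to sum out the extra bits, the telescoping over $\symIndexJ$ that collapses the weights to $\symZ_\symMaxUpdateVal^{\symBase^{-\symNumExtraBits}/(\symBase-1)}-\symZ_\symMaxUpdateVal^{1/(\symBase-1)}$, and \cref{lem:approximation} applied to the surviving sums. The only (harmless) difference is that you eliminate the residual $\tfrac{1}{\symBase-1}\symZ_\symMaxUpdateVal^{1/(\symBase-1)}\ln(\symZ_\symMaxUpdateVal)(1-\symBase\symZ_\symMaxUpdateVal)$ contributions by an exact telescope in $\symMaxUpdateVal$, whereas the paper carries them into the final integral, which then evaluates to zero; both yield the stated formula.
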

  The numerical evaluations presented in \cref{fig:fisher_information_max_relative_error} indicate that the relative error introduced by \cref{lem:approximation} is smaller than $10^{-5}$ for any $\symNumExtraBits$ and $\symBase\leq 2$ and vanishes as $\symBase\rightarrow 1$.
  \begin{proof}
    Since the registers are independent under the Poisson approximation, we can write the expectation of $\ln \symLikelihood$ as
    \begin{align*}
       &
      \symExpectation(\ln \symLikelihood)
      =
      \symNumReg\sum_{\symRegister=-\infty}^\infty
      \symDensityRegisterSimple(\symRegister\vert\symCardinality)
      \ln\symDensityRegisterSimple(\symRegister\vert\symCardinality)
      \\
       & =
      \symNumReg
      \sum_{\symMaxUpdateVal=-\infty}^\infty
      \sum_{\symIndexBit_1,\ldots,\symIndexBit_{\symNumExtraBits}\in\lbrace 0,1\rbrace} \symDensityRegisterSimple( \symMaxUpdateVal 2^\symNumExtraBits + \langle\symIndexBit_1\ldots\symIndexBit_{\symNumExtraBits}
      \rangle_2\vert\symCardinality)
      \cdot
      \\
       & \quad\textstyle\cdot\ln\!\left(
      \symZ_\symMaxUpdateVal^{\frac{1}{\symBase-1}}(1-\symZ_\symMaxUpdateVal)
      \prod_{\symIndexJ = 1}^\symNumExtraBits
      \symZ_{\symMaxUpdateVal-\symIndexJ}^{1-\symIndexBit_\symIndexJ}
        (1-\symZ_{\symMaxUpdateVal-\symIndexJ})^{ \symIndexBit_\symIndexJ}
      \right)
      \\
       & =
      \symNumReg
      \sum_{\symMaxUpdateVal=-\infty}^\infty
      \sum_{\symIndexBit_1,\ldots,\symIndexBit_{\symNumExtraBits}\in\lbrace 0,1\rbrace} \symDensityRegisterSimple( \symMaxUpdateVal 2^\symNumExtraBits + \langle\symIndexBit_1\ldots\symIndexBit_{\symNumExtraBits}
      \rangle_2\vert\symCardinality)
      \cdot
      \\
       & \quad\cdot
      \left(\begin{array}{@{}l}
                \frac{\ln(\symZ_\symMaxUpdateVal)}{\symBase-1}
                +
                \ln(1-\symZ_\symMaxUpdateVal)
                \\
                +\sum_{\symIndexJ = 1}^\symNumExtraBits
                (1-\symIndexBit_\symIndexJ)\ln(\symZ_{\symMaxUpdateVal-\symIndexJ})
                +
                \symIndexBit_\symIndexJ\ln(1-\symZ_{\symMaxUpdateVal-\symIndexJ})
              \end{array}\right).
    \end{align*}
    Here we used \eqref{equ:simple_register_pmf}. \cref{lem:identity_pmf} allows to write
    \begin{align*}
       &
      \symExpectation(\ln \symLikelihood)
      \\
       & =
      \symNumReg\sum_{\symMaxUpdateVal=-\infty}^\infty
      \symZ_\symMaxUpdateVal^{\frac{1}{\symBase-1}}(1-\symZ_\symMaxUpdateVal)
      \left(
      \frac{\ln(\symZ_\symMaxUpdateVal)}{\symBase-1}
      +
      \ln(1-\symZ_\symMaxUpdateVal)
      \right)
      \\
       & \quad +
      \symNumReg\sum_{\symMaxUpdateVal=-\infty}^\infty\sum_{\symIndexJ = 1}^\symNumExtraBits
      \symZ_\symMaxUpdateVal^{\frac{1}{\symBase-1}}(1-\symZ_\symMaxUpdateVal)
      \left(\begin{array}{@{}l}
                \symZ_{\symMaxUpdateVal-\symIndexJ}\ln(\symZ_{\symMaxUpdateVal-\symIndexJ})
                \\
                +
                (1-\symZ_{\symMaxUpdateVal-\symIndexJ})\ln(1-\symZ_{\symMaxUpdateVal-\symIndexJ})
              \end{array}\right)
      \\
       & =
      \symNumReg\sum_{\symMaxUpdateVal=-\infty}^\infty
      \symZ_\symMaxUpdateVal^{\frac{1}{\symBase-1}}(1-\symZ_\symMaxUpdateVal)
      \left(
      \frac{\ln(\symZ_\symMaxUpdateVal)}{\symBase-1}
      +
      \ln(1-\symZ_\symMaxUpdateVal)
      \right)
      \\
       & \quad +
      \symNumReg\sum_{\symMaxUpdateVal=-\infty}^\infty\sum_{\symIndexJ = 1}^\symNumExtraBits
      (\symZ_{\symMaxUpdateVal+\symIndexJ}^{\frac{1}{\symBase-1}}-\symZ_{\symMaxUpdateVal+\symIndexJ-1}^{\frac{1}{\symBase-1}})
      \left(\begin{array}{@{}l}
                \symZ_{\symMaxUpdateVal}\ln(\symZ_{\symMaxUpdateVal})
                \\+
                (1-\symZ_{\symMaxUpdateVal})\ln(1-\symZ_{\symMaxUpdateVal})
              \end{array}\right)
      \\
       & =
      \symNumReg\sum_{\symMaxUpdateVal=-\infty}^\infty
      \symZ_\symMaxUpdateVal^{\frac{1}{\symBase-1}}(1-\symZ_\symMaxUpdateVal)
      \left(
      \frac{\ln(\symZ_\symMaxUpdateVal)}{\symBase-1}
      +
      \ln(1-\symZ_\symMaxUpdateVal)
      \right)
      \\
       & \quad +
      \symNumReg\sum_{\symMaxUpdateVal=-\infty}^\infty(\symZ_{\symMaxUpdateVal+\symNumExtraBits}^{\frac{1}{\symBase-1}}-\symZ_{\symMaxUpdateVal}^{\frac{1}{\symBase-1}})
      \left(
      \symZ_{\symMaxUpdateVal}\ln(\symZ_{\symMaxUpdateVal})
      +
      (1-\symZ_{\symMaxUpdateVal})\ln(1-\symZ_{\symMaxUpdateVal})
      \right)
      \\
       & =
      \symNumReg\sum_{\symMaxUpdateVal=-\infty}^\infty
      \symZ_\symMaxUpdateVal^{\frac{1}{\symBase-1}}(1-\symZ_\symMaxUpdateVal)
      \frac{\ln(\symZ_\symMaxUpdateVal)}{\symBase-1}
      -
      \symZ_{\symMaxUpdateVal}^{\frac{1}{\symBase-1}}
      \symZ_{\symMaxUpdateVal}\ln(\symZ_{\symMaxUpdateVal})
      \\
       & \quad + \symNumReg\sum_{\symMaxUpdateVal=-\infty}^\infty
      \symZ_{\symMaxUpdateVal+\symNumExtraBits}^{\frac{1}{\symBase-1}}
      \left(
      \symZ_{\symMaxUpdateVal}\ln(\symZ_{\symMaxUpdateVal})
      +
      (1-\symZ_{\symMaxUpdateVal})\ln(1-\symZ_{\symMaxUpdateVal})
      \right)
      \\
       & =
      \symNumReg\sum_{\symMaxUpdateVal=-\infty}^\infty
      \symZ_\symMaxUpdateVal^{\frac{1}{\symBase-1}}(1-\symZ_\symMaxUpdateVal)
      \frac{\ln(\symZ_\symMaxUpdateVal)}{\symBase-1}
      -
      \symZ_{\symMaxUpdateVal}^{\frac{1}{\symBase-1}}
      \symZ_{\symMaxUpdateVal}\ln(\symZ_{\symMaxUpdateVal})
      \\
       & \quad + \symNumReg\sum_{\symMaxUpdateVal=-\infty}^\infty
      \symZ_{\symMaxUpdateVal}^{\frac{\symBase^{-\symNumExtraBits}}{\symBase-1}}
      \left(
      \symZ_{\symMaxUpdateVal}\ln(\symZ_{\symMaxUpdateVal})
      +
      (1-\symZ_{\symMaxUpdateVal})\ln(1-\symZ_{\symMaxUpdateVal})
      \right)
      \\
       & =
      \symNumReg\sum_{\symMaxUpdateVal=-\infty}^\infty
      \frac{\symZ_\symMaxUpdateVal^{\frac{1}{\symBase-1}}\ln(\symZ_\symMaxUpdateVal)(1-\symZ_\symMaxUpdateVal\symBase)}{\symBase-1}
      \\
       & \quad +
      \symNumReg\sum_{\symMaxUpdateVal=-\infty}^\infty
      \symZ_{\symMaxUpdateVal}^{\frac{\symBase^{-\symNumExtraBits}}{\symBase-1}}
      \left(
      \symZ_{\symMaxUpdateVal}\ln(\symZ_{\symMaxUpdateVal})
      +
      (1-\symZ_{\symMaxUpdateVal})\ln(1-\symZ_{\symMaxUpdateVal})
      \right).
    \end{align*}
    Using the definition of the Shannon entropy $\symShannon = -\symExpectation(\log_2 \symLikelihood)$ and applying \cref{lem:approximation} with function $\symSomeFunc(\symX) =
      \frac{\symX^\frac{1}{\symBase-1}\ln(\symX)(1-\symX\symBase)}{\symBase-1}+
      \symX^\frac{\symBase^{-\symNumExtraBits}}{\symBase-1}\left(
      \symX\ln(\symX) + (1-\symX)\ln(1-\symX)
      \right)$ finally gives
    \begin{align*}
      \symShannon
       & = -\symExpectation(\log_2 \symLikelihood) = -\frac{1}{\ln 2} \symExpectation(\ln \symLikelihood) \\
       & \approx
      \textstyle
      \frac{\symNumReg}{(\ln\symBase)(\ln 2)}
      \left(
      \int_0^1
      \frac{\symZ^{\frac{1}{\symBase-1}-1}(1-\symZ\symBase)}{\symBase-1}
      +
      \symZ^{\frac{\symBase^{-\symNumExtraBits}}{\symBase-1}}
      \left(
      1 + \frac{(1-\symZ)\ln(1-\symZ)}{\symZ\ln\symZ}
      \right)
      d\symZ
      \right)
      \\
       & =
      \textstyle
      \frac{\symNumReg}{(\ln 2) (\ln \symBase)}
      \left(
      \left(1+\frac{\symBase^{-\symNumExtraBits}}{\symBase-1}\right)^{\!-1}+
      \int_{0}^1 \symZ^{\frac{\symBase^{-\symNumExtraBits}}{\symBase-1}}
      \frac{
        (1-\symZ)
        \ln(1-\symZ)
      }{\symZ\ln \symZ}
      d\symZ
      \right).
    \end{align*}
  \end{proof}

  \begin{lemma}
    \label{lem:first_order_ml_bias}
    The first-order bias of the \ac{ML} estimate $\symCardinalityEstimatorML$ for the distinct count $\symCardinality$ under the assumption of the simplified \ac{PMF} \eqref{equ:simple_register_pmf} is roughly given by
    \begin{equation*}
      \symBias(\symCardinalityEstimatorML) \approx \frac{\symCardinality}{\symNumReg} (\ln \symBase)
      \left(
      1 +\frac{2\symBase^{-\symNumExtraBits}}{\symBase-1}
      \right)
      \frac{
        \symZetaFunc\!\left(3, 1 + \frac{\symBase^{-\symNumExtraBits}}{\symBase-1}\right)}{\left(\symZetaFunc\!\left(2, 1 + \frac{\symBase^{-\symNumExtraBits}}{\symBase-1}\right)\right)^2}
    \end{equation*}
    where $\symZetaFunc$ denotes the Hurvitz zeta function as introduced in \cref{lem:fisher}.
    The bias-corrected \ac{ML} estimator is therefore given by
    \begin{equation*}
      \symCardinalityEstimatorML \left(1 + \frac{1}{\symNumReg} (\ln \symBase)
      \left(
      1 +\frac{2\symBase^{-\symNumExtraBits}}{\symBase-1}
      \right)
      \frac{
        \symZetaFunc\!\left(3, 1 + \frac{\symBase^{-\symNumExtraBits}}{\symBase-1}\right)}{\left(\symZetaFunc\!\left(2, 1 + \frac{\symBase^{-\symNumExtraBits}}{\symBase-1}\right)\right)^2}\right)^{\!-1}.
    \end{equation*}
  \end{lemma}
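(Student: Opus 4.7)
The plan is to apply the Cox--Snell first-order bias formula for \ac{ML} estimators \cite{Cox1968}, which under standard regularity and for a scalar parameter gives
\[
\symBias(\symCardinalityEstimatorML) \approx \frac{1}{2\symFisher^2}\left(\symExpectation\!\left[\frac{\partial^3 \ln \symLikelihood}{\partial \symCardinality^3}\right] + 2\,\symExpectation\!\left[\frac{\partial \ln \symLikelihood}{\partial \symCardinality}\frac{\partial^2 \ln \symLikelihood}{\partial \symCardinality^2}\right]\right),
\]
with $\symFisher$ already provided by \cref{lem:fisher}. The task therefore reduces to evaluating the third-derivative expectation and the score--Hessian cross expectation under the simplified \ac{PMF} \eqref{equ:simple_register_pmf}.

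First I would use the Poisson approximation, which makes the registers independent. Linearity of expectation immediately collapses the third-derivative expectation to $\symNumReg$ times a single-register contribution. The score--Hessian cross expectation expands as a double sum over register pairs, but off-diagonal pairs vanish by the zero-mean property of the score, leaving again $\symNumReg$ times a single-register contribution. For the single-register quantities I would expand the derivatives of $\ln \symDensityRegisterSimple$ term by term via \cref{lem:derivatives}, noting that $\partial^2 \ln \symZ_\symMaxUpdateVal/\partial \symCardinality^2 = 0$, so only the $\ln(1-\symZ_\symMaxUpdateVal)$ and $\ln(1-\symZ_{\symMaxUpdateVal-\symIndexJ})$ pieces of \eqref{equ:simple_register_pmf} contribute beyond first order.

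Next I would sum over the bit configurations $\symIndexBit_1,\ldots,\symIndexBit_\symNumExtraBits$ using \cref{lem:identity_pmf}, exactly as in the proof of \cref{lem:fisher}. The resulting sums over $\symMaxUpdateVal$ contain factors of the form $\symZ_\symMaxUpdateVal^{1/(\symBase-1)}(1-\symZ_\symMaxUpdateVal)$ multiplied by rational-logarithmic expressions in $\symZ_{\symMaxUpdateVal-\symIndexJ}$; using the key identity $\symZ_\symMaxUpdateVal^{1/(\symBase-1)}(1-\symZ_\symMaxUpdateVal) = \symZ_\symMaxUpdateVal^{1/(\symBase-1)} - \symZ_{\symMaxUpdateVal-1}^{1/(\symBase-1)}$ together with re-indexing in $\symMaxUpdateVal$, the sums over $\symIndexJ$ telescope and leave a single sum of the form $\sum_\symMaxUpdateVal \symZ_\symMaxUpdateVal^{1+\symBase^{-\symNumExtraBits}/(\symBase-1)}\,G(\symZ_\symMaxUpdateVal)$ for an appropriate $G$. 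Applying \cref{lem:approximation} and substituting $\symY = -\ln \symZ$ then yields integrals that evaluate, via the $\symGammaFunc$-function identity $\int_0^\infty \symY^2 e^{-\symY(1+\alpha)}/(1-e^{-\symY})\,d\symY = 2\,\symZetaFunc(3,1+\alpha)$, to $\symZetaFunc(3, 1+\symBase^{-\symNumExtraBits}/(\symBase-1))$ in both expectations; the third-derivative term contributes the ``$1$'' and the cross term contributes the ``$2\symBase^{-\symNumExtraBits}/(\symBase-1)$'' of the claimed prefactor.

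Finally, dividing the combined expectation by $2\symFisher^2$ with $\symFisher \approx \frac{\symNumReg}{\symCardinality^2 \ln \symBase}\symZetaFunc(2, 1+\symBase^{-\symNumExtraBits}/(\symBase-1))$ yields the stated bias, which is linear in $\symCardinality$; the bias-corrected estimator follows from the standard linearization $\symCardinalityEstimator = \symCardinalityEstimatorML/(1 + \symBias(\symCardinalityEstimatorML)/\symCardinality)$. The main obstacle I anticipate is the bookkeeping for the cross expectation, which involves a product of first and second derivatives evaluated at shifted indices $\symMaxUpdateVal - \symIndexJ$ and thus combines two layers of telescoping; carefully matching which telescoping step produces the additional factor $\symBase^{-\symNumExtraBits}/(\symBase-1)$ and ruling out residual $\symZetaFunc(3,\cdot)$-type terms of the wrong form is where the calculation could most easily go astray.
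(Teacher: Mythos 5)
Your proposal is correct and follows essentially the same route as the paper's proof: the Cox--Snell formula with the Fisher information from \cref{lem:fisher}, reduction to a single register under the Poisson approximation (off-diagonal cross terms vanishing by the zero-mean score), derivative expansion via \cref{lem:derivatives}, bit-summation via \cref{lem:identity_pmf}, telescoping through $\symZ_\symMaxUpdateVal^\symBase=\symZ_{\symMaxUpdateVal-1}$, and finally \cref{lem:approximation} with the $\symGammaFunc$/Hurwitz-zeta integral, divided by $\symFisher^2$. The only imprecision is your attribution of the prefactor: neither expectation alone reduces to a pure $\symZetaFunc(3,\cdot)$ term (each separately also carries a $\symZetaFunc(2,\cdot)$ piece that cancels only in the sum); the paper instead combines the two integrands before integrating, where $\frac{1+\symZ}{2}-\symZ\bigl(1+\tfrac{\symBase^{-\symNumExtraBits}}{\symBase-1}\bigr)+\tfrac{\symBase^{-\symNumExtraBits}}{\symBase-1}=(1-\symZ)\bigl(\tfrac{1}{2}+\tfrac{\symBase^{-\symNumExtraBits}}{\symBase-1}\bigr)$ is what produces the clean factor $1+\tfrac{2\symBase^{-\symNumExtraBits}}{\symBase-1}$, exactly the bookkeeping risk you flagged.
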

  \begin{proof}
    The first-order bias of the \ac{ML} estimator is given by \cite{Cox1968}
    \begin{equation*}
      \symBias(\symCardinalityEstimatorML)
      \approx
      \frac{1}{\symFisher^{2}}
      \symExpectation\!\left(
      \frac{1}{2}\frac{\partial^3 \ln\symLikelihood}{\partial \symCardinality^3}
      +\frac{\partial^2 \ln\symLikelihood}{\partial \symCardinality^2}\frac{\partial \ln\symLikelihood}{\partial \symCardinality}
      \right).
    \end{equation*}
    We have
    \begin{align*}
       & \symExpectation\!\left(\frac{\partial^3 \ln\symLikelihood}{\partial \symCardinality^3}\right)
      =
      \symNumReg\sum_{\symRegister=-\infty}^\infty
      \symDensityRegisterSimple(\symRegister\vert\symCardinality)
      \frac{\partial^3 \ln\symDensityRegisterSimple(\symRegister\vert\symCardinality)}{\partial \symCardinality^3}
      \\
       & =
      -\frac{\symNumReg}{\symCardinality^3}
      \sum_{\symMaxUpdateVal=-\infty}^\infty
      \sum_{\symIndexBit_1,\ldots,\symIndexBit_{\symNumExtraBits}\in\lbrace 0,1\rbrace} \symDensityRegisterSimple(\symMaxUpdateVal2^\symNumExtraBits + \langle\symIndexBit_1\ldots\symIndexBit_{\symNumExtraBits}
      \rangle_2\vert\symCardinality)
      \cdot
      \\
       & \quad \cdot\left(\frac{\symZ_\symMaxUpdateVal(1+\symZ_\symMaxUpdateVal)\ln^3 \symZ_\symMaxUpdateVal}{(1-\symZ_\symMaxUpdateVal)^3}
      +
      \sum_{\symIndexJ = 1}^\symNumExtraBits
      \symIndexBit_\symIndexJ\frac{\symZ_{\symMaxUpdateVal-\symIndexJ}(1+\symZ_{\symMaxUpdateVal-\symIndexJ})\ln^3\symZ_{\symMaxUpdateVal-\symIndexJ}}{(1-\symZ_{\symMaxUpdateVal-\symIndexJ})^3}\right).
    \end{align*}
    Here we used \eqref{equ:simple_register_pmf} and \cref{lem:derivatives} for the third-order derivatives. Using \cref{lem:identity_pmf} leads to
    \begin{align*}
       & \symExpectation\!\left(\frac{\partial^3 \ln\symLikelihood}{\partial \symCardinality^3}\right)
      =
      -\frac{\symNumReg}{\symCardinality^3}
      \sum_{\symMaxUpdateVal=-\infty}^\infty \symZ_\symMaxUpdateVal^{\frac{1}{\symBase-1}}(1-\symZ_\symMaxUpdateVal)
      \cdot
      \\
       & \quad
      \cdot
      \left(\frac{\symZ_\symMaxUpdateVal(1+\symZ_\symMaxUpdateVal)\ln^3 \symZ_\symMaxUpdateVal}{(1-\symZ_\symMaxUpdateVal)^3}
      +
      \sum_{\symIndexJ = 1}^\symNumExtraBits
      \frac{\symZ_{\symMaxUpdateVal-\symIndexJ}(1+\symZ_{\symMaxUpdateVal-\symIndexJ})\ln^3 \symZ_{\symMaxUpdateVal-\symIndexJ}}{(1-\symZ_{\symMaxUpdateVal-\symIndexJ})^2}\right)
      \\
       & =
      -\frac{\symNumReg}{\symCardinality^3}
      \sum_{\symMaxUpdateVal=-\infty}^\infty
      \left(
      \begin{array}{@{}l}
          \frac{\symZ_\symMaxUpdateVal^{\frac{\symBase}{\symBase-1}}(1+\symZ_\symMaxUpdateVal)\ln^3 \symZ_\symMaxUpdateVal}{(1-\symZ_\symMaxUpdateVal)^2}
          \\
          +
          \sum_{\symIndexJ = 1}^\symNumExtraBits
          (\symZ_{\symMaxUpdateVal+\symIndexJ}^{\frac{1}{\symBase-1}}-\symZ_{\symMaxUpdateVal+\symIndexJ-1}^{\frac{1}{\symBase-1}})
          \frac{\symZ_{\symMaxUpdateVal}(1+\symZ_{\symMaxUpdateVal})\ln^3 \symZ_\symMaxUpdateVal}{(1-\symZ_{\symMaxUpdateVal})^2}
        \end{array}
      \right)
      \\
       & =
      -\frac{\symNumReg}{\symCardinality^3}
      \sum_{\symMaxUpdateVal=-\infty}^\infty
      \left(
      \symZ_\symMaxUpdateVal^{\frac{\symBase}{\symBase-1}}
      +
      (\symZ_{\symMaxUpdateVal+\symNumExtraBits}^{\frac{1}{\symBase-1}}-\symZ_{\symMaxUpdateVal}^{\frac{1}{\symBase-1}})
      \symZ_{\symMaxUpdateVal}
      \right)
      \frac{(1+\symZ_\symMaxUpdateVal)\ln^3\symZ_\symMaxUpdateVal}{(1-\symZ_\symMaxUpdateVal)^2}
      \\
       & =
      -\frac{\symNumReg}{\symCardinality^3}
      \sum_{\symMaxUpdateVal=-\infty}^\infty
      \frac{\symZ_{\symMaxUpdateVal}^{1 + \frac{\symBase^{-\symNumExtraBits}}{\symBase-1}}(1+\symZ_\symMaxUpdateVal)\ln^3\symZ_\symMaxUpdateVal}{(1-\symZ_\symMaxUpdateVal)^2}.
    \end{align*}
    Furthermore, we have
    \begin{align*}
       & \symExpectation\!\left(\frac{\partial \ln\symLikelihood}{\partial \symCardinality} \frac{\partial^2 \ln\symLikelihood}{\partial \symCardinality^2}\right)
      \\
       & =
      \symNumReg\sum_{\symRegister=-\infty}^\infty
      \symDensityRegisterSimple(\symRegister\vert\symCardinality)
      \frac{\partial \ln\symDensityRegisterSimple(\symRegister\vert\symCardinality)}{\partial \symCardinality}
      \frac{\partial^2 \ln\symDensityRegisterSimple(\symRegister\vert\symCardinality)}{\partial \symCardinality^2}
      \\
       & =
      -\frac{\symNumReg}{\symCardinality^3}
      \sum_{\symMaxUpdateVal=-\infty}^\infty
      \sum_{\symIndexBit_1,\ldots,\symIndexBit_{\symNumExtraBits}\in\lbrace 0,1\rbrace} \symDensityRegisterSimple(\symMaxUpdateVal2^\symNumExtraBits + \langle\symIndexBit_1\ldots\symIndexBit_{\symNumExtraBits}
      \rangle_2\vert\symCardinality)
      \cdot
      \\
       & \quad
      \cdot\left(\frac{\ln\symZ_\symMaxUpdateVal}{\symBase-1}-\frac{\symZ_\symMaxUpdateVal\ln\symZ_\symMaxUpdateVal}{1-\symZ_\symMaxUpdateVal}
      +
      \sum_{\symIndexJ = 1}^\symNumExtraBits
      (1-\symIndexBit_\symIndexJ)
      \ln\symZ_{\symMaxUpdateVal - \symIndexJ}
      -
      \symIndexBit_\symIndexJ
      \frac{\symZ_{\symMaxUpdateVal - \symIndexJ}\ln\symZ_{\symMaxUpdateVal - \symIndexJ}}{1-\symZ_{\symMaxUpdateVal - \symIndexJ}}\right)
      \cdot
      \\
       & \quad\cdot
      \left(\frac{\symZ_\symMaxUpdateVal \ln^2\symZ_\symMaxUpdateVal}{(1-\symZ_\symMaxUpdateVal)^2}
      +
      \sum_{\symIndexJ = 1}^\symNumExtraBits
      \symIndexBit_\symIndexJ
      \frac{\symZ_{\symMaxUpdateVal-\symIndexJ}\ln^2\symZ_{\symMaxUpdateVal-\symIndexJ}}{(1-\symZ_{\symMaxUpdateVal-\symIndexJ})^2}\right).
    \end{align*}
    Here we used \eqref{equ:simple_register_pmf} and \cref{lem:derivatives} for the derivatives. Using again \cref{lem:identity_pmf} together with $\symIndexBit_\symIndexJ (1- \symIndexBit_\symIndexJ) = 0$ and $\symIndexBit_\symIndexJ^2 = \symIndexBit_\symIndexJ$ because $\symIndexBit_\symIndexJ\in\lbrace0,1\rbrace$ gives
    \begin{align*}
       & \symExpectation\!\left(\frac{\partial \ln\symLikelihood}{\partial \symCardinality} \frac{\partial^2 \ln\symLikelihood}{\partial \symCardinality^2}\right)
      \\
       & =
      -\frac{\symNumReg}{\symCardinality^3}
      \sum_{\symMaxUpdateVal=-\infty}^\infty
      \symZ_\symMaxUpdateVal^{\frac{1}{\symBase-1}}(1-\symZ_\symMaxUpdateVal)
      \left(\frac{\ln\symZ_\symMaxUpdateVal}{\symBase-1}-\frac{\symZ_\symMaxUpdateVal\ln\symZ_\symMaxUpdateVal}{1-\symZ_\symMaxUpdateVal}
      \right)
      \cdot
      \\
       & \quad\cdot
      \left(\frac{\symZ_\symMaxUpdateVal \ln^2 \symZ_\symMaxUpdateVal}{(1-\symZ_\symMaxUpdateVal)^2}
      +
      \sum_{\symIndexJ = 1}^\symNumExtraBits
      \frac{\symZ_{\symMaxUpdateVal-\symIndexJ}\ln^2 \symZ_{\symMaxUpdateVal-\symIndexJ}}{1-\symZ_{\symMaxUpdateVal-\symIndexJ}}\right)
      \\
       & \quad+
      \frac{\symNumReg}{\symCardinality^3}
      \sum_{\symMaxUpdateVal=-\infty}^\infty
      \symZ_\symMaxUpdateVal^{\frac{1}{\symBase-1}}(1-\symZ_\symMaxUpdateVal)
      \sum_{\symIndexJ = 1}^\symNumExtraBits
      \frac{\symZ^2_{\symMaxUpdateVal-\symIndexJ}\ln^3 \symZ_{\symMaxUpdateVal-\symIndexJ}}{(1-\symZ_{\symMaxUpdateVal-\symIndexJ})^2}
      \\
       & =
      \frac{\symNumReg}{\symCardinality^3}
      \sum_{\symMaxUpdateVal=-\infty}^\infty
      \symZ_\symMaxUpdateVal^{\frac{1}{\symBase-1}}(1-\symZ_\symMaxUpdateVal)
      \left(-\frac{\ln\symZ_\symMaxUpdateVal}{\symBase-1}+\frac{\symZ_\symMaxUpdateVal\ln\symZ_\symMaxUpdateVal}{1-\symZ_\symMaxUpdateVal}
      \right)
      \cdot
      \\
       & \quad \cdot
      \left(\frac{\symZ_\symMaxUpdateVal \ln^2 \symZ_\symMaxUpdateVal}{(1-\symZ_\symMaxUpdateVal)^2}
      +
      \sum_{\symIndexJ = 1}^\symNumExtraBits
      \frac{\symZ_{\symMaxUpdateVal-\symIndexJ}\ln^2 \symZ_{\symMaxUpdateVal-\symIndexJ}}{1-\symZ_{\symMaxUpdateVal-\symIndexJ}}\right)
      \\
       & \quad +
      \frac{\symNumReg}{\symCardinality^3}
      \sum_{\symMaxUpdateVal=-\infty}^\infty
      \sum_{\symIndexJ = 1}^\symNumExtraBits
      (\symZ_{\symMaxUpdateVal+\symIndexJ}^{\frac{1}{\symBase-1}}-\symZ_{\symMaxUpdateVal+\symIndexJ-1}^{\frac{1}{\symBase-1}})
      \frac{\symZ^2_{\symMaxUpdateVal}\ln^3 \symZ_{\symMaxUpdateVal}}{(1-\symZ_{\symMaxUpdateVal})^2}
      \\
       & =
      \frac{\symNumReg}{\symCardinality^3}
      \sum_{\symMaxUpdateVal=-\infty}^\infty
      \left(\begin{array}{@{}l}
                \symZ_\symMaxUpdateVal^{\frac{1}{\symBase-1}}
                \frac{\symZ_\symMaxUpdateVal^2 \ln^3 \symZ_\symMaxUpdateVal}{(1-\symZ_\symMaxUpdateVal)^2}
                +
                \symZ_\symMaxUpdateVal^{\frac{1}{\symBase-1}}
                \symZ_\symMaxUpdateVal(\ln\symZ_\symMaxUpdateVal)
                \left(\sum_{\symIndexJ = 1}^\symNumExtraBits
                \frac{\symZ_{\symMaxUpdateVal-\symIndexJ}\ln^2 \symZ_{\symMaxUpdateVal-\symIndexJ}}{1-\symZ_{\symMaxUpdateVal-\symIndexJ}}\right)
                \\
                -
                \frac{\symZ_\symMaxUpdateVal^{\frac{1}{\symBase-1}}}{\symBase-1}
                \frac{\symZ_\symMaxUpdateVal \ln^3 \symZ_\symMaxUpdateVal}{1-\symZ_\symMaxUpdateVal}
                \\
                -
                \symZ_\symMaxUpdateVal^{\frac{1}{\symBase-1}}(1-\symZ_\symMaxUpdateVal)
                \frac{\ln\symZ_\symMaxUpdateVal}{\symBase-1}
                \left(\sum_{\symIndexJ = 1}^\symNumExtraBits
                \frac{\symZ_{\symMaxUpdateVal-\symIndexJ}\ln^2 \symZ_{\symMaxUpdateVal-\symIndexJ}}{1-\symZ_{\symMaxUpdateVal-\symIndexJ}}\right)
                \\
                +
                \symZ_{\symMaxUpdateVal+\symNumExtraBits}^{\frac{1}{\symBase-1}}
                \frac{\symZ^2_{\symMaxUpdateVal}\ln^3 \symZ_{\symMaxUpdateVal}}{(1-\symZ_{\symMaxUpdateVal})^2}
                -
                \symZ_{\symMaxUpdateVal}^{\frac{1}{\symBase-1}}
                \frac{\symZ^2_{\symMaxUpdateVal}\ln^3 \symZ_{\symMaxUpdateVal}}{(1-\symZ_{\symMaxUpdateVal})^2}
              \end{array}\right)
      \\
       & =
      \frac{\symNumReg}{\symCardinality^3}
      \sum_{\symMaxUpdateVal=-\infty}^\infty
      \left(\begin{array}{@{}l}
                \frac{\symZ_\symMaxUpdateVal \ln^3 \symZ_\symMaxUpdateVal}{1-\symZ_\symMaxUpdateVal}
                \left(
                \frac{\symZ_{\symMaxUpdateVal+\symNumExtraBits}^{\frac{1}{\symBase-1}}\symZ_{\symMaxUpdateVal}}{1-\symZ_\symMaxUpdateVal}
                -
                \frac{\symZ_\symMaxUpdateVal^{\frac{1}{\symBase-1}}}{\symBase-1}
                \right)
                \\
                +
                \symZ_\symMaxUpdateVal^{\frac{1}{\symBase-1}}\ln\symZ_\symMaxUpdateVal
                \left(
                \symZ_\symMaxUpdateVal
                -
                \frac{1-\symZ_\symMaxUpdateVal}{\symBase-1}
                \right)
                \left(\sum_{\symIndexJ = 1}^\symNumExtraBits
                \frac{\symZ_{\symMaxUpdateVal-\symIndexJ}\ln^2 \symZ_{\symMaxUpdateVal-\symIndexJ}}{1-\symZ_{\symMaxUpdateVal-\symIndexJ}}\right)
              \end{array}\right)
      \\
       & =
      \frac{\symNumReg}{\symCardinality^3}
      \sum_{\symMaxUpdateVal=-\infty}^\infty
      \left(\begin{array}{@{}l}
                \frac{\symZ_\symMaxUpdateVal \ln^3 \symZ_\symMaxUpdateVal}{1-\symZ_\symMaxUpdateVal}
                \left(
                \frac{\symZ_{\symMaxUpdateVal}^{\frac{\symBase^{-\symNumExtraBits}}{\symBase-1}}\symZ_{\symMaxUpdateVal}}{1-\symZ_\symMaxUpdateVal}
                -
                \frac{\symZ_\symMaxUpdateVal^{\frac{1}{\symBase-1}}}{\symBase-1}
                \right)
                \\
                +
                \sum_{\symIndexJ = 1}^\symNumExtraBits
                \symZ_{\symMaxUpdateVal+\symIndexJ}^{\frac{1}{\symBase-1}}\ln\symZ_{\symMaxUpdateVal+\symIndexJ}
                \left(
                \symZ_{\symMaxUpdateVal+\symIndexJ}
                -
                \frac{1-\symZ_{\symMaxUpdateVal+\symIndexJ}}{\symBase-1}
                \right)
                \frac{\symZ_{\symMaxUpdateVal}\ln^2 \symZ_{\symMaxUpdateVal}}{1-\symZ_{\symMaxUpdateVal}}
              \end{array}\right)
      \\
       & =
      \frac{\symNumReg}{\symCardinality^3}
      \sum_{\symMaxUpdateVal=-\infty}^\infty
      \left(\begin{array}{@{}l}
                \frac{\symZ_\symMaxUpdateVal \ln^3 \symZ_\symMaxUpdateVal}{1-\symZ_\symMaxUpdateVal}
                \left(
                \frac{\symZ_{\symMaxUpdateVal}^{\frac{\symBase^{-\symNumExtraBits}}{\symBase-1}}\symZ_{\symMaxUpdateVal}}{1-\symZ_\symMaxUpdateVal}
                -
                \frac{\symZ_\symMaxUpdateVal^{\frac{1}{\symBase-1}}}{\symBase-1}
                \right)
                \\
                +
                \sum_{\symIndexJ = 1}^\symNumExtraBits
                \symZ_{\symMaxUpdateVal}^{\frac{\symBase^{-\symIndexJ}}{\symBase-1}}\ln(\symZ_{\symMaxUpdateVal}^{\symBase^{-\symIndexJ}})
                \left(
                \symZ_{\symMaxUpdateVal}^{\symBase^{-\symIndexJ}}
                -
                \frac{1-\symZ_{\symMaxUpdateVal}^{\symBase^{-\symIndexJ}}}{\symBase-1}
                \right)
                \frac{\symZ_{\symMaxUpdateVal}\ln^2 \symZ_{\symMaxUpdateVal}}{1-\symZ_{\symMaxUpdateVal}}
              \end{array}\right)
      \\
       & =
      \frac{\symNumReg}{\symCardinality^3}
      \sum_{\symMaxUpdateVal=-\infty}^\infty
      \frac{\symZ_\symMaxUpdateVal \ln^3 \symZ_\symMaxUpdateVal}{1-\symZ_\symMaxUpdateVal}
      \cdot
      \\ & \quad \cdot
      \left(
      \frac{\symZ_{\symMaxUpdateVal}^{\frac{\symBase^{-\symNumExtraBits}}{\symBase-1}}\symZ_{\symMaxUpdateVal}}{1-\symZ_\symMaxUpdateVal}
      -
      \frac{\symZ_\symMaxUpdateVal^{\frac{1}{\symBase-1}}}{\symBase-1}
      +
      \sum_{\symIndexJ = 1}^\symNumExtraBits
      \symZ_{\symMaxUpdateVal}^{\frac{\symBase^{-\symIndexJ}}{\symBase-1}}\symBase^{-\symIndexJ}
      \left(
      \symZ_{\symMaxUpdateVal}^{\symBase^{-\symIndexJ}}
      -
      \frac{1-\symZ_{\symMaxUpdateVal}^{\symBase^{-\symIndexJ}}}{\symBase-1}
      \right)
      \right)
      \\
       & =
      \frac{\symNumReg}{\symCardinality^3}
      \sum_{\symMaxUpdateVal=-\infty}^\infty
      \frac{\symZ_\symMaxUpdateVal \ln^3 \symZ_\symMaxUpdateVal}{1-\symZ_\symMaxUpdateVal}
      \cdot
      \\ & \quad \cdot
      \left(
      \frac{\symZ_{\symMaxUpdateVal}^{\frac{\symBase^{-\symNumExtraBits}}{\symBase-1}}\symZ_{\symMaxUpdateVal}}{1-\symZ_\symMaxUpdateVal}
      -
      \frac{\symZ_\symMaxUpdateVal^{\frac{1}{\symBase-1}}}{\symBase-1}
      +
      \sum_{\symIndexJ = 1}^\symNumExtraBits
      \frac{\symZ_{\symMaxUpdateVal}^{\frac{\symBase^{-\symIndexJ+1}}{\symBase-1}}\symBase^{-\symIndexJ+1}}{\symBase-1}
      -
      \frac{\symZ_{\symMaxUpdateVal}^{\frac{\symBase^{-\symIndexJ}}{\symBase-1}}\symBase^{-\symIndexJ}}{\symBase-1}
      \right)
      \\
       & =
      \frac{\symNumReg}{\symCardinality^3}
      \sum_{\symMaxUpdateVal=-\infty}^\infty
      \frac{\symZ_\symMaxUpdateVal \ln^3 \symZ_\symMaxUpdateVal}{1-\symZ_\symMaxUpdateVal}
      \left(
      \frac{\symZ_{\symMaxUpdateVal}^{\frac{\symBase^{-\symNumExtraBits}}{\symBase-1}}\symZ_{\symMaxUpdateVal}}{1-\symZ_\symMaxUpdateVal}
      -
      \frac{\symZ_{\symMaxUpdateVal}^{\frac{\symBase^{-\symNumExtraBits}}{\symBase-1}}\symBase^{-\symNumExtraBits}}{\symBase-1}
      \right)
      \\
       & =
      \frac{\symNumReg}{\symCardinality^3}
      \sum_{\symMaxUpdateVal=-\infty}^\infty
      \frac{\symZ_{\symMaxUpdateVal}^{\frac{\symBase^{-\symNumExtraBits}}{\symBase-1}}\symZ_\symMaxUpdateVal \ln^3 \symZ_\symMaxUpdateVal}{1-\symZ_\symMaxUpdateVal}
      \left(
      \frac{\symZ_{\symMaxUpdateVal}}{1-\symZ_\symMaxUpdateVal}
      -
      \frac{\symBase^{-\symNumExtraBits}}{\symBase-1}
      \right)
      \\
       & =
      \frac{\symNumReg}{\symCardinality^3}
      \sum_{\symMaxUpdateVal=-\infty}^\infty
      \frac{\symZ_{\symMaxUpdateVal}^{1+\frac{\symBase^{-\symNumExtraBits}}{\symBase-1}}\ln^3 \symZ_\symMaxUpdateVal}{(1-\symZ_\symMaxUpdateVal)^2}
      \left(
      \symZ_{\symMaxUpdateVal}\left(1+\frac{\symBase^{-\symNumExtraBits}}{\symBase-1}\right)
      -
      \frac{\symBase^{-\symNumExtraBits}}{\symBase-1}
      \right).
    \end{align*}
    Therefore, we get
    \begin{align*}
       & \symBias(\symCardinalityEstimatorML)
      \approx
      \frac{1}{\symFisher^{2}}\symExpectation\!\left(
      \frac{1}{2}\frac{\partial^3 \ln\symLikelihood}{\partial \symCardinality^3}
      +\frac{\partial^2 \ln\symLikelihood}{\partial \symCardinality^2}\frac{\partial \ln\symLikelihood}{\partial \symCardinality}
      \right)
      \\
       & =
      \frac{1}{\symFisher^{2}}
      \frac{\symNumReg}{\symCardinality^3}
      \left(\begin{array}{@{}l}
                \sum_{\symMaxUpdateVal=-\infty}^\infty
                -\frac{1}{2}
                \frac{\symZ_{\symMaxUpdateVal}^{1 + \frac{\symBase^{-\symNumExtraBits}}{\symBase-1}}(1+\symZ_\symMaxUpdateVal)\ln^3\symZ_\symMaxUpdateVal}{(1-\symZ_\symMaxUpdateVal)^2}
                \\
                +
                \sum_{\symMaxUpdateVal=-\infty}^\infty
                \frac{\symZ_{\symMaxUpdateVal}^{1+\frac{\symBase^{-\symNumExtraBits}}{\symBase-1}}\ln^3 \symZ_\symMaxUpdateVal}{(1-\symZ_\symMaxUpdateVal)^2}
                \left(
                \symZ_{\symMaxUpdateVal}\left(1+\frac{\symBase^{-\symNumExtraBits}}{\symBase-1}\right)
                -
                \frac{\symBase^{-\symNumExtraBits}}{\symBase-1}
                \right)
              \end{array}\right)
      \\
       & =
      -\frac{1}{\symFisher^{2}}
      \frac{\symNumReg}{\symCardinality^3}
      \sum_{\symMaxUpdateVal=-\infty}^\infty
      \frac{\symZ_{\symMaxUpdateVal}^{1+\frac{\symBase^{-\symNumExtraBits}}{\symBase-1}}\ln^3 \symZ_\symMaxUpdateVal}{(1-\symZ_\symMaxUpdateVal)^2}
      \left(
      \frac{1+\symZ_\symMaxUpdateVal}{2}
      -
      \symZ_{\symMaxUpdateVal}\left(1+\frac{\symBase^{-\symNumExtraBits}}{\symBase-1}\right)
      +
      \frac{\symBase^{-\symNumExtraBits}}{\symBase-1}
      \right)
      \\
       & =
      -\frac{1}{\symFisher^{2}}\frac{\symNumReg}{\symCardinality^3}\left(
      \frac{1}{2} +\frac{\symBase^{-\symNumExtraBits}}{\symBase-1}
      \right)
      \sum_{\symMaxUpdateVal=-\infty}^\infty
      \frac{\symZ_{\symMaxUpdateVal}^{1 + \frac{\symBase^{-\symNumExtraBits}}{\symBase-1}}\ln^3\symZ_\symMaxUpdateVal}{1-\symZ_\symMaxUpdateVal}.
    \end{align*}
    \cref{lem:approximation} with $\symSomeFunc(\symX) = \frac{\symX^{1+ \frac{\symBase^{-\symNumExtraBits}}{\symBase-1}}\ln^3 \symX}{1-\symX}$ yields
    \begin{align*}
      \symBias(\symCardinalityEstimatorML)
       & \approx
      \frac{1}{\symFisher^{2}}\frac{\symNumReg}{\symCardinality^3}\left(
      \frac{1}{2} +\frac{\symBase^{-\symNumExtraBits}}{\symBase-1}
      \right)
      \frac{1}{\ln \symBase}
      \int_{0}^{\infty}
      \frac{e^{-\symY(1+ \frac{\symBase^{-\symNumExtraBits}}{\symBase-1})} \symY^2}{1-e^{-\symY}} d\symY
      \\
       & =
      \frac{1}{\symFisher^{2}}\frac{\symNumReg}{\symCardinality^3}\left(
      1 +\frac{2\symBase^{-\symNumExtraBits}}{\symBase-1}
      \right)
      \frac{1}{\ln \symBase}\symZetaFunc\!\left(3, 1 + \frac{\symBase^{-\symNumExtraBits}}{\symBase-1}\right).
    \end{align*}
    Replacing $\symFisher$ by using \cref{lem:fisher} finally gives
    \begin{equation*}
      \symBias(\symCardinalityEstimatorML)
      \approx
      \frac{\symCardinality}{\symNumReg} (\ln \symBase)
      \left(
      1 +\frac{2\symBase^{-\symNumExtraBits}}{\symBase-1}
      \right)
      \frac{
        \symZetaFunc\!\left(3, 1 + \frac{\symBase^{-\symNumExtraBits}}{\symBase-1}\right)}{\left(\symZetaFunc\!\left(2, 1 + \frac{\symBase^{-\symNumExtraBits}}{\symBase-1}\right)\right)^2}.
    \end{equation*}
  \end{proof}

  \begin{lemma}
    \label{lem:expectation_gra_reg_contrib}
    For a random variable $\symRegister$ distributed according to \eqref{equ:simple_register_pmf}, the expectation of
    \begin{equation*}
      \symRegContrib(\symRegister)
      =
      \symSomeConstant \symBase^{-\symGRA\symMaxUpdateVal}\left(\frac{1}{\symBase^\symGRA-1}
      +
      \sum_{\symIndexS=1}^{\symNumExtraBits}
        (1-\symIndexBit_{\symIndexS})
      \symBase^{\symIndexS\symGRA}
      \right)
    \end{equation*}
    with $\symRegister = \symMaxUpdateVal2^\symNumExtraBits + \langle\symIndexBit_1\ldots\symIndexBit_{\symNumExtraBits}\rangle_2$
    and $\symSomeConstant = \frac{(\symBase-1+\symBase^{-\symNumExtraBits})^{\symGRA}\ln \symBase}{\symGammaFunc(\symGRA)}$ can be approximated using \cref{lem:approximation} by
    \begin{equation*}
      \symExpectation(\symRegContrib(\symRegister))
      \approx
      \frac{\symNumReg^\symGRA}{\symCardinality^\symGRA}.
    \end{equation*}
  \end{lemma}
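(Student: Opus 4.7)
The plan is to compute $\symExpectation(\symRegContrib(\symRegister))$ by first reducing the expectation over the $\symNumExtraBits$ binary indicator bits using \cref{lem:identity_pmf}, then rewriting the remaining sum over $\symMaxUpdateVal$ as a function purely of $\symZ_\symMaxUpdateVal$ so that \cref{lem:approximation} applies, and finally evaluating the resulting integrals via a telescoping identity that is engineered by the constant $\symSomeConstant$.

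First I would expand
\[
\symExpectation(\symRegContrib(\symRegister)) = \symSomeConstant \sum_{\symMaxUpdateVal}\sum_{\symIndexBit_1,\ldots,\symIndexBit_\symNumExtraBits\in\{0,1\}} \symDensityRegisterSimple(\symMaxUpdateVal 2^\symNumExtraBits + \langle\symIndexBit_1\ldots\symIndexBit_\symNumExtraBits\rangle_2\mid\symCardinality)\,\symBase^{-\symGRA\symMaxUpdateVal}\Bigl(\tfrac{1}{\symBase^\symGRA-1} + \sum_{\symIndexS=1}^\symNumExtraBits (1-\symIndexBit_\symIndexS)\symBase^{\symIndexS\symGRA}\Bigr).
\]
Applying \cref{lem:identity_pmf} termwise (using $\symBeta_\symIndexS=1$ on one coordinate at a time for the $(1-\symIndexBit_\symIndexS)$ factors, and $\symAlpha=\symBeta=0$ everywhere for the constant term) collapses the bit sums and leaves
\[
\symSomeConstant\sum_\symMaxUpdateVal \symBase^{-\symGRA\symMaxUpdateVal}\,\symZ_\symMaxUpdateVal^{1/(\symBase-1)}(1-\symZ_\symMaxUpdateVal)\Bigl(\tfrac{1}{\symBase^\symGRA-1} + \sum_{\symIndexS=1}^\symNumExtraBits \symBase^{\symIndexS\symGRA}\symZ_{\symMaxUpdateVal-\symIndexS}\Bigr).
\]
Using \cref{lem:derivatives} I would substitute $\symBase^{-\symGRA\symMaxUpdateVal} = (\symNumReg/(\symCardinality(\symBase-1)))^\symGRA(-\ln\symZ_\symMaxUpdateVal)^\symGRA$ and $\symZ_{\symMaxUpdateVal-\symIndexS} = \symZ_\symMaxUpdateVal^{\symBase^\symIndexS}$, so that every summand depends on $\symMaxUpdateVal$ only through $\symZ_\symMaxUpdateVal$. \cref{lem:approximation} followed by the substitution $\symY=-\ln\symZ$ then reduces each piece to a Gamma integral via $\int_0^\infty e^{-a\symY}\symY^{\symGRA-1}d\symY = a^{-\symGRA}\symGammaFunc(\symGRA)$: the constant term yields $\symGammaFunc(\symGRA)(\symBase-1)^\symGRA(1-\symBase^{-\symGRA})/(\symBase^\symGRA-1)$, while each $\symIndexS$-indexed term yields $\symGammaFunc(\symGRA)\symBase^{\symIndexS\symGRA}\bigl[(\tfrac{1}{\symBase-1}+\symBase^\symIndexS)^{-\symGRA}-(\tfrac{1}{\symBase-1}+\symBase^\symIndexS+1)^{-\symGRA}\bigr]$.

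The decisive step, and the main obstacle to overcome, will be a telescoping identity: a direct manipulation $\symBase+(\symBase-1)\symBase^\symIndexS=\symBase(1+(\symBase-1)\symBase^{\symIndexS-1})$ gives
\[
\symBase^{\symIndexS\symGRA}\bigl(\tfrac{1}{\symBase-1}+\symBase^\symIndexS+1\bigr)^{-\symGRA} = \symBase^{(\symIndexS-1)\symGRA}\bigl(\tfrac{1}{\symBase-1}+\symBase^{\symIndexS-1}\bigr)^{-\symGRA},
\]
so summing from $\symIndexS=1$ to $\symNumExtraBits$ leaves only the two boundary terms, and the $\symIndexS=0$ boundary precisely cancels the $\symBase^{-\symGRA}(\symBase-1)^\symGRA$ contribution arising from $\tfrac{1}{\symBase^\symGRA-1}$. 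The surviving $\symIndexS=\symNumExtraBits$ boundary simplifies via $\symBase^{\symNumExtraBits\symGRA}(1+(\symBase-1)\symBase^\symNumExtraBits)^{-\symGRA}=(\symBase^{-\symNumExtraBits}+\symBase-1)^{-\symGRA}$, collapsing the entire integral to $\symGammaFunc(\symGRA)(\symBase-1)^\symGRA(\symBase-1+\symBase^{-\symNumExtraBits})^{-\symGRA}$. Multiplying by the prefactor $(\symNumReg/(\symCardinality(\symBase-1)))^\symGRA/\!\ln\symBase$ and the normalizing constant $\symSomeConstant=(\symBase-1+\symBase^{-\symNumExtraBits})^\symGRA\ln\symBase/\symGammaFunc(\symGRA)$ then cancels every factor except $(\symNumReg/\symCardinality)^\symGRA$, yielding the claim.
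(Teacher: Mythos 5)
Your proposal is correct and is essentially the paper's own argument: both collapse the bit sums with \cref{lem:identity_pmf}, use $\symZ_{\symMaxUpdateVal-\symIndexS}=\symZ_{\symMaxUpdateVal}^{\symBase^{\symIndexS}}$, invoke \cref{lem:approximation} to pass to Gamma integrals, and rely on the telescoping that the constant $\symSomeConstant$ is designed to exploit. The only difference is the order of operations: the paper telescopes the discrete sums exactly (over $\symIndexS$ and over $\symMaxUpdateVal$) and then applies \cref{lem:approximation} once to the single surviving function $\symZ^{1+\frac{\symBase^{-\symNumExtraBits}}{\symBase-1}}(-\ln\symZ)^{\symGRA}$, whereas you approximate first and telescope the resulting Gamma expressions via $\symBase^{\symIndexS\symGRA}\bigl(\tfrac{1}{\symBase-1}+\symBase^{\symIndexS}+1\bigr)^{-\symGRA}=\symBase^{(\symIndexS-1)\symGRA}\bigl(\tfrac{1}{\symBase-1}+\symBase^{\symIndexS-1}\bigr)^{-\symGRA}$, which is a correct identity, and your termwise use of \cref{lem:approximation} is legitimate since each summand vanishes at both $0$ and $1$ as the lemma requires.
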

  \begin{proof}
    Using \eqref{equ:simple_register_pmf} the expectation can be expressed as
    \begin{align*}
       & \symExpectation(\symRegContrib(\symRegister)) =
      \sum_{\symRegister=-\infty}^\infty
      \symDensityRegisterSimple(\symRegister\vert\symCardinality)
      \symRegContrib(\symRegister)
      \\
       & =
      \sum_{\symMaxUpdateVal=-\infty}^\infty
      \symSomeConstant
      \symBase^{-\symGRA\symMaxUpdateVal}
      \sum_{\symIndexBit_1,\ldots,\symIndexBit_\symNumExtraBits\in\lbrace0,1\rbrace}
      \symDensityRegisterSimple(\symMaxUpdateVal2^\symNumExtraBits + \langle\symIndexBit_1\ldots\symIndexBit_{\symNumExtraBits}
      \rangle_2\vert\symCardinality)
      \cdot
      \\
       & \quad \cdot
      \left(\frac{1}{\symBase^\symGRA-1}
      +
      \sum_{\symIndexS=1}^{\symNumExtraBits}
        (1-\symIndexBit_{\symIndexS})
      \symBase^{\symIndexS\symGRA}
      \right).
    \end{align*}
    With the help of \cref{lem:identity_pmf} we get
    \begin{align*}
       & \symExpectation(\symRegContrib(\symRegister))
      \\
       & =
      \sum_{\symMaxUpdateVal=-\infty}^\infty
      \symSomeConstant
      \symBase^{-\symGRA\symMaxUpdateVal}
      \symZ_{\symMaxUpdateVal}^{\frac{1}{\symBase-1}}
      (1-\symZ_{\symMaxUpdateVal})
      \left(\frac{1}{\symBase^\symGRA-1}+
      \sum_{\symIndexS=1}^\symNumExtraBits \symZ_{\symMaxUpdateVal-\symIndexS} \symBase^{\symGRA\symIndexS}
      \right)
      \\
       & =
      \symSomeConstant\sum_{\symMaxUpdateVal=-\infty}^\infty
      \frac{\symBase^{-\symGRA\symMaxUpdateVal}
      \symZ_{\symMaxUpdateVal}^{\frac{1}{\symBase-1}}
      (1-\symZ_{\symMaxUpdateVal})
      }{\symBase^\symGRA-1}
      +
      \sum_{\symIndexS=1}^\symNumExtraBits
      \left(\symZ_{\symMaxUpdateVal-\symIndexS}^{1+\frac{\symBase^{-\symIndexS}}{\symBase-1}}-\symZ_{\symMaxUpdateVal-\symIndexS}^{1+\frac{\symBase^{1-\symIndexS}}{\symBase-1}}\right)
      \symBase^{\symGRA(\symIndexS-\symMaxUpdateVal)}
      \\
       & =
      \symSomeConstant\sum_{\symMaxUpdateVal=-\infty}^\infty
      \frac{\symBase^{-\symGRA\symMaxUpdateVal}
      \symZ_{\symMaxUpdateVal}^{\frac{1}{\symBase-1}}
      (1-\symZ_{\symMaxUpdateVal})
      }{\symBase^\symGRA-1}
      +
      \sum_{\symIndexS=1}^\symNumExtraBits
      \left(\symZ_{\symMaxUpdateVal}^{1+\frac{\symBase^{-\symIndexS}}{\symBase-1}}-\symZ_{\symMaxUpdateVal}^{1+\frac{\symBase^{1-\symIndexS}}{\symBase-1}}\right)
      \symBase^{-\symGRA\symMaxUpdateVal}
      \\
       & =
      \symSomeConstant\sum_{\symMaxUpdateVal=-\infty}^\infty
      \symBase^{-\symGRA\symMaxUpdateVal}
      \left(
      \frac{
      \symZ_{\symMaxUpdateVal}^{\frac{1}{\symBase-1}}
      (1-\symZ_{\symMaxUpdateVal})
      }{\symBase^\symGRA-1}
      +
      \symZ_{\symMaxUpdateVal}^{1+\frac{\symBase^{-\symNumExtraBits}}{\symBase-1}}-\symZ_{\symMaxUpdateVal}^{1+\frac{1}{\symBase-1}}
      \right)
      \\
       & =
      \frac{\symSomeConstant\symNumReg^\symGRA}{\symCardinality^\symGRA (\symBase-1)^\symGRA}
      \sum_{\symMaxUpdateVal=-\infty}^\infty
      (-\ln \symZ_{\symMaxUpdateVal})^\symGRA
      \left(
      \frac{
      \symZ_{\symMaxUpdateVal}^{\frac{1}{\symBase-1}}
      (1-\symZ_{\symMaxUpdateVal})
      }{\symBase^\symGRA-1}
      +
      \symZ_{\symMaxUpdateVal}^{1+\frac{\symBase^{-\symNumExtraBits}}{\symBase-1}}-\symZ_{\symMaxUpdateVal}^{1+\frac{1}{\symBase-1}}
      \right)
      \\
       & =
      \frac{\symSomeConstant\symNumReg^\symGRA}{\symCardinality^\symGRA (\symBase-1)^\symGRA}
      \left(
      \sum_{\symMaxUpdateVal=-\infty}^\infty
      (-\ln \symZ_{\symMaxUpdateVal})^\symGRA
      \symZ_{\symMaxUpdateVal}^{1+\frac{\symBase^{-\symNumExtraBits}}{\symBase-1}}
      \right)
      \\
       & \quad +
      \frac{\symSomeConstant\symNumReg^\symGRA}{\symCardinality^\symGRA (\symBase-1)^\symGRA}
      \left(
      \sum_{\symMaxUpdateVal=-\infty}^\infty
      \frac{
      (-\ln \symZ_{\symMaxUpdateVal})^\symGRA
      \symZ_{\symMaxUpdateVal}^{\frac{1}{\symBase-1}}
      }{\symBase^\symGRA-1}
      -
      \frac{
        (-\ln \symZ_{\symMaxUpdateVal}^\symBase)^\symGRA
        {\symZ_{\symMaxUpdateVal}}^{\frac{\symBase}{\symBase-1}}
      }{\symBase^\symGRA-1}
      \right)
      \\
       & =
      \frac{\symSomeConstant\symNumReg^\symGRA}{\symCardinality^\symGRA (\symBase-1)^\symGRA}
      \left(
      \sum_{\symMaxUpdateVal=-\infty}^\infty
      (-\ln \symZ_{\symMaxUpdateVal})^\symGRA
      \symZ_{\symMaxUpdateVal}^{1+\frac{\symBase^{-\symNumExtraBits}}{\symBase-1}}
      \right)
      \\
       & \quad+
      \frac{\symSomeConstant\symNumReg^\symGRA}{\symCardinality^\symGRA (\symBase-1)^\symGRA}
      \left(
      \sum_{\symMaxUpdateVal=-\infty}^\infty
      \frac{
      (-\ln \symZ_{\symMaxUpdateVal})^\symGRA
      \symZ_{\symMaxUpdateVal}^{\frac{1}{\symBase-1}}
      }{\symBase^\symGRA-1}
      -
      \frac{
        (-\ln \symZ_{\symMaxUpdateVal-1})^\symGRA
        {\symZ_{\symMaxUpdateVal-1}}^{\frac{1}{\symBase-1}}
      }{\symBase^\symGRA-1}
      \right)
      \\
       & =
      \frac{\symSomeConstant\symNumReg^\symGRA}{\symCardinality^\symGRA (\symBase-1)^\symGRA}
      \sum_{\symMaxUpdateVal=-\infty}^\infty
      (-\ln \symZ_{\symMaxUpdateVal})^\symGRA
      \symZ_{\symMaxUpdateVal}^{1+\frac{\symBase^{-\symNumExtraBits}}{\symBase-1}}.
    \end{align*}
    The approximation given in \cref{lem:approximation} and the identity $\int_0^\infty \symY^{\symGRA-1} e^{-\symY\symX}d\symY = \frac{\symGammaFunc(\symGRA)}{\symX^\symGRA}$ finally leads to
    % https://www.wolframalpha.com/input?i=Integrate%5By%5E%28t-1%29*exp%28-y*x%29%2C%7By%2C0%2Cinf%7D%5D
    \begin{multline*}
      \symExpectation(\symRegContrib(\symRegister)) \approx
      \frac{\symSomeConstant\symNumReg^\symGRA}{\symCardinality^\symGRA (\symBase-1)^\symGRA \ln\symBase}
      \int_{0}^\infty
      \symY^{\symGRA-1}
      e^{-\symY(1+\frac{\symBase^{-\symNumExtraBits}}{\symBase-1})}
      d\symY
      \\
      =
      \frac{\symSomeConstant\symNumReg^\symGRA\symGammaFunc(\symGRA)}{\symCardinality^\symGRA (\symBase-1+\symBase^{-\symNumExtraBits})^\symGRA \ln\symBase}
      =
      \frac{\symNumReg^\symGRA}{\symCardinality^\symGRA}
      .
    \end{multline*}

  \end{proof}

  \begin{lemma}
    \label{lem:moment_two_gra_reg_contrib}
    For a random variable $\symRegister$ distributed according to \eqref{equ:simple_register_pmf}, the expectation of $(\symRegContrib(\symRegister))^2$ with $\symRegContrib(\symRegister)$ as defined in \cref{lem:expectation_gra_reg_contrib} can be approximated using \cref{lem:approximation} by
    \begin{multline*}
      \symExpectation((\symRegContrib(\symRegister))^2)
      \\
      \approx
      \frac{\symNumReg^{2\symGRA} \symGammaFunc(2\symGRA)\ln\symBase}{\symCardinality^{2\symGRA}(\symGammaFunc(\symGRA))^2}
      \left(
      1 + \frac{2\symBase^{-\symGRA\symNumExtraBits}}{\symBase^{\symGRA}-1}
      +
      \sum_{\symIndexS = 1}^\symNumExtraBits
      \frac{2\symBase^{-\symGRA\symIndexS}}{\left(1+\frac{(\symBase-1)\symBase^{-\symIndexS}}{\symBase-1+\symBase^{-\symNumExtraBits}}\right)^{\!2\symGRA}}\right).
    \end{multline*}
  \end{lemma}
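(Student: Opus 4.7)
The plan is to reinterpret $\symRegContrib(\symRegister)$ as an indicator-weighted sum over update values and reduce the second moment to joint probabilities of the mutually independent events $\symUpdateEvent_\symUpdateVal$. First I would introduce $\symStatisticX_\symUpdateVal\in\{0,1\}$, equal to $1$ iff the register state $\symRegister$ allows us to conclude that no update with value $\symUpdateVal$ has occurred. Expanding $\frac{1}{\symBase^{\symGRA}-1}=\sum_{\symUpdateVal>\symMaxUpdateVal}\symBase^{-\symGRA(\symUpdateVal-\symMaxUpdateVal)}$ inside the definition from \cref{lem:expectation_gra_reg_contrib} yields the representation
\begin{equation*}
  \symRegContrib(\symRegister) = \symSomeConstant\sum_{\symUpdateVal=-\infty}^\infty \symStatisticX_\symUpdateVal\,\symBase^{-\symGRA\symUpdateVal}.
\end{equation*}
Using that $\symMaxUpdateVal=\max\{\symUpdateVal:\symUpdateEvent_\symUpdateVal\}$ under the simplified model, a direct inspection of which bits of the register may be deduced shows that $\symStatisticX_\symUpdateVal=1$ iff $\overline{\symUpdateEvent_\symUpdateVal}$ holds together with $\overline{\symUpdateEvent_{\symUpdateVal'}}$ for every $\symUpdateVal'>\symUpdateVal+\symNumExtraBits$. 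This characterization is the technical centerpiece, since the $\symUpdateEvent_\symUpdateVal$ are mutually independent under the Poisson approximation.

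Squaring and taking expectations gives
\begin{equation*}
  \symExpectation((\symRegContrib(\symRegister))^2) = \symSomeConstant^2\sum_{\symUpdateVal_1,\symUpdateVal_2}\symBase^{-\symGRA(\symUpdateVal_1+\symUpdateVal_2)}\symProbability(\symStatisticX_{\symUpdateVal_1}=\symStatisticX_{\symUpdateVal_2}=1),
\end{equation*}
which I would evaluate by splitting on the separation $\symUpdateVal_2-\symUpdateVal_1$. Using the identity $\prod_{\symIndexJ\geq 1}\symZ_{\symUpdateVal+\symNumExtraBits+\symIndexJ}=\symZ_\symUpdateVal^{\symBase^{-\symNumExtraBits}/(\symBase-1)}$, the diagonal case $\symUpdateVal_1=\symUpdateVal_2$ and the far off-diagonal case $|\symUpdateVal_2-\symUpdateVal_1|>\symNumExtraBits$ both collapse to the same required-event set $\{\overline{\symUpdateEvent_{\min(\symUpdateVal_1,\symUpdateVal_2)}}\}\cup\{\overline{\symUpdateEvent_{\symUpdateVal'}}:\symUpdateVal'>\min(\symUpdateVal_1,\symUpdateVal_2)+\symNumExtraBits\}$ because the second index's conditions are subsumed by those of the smaller; the resulting joint probability equals $\symZ_{\min(\symUpdateVal_1,\symUpdateVal_2)}^{1+\symBase^{-\symNumExtraBits}/(\symBase-1)}$. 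In the near off-diagonal case with $\symIndexS:=\symUpdateVal_2-\symUpdateVal_1\in\{1,\ldots,\symNumExtraBits\}$ (doubled by symmetry), the additional independent requirement $\overline{\symUpdateEvent_{\symUpdateVal_2}}$ contributes an extra factor $\symZ_{\symUpdateVal_1}^{\symBase^{-\symIndexS}}$, producing exponent $1+\symBase^{-\symIndexS}+\symBase^{-\symNumExtraBits}/(\symBase-1)$. Performing the far-off-diagonal geometric series $\sum_{\symUpdateVal_2>\symUpdateVal_1+\symNumExtraBits}\symBase^{-\symGRA\symUpdateVal_2}=\symBase^{-\symGRA(\symUpdateVal_1+\symNumExtraBits)}/(\symBase^\symGRA-1)$ accounts for the coefficient $2\symBase^{-\symGRA\symNumExtraBits}/(\symBase^\symGRA-1)$ in the claim, while the near case yields the coefficient $2\symBase^{-\symGRA\symIndexS}$ inside the sum over $\symIndexS$.

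What remains is to approximate the three outer sums of the form $\sum_{\symUpdateVal_1}\symBase^{-2\symGRA\symUpdateVal_1}\symZ_{\symUpdateVal_1}^C$ with exponents $C_1:=1+\symBase^{-\symNumExtraBits}/(\symBase-1)$ (diagonal and far off-diagonal) and $C_\symIndexS:=C_1+\symBase^{-\symIndexS}$ (near off-diagonal). Rewriting $\symBase^{-2\symGRA\symUpdateVal_1}=(\symNumReg/(\symCardinality(\symBase-1)))^{2\symGRA}(-\ln\symZ_{\symUpdateVal_1})^{2\symGRA}$ and invoking \cref{lem:approximation} with the gamma identity $\int_0^\infty\symY^{2\symGRA-1}e^{-C\symY}d\symY=\symGammaFunc(2\symGRA)/C^{2\symGRA}$ reduces each such sum to $\symNumReg^{2\symGRA}\symGammaFunc(2\symGRA)/(\symCardinality^{2\symGRA}(\symBase-1)^{2\symGRA}C^{2\symGRA}\ln\symBase)$. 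Combined with $\symSomeConstant^2=(\symBase-1+\symBase^{-\symNumExtraBits})^{2\symGRA}(\ln\symBase)^2/(\symGammaFunc(\symGRA))^2$, the prefactor $\symSomeConstant^2/((\symBase-1)C)^{2\symGRA}$ collapses to $1$ for $C=C_1$ and to $(1+(\symBase-1)\symBase^{-\symIndexS}/(\symBase-1+\symBase^{-\symNumExtraBits}))^{-2\symGRA}$ for $C=C_\symIndexS$, reproducing exactly the bracketed expression of the lemma.

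The main obstacle I anticipate is the bookkeeping in the joint-probability case split, in particular verifying that the $\symUpdateVal_2$-tail conditions are truly subsumed in the far off-diagonal case and that the symmetric doubling of the off-diagonal pairs leaves the diagonal untouched. Once this bookkeeping is in place, the remaining algebraic collapse of $(\symBase-1+\symBase^{-\symNumExtraBits})^{2\symGRA}/((\symBase-1)^{2\symGRA}C^{2\symGRA})$ into the advertised denominator is a short calculation.
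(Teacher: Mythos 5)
Your proposal is correct, and it reaches the lemma by a genuinely different decomposition than the paper. The paper stays inside the register-PMF formalism: it squares $\symRegContrib(\symRegister)=\symSomeConstant\symBase^{-\symGRA\symMaxUpdateVal}\bigl(\tfrac{1}{\symBase^\symGRA-1}+\sum_{\symIndexS}(1-\symIndexBit_\symIndexS)\symBase^{\symIndexS\symGRA}\bigr)$ using $(1-\symIndexBit_\symIndexS)^2=1-\symIndexBit_\symIndexS$, evaluates the bit expectations via \cref{lem:identity_pmf}, and then removes the resulting double sums by a long telescoping manipulation over index shifts in $\symMaxUpdateVal$ before invoking \cref{lem:approximation}. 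You instead return to the defining view of the \acs{GRA} statistic as a sum over update values certified absent, write $\symRegContrib(\symRegister)=\symSomeConstant\sum_\symUpdateVal\symStatisticX_\symUpdateVal\symBase^{-\symGRA\symUpdateVal}$, and your characterization $\symStatisticX_\symUpdateVal=1\iff\overline{\symUpdateEvent_\symUpdateVal}\wedge\bigwedge_{\symUpdateVal'>\symUpdateVal+\symNumExtraBits}\overline{\symUpdateEvent_{\symUpdateVal'}}$ is indeed correct in the simplified model (if some $\symUpdateEvent_{\symUpdateVal''}$ with $\symUpdateVal<\symUpdateVal''\le\symUpdateVal+\symNumExtraBits$ occurs, $\symUpdateVal$ lies in the tracked window with bit $0$; otherwise $\symUpdateVal$ exceeds the maximum — and conversely any certification forces exactly these events). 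From there, independence of the $\symUpdateEvent_\symUpdateVal$ and the case split on $\symIndexS=\symUpdateVal_2-\symUpdateVal_1$ (diagonal, $1\le\symIndexS\le\symNumExtraBits$, $\symIndexS>\symNumExtraBits$ with subsumption of the larger index's conditions) reproduce exactly the paper's intermediate expression $\symSomeConstant^2\sum_\symUpdateVal\symBase^{-2\symGRA\symUpdateVal}\bigl((1+\tfrac{2\symBase^{-\symGRA\symNumExtraBits}}{\symBase^\symGRA-1})\symZ_\symUpdateVal^{1+\symBase^{-\symNumExtraBits}/(\symBase-1)}+2\sum_{\symIndexS=1}^{\symNumExtraBits}\symBase^{-\symGRA\symIndexS}\symZ_\symUpdateVal^{1+\symBase^{-\symIndexS}+\symBase^{-\symNumExtraBits}/(\symBase-1)}\bigr)$, after which the final step (\cref{lem:approximation} plus the gamma integral and the cancellation of $\symSomeConstant^2/((\symBase-1)C)^{2\symGRA}$) is identical to the paper's. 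What your route buys is a substantially shorter and more conceptual computation — pairwise joint probabilities of independent events replace the bit-product expectations and the page of telescoping — at the cost of having to prove the certification characterization of $\symStatisticX_\symUpdateVal$, which the paper never needs because it works directly from the PMF \eqref{equ:simple_register_pmf}; the only housekeeping to add is the (trivial, by nonnegativity) justification for expanding the square of the infinite sum and interchanging expectation and summation.
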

  \begin{proof}
    Since $\symIndexBit_{\symIndexS} \in \lbrace 0, 1\rbrace$, we have $(1-\symIndexBit_{\symIndexS}) = (1-\symIndexBit_{\symIndexS})^2$ and we can write
    \begin{align*}
      (\symRegContrib(\symRegister))^2
       & =
      \symSomeConstant^2
      \symBase^{-2\symGRA\symMaxUpdateVal}
      \left(\frac{1}{\symBase^\symGRA-1}
      +
      \sum_{\symIndexS=1}^{\symNumExtraBits}
      (1-\symIndexBit_{\symIndexS})
      \symBase^{\symIndexS\symGRA}
      \right)^{\!2}
      \\
       & =
      \symSomeConstant^2
      \symBase^{-2\symGRA\symMaxUpdateVal}
      \left(\begin{array}{@{}l}
                \frac{1}{(\symBase^\symGRA-1)^2}
                +
                \sum_{\symIndexS=1}^{\symNumExtraBits}
                (1-\symIndexBit_{\symIndexS})
                \left(
                \frac{2\symBase^{\symGRA\symIndexS}}{\symBase^\symGRA-1}
                +
                \symBase^{2\symGRA\symIndexS}\right)
                \\
                +
                2\sum_{\symIndexS=1}^{\symNumExtraBits-1}
                (1-\symIndexBit_{\symIndexS})
                \sum_{\symIndexS'=1}^{\symNumExtraBits-\symIndexS}
                \symBase^{\symGRA(2\symIndexS+\symIndexS')}
                (1-\symIndexBit_{\symIndexS+\symIndexS'})
              \end{array}\right).
    \end{align*}
    Using \eqref{equ:simple_register_pmf} the expectation can be expressed as
    \begin{align*}
       & \symExpectation((\symRegContrib(\symRegister))^2) =
      \sum_{\symRegister=-\infty}^\infty
      \symDensityRegisterSimple(\symRegister\vert\symCardinality)
      (\symRegContrib(\symRegister))^2
      \\
       & =
      \sum_{\symMaxUpdateVal=-\infty}^\infty\sum_{\symIndexBit_1,\ldots,\symIndexBit_\symNumExtraBits\in\lbrace0,1\rbrace}
      \symSomeConstant^2
      \symBase^{-2\symGRA\symMaxUpdateVal}
      \symDensityRegisterSimple(\symMaxUpdateVal2^\symNumExtraBits + \langle\symIndexBit_1\ldots\symIndexBit_{\symNumExtraBits}\rangle_2\vert\symCardinality)
      \cdot
      \\
       & \quad\cdot
      \left(\begin{array}{@{}l}
                \frac{1}{(\symBase^\symGRA-1)^2}
                +
                \sum_{\symIndexS=1}^{\symNumExtraBits}
                (1-\symIndexBit_{\symIndexS})
                \left(
                \frac{2\symBase^{\symGRA\symIndexS}}{\symBase^\symGRA-1}
                +
                \symBase^{2\symGRA\symIndexS}\right)
                \\
                +
                2\sum_{\symIndexS=1}^{\symNumExtraBits-1}
                (1-\symIndexBit_{\symIndexS})
                \sum_{\symIndexS'=1}^{\symNumExtraBits-\symIndexS}
                \symBase^{\symGRA(2\symIndexS+\symIndexS')}
                (1-\symIndexBit_{\symIndexS+\symIndexS'})
              \end{array}\right).
    \end{align*}
    \cref{lem:identity_pmf} allows to write that as
    \begin{align*}
       & \symExpectation((\symRegContrib(\symRegister))^2) =
      \sum_{\symMaxUpdateVal=-\infty}^\infty
      \symSomeConstant^2
      \symBase^{-2\symGRA\symMaxUpdateVal}
      \symZ_{\symMaxUpdateVal}^{\frac{1}{\symBase-1}}
      (1-\symZ_{\symMaxUpdateVal})
      \cdot
      \\
       & \quad\cdot
      \left(\begin{array}{@{}l}
                \frac{1}{(\symBase^\symGRA-1)^2}
                +
                \sum_{\symIndexS=1}^{\symNumExtraBits}
                \symZ_{\symMaxUpdateVal-\symIndexS}
                \left(
                \frac{2\symBase^{\symGRA\symIndexS}}{\symBase^\symGRA-1}
                +
                \symBase^{2\symGRA\symIndexS}\right)
                \\
                +
                2\sum_{\symIndexS=1}^{\symNumExtraBits-1}
                \symZ_{\symMaxUpdateVal-\symIndexS}
                \sum_{\symIndexS'=1}^{\symNumExtraBits-\symIndexS}
                \symBase^{\symGRA(2\symIndexS+\symIndexS')}
                \symZ_{\symMaxUpdateVal-\symIndexS-\symIndexS'}
              \end{array}\right)
      \\
       & =
      \sum_{\symMaxUpdateVal=-\infty}^\infty
      \symSomeConstant^2
      \symBase^{-2\symGRA\symMaxUpdateVal}
      (\symZ_{\symMaxUpdateVal}^{\frac{1}{\symBase-1}}-\symZ_{\symMaxUpdateVal-1}^{\frac{1}{\symBase-1}})
      \cdot
      \\
       & \quad\cdot
      \left(\begin{array}{@{}l}
                \frac{1}{(\symBase^\symGRA-1)^2}
                +
                \sum_{\symIndexS=1}^{\symNumExtraBits}
                \symZ_{\symMaxUpdateVal-\symIndexS}
                \left(
                \frac{2\symBase^{\symGRA\symIndexS}}{\symBase^\symGRA-1}
                +
                \symBase^{2\symGRA\symIndexS}\right)
                \\
                +
                2\sum_{\symIndexS=1}^{\symNumExtraBits-1}
                \sum_{\symIndexS'=1}^{\symNumExtraBits-\symIndexS}
                \symBase^{\symGRA(2\symIndexS+\symIndexS')}
                \symZ_{\symMaxUpdateVal-\symIndexS}
                \symZ_{\symMaxUpdateVal-\symIndexS-\symIndexS'}
              \end{array}\right)
      \\
       & =
      \sum_{\symMaxUpdateVal=-\infty}^\infty
      \symSomeConstant^2
      \symBase^{-2\symGRA\symMaxUpdateVal}
      \symZ_{\symMaxUpdateVal}^{\frac{1}{\symBase-1}}
      \cdot
      \\
       & \quad\cdot
      \left(\begin{array}{@{}l}
                \frac{1-\symBase^{-2\symGRA}}{(\symBase^\symGRA-1)^2}
                +
                \sum_{\symIndexS=1}^{\symNumExtraBits}
                (\symZ_{\symMaxUpdateVal-\symIndexS}-\symZ_{\symMaxUpdateVal+1-\symIndexS}\symBase^{-2\symGRA})
                \left(
                \frac{2\symBase^{\symGRA\symIndexS}}{\symBase^\symGRA-1}
                +
                \symBase^{2\symGRA\symIndexS}\right)
                \\
                +
                2\sum_{\symIndexS=1}^{\symNumExtraBits-1}
                \sum_{\symIndexS'=1}^{\symNumExtraBits-\symIndexS}
                \left(\begin{array}{@{}l}
                    \symBase^{\symGRA(2\symIndexS+\symIndexS')}
                    \symZ_{\symMaxUpdateVal-\symIndexS}
                    \symZ_{\symMaxUpdateVal-\symIndexS-\symIndexS'}
                    \\
                    -
                    \symBase^{\symGRA(2\symIndexS+\symIndexS'-2)}
                    \symZ_{\symMaxUpdateVal+1-\symIndexS}
                    \symZ_{\symMaxUpdateVal+1-\symIndexS-\symIndexS'}
                  \end{array}\right)
              \end{array}\right)
      \\
       & =
      \sum_{\symMaxUpdateVal=-\infty}^\infty
      \symSomeConstant^2
      \symBase^{-2\symGRA\symMaxUpdateVal}
      \symZ_{\symMaxUpdateVal}^{\frac{1}{\symBase-1}}
      \cdot
      \\
       & \quad\cdot
      \left(\begin{array}{@{}l}
                \symBase^{-2\symGRA}\frac{\symBase^{\symGRA}+1}{\symBase^\symGRA-1}
                +
                \sum_{\symIndexS=1}^{\symNumExtraBits}
                (\symZ_{\symMaxUpdateVal-\symIndexS}-\symZ_{\symMaxUpdateVal+1-\symIndexS}\symBase^{-2\symGRA})
                \frac{2\symBase^{\symGRA\symIndexS}}{\symBase^\symGRA-1}
                \\
                +\sum_{\symIndexS=1}^{\symNumExtraBits}
                (\symZ_{\symMaxUpdateVal-\symIndexS}\symBase^{2\symGRA\symIndexS}-\symZ_{\symMaxUpdateVal-(\symIndexS-1)}\symBase^{2\symGRA(\symIndexS-1)})
                \\
                +
                2\sum_{\symIndexS'=1}^{\symNumExtraBits-1}
                \sum_{\symIndexS=1}^{\symNumExtraBits-\symIndexS'}
                \left(\begin{array}{@{}l}
                    \symBase^{\symGRA(2\symIndexS+\symIndexS')}
                    \symZ_{\symMaxUpdateVal-\symIndexS}
                    \symZ_{\symMaxUpdateVal-\symIndexS-\symIndexS'}
                    \\
                    -
                    \symBase^{\symGRA(2(\symIndexS-1)+\symIndexS')}
                    \symZ_{\symMaxUpdateVal-(\symIndexS-1)}
                    \symZ_{\symMaxUpdateVal-(\symIndexS-1)-\symIndexS'}
                  \end{array}\right)
              \end{array}\right)
      \\
       & =
      \sum_{\symMaxUpdateVal=-\infty}^\infty
      \symSomeConstant^2
      \symBase^{-2\symGRA\symMaxUpdateVal}
      \symZ_{\symMaxUpdateVal}^{\frac{1}{\symBase-1}}
      \left(\begin{array}{@{}l}
                \symBase^{-2\symGRA}\frac{\symBase^{\symGRA}+1}{\symBase^\symGRA-1}
                \\
                +\sum_{\symIndexS=1}^{\symNumExtraBits}
                (\symZ_{\symMaxUpdateVal-\symIndexS}-\symZ_{\symMaxUpdateVal+1-\symIndexS}\symBase^{-2\symGRA})
                \frac{2\symBase^{\symGRA\symIndexS}}{\symBase^\symGRA-1}
                \\
                +
                \symZ_{\symMaxUpdateVal-\symNumExtraBits}\symBase^{2\symGRA \symNumExtraBits}-\symZ_{\symMaxUpdateVal}
                \\
                +
                2\sum_{\symIndexS'=1}^{\symNumExtraBits-1}
                \left(\begin{array}{@{}l}
                    \symBase^{\symGRA(2\symNumExtraBits-\symIndexS')}
                    \symZ_{\symMaxUpdateVal-\symNumExtraBits+\symIndexS'}
                    \symZ_{\symMaxUpdateVal-\symNumExtraBits}
                    \\
                    -
                    \symBase^{\symGRA\symIndexS'}
                    \symZ_{\symMaxUpdateVal}
                    \symZ_{\symMaxUpdateVal-\symIndexS'}
                  \end{array}\right)
              \end{array}\right)
      \\
       & =
      \sum_{\symMaxUpdateVal=-\infty}^\infty
      \symSomeConstant^2
      \symBase^{-2\symGRA\symMaxUpdateVal}
      \left(\begin{array}{@{}l}
                \symZ_{\symMaxUpdateVal}^{\frac{1}{\symBase-1}}
                \symBase^{-2\symGRA}\frac{\symBase^{\symGRA}+1}{\symBase^\symGRA-1}
                +
                \symZ_{\symMaxUpdateVal}^{\frac{1}{\symBase-1}}
                \symZ_{\symMaxUpdateVal-\symNumExtraBits}\symBase^{2\symGRA\symNumExtraBits}- \symZ_{\symMaxUpdateVal}^{\frac{1}{\symBase-1}}
                \symZ_{\symMaxUpdateVal}
                \\
                + \symZ_{\symMaxUpdateVal}^{\frac{1}{\symBase-1}}
                \sum_{\symIndexS=1}^{\symNumExtraBits}
                (\symZ_{\symMaxUpdateVal-\symIndexS}-\symZ_{\symMaxUpdateVal+1-\symIndexS}\symBase^{-2\symGRA})
                \frac{2\symBase^{\symGRA\symIndexS}}{\symBase^\symGRA-1}
                \\
                +
                2 \symZ_{\symMaxUpdateVal}^{\frac{1}{\symBase-1}}
                \sum_{\symIndexS=1}^{\symNumExtraBits}
                \left(\begin{array}{@{}l}
                    \symBase^{\symGRA(2\symNumExtraBits-\symIndexS)}
                    \symZ_{\symMaxUpdateVal-\symNumExtraBits+\symIndexS}
                    \symZ_{\symMaxUpdateVal-\symNumExtraBits}
                    \\
                    -
                    \symBase^{\symGRA\symIndexS}
                    \symZ_{\symMaxUpdateVal}
                    \symZ_{\symMaxUpdateVal-\symIndexS}
                  \end{array}\right)
              \end{array}\right)
      \\
       & =
      \sum_{\symMaxUpdateVal=-\infty}^\infty
      \symSomeConstant^2
      \symBase^{-2\symGRA\symMaxUpdateVal}
      \cdot
      \\
       & \quad\cdot
      \left(\begin{array}{@{}l}
                \symZ_{\symMaxUpdateVal}^{\frac{\symBase}{\symBase-1}}
                \frac{\symBase^{\symGRA}+1}{\symBase^\symGRA-1}
                +
                \symZ_{\symMaxUpdateVal}^{1+\frac{\symBase^{-\symNumExtraBits}}{\symBase-1}}
                - \symZ_{\symMaxUpdateVal}^{\frac{\symBase}{\symBase-1}}
                \\
                +
                2\sum_{\symIndexS=1}^{\symNumExtraBits}
                (\symZ_{\symMaxUpdateVal}^{\frac{1}{\symBase-1}}\symZ_{\symMaxUpdateVal-\symIndexS}-\symZ_{\symMaxUpdateVal}^{\frac{1}{\symBase-1}}\symZ_{\symMaxUpdateVal+1-\symIndexS}\symBase^{-2\symGRA})
                \frac{\symBase^{\symGRA\symIndexS}}{\symBase^\symGRA-1}
                \\
                +
                2\sum_{\symIndexS=1}^{\symNumExtraBits}
                \left(
                \symZ_{\symMaxUpdateVal}^{\frac{1}{\symBase-1}}
                \symBase^{\symGRA(2\symNumExtraBits-\symIndexS)}
                \symZ_{\symMaxUpdateVal-\symNumExtraBits+\symIndexS}
                \symZ_{\symMaxUpdateVal-\symNumExtraBits}
                -
                \symZ_{\symMaxUpdateVal}^{\frac{\symBase}{\symBase-1}}
                \symBase^{\symGRA\symIndexS}
                \symZ_{\symMaxUpdateVal-\symIndexS}
                \right)
              \end{array}\right)
      \\
       & =
      \sum_{\symMaxUpdateVal=-\infty}^\infty
      \symSomeConstant^2
      \symBase^{-2\symGRA\symMaxUpdateVal}
      \cdot
      \\
       & \quad\cdot
      \left(\begin{array}{@{}l}
                \symZ_{\symMaxUpdateVal}^{\frac{\symBase}{\symBase-1}}
                \frac{2}{\symBase^\symGRA-1}
                +
                \symZ_{\symMaxUpdateVal}^{1+\frac{\symBase^{-\symNumExtraBits}}{\symBase-1}}
                \\
                +
                2\sum_{\symIndexS=1}^{\symNumExtraBits}
                \left(\begin{array}{@{}l}
                    (\symZ_{\symMaxUpdateVal}^{\frac{1}{\symBase-1}}\symZ_{\symMaxUpdateVal-\symIndexS}-\symZ_{\symMaxUpdateVal}^{\frac{1}{\symBase-1}}\symZ_{\symMaxUpdateVal+1-\symIndexS}\symBase^{-2\symGRA})
                    \frac{\symBase^{\symGRA\symIndexS}}{\symBase^\symGRA-1}
                    \\
                    -
                    \symZ_{\symMaxUpdateVal}^{\frac{\symBase}{\symBase-1}}
                    \symBase^{\symGRA\symIndexS}
                    \symZ_{\symMaxUpdateVal-\symIndexS}
                  \end{array}\right)
                \\
                +
                2\sum_{\symIndexS=1}^{\symNumExtraBits}
                \symZ_{\symMaxUpdateVal+\symNumExtraBits}^{\frac{1}{\symBase-1}}
                \symBase^{-\symGRA\symIndexS}
                \symZ_{\symMaxUpdateVal+\symIndexS}
                \symZ_{\symMaxUpdateVal}
              \end{array}\right)
      \\
       & =
      \sum_{\symMaxUpdateVal=-\infty}^\infty
      \symSomeConstant^2
      \symBase^{-2\symGRA\symMaxUpdateVal}
      \cdot
      \\
       & \quad\cdot
      \left(\begin{array}{@{}l}
                \symZ_{\symMaxUpdateVal}^{\frac{\symBase}{\symBase-1}}
                \frac{2}{\symBase^\symGRA-1}
                +
                \symZ_{\symMaxUpdateVal}^{1+\frac{\symBase^{-\symNumExtraBits}}{\symBase-1}}
                \\
                +
                \frac{2}{\symBase^\symGRA-1}\sum_{\symIndexS=1}^{\symNumExtraBits}
                \left(\begin{array}{@{}l}
                    \symZ_{\symMaxUpdateVal}^{\frac{1}{\symBase-1}}\symZ_{\symMaxUpdateVal-\symIndexS}\symBase^{\symGRA\symIndexS}
                    -\symZ_{\symMaxUpdateVal}^{\frac{1}{\symBase-1}}\symZ_{\symMaxUpdateVal+1-\symIndexS}\symBase^{\symGRA(\symIndexS-2)}
                    \\
                    +
                    \symZ_{\symMaxUpdateVal}^{\frac{\symBase}{\symBase-1}}
                    \symBase^{\symGRA\symIndexS}
                    \symZ_{\symMaxUpdateVal-\symIndexS}
                    -
                    \symZ_{\symMaxUpdateVal}^{\frac{\symBase}{\symBase-1}}
                    \symBase^{\symGRA(\symIndexS+1)}
                    \symZ_{\symMaxUpdateVal-\symIndexS}
                  \end{array}\right)
                \\
                +
                2\sum_{\symIndexS=1}^{\symNumExtraBits}
                \symZ_{\symMaxUpdateVal+\symNumExtraBits}^{\frac{1}{\symBase-1}}
                \symBase^{-\symGRA\symIndexS}
                \symZ_{\symMaxUpdateVal+\symIndexS}
                \symZ_{\symMaxUpdateVal}
              \end{array}\right)
      \\
       & =
      \sum_{\symMaxUpdateVal=-\infty}^\infty
      \symSomeConstant^2
      \symBase^{-2\symGRA\symMaxUpdateVal}
      \cdot
      \\
       & \quad\cdot
      \left(\begin{array}{@{}l}
                \symZ_{\symMaxUpdateVal}^{\frac{\symBase}{\symBase-1}}
                \frac{2}{\symBase^\symGRA-1}
                +
                \symZ_{\symMaxUpdateVal}^{1+\frac{\symBase^{-\symNumExtraBits}}{\symBase-1}}
                \\
                +
                \frac{2}{\symBase^\symGRA-1}\sum_{\symIndexS=1}^{\symNumExtraBits}
                \left(\begin{array}{@{}l}
                    \symZ_{\symMaxUpdateVal}^{\frac{1}{\symBase-1}}\symZ_{\symMaxUpdateVal-\symIndexS}\symBase^{\symGRA\symIndexS}
                    -
                    \symZ_{\symMaxUpdateVal}^{\frac{\symBase}{\symBase-1}}\symZ_{\symMaxUpdateVal-\symIndexS}\symBase^{\symGRA\symIndexS}
                    \\
                    +
                    \symZ_{\symMaxUpdateVal}^{\frac{\symBase}{\symBase-1}}
                    \symBase^{\symGRA\symIndexS}
                    \symZ_{\symMaxUpdateVal-\symIndexS}
                    -
                    \symZ_{\symMaxUpdateVal}^{\frac{\symBase}{\symBase-1}}
                    \symBase^{\symGRA(\symIndexS+1)}
                    \symZ_{\symMaxUpdateVal-\symIndexS}
                  \end{array}\right)
                \\
                +
                2\sum_{\symIndexS=1}^{\symNumExtraBits}
                \symZ_{\symMaxUpdateVal+\symNumExtraBits}^{\frac{1}{\symBase-1}}
                \symBase^{-\symGRA\symIndexS}
                \symZ_{\symMaxUpdateVal+\symIndexS}
                \symZ_{\symMaxUpdateVal}
              \end{array}\right)
      \\
       & =
      \sum_{\symMaxUpdateVal=-\infty}^\infty
      \symSomeConstant^2
      \symBase^{-2\symGRA\symMaxUpdateVal}
      \cdot
      \\
       & \quad\cdot
      \left(\begin{array}{@{}l}
                \symZ_{\symMaxUpdateVal}^{\frac{\symBase}{\symBase-1}}
                \frac{2}{\symBase^\symGRA-1}
                +
                \symZ_{\symMaxUpdateVal}^{1+\frac{\symBase^{-\symNumExtraBits}}{\symBase-1}}
                \\
                +
                \frac{2}{\symBase^\symGRA-1}\sum_{\symIndexS=1}^{\symNumExtraBits}
                \left(
                \symZ_{\symMaxUpdateVal}^{\frac{1}{\symBase-1}}\symZ_{\symMaxUpdateVal-\symIndexS}\symBase^{\symGRA\symIndexS}
                -
                \symZ_{\symMaxUpdateVal}^{\frac{1}{\symBase-1}}
                \symBase^{\symGRA(\symIndexS-1)}
                \symZ_{\symMaxUpdateVal-(\symIndexS-1)}
                \right)
                \\
                +
                2\sum_{\symIndexS=1}^{\symNumExtraBits}
                \symZ_{\symMaxUpdateVal+\symNumExtraBits}^{\frac{1}{\symBase-1}}
                \symBase^{-\symGRA\symIndexS}
                \symZ_{\symMaxUpdateVal+\symIndexS}
                \symZ_{\symMaxUpdateVal}
              \end{array}\right)
      \\
       & =
      \sum_{\symMaxUpdateVal=-\infty}^\infty
      \symSomeConstant^2
      \symBase^{-2\symGRA\symMaxUpdateVal}
      \left(\begin{array}{@{}l}
                \symZ_{\symMaxUpdateVal}^{\frac{\symBase}{\symBase-1}}
                \frac{2}{\symBase^\symGRA-1}
                +
                \symZ_{\symMaxUpdateVal}^{1+\frac{\symBase^{-\symNumExtraBits}}{\symBase-1}}
                \\
                +
                \frac{2}{\symBase^\symGRA-1}
                \left(
                \symZ_{\symMaxUpdateVal}^{\frac{1}{\symBase-1}}\symZ_{\symMaxUpdateVal-\symNumExtraBits}\symBase^{\symGRA\symNumExtraBits}
                -
                \symZ_{\symMaxUpdateVal}^{\frac{1}{\symBase-1}}
                \symZ_{\symMaxUpdateVal}
                \right)
                \\
                +
                2\sum_{\symIndexS=1}^{\symNumExtraBits}
                \symZ_{\symMaxUpdateVal+\symNumExtraBits}^{\frac{1}{\symBase-1}}
                \symBase^{-\symGRA\symIndexS}
                \symZ_{\symMaxUpdateVal+\symIndexS}
                \symZ_{\symMaxUpdateVal}
              \end{array}\right)
      \\
       & =
      \sum_{\symMaxUpdateVal=-\infty}^\infty
      \symSomeConstant^2
      \symBase^{-2\symGRA\symMaxUpdateVal}
      \left(\begin{array}{@{}l}
                \symZ_{\symMaxUpdateVal}^{1+\frac{\symBase^{-\symNumExtraBits}}{\symBase-1}}
                +
                \frac{2}{\symBase^\symGRA-1}
                \symZ_{\symMaxUpdateVal+\symNumExtraBits}^{\frac{1}{\symBase-1}}\symZ_{\symMaxUpdateVal}\symBase^{-\symGRA\symNumExtraBits}
                \\
                +
                2\sum_{\symIndexS=1}^{\symNumExtraBits}
                \symZ_{\symMaxUpdateVal+\symNumExtraBits}^{\frac{1}{\symBase-1}}
                \symBase^{-\symGRA\symIndexS}
                \symZ_{\symMaxUpdateVal+\symIndexS}
                \symZ_{\symMaxUpdateVal}
              \end{array}\right)
      \\
       & =
      \symSomeConstant^2
      \left(1 + \frac{2\symBase^{-\symGRA\symNumExtraBits}}{\symBase^{\symGRA}-1}\right)
      \sum_{\symMaxUpdateVal=-\infty}^\infty
      \symZ_\symMaxUpdateVal^{
        1+\frac{\symBase^{-\symNumExtraBits}}{\symBase-1}
      }
      \symBase^{-2\symGRA\symMaxUpdateVal}
      \\
       & \quad +
      2\symSomeConstant^2 \sum_{\symIndexS = 1}^\symNumExtraBits
      \symBase^{-\symGRA\symIndexS}
      \sum_{\symMaxUpdateVal=-\infty}^\infty
      \symZ_\symMaxUpdateVal^{1+\frac{\symBase^{-\symNumExtraBits}}{\symBase-1}+\symBase^{-\symIndexS}}
      \symBase^{-2\symGRA\symMaxUpdateVal}
      \\
       & =
      \frac{\symNumReg^{2\symGRA}\symSomeConstant^2\left(1 + \frac{2\symBase^{-\symGRA\symNumExtraBits}}{\symBase^{\symGRA}-1}\right)}{\symCardinality^{2\symGRA}(\symBase-1)^{2\symGRA}}
      \sum_{\symMaxUpdateVal=-\infty}^\infty
      \symZ_\symMaxUpdateVal^{
        1+\frac{\symBase^{-\symNumExtraBits}}{\symBase-1}
      }
      (-\ln \symZ_\symMaxUpdateVal)^{2\symGRA}
      \\
       & \quad +
      \frac{2\symNumReg^{2\symGRA}\symSomeConstant^2}{\symCardinality^{2\symGRA}(\symBase-1)^{2\symGRA}}
      \sum_{\symIndexS = 1}^\symNumExtraBits
      \symBase^{-\symGRA\symIndexS}
      \sum_{\symMaxUpdateVal=-\infty}^\infty
      \symZ_\symMaxUpdateVal^{1+\frac{\symBase^{-\symNumExtraBits}}{\symBase-1}+\symBase^{-\symIndexS}}
      (-\ln \symZ_\symMaxUpdateVal)^{2\symGRA}.
    \end{align*}
    The approximation given in \cref{lem:approximation} and the identity $\int_0^\infty e^{-\symX\symY}\symY^{2\symGRA-1}d\symY = \frac{\symGammaFunc(2\symGRA)}{\symX^{2\symGRA}}$ finally leads to
    \begin{align*}
       & \symExpectation((\symRegContrib(\symRegister))^2) \approx
      \frac{\symNumReg^{2\symGRA}\symSomeConstant^2\left(1 + \frac{2\symBase^{-\symGRA\symNumExtraBits}}{\symBase^{\symGRA}-1}\right)}{\symCardinality^{2\symGRA}(\symBase-1)^{2\symGRA}\ln\symBase}
      \int_0^\infty
      e^{-\symY(
          1+\frac{\symBase^{-\symNumExtraBits}}{\symBase-1}
          )}
      \symY^{2\symGRA-1}
      d\symY
      \\
       & \quad +
      \frac{2\symNumReg^{2\symGRA}\symSomeConstant^2}{\symCardinality^{2\symGRA}(\symBase-1)^{2\symGRA}\ln\symBase}
      \sum_{\symIndexS = 1}^\symNumExtraBits
      \symBase^{-\symGRA\symIndexS}
      \int_0^\infty
      e^{-\symY(1+\frac{\symBase^{-\symNumExtraBits}}{\symBase-1}+\symBase^{-\symIndexS})}
      \symY^{2\symGRA-1}
      d\symY
      \\
       & =
      \frac{\symNumReg^{2\symGRA}\symSomeConstant^2 \symGammaFunc(2\symGRA)}{\symCardinality^{2\symGRA}(\symBase-1)^{2\symGRA}\ln\symBase}
      \left(
      \frac{1 + \frac{2\symBase^{-\symGRA\symNumExtraBits}}{\symBase^{\symGRA}-1}}{ (1+\frac{\symBase^{-\symNumExtraBits}}{\symBase-1})^{2\symGRA}}
      +
      \sum_{\symIndexS = 1}^\symNumExtraBits
      \frac{2\symBase^{-\symGRA\symIndexS}}{(1+\frac{\symBase^{-\symNumExtraBits}}{\symBase-1}+\symBase^{-\symIndexS})^{2\symGRA}}\right)
      \\
       & = \frac{\symNumReg^{2\symGRA} \symGammaFunc(2\symGRA)\ln\symBase}{\symCardinality^{2\symGRA}(\symGammaFunc(\symGRA))^2}
      \left(
      1 + \frac{2\symBase^{-\symGRA\symNumExtraBits}}{\symBase^{\symGRA}-1}
      +
      \sum_{\symIndexS = 1}^\symNumExtraBits
      \frac{2\symBase^{-\symGRA\symIndexS}}{\left(1+\frac{(\symBase-1)\symBase^{-\symIndexS}}{\symBase-1+\symBase^{-\symNumExtraBits}}\right)^{\!2\symGRA}}\right).
    \end{align*}
  \end{proof}

  \begin{lemma}
    \label{lem:expectation_gamma_new}
    For a random variable $\symRegister$ that is distributed according to \eqref{equ:simple_register_pmf} with $\symNumExtraBits=2$, the expectation of $\symRegContrib(\symRegister) = \symBase^{-\symGRA\lfloor\symRegister/4\rfloor} \symContributionCoefficient_{\symRegister\bmod 4}$ with $\symContributionCoefficient_0$, $\symContributionCoefficient_1$, $\symContributionCoefficient_2$, $\symContributionCoefficient_3$ normalized by the condition
    \begin{equation}
      \label{equ:norm_condition}
      \symContributionCoefficient_0 \symNewConstant_0(\symGRA)
      +
      \symContributionCoefficient_1 \symNewConstant_1(\symGRA)
      +
      \symContributionCoefficient_2 \symNewConstant_2(\symGRA)
      +\symContributionCoefficient_3 \symNewConstant_3(\symGRA)
      =\frac{\ln\symBase}{\symGammaFunc(\symGRA)}
    \end{equation}
    can be approximated with \cref{lem:approximation} by
    \begin{equation*}
      \symExpectation(\symRegContrib(\symRegister))
      \approx
      \frac{\symNumReg^\symGRA}{\symCardinality^\symGRA}.
    \end{equation*}
    $\symNewConstant_0(\symGRA)$, $\symNewConstant_1(\symGRA)$, $\symNewConstant_2(\symGRA)$, and $\symNewConstant_3(\symGRA)$ are positive functions for $\symBase > 1, \symGRA>0$ and defined by
    \begin{align*}
      \symNewConstant_0(\symGRA)
       & := \textstyle \frac{1}{(\symBase^3-\symBase+1)^{\symGRA}}-\frac{1}{\symBase^{3\symGRA}},
      \\
      \symNewConstant_1(\symGRA)
       & := \textstyle \frac{1}{(\symBase^2-\symBase+1)^{\symGRA}}-\frac{1}{\symBase^{2\symGRA}}-\frac{1}{(\symBase^3-\symBase+1)^{\symGRA}}+\frac{1}{\symBase^{3\symGRA}}, \\
      \symNewConstant_2(\symGRA)
       & := \textstyle \frac{1}{(\symBase^3-\symBase^2+1)^{\symGRA}}-\frac{1}{(\symBase^3-\symBase^2+\symBase)^{\symGRA}}
      -\frac{1}{(\symBase^3-\symBase+1)^{\symGRA}}+\frac{1}{\symBase^{3\symGRA}},
      \\
      \symNewConstant_3(\symGRA)
       & :=
      \textstyle \frac{1}{(\symBase^3-\symBase+1)^{\symGRA}}
      -\frac{1}{(\symBase^3-\symBase^2+1)^{\symGRA}}+\frac{1}{(\symBase^3-\symBase^2+\symBase)^{\symGRA}} -\frac{1}{\symBase^{3\symGRA}}
      \\
       & \quad \textstyle- \frac{1}{(\symBase^2-\symBase+1)^{\symGRA}}+\frac{1}{\symBase^{2\symGRA}}
      +1-\frac{1}{\symBase^{\symGRA}}.
    \end{align*}
  \end{lemma}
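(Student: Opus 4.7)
The plan is to follow the template of \cref{lem:expectation_gra_reg_contrib}: expand $\symExpectation(\symRegContrib(\symRegister))$ using the \ac{PMF} \eqref{equ:simple_register_pmf} specialized to $\symNumExtraBits=2$, reduce everything to powers of $\symZ_\symMaxUpdateVal$ using $\symZ_\symMaxUpdateVal^\symBase = \symZ_{\symMaxUpdateVal-1}$ from \cref{lem:derivatives}, convert the sum over $\symMaxUpdateVal$ into an integral via \cref{lem:approximation}, and evaluate the resulting gamma-type integrals in closed form.

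First I would write
\begin{equation*}
\symExpectation(\symRegContrib(\symRegister))
=\sum_{\symMaxUpdateVal=-\infty}^\infty\symBase^{-\symGRA\symMaxUpdateVal}
\sum_{\symIndexBit_1,\symIndexBit_2\in\lbrace 0,1\rbrace}
\symContributionCoefficient_{\langle\symIndexBit_1\symIndexBit_2\rangle_2}
\symDensityRegisterSimple(4\symMaxUpdateVal+\langle\symIndexBit_1\symIndexBit_2\rangle_2\vert\symCardinality)
\end{equation*}
and apply \cref{lem:identity_pmf} (with $(\symAlpha_\symIndexS,\symBeta_\symIndexS)\in\lbrace(0,1),(1,0)\rbrace^2$ for the four bit-patterns) to rewrite the inner sum as $\symZ_\symMaxUpdateVal^{1/(\symBase-1)}(1-\symZ_\symMaxUpdateVal)$ multiplied by a bilinear combination of $\symZ_{\symMaxUpdateVal-1},1-\symZ_{\symMaxUpdateVal-1},\symZ_{\symMaxUpdateVal-2},1-\symZ_{\symMaxUpdateVal-2}$. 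Substituting $\symZ_{\symMaxUpdateVal-\symIndexS}=\symZ_\symMaxUpdateVal^{\symBase^\symIndexS}$ and then distributing the $(1-\symZ_\symMaxUpdateVal)$ factor turns the integrand into a linear combination (over $\symContributionCoefficient_0,\symContributionCoefficient_1,\symContributionCoefficient_2,\symContributionCoefficient_3$) of monomials $\symZ_\symMaxUpdateVal^\symIndexJ$ with exponents $\symIndexJ\in\lbrace 0,1,\symBase,1{+}\symBase,\symBase^2,1{+}\symBase^2,\symBase{+}\symBase^2,1{+}\symBase{+}\symBase^2\rbrace$.

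Next, reusing the identity $\symBase^{-\symGRA\symMaxUpdateVal} = \symNumReg^\symGRA(\symBase-1)^{-\symGRA}\symCardinality^{-\symGRA}(-\ln\symZ_\symMaxUpdateVal)^\symGRA$ and invoking \cref{lem:approximation}, every sum of the form $\sum_\symMaxUpdateVal \symZ_\symMaxUpdateVal^{\symIndexJ}(-\ln\symZ_\symMaxUpdateVal)^\symGRA$ becomes $\frac{1}{\ln\symBase}\int_0^\infty \symY^{\symGRA-1} e^{-\symIndexJ \symY}d\symY$. Combining with the $e^{-\symY/(\symBase-1)}$ factor and applying $\int_0^\infty \symY^{\symGRA-1}e^{-\lambda\symY}d\symY = \symGammaFunc(\symGRA)/\lambda^\symGRA$ with $\lambda = \symIndexJ + 1/(\symBase-1) = (\symIndexJ\symBase-\symIndexJ+1)/(\symBase-1)$ contributes, for each monomial, a factor $(\symBase-1)^\symGRA\symGammaFunc(\symGRA)/(\symIndexJ\symBase-\symIndexJ+1)^\symGRA$ that cancels the prefactor $(\symBase-1)^{-\symGRA}$. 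Regrouping the resulting sum by the coefficient $\symContributionCoefficient_\symIndexJ$ it multiplies should reproduce exactly the four signed sums $\symNewConstant_0(\symGRA),\ldots,\symNewConstant_3(\symGRA)$ from the statement, yielding
\begin{equation*}
\symExpectation(\symRegContrib(\symRegister))\approx \frac{\symNumReg^\symGRA\symGammaFunc(\symGRA)}{\symCardinality^\symGRA\ln\symBase}\sum_{\symIndexJ=0}^{3}\symContributionCoefficient_\symIndexJ\symNewConstant_\symIndexJ(\symGRA),
\end{equation*}
which collapses to $\symNumReg^\symGRA/\symCardinality^\symGRA$ via the normalization condition \eqref{equ:norm_condition}.

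The main obstacle is sign bookkeeping: the eight-monomial polynomial produces roughly 32 signed contributions after integration, and one must verify that grouping them by $\symContributionCoefficient_\symIndexJ$ yields precisely the alternating sums of $(\symIndexJ'\symBase-\symIndexJ'+1)^{-\symGRA}$ spelled out in the definitions of $\symNewConstant_0,\symNewConstant_1,\symNewConstant_2,\symNewConstant_3$. No analytic difficulty arises beyond the manipulations already exhibited in \cref{lem:expectation_gra_reg_contrib,lem:moment_two_gra_reg_contrib}.
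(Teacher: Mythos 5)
Your treatment of the expectation is sound and follows essentially the same route as the paper's proof: expand over the two extra bits, substitute $\symZ_{\symMaxUpdateVal-\symIndexS}=\symZ_\symMaxUpdateVal^{\symBase^\symIndexS}$, pull out $\symBase^{-\symGRA\symMaxUpdateVal}\propto(-\ln\symZ_\symMaxUpdateVal)^\symGRA$, convert the sum into a gamma-type integral via \cref{lem:approximation}, and regroup the resulting $(\symIndexJ\symBase-\symIndexJ+1)^{-\symGRA}$ terms by coefficient so that the normalization \eqref{equ:norm_condition} collapses everything to $\symNumReg^\symGRA/\symCardinality^\symGRA$. Whether you distribute the factor $(1-\symZ_\symMaxUpdateVal)$ into the monomials before integrating (your 8 exponents $0,1,\symBase,1{+}\symBase,\symBase^2,1{+}\symBase^2,\symBase{+}\symBase^2,1{+}\symBase{+}\symBase^2$) or keep it and use $\int_0^\infty e^{-\symZ\symY}(1-e^{-\symY})\symY^{\symGRA-1}d\symY=\symGammaFunc(\symGRA)(\symZ^{-\symGRA}-(\symZ+1)^{-\symGRA})$ as the paper does is immaterial; the bookkeeping you describe does reproduce $\symNewConstant_0,\ldots,\symNewConstant_3$.

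However, the lemma also asserts that $\symNewConstant_0(\symGRA),\symNewConstant_1(\symGRA),\symNewConstant_2(\symGRA),\symNewConstant_3(\symGRA)$ are \emph{positive} for $\symBase>1$, $\symGRA>0$, and your proposal does not address this claim at all. It is not a throwaway remark: positivity is what later makes the Cauchy--Schwarz minimization in \cref{lem:fgra_minimum} legitimate (the quantities $\sqrt{\symNewConstant_\symIndexJ(2\symGRA)}$ must be real and nonzero, and the optimal coefficients \eqref{equ:our_estimator_contrib_coefficients} positive), and it is genuinely non-obvious, since $\symNewConstant_1$, $\symNewConstant_2$, $\symNewConstant_3$ are alternating sums of terms of comparable magnitude. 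The paper settles it with a majorization argument: positivity of $\symNewConstant_0$ is immediate from $\symBase^3-\symBase+1<\symBase^3$; for $\symNewConstant_1$ and $\symNewConstant_2$ one applies Karamata's inequality to the strictly convex map $\symX\mapsto\symX^{-\symGRA}$ with the majorizing pairs $(\symBase^3,\symBase^2-\symBase+1)$ versus $(\symBase^3-\symBase+1,\symBase^2)$ and $(\symBase^3,\symBase^3-\symBase^2+1)$ versus $(\symBase^3-\symBase+1,\symBase^3-\symBase^2+\symBase)$; and for $\symNewConstant_3$ one applies the same inequality to the convex function $\symX\mapsto\symX^{-\symGRA}-(\symX+\symBase^3-\symBase^2)^{-\symGRA}$ with the pair $(\symBase^2,1)$ versus $(\symBase^2-\symBase+1,\symBase)$. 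Without an argument of this kind your proof of the lemma as stated is incomplete.
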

  \begin{proof}
    For $\symNumExtraBits=2$ \eqref{equ:simple_register_pmf} can be written as
    \begin{equation*}
      \symDensityRegisterSimple(\symRegister\vert\symCardinality) =
      \symZ_{\symMaxUpdateVal}^{\frac{1}{\symBase-1}}
      (1-\symZ_{\symMaxUpdateVal})
      \symZ_{\symMaxUpdateVal-1}^{1-\symIndexBit_1}
      (1- \symZ_{\symMaxUpdateVal-1})^{ \symIndexBit_1}
      \symZ_{\symMaxUpdateVal-2}^{1-\symIndexBit_2}
      (1- \symZ_{\symMaxUpdateVal-2})^{ \symIndexBit_2}
    \end{equation*}
    with $\symRegister = 4\symMaxUpdateVal + \langle\symIndexBit_1\symIndexBit_{2}\rangle_2$.
    The expectation of $\symRegContrib(\symRegister)$ is
    \begin{align*}
       & \symExpectation(\symRegContrib(\symRegister)) = \sum_{\symRegister=-\infty}^\infty
      \symDensityRegisterSimple(\symRegister\vert\symCardinality)
      \symRegContrib(\symRegister)
      \\
       & =
      \sum_{\symMaxUpdateVal=-\infty}^\infty\sum_{\symIndexBit_1=0}^1\sum_{\symIndexBit_2=0}^1
      \symDensityRegisterSimple(4\symMaxUpdateVal + \langle\symIndexBit_1\symIndexBit_{2}\rangle_2\vert\symCardinality)
      \symRegContrib(4\symMaxUpdateVal + \langle\symIndexBit_1\symIndexBit_{2}\rangle_2)
      \\
       & =
      \sum_{\symMaxUpdateVal=-\infty}^\infty
      \symZ_{\symMaxUpdateVal}^{\frac{1}{\symBase-1}}
      (1-\symZ_{\symMaxUpdateVal})
      \symBase^{-\symGRA\symMaxUpdateVal}
      \left(\begin{array}{@{}l}
                \symZ_{\symMaxUpdateVal-1}
                \symZ_{\symMaxUpdateVal-2}
                \symContributionCoefficient_{0}
                +
                \symZ_{\symMaxUpdateVal-1}
                (1- \symZ_{\symMaxUpdateVal-2})
                \symContributionCoefficient_{1}
                \\
                +
                (1- \symZ_{\symMaxUpdateVal-1})
                \symZ_{\symMaxUpdateVal-2}
                \symContributionCoefficient_{2}
                \\
                +
                (1- \symZ_{\symMaxUpdateVal-1})
                (1- \symZ_{\symMaxUpdateVal-2})
                \symContributionCoefficient_{3}
              \end{array}\right)
      \\
       & =
      \frac{\symNumReg^\symGRA}{\symCardinality^\symGRA(\symBase-1)^\symGRA}\sum_{\symMaxUpdateVal=-\infty}^\infty
      \symZ_{\symMaxUpdateVal}^{\frac{1}{\symBase-1}}
      (1-\symZ_{\symMaxUpdateVal})(-\ln\symZ_{\symMaxUpdateVal})^\symGRA\cdot
      \\
       & \quad \cdot\left(\begin{array}{@{}l}
                              \symZ_{\symMaxUpdateVal}^{\symBase(\symBase+1)}
                              \symContributionCoefficient_{0}
                              +
                              \symZ_{\symMaxUpdateVal}^{\symBase}
                              (1- \symZ_{\symMaxUpdateVal}^{\symBase^2})
                              \symContributionCoefficient_{1}
                              \\
                              +
                              (1- \symZ_{\symMaxUpdateVal}^{\symBase})
                              \symZ_{\symMaxUpdateVal}^{\symBase^2}
                              \symContributionCoefficient_{2}
                              +
                              (1- \symZ_{\symMaxUpdateVal}^{\symBase})
                              (1- \symZ_{\symMaxUpdateVal}^{\symBase^2})
                              \symContributionCoefficient_{3}
                            \end{array}\right)
      \\
       & =
      \frac{\symNumReg^\symGRA}{\symCardinality^\symGRA(\symBase-1)^\symGRA}\sum_{\symMaxUpdateVal=-\infty}^\infty
      \symZ_{\symMaxUpdateVal}^{\frac{1}{\symBase-1}}
      (1-\symZ_{\symMaxUpdateVal})(-\ln\symZ_{\symMaxUpdateVal})^\symGRA\cdot
      \\
       & \cdot\left(
      \symZ_{\symMaxUpdateVal}^{\symBase(\symBase+1)}(
        \symContributionCoefficient_{0}-\symContributionCoefficient_{1}-\symContributionCoefficient_{2}+\symContributionCoefficient_{3})
      +
      \symZ_{\symMaxUpdateVal}^{\symBase^2}(\symContributionCoefficient_{2}-\symContributionCoefficient_{3})
      +
      \symZ_{\symMaxUpdateVal}^{\symBase}(\symContributionCoefficient_{1}-\symContributionCoefficient_{3})
      +
      \symContributionCoefficient_{3}\right).
    \end{align*}
    \cref{lem:approximation} with
    \begin{equation*}
      \symSomeFunc(\symX) = \symX^{\frac{1}{\symBase-1}}(-\ln \symX)^\symGRA(1-\symX)\left(\begin{array}{@{}l}
          \symX^{\symBase(\symBase+1)}(
          \symContributionCoefficient_{0}-\symContributionCoefficient_{1}-\symContributionCoefficient_{2}+\symContributionCoefficient_{3})
          \\
          +
          \symX^{\symBase^2}(\symContributionCoefficient_{2}-\symContributionCoefficient_{3})
          +
          \symX^{\symBase}(\symContributionCoefficient_{1}-\symContributionCoefficient_{3})
          +
          \symContributionCoefficient_{3}\end{array}\right)
    \end{equation*}
    yields
    \begin{align*}
       & \symExpectation(\symRegContrib(\symRegister)) \approx
      \frac{\symNumReg^\symGRA}{\symCardinality^\symGRA(\symBase-1)^\symGRA\ln\symBase}
      \int_0^\infty
      e^{-\symY\frac{1}{\symBase-1}}
      (1-e^{-\symY})\symY^{\symGRA-1}\cdot
      \\
       & \quad\cdot
      \left(\begin{array}{@{}l}
                e^{-\symBase(\symBase+1)\symY }(
                \symContributionCoefficient_{0}-\symContributionCoefficient_{1}-\symContributionCoefficient_{2}+\symContributionCoefficient_{3})
                \\
                +e^{-\symBase^2\symY }(\symContributionCoefficient_{2}-\symContributionCoefficient_{3})
                +e^{-\symBase\symY }(\symContributionCoefficient_{1}-\symContributionCoefficient_{3})
                +
                \symContributionCoefficient_{3}
              \end{array}\right)
      d\symY.
    \end{align*}
    Using $\int_0^\infty e^{-\symZ\symY}(1-e^{-\symY})\symY^{\symGRA-1}d\symY = \frac{\symGammaFunc(\symGRA)}{\symZ^\symGRA}-\frac{\symGammaFunc(\symGRA)}{(\symZ+1)^\symGRA}$ gives
    \begin{align*}
       & \symExpectation(\symRegContrib(\symRegister)) \approx
      \frac{\symNumReg^\symGRA\symGammaFunc(\symGRA) }{\symCardinality^\symGRA(\symBase-1)^\symGRA\ln\symBase}
      \cdot
      \\
       & \quad\cdot
      \left(\begin{array}{@{}l}
                (\symContributionCoefficient_{0}-\symContributionCoefficient_{1}-\symContributionCoefficient_{2}+\symContributionCoefficient_{3})\cdot
                \\
                \textstyle\quad\cdot ((\frac{1}{\symBase-1}+\symBase(\symBase+1))^{-\symGRA}-(\frac{\symBase}{\symBase-1}+\symBase(\symBase+1))^{-\symGRA})
                \\
                +(\symContributionCoefficient_{2}-\symContributionCoefficient_{3}) ((\frac{1}{\symBase-1}+\symBase^2)^{-\symGRA}-(\frac{\symBase}{\symBase-1}+\symBase^2)^{-\symGRA})
                \\
                +(\symContributionCoefficient_{1}-\symContributionCoefficient_{3}) ((\frac{1}{\symBase-1}+\symBase)^{-\symGRA}-(\frac{\symBase}{\symBase-1}+\symBase)^{-\symGRA})
                \\
                +\symContributionCoefficient_{3}((\frac{1}{\symBase-1})^{-\symGRA}-(\frac{\symBase}{\symBase-1} )^{-\symGRA})
              \end{array}\right)
      \\
       & =\frac{\symNumReg^\symGRA\symGammaFunc(\symGRA) }{\symCardinality^\symGRA\ln\symBase}
      \left(\begin{array}{@{}l}
                (\symContributionCoefficient_{0}-\symContributionCoefficient_{1}-\symContributionCoefficient_{2}+\symContributionCoefficient_{3}) ((\symBase^3-\symBase+1)^{-\symGRA}-\symBase^{-3\symGRA})
                \\
                +(\symContributionCoefficient_{2}-\symContributionCoefficient_{3}) ((\symBase^3-\symBase^2+1)^{-\symGRA}-(\symBase^3-\symBase^2+\symBase)^{-\symGRA})
                \\
                +(\symContributionCoefficient_{1}-\symContributionCoefficient_{3}) ((\symBase^2-\symBase+1)^{-\symGRA}-\symBase^{-2\symGRA})
                \\
                +\symContributionCoefficient_{3}(1-\symBase^{-\symGRA})
              \end{array}\right)
      \\
       & =\frac{\symNumReg^\symGRA\symGammaFunc(\symGRA) }{\symCardinality^\symGRA\ln\symBase}\left(
      \symContributionCoefficient_0 \symNewConstant_0(\symGRA)
      +
      \symContributionCoefficient_1 \symNewConstant_1(\symGRA)
      +
      \symContributionCoefficient_2 \symNewConstant_2(\symGRA)
      +\symContributionCoefficient_3 \symNewConstant_3(\symGRA)
      \right)
      =\frac{\symNumReg^\symGRA}{\symCardinality^\symGRA}.
    \end{align*}
    It remains to show that $\symNewConstant_0(\symGRA)$, $\symNewConstant_1(\symGRA)$, $\symNewConstant_2(\symGRA)$, and $\symNewConstant_3(\symGRA)$ are positive. The proof is trivial for $\symNewConstant_0(\symGRA)$. For $\symNewConstant_1(\symGRA)$ and $\symNewConstant_2(\symGRA)$ we use the strict convexity of $\symSomeFunc(\symX) = \symX^{-\symGRA}$ together with
    \begin{gather*}
      \symBase^3
      >
      \symBase^3 - \symBase + 1
      >
      \symBase^3 - \symBase^2 + \symBase
      >
      \symBase^3 - \symBase^2 + 1 > \symBase^2-\symBase +1>
      \symBase
      >
      1,
      \\
      \symBase^3
      >
      \symBase^3 - \symBase + 1
      >
      \symBase^3 - \symBase^2 + \symBase
      >
      \symBase^2
      >
      \symBase^2-\symBase +1
      >
      \symBase
      >
      1
    \end{gather*}
    to apply Karamata's inequality, which gives
    \begin{align*}
       & \symSomeFunc(\symBase^{3}) + \symSomeFunc(\symBase^2-\symBase+1) > \symSomeFunc(\symBase^3-\symBase+1) + \symSomeFunc(\symBase^2)
      \\
       & \Rightarrow
      \symBase^{-3\symGRA} + (\symBase^2-\symBase+1)^{-\symGRA} > (\symBase^3-\symBase+1)^{-\symGRA} + \symBase^{-2\symGRA}
      \\
       & \Rightarrow \symNewConstant_1(\symGRA) > 0
    \end{align*}
    and
    \begin{align*}
       & \symSomeFunc(\symBase^{3}) + \symSomeFunc(\symBase^3-\symBase^2+1) > \symSomeFunc(\symBase^3-\symBase+1) + \symSomeFunc(\symBase^3-\symBase^2+\symBase)
      \\
       & \Rightarrow
      \symBase^{-3\symGRA} + (\symBase^3-\symBase^2+1)^{-\symGRA} > (\symBase^3-\symBase+1)^{-\symGRA} + (\symBase^3-\symBase^2+\symBase)^{-\symGRA}
      \\
       & \Rightarrow \symNewConstant_2(\symGRA) > 0.
    \end{align*}
    For $\symNewConstant_3(\symGRA)$ we use the strict convexity of $\symSomeFunc(\symX) = \symX^{-\symGRA} - (\symX + \symBase^3-\symBase^2)^{-\symGRA}$ and apply again Karamata's inequality
    \begin{align*}
       & \symSomeFunc(\symBase^{2}) + \symSomeFunc(1) > \symSomeFunc(\symBase^2-\symBase+1) + \symSomeFunc(\symBase)
      \\
       & \Rightarrow\symBase^{-2\symGRA} - \symBase^{-3\symGRA} + 1 - (1 + \symBase^3-\symBase^2)^{-\symGRA}
      \\
       & \quad >(\symBase^2-\symBase+1)^{-\symGRA} - (\symBase^3-\symBase+1)^{-\symGRA} + \symBase^{-\symGRA} - (\symBase^3-\symBase^2+\symBase)^{-\symGRA}
      \\
       & \Rightarrow \symNewConstant_3(\symGRA) > 0.
    \end{align*}
  \end{proof}

  \begin{lemma}
    \label{lem:fgra_second_moment}
    For a random variable $\symRegister$ distributed according to \eqref{equ:simple_register_pmf}, the expectation of $(\symRegContrib(\symRegister))^2$ with $\symRegContrib$, $\symNewConstant_0$, $\symNewConstant_1$, $\symNewConstant_2$, and $\symNewConstant_3$ as defined in \cref{lem:expectation_gamma_new} can be approximated using \cref{lem:approximation} by
    \begin{equation}
      \label{equ:second_moment_new_contribution}
      \symExpectation((\symRegContrib(\symRegister))^2)
      \approx
      \textstyle
      \frac{\symNumReg^{2\symGRA}\symGammaFunc(2\symGRA) \ln\symBase}{\symCardinality^{2\symGRA}(\symGammaFunc(\symGRA))^2}
      \frac{ \symContributionCoefficient_0^2 \symNewConstant_0(2\symGRA)
        +
        \symContributionCoefficient_1^2 \symNewConstant_1(2\symGRA)
        +
        \symContributionCoefficient_2^2 \symNewConstant_2(2\symGRA)
        +\symContributionCoefficient_3^2 \symNewConstant_3(2\symGRA)
      }{(\symContributionCoefficient_0 \symNewConstant_0(\symGRA)
        +
        \symContributionCoefficient_1 \symNewConstant_1(\symGRA)
        +
        \symContributionCoefficient_2 \symNewConstant_2(\symGRA)
        +\symContributionCoefficient_3 \symNewConstant_3(\symGRA)
        )^2}.
    \end{equation}
  \end{lemma}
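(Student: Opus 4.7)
The plan is to follow the strategy used for the first moment in \cref{lem:expectation_gamma_new}, exploiting the key observation that $(\symRegContrib(\symRegister))^2 = \symBase^{-2\symGRA\lfloor\symRegister/4\rfloor}\symContributionCoefficient_{\symRegister \bmod 4}^2$ has the same functional form as $\symRegContrib(\symRegister)$, but with $\symGRA$ doubled in the exponent and the weights $\symContributionCoefficient_j$ replaced by $\symContributionCoefficient_j^2$. Consequently, the entire derivation of \cref{lem:expectation_gamma_new} carries over almost verbatim under these two substitutions.

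First I would write the second moment as a double sum over $\symMaxUpdateVal$ and $(\symIndexBit_1,\symIndexBit_2)\in\{0,1\}^2$ using \eqref{equ:simple_register_pmf} for $\symNumExtraBits=2$. Expanding the inner sum over $(\symIndexBit_1,\symIndexBit_2)$ and substituting $\symZ_{\symMaxUpdateVal-\symIndexS}=\symZ_\symMaxUpdateVal^{\symBase^\symIndexS}$, the expression reduces to a sum of $\symZ_\symMaxUpdateVal^{1/(\symBase-1)}(1-\symZ_\symMaxUpdateVal)\symBase^{-2\symGRA\symMaxUpdateVal}$ multiplied by a linear combination of the four monomials $\symZ_\symMaxUpdateVal^{\symBase(\symBase+1)}$, $\symZ_\symMaxUpdateVal^{\symBase^2}$, $\symZ_\symMaxUpdateVal^{\symBase}$, and $1$, with coefficients that are precisely the $\symContributionCoefficient_j^2$-analogues of those appearing in the first-moment calculation.

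Next I would invoke \cref{lem:approximation} to replace the sum over $\symMaxUpdateVal$ by an integral and evaluate it using the identity $\int_0^\infty e^{-\symZ\symY}(1-e^{-\symY})\symY^{2\symGRA-1}\,d\symY = \symGammaFunc(2\symGRA)(\symZ^{-2\symGRA}-(\symZ+1)^{-2\symGRA})$. Collecting terms, the differences $\symZ^{-2\symGRA}-(\symZ+1)^{-2\symGRA}$ at the four relevant values of $\symZ$ assemble, through the same telescoping rearrangement as in \cref{lem:expectation_gamma_new}, into exactly the functions $\symNewConstant_0(2\symGRA)$, $\symNewConstant_1(2\symGRA)$, $\symNewConstant_2(2\symGRA)$, $\symNewConstant_3(2\symGRA)$, yielding
\begin{equation*}
\symExpectation((\symRegContrib(\symRegister))^2) \approx \frac{\symNumReg^{2\symGRA}\symGammaFunc(2\symGRA)}{\symCardinality^{2\symGRA}\ln\symBase}\bigl(\symContributionCoefficient_0^2\symNewConstant_0(2\symGRA)+\symContributionCoefficient_1^2\symNewConstant_1(2\symGRA)+\symContributionCoefficient_2^2\symNewConstant_2(2\symGRA)+\symContributionCoefficient_3^2\symNewConstant_3(2\symGRA)\bigr).
\end{equation*}

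Finally, to bring this into the scale-invariant form of \eqref{equ:second_moment_new_contribution}, I would multiply and divide by $(\symContributionCoefficient_0\symNewConstant_0(\symGRA)+\symContributionCoefficient_1\symNewConstant_1(\symGRA)+\symContributionCoefficient_2\symNewConstant_2(\symGRA)+\symContributionCoefficient_3\symNewConstant_3(\symGRA))^2$ and use the normalization \eqref{equ:norm_condition} to identify that sum with $\ln\symBase/\symGammaFunc(\symGRA)$, which converts the prefactor $\symGammaFunc(2\symGRA)/\ln\symBase$ into $\symGammaFunc(2\symGRA)\ln\symBase/(\symGammaFunc(\symGRA))^2$. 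The main obstacle is pure algebraic bookkeeping: verifying that the telescoping cancellations among the four monomial contributions assemble again into the $\symNewConstant_j$, now with argument $2\symGRA$. Since the underlying computation is structurally identical to that already carried out in \cref{lem:expectation_gamma_new} (with $\symGRA\to 2\symGRA$ and $\symContributionCoefficient_j\to\symContributionCoefficient_j^2$), no new analytical ingredient is required.
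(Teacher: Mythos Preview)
Your proposal is correct and rests on the same key observation as the paper: $(\symRegContrib(\symRegister))^2=\symBase^{-2\symGRA\lfloor\symRegister/4\rfloor}\symContributionCoefficient_{\symRegister\bmod 4}^2$ has exactly the form treated in \cref{lem:expectation_gamma_new} with $\symGRA\to 2\symGRA$ and $\symContributionCoefficient_j\to\symContributionCoefficient_j^2$. The only packaging difference is that the paper, instead of redoing the integral computation, rescales the squared coefficients so that they satisfy the normalization \eqref{equ:norm_condition} at parameter $2\symGRA$ and then invokes \cref{lem:expectation_gamma_new} as a black box; both routes yield your displayed intermediate expression and the same final formula after inserting \eqref{equ:norm_condition}.
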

  \begin{proof}
    We define $\symGRA' := 2\symGRA$ and
    \begin{equation*}
      \symContributionCoefficient_\symIndexJ' := \symContributionCoefficient_\symIndexJ^2 \frac{(\symContributionCoefficient_0 \symNewConstant_0(\symGRA)
        +
        \symContributionCoefficient_1 \symNewConstant_1(\symGRA)
        +
        \symContributionCoefficient_2 \symNewConstant_2(\symGRA)
        +\symContributionCoefficient_3 \symNewConstant_3(\symGRA))^2
      }{ \symContributionCoefficient_0^2 \symNewConstant_0(2\symGRA)
        +
        \symContributionCoefficient_1^2 \symNewConstant_1(2\symGRA)
        +
        \symContributionCoefficient_2^2 \symNewConstant_2(2\symGRA)
        +\symContributionCoefficient_3^2 \symNewConstant_3(2\symGRA)
      }\frac{(\symGammaFunc(\symGRA))^2}{\symGammaFunc(2\symGRA)\ln \symBase}.
    \end{equation*}
    Since \eqref{equ:norm_condition} implies
    \begin{equation*}
      \symContributionCoefficient'_0 \symNewConstant_0(\symGRA')
      +
      \symContributionCoefficient'_1 \symNewConstant_1(\symGRA')
      +
      \symContributionCoefficient'_2 \symNewConstant_2(\symGRA')
      +\symContributionCoefficient'_3 \symNewConstant_3(\symGRA')
      =\frac{\ln\symBase}{\symGammaFunc(\symGRA')},
    \end{equation*}
    we can apply \cref{lem:expectation_gamma_new} giving
    \begin{equation*}
      \symExpectation(\symBase^{-\symGRA'\lfloor\symRegister/4\rfloor} \symContributionCoefficient'_{\symRegister\bmod 4}) \approx
      \frac{\symNumReg^{\symGRA'}}{\symCardinality^{\symGRA'}}
    \end{equation*}
    which is equivalent to
    \begin{multline*}
      \symExpectation(\symBase^{-2\symGRA\lfloor\symRegister/4\rfloor}\symContributionCoefficient^2_{\symRegister\bmod 4}) \cdot\\\cdot
      \frac{(\symContributionCoefficient_0 \symNewConstant_0(\symGRA)
        +
        \symContributionCoefficient_1 \symNewConstant_1(\symGRA)
        +
        \symContributionCoefficient_2 \symNewConstant_2(\symGRA)
        +\symContributionCoefficient_3 \symNewConstant_3(\symGRA))^2
      }{ \symContributionCoefficient_0^2 \symNewConstant_0(2\symGRA)
        +
        \symContributionCoefficient_1^2 \symNewConstant_1(2\symGRA)
        +
        \symContributionCoefficient_2^2 \symNewConstant_2(2\symGRA)
        +\symContributionCoefficient_3^2 \symNewConstant_3(2\symGRA)
      }\frac{(\symGammaFunc(\symGRA))^2}{\symGammaFunc(2\symGRA)\ln\symBase}
      \approx
      \frac{\symNumReg^{2\symGRA}}{\symCardinality^{2\symGRA}}
    \end{multline*}
    and finally leads to \eqref{equ:second_moment_new_contribution}.
  \end{proof}

  \begin{lemma}
    \label{lem:fgra_minimum}
    The approximation for $\symExpectation((\symRegContrib(\symRegister))^2)$ given in \cref{lem:fgra_second_moment} is minimized for
    \begin{equation}
      \label{equ:fgra_min_coefficients}
      \textstyle
      \symContributionCoefficient_{\symIndexJ} = \frac{\ln \symBase}{\symGammaFunc(\symGRA)}\frac{\symNewConstant_\symIndexJ(\symGRA) }{\symNewConstant_\symIndexJ(2\symGRA) }\left(
      \frac{\symNewConstant_0^2(\symGRA) }{\symNewConstant_0(2\symGRA)}
      +
      \frac{\symNewConstant_1^2(\symGRA) }{\symNewConstant_1(2\symGRA)}
      +
      \frac{\symNewConstant_2^2(\symGRA) }{\symNewConstant_2(2\symGRA)}
      +
      \frac{\symNewConstant_3^2(\symGRA) }{\symNewConstant_3(2\symGRA)}
      \right)^{\!-1}
    \end{equation}
    and the corresponding minimum is given by
    \begin{equation}
      \label{equ:fgra_min}
      \min_{\symContributionCoefficient_0,\symContributionCoefficient_1,\symContributionCoefficient_2,\symContributionCoefficient_3}
      \symExpectation((\symRegContrib(\symRegister))^2)
      \approx
      \frac{\symNumReg^{2\symGRA}}{\symCardinality^{2\symGRA}}\frac{\symGammaFunc(2\symGRA) \ln\symBase}{(\symGammaFunc(\symGRA))^2}
      \left(
      \sum_{\symIndexJ=0}^3
      \frac{\symNewConstant_\symIndexJ^2(\symGRA) }{\symNewConstant_\symIndexJ(2\symGRA)}
      \right)^{\!-1}.
    \end{equation}
  \end{lemma}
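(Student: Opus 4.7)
The plan is to view the approximation for $\symExpectation((\symRegContrib(\symRegister))^2)$ from \cref{lem:fgra_second_moment}, up to the positive factor $\symNumReg^{2\symGRA}\symGammaFunc(2\symGRA)\ln\symBase/(\symCardinality^{2\symGRA}(\symGammaFunc(\symGRA))^2)$, as the ratio $Q(\symContributionCoefficient)/L(\symContributionCoefficient)^2$, where $Q(\symContributionCoefficient):=\sum_{\symIndexJ=0}^3 \symContributionCoefficient_\symIndexJ^2\,\symNewConstant_\symIndexJ(2\symGRA)$ is a positive-definite quadratic form (since $\symNewConstant_\symIndexJ(2\symGRA)>0$ by \cref{lem:expectation_gamma_new}) and $L(\symContributionCoefficient):=\sum_{\symIndexJ=0}^3 \symContributionCoefficient_\symIndexJ\,\symNewConstant_\symIndexJ(\symGRA)$ is the linear form appearing in the normalization \eqref{equ:norm_condition}, which fixes $L(\symContributionCoefficient)=\ln\symBase/\symGammaFunc(\symGRA)$.

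Since $Q/L^2$ is homogeneous of degree zero, minimising the right-hand side of \cref{lem:fgra_second_moment} is equivalent to minimising $Q$ on the affine hyperplane $L=\ln\symBase/\symGammaFunc(\symGRA)$. I would apply the Cauchy--Schwarz inequality with $a_\symIndexJ:=\symContributionCoefficient_\symIndexJ\sqrt{\symNewConstant_\symIndexJ(2\symGRA)}$ and $b_\symIndexJ:=\symNewConstant_\symIndexJ(\symGRA)/\sqrt{\symNewConstant_\symIndexJ(2\symGRA)}$, which yields $L(\symContributionCoefficient)^2=(\sum_\symIndexJ a_\symIndexJ b_\symIndexJ)^2\le Q(\symContributionCoefficient)\cdot\sum_\symIndexJ \symNewConstant_\symIndexJ^2(\symGRA)/\symNewConstant_\symIndexJ(2\symGRA)$. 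Rearranging and multiplying by the prefactor delivers exactly the lower bound in \eqref{equ:fgra_min}.

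Equality in Cauchy--Schwarz occurs precisely when $a_\symIndexJ$ is proportional to $b_\symIndexJ$, i.e., when $\symContributionCoefficient_\symIndexJ\propto\symNewConstant_\symIndexJ(\symGRA)/\symNewConstant_\symIndexJ(2\symGRA)$. Plugging this ansatz into the normalization \eqref{equ:norm_condition} pins down the proportionality constant and reproduces \eqref{equ:fgra_min_coefficients}.

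Nothing here is genuinely difficult; the only real care is in the bookkeeping of the Cauchy--Schwarz pairing so that $L$ appears on the one side while $Q$ and the target sum $\sum_\symIndexJ \symNewConstant_\symIndexJ^2(\symGRA)/\symNewConstant_\symIndexJ(2\symGRA)$ appear on the other. A mechanical alternative is Lagrange multipliers: stationarity of $Q(\symContributionCoefficient)-\lambda L(\symContributionCoefficient)$ gives $2\,\symNewConstant_\symIndexJ(2\symGRA)\,\symContributionCoefficient_\symIndexJ=\lambda\,\symNewConstant_\symIndexJ(\symGRA)$, which reproduces the same optimal form and, combined with the positive-definiteness of $Q$ guaranteed by \cref{lem:expectation_gamma_new}, certifies it as the global minimum.
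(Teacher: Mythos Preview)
Your proposal is correct and takes essentially the same approach as the paper: both apply Cauchy--Schwarz with the identical pairing $a_\symIndexJ=\symContributionCoefficient_\symIndexJ\sqrt{\symNewConstant_\symIndexJ(2\symGRA)}$, $b_\symIndexJ=\symNewConstant_\symIndexJ(\symGRA)/\sqrt{\symNewConstant_\symIndexJ(2\symGRA)}$ to bound the ratio in \cref{lem:fgra_second_moment}, then use the equality condition together with the normalization \eqref{equ:norm_condition} to pin down the coefficients. Your remark on Lagrange multipliers is a harmless aside not present in the paper, but the core argument is the same.
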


  \begin{proof}
    The Cauchy-Schwarz inequality
    \begin{multline*}
      \left(
      \sum_{\symIndexJ=0}^3\symContributionCoefficient_\symIndexJ^2 \symNewConstant_\symIndexJ(2\symGRA)
      \right)
      \left(
      \sum_{\symIndexJ=0}^3
      \frac{\symNewConstant_\symIndexJ^2(\symGRA) }{\symNewConstant_\symIndexJ(2\symGRA)}
      \right)
      \\
      =
      \left(
      \sum_{\symIndexJ=0}^3\left(\symContributionCoefficient_\symIndexJ \sqrt{\symNewConstant_\symIndexJ(2\symGRA)}\right)^2
      \right)
      \left(
      \sum_{\symIndexJ=0}^3
      \left(\frac{\symNewConstant_\symIndexJ(\symGRA) }{\sqrt{\symNewConstant_\symIndexJ(2\symGRA)}}\right)^2
      \right)
      \\
      \geq
      \left(
      \sum_{\symIndexJ=0}^3
      \left(\symContributionCoefficient_\symIndexJ \sqrt{\symNewConstant_\symIndexJ(2\symGRA)}\right)
      \left(\frac{\symNewConstant_\symIndexJ(\symGRA) }{\sqrt{\symNewConstant_\symIndexJ(2\symGRA)}}\right) \right)^2
      =
      \left(
      \sum_{\symIndexJ=0}^3
      \symContributionCoefficient_\symIndexJ\symNewConstant_\symIndexJ(\symGRA)
      \right)^2
    \end{multline*}
    gives
    \begin{equation*}
      \frac{
        \symContributionCoefficient_0^2 \symNewConstant_0(2\symGRA)
        +
        \symContributionCoefficient_1^2 \symNewConstant_1(2\symGRA)
        +
        \symContributionCoefficient_2^2 \symNewConstant_2(2\symGRA)
        +\symContributionCoefficient_3^2 \symNewConstant_3(2\symGRA)
      }{
        \left(
        \symContributionCoefficient_0 \symNewConstant_0(\symGRA)
        +
        \symContributionCoefficient_1 \symNewConstant_1(\symGRA)
        +
        \symContributionCoefficient_2 \symNewConstant_2(\symGRA)
        +\symContributionCoefficient_3 \symNewConstant_3(\symGRA)
        \right)^2
      }
      \geq
      \frac{1}{
        \sum_{\symIndexJ=0}^3
        \frac{\symNewConstant_\symIndexJ^2(\symGRA) }{\symNewConstant_\symIndexJ(2\symGRA)}
      }
    \end{equation*}
    with equality for
    \begin{equation}
      \label{equ:min_condition}
      \symContributionCoefficient_0 \frac{\symNewConstant_0(2\symGRA) }{\symNewConstant_0(\symGRA) }
      =
      \symContributionCoefficient_1 \frac{\symNewConstant_1(2\symGRA) }{\symNewConstant_1(\symGRA) }
      =
      \symContributionCoefficient_2 \frac{\symNewConstant_2(2\symGRA) }{\symNewConstant_2(\symGRA) }
      =
      \symContributionCoefficient_3\frac{ \symNewConstant_3(2\symGRA) }{\symNewConstant_3(\symGRA) }.
    \end{equation}
    Replacing the left-hand side expression of the inequality that also appears in \eqref{equ:second_moment_new_contribution} by the corresponding right-hand side leads to the claimed minimum \eqref{equ:fgra_min}. \eqref{equ:min_condition} defines the corresponding ratios between $\symContributionCoefficient_0$, $\symContributionCoefficient_1$, $\symContributionCoefficient_2$, and $\symContributionCoefficient_3$. Scaling those values to satisfy constraint \eqref{equ:norm_condition} finally gives \eqref{equ:fgra_min_coefficients}.
  \end{proof}

  \begin{lemma}
    \label{lem:cardinality_estimator}
    Consider $\symNumReg$ independent random variables $\symRegister_\symRegAddr$ that are distributed according to \eqref{equ:simple_register_pmf} and a function $\symRegContrib(\symRegister)$ for which $
      \symExpectation(\symRegContrib(\symRegister_\symRegAddr))
      =
      \frac{\symNumReg^\symGRA}{\symCardinality^\symGRA}$.
    The second-order delta method \cite{Casella2002} yields the estimator
    \begin{equation}
      \label{equ:lem_estimator}
      \symCardinalityEstimator
      =
      \symNumReg^{1+\frac{1}{\symGRA}}
      \cdot
      \left(\sum_{\symRegAddr=0}^{\symNumReg-1} \symRegContrib(\symRegister_\symRegAddr)
      \right)^{\!-\frac{1}{\symGRA}}
      \cdot
      \left(1 + \frac{1+\symGRA}{2}\frac{\symVarianceFactor}{\symNumReg}\right)^{\!-1}
    \end{equation}
    for $\symCardinality$ with relative variance
    \begin{equation*}
      \textstyle \symVariance(\frac{\symCardinalityEstimator}{\symCardinality})
      \approx
      \frac{\symVarianceFactor}{\symNumReg} + \symComplexity(\frac{1}{\symNumReg^2})
    \end{equation*}
    and
    \begin{equation}
      \label{equ:lem_variance_factor}
      \symVarianceFactor
      =
      \frac{1}{\symGRA^2}\frac{\symCardinality^{2\symGRA}}{\symNumReg^{2\symGRA}}
      \symVariance(\symRegContrib(\symRegister_\symRegAddr))
      =
      \frac{1}{\symGRA^2}
      \left(
      \frac{\symCardinality^{2\symGRA}}{\symNumReg^{2\symGRA}}\symExpectation((\symRegContrib(\symRegister_\symRegAddr))^2)-1
      \right).
    \end{equation}
  \end{lemma}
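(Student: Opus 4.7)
Let me set $S := \sum_{\symRegAddr=0}^{\symNumReg-1} \symRegContrib(\symRegister_\symRegAddr)$. Because the $\symRegister_\symRegAddr$ are iid with $\symExpectation(\symRegContrib(\symRegister_\symRegAddr)) = \symNumReg^\symGRA/\symCardinality^\symGRA$, the sum has mean $\mu := \symExpectation(S) = \symNumReg^{1+\symGRA}/\symCardinality^\symGRA$ and variance $\symVariance(S) = \symNumReg \symVariance(\symRegContrib(\symRegister_\symRegAddr))$. The candidate estimator inverts the mean relation, so I will analyze the smooth transformation $g(\symX) := \symX^{-1/\symGRA}$ applied to $S$. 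A direct computation gives $g(\mu) = \symCardinality/\symNumReg^{1+1/\symGRA}$, $g'(\mu) = -\frac{1}{\symGRA}\mu^{-1/\symGRA-1}$, and $g''(\mu) = \frac{1+\symGRA}{\symGRA^2}\mu^{-1/\symGRA-2}$.

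The plan is to apply the second-order delta method to compute the bias of $g(S)$ and the first-order delta method for its variance. Expanding $g(S)$ in a Taylor series around $\mu$ up to second order and taking expectations yields
\begin{equation*}
  \symExpectation(g(S)) \approx g(\mu) + \tfrac{1}{2} g''(\mu) \symVariance(S).
\end{equation*}
Substituting the formulas for $g(\mu)$, $g''(\mu)$ and $\symVariance(S)$, and using the definition \eqref{equ:lem_variance_factor} to rewrite $\symVariance(\symRegContrib(\symRegister_\symRegAddr)) = \symGRA^2 \symVarianceFactor \symNumReg^{2\symGRA}/\symCardinality^{2\symGRA}$, the powers of $\symNumReg$ and $\symCardinality$ collapse to give
\begin{equation*}
  \symExpectation(g(S)) \approx \frac{\symCardinality}{\symNumReg^{1+1/\symGRA}}\left(1 + \frac{1+\symGRA}{2}\frac{\symVarianceFactor}{\symNumReg}\right).
\end{equation*}
Dividing $g(S)$ by the bracketed factor and multiplying by $\symNumReg^{1+1/\symGRA}$ therefore produces \eqref{equ:lem_estimator} and renders the estimator unbiased up to terms of order $\symNumReg^{-2}$.

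For the variance, the first-order delta method gives $\symVariance(g(S)) \approx (g'(\mu))^2 \symVariance(S)$, and again the exponents of $\symNumReg$ and $\symCardinality$ telescope, leaving $\symVariance(g(S)) \approx \symVarianceFactor \symCardinality^2/\symNumReg^{3+2/\symGRA}$. Since $\symCardinalityEstimator$ differs from $g(S)$ by a deterministic factor of $\symNumReg^{1+1/\symGRA}(1+\symComplexity(\symNumReg^{-1}))$, squaring this factor and multiplying yields $\symVariance(\symCardinalityEstimator/\symCardinality) \approx \symVarianceFactor/\symNumReg$, with the omitted cross-term from the bias correction contributing only at order $\symNumReg^{-2}$. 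The equality of the two expressions for $\symVarianceFactor$ in \eqref{equ:lem_variance_factor} is immediate from $\symVariance(X) = \symExpectation(X^2) - (\symExpectation X)^2$ applied to $X=\symRegContrib(\symRegister_\symRegAddr)$.

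The main obstacle is simply bookkeeping: several cancellations in the exponents of $\symNumReg$ and $\symCardinality$ must line up so that the bias correction is exactly $1 + \frac{1+\symGRA}{2}\symVarianceFactor/\symNumReg$ and the leading variance is exactly $\symVarianceFactor/\symNumReg$. Beyond that, one should justify that higher-order Taylor remainders really are $\symComplexity(\symNumReg^{-2})$, which follows from the standard observation that moments of $(S-\mu)^k$ are of order $\symNumReg^{k/2}$ for $k$ even and $\mu$ scales as $\symNumReg^{1+\symGRA}$, so relative fluctuations of $S$ are $\symComplexity(\symNumReg^{-1/2})$, making the delta-method remainders negligible at the stated order.
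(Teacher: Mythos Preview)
Your proof is correct and follows essentially the same approach as the paper: both apply the second-order delta method to the transformation $x\mapsto x^{-1/\symGRA}$ to extract the bias correction $(1+\frac{1+\symGRA}{2}\frac{\symVarianceFactor}{\symNumReg})$, and the first-order delta method for the variance. The only cosmetic difference is that the paper works with the normalized statistic $\symStatisticX_\symNumReg=\symNumReg^{-(1+\symGRA)}S$ so that its mean is $\symCardinality^{-\symGRA}$, whereas you work directly with the raw sum $S$; the computations are otherwise identical.
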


  \begin{proof}
    Consider the statistic $\symStatisticX_\symNumReg := \frac{1}{\symNumReg^{1+\symGRA}}\sum_{\symRegAddr=0}^{\symNumReg-1}\symRegContrib(\symRegister_\symRegAddr)$ for which $\symExpectation(\symStatisticX_\symNumReg) = \frac{1}{\symCardinality^{\symGRA}}$
    and $\symVariance(\symStatisticX_\symNumReg) = \frac{\symVarianceFactor \symGRA^2}{\symNumReg\symCardinality^{2\symGRA}}$. The second-order delta method gives
    \begin{equation*}
      \textstyle\symTransformation(\symStatisticX_\symNumReg) - \symTransformation(\frac{1}{\symCardinality^{\symGRA}})
      \approx
      (\symStatisticX_\symNumReg - \frac{1}{\symCardinality^{\symGRA}})\
      \symTransformation'(\frac{1}{\symCardinality^{\symGRA}})
      +
      \frac{1}{2}
      (\symStatisticX_\symNumReg - \frac{1}{\symCardinality^{\symGRA}})^{2}
      \symTransformation''(\frac{1}{\symCardinality^{\symGRA}})
    \end{equation*}
    with the transformation $\symTransformation(\symX) := \symX^{-\frac{1}{\symGRA}}$ and its derivatives $\symTransformation'(\symX) = -\frac{1}{\symGRA}\symX^{-\frac{1}{\symGRA}-1}$ and $\symTransformation''(\symX) = \frac{1+\symGRA}{\symGRA^2}\symX^{-\frac{1}{\symGRA}-2}$. Taking the expectation on both sides gives
    \begin{equation*}
      \symExpectation(\symStatisticX_\symNumReg^{-\frac{1}{\symGRA}}) - \symCardinality
      \approx
      \frac{1+\symGRA}{2\symGRA^2}
      \symVariance(\symStatisticX_\symNumReg) \symCardinality^{2\symGRA+1}
      \approx
      \symCardinality
      \frac{1+\symGRA}{2}
      \frac{\symVarianceFactor}{\symNumReg}
    \end{equation*}
    and therefore
    \begin{equation*}
      \symCardinality
      \approx
      \frac{\symExpectation(\symStatisticX_\symNumReg^{-\frac{1}{\symGRA}})}{1 +
        \frac{1+\symGRA}{2}\frac{\symVarianceFactor}{\symNumReg}
      }
      =
      \symExpectation\left(
      \symNumReg^{1+\frac{1}{\symGRA}}
      \cdot
      \left(\sum_{\symRegAddr=0}^{\symNumReg-1} \symRegContrib(\symRegister_\symRegAddr)
        \right)^{\!-\frac{1}{\symGRA}}
      \cdot
      \left(1 + \frac{1+\symGRA}{2}\frac{\symVarianceFactor}{\symNumReg}\right)^{\!-1}
      \right)
    \end{equation*}
    which directly leads to \eqref{equ:lem_estimator}.

    For the variance we use
    \begin{equation*}
      \left(\symTransformation(\symStatisticX_\symNumReg) - \symTransformation(\symCardinality^{-\symGRA})\right)^2
      \approx
      \left(\symStatisticX_\symNumReg - \symCardinality^{-\symGRA}\right)^2
      (\symTransformation'(\symCardinality^{-\symGRA}))^2
    \end{equation*}
    and again take the expectation on both sides which leads to
    \begin{equation*}
      \symExpectation((\symStatisticX_\symNumReg^{-\frac{1}{\symGRA}} - \symCardinality)^2)
      \approx
      \symVariance(\symStatisticX_\symNumReg)
      (\symTransformation'(\symCardinality^{-\symGRA}))^2
      =
      \frac{\symVarianceFactor\symCardinality^2}{\symNumReg}.
    \end{equation*}
    This gives for the relative variance of estimator $\symCardinalityEstimator$
    \begin{align*}
       & \symVariance\left(\frac{\symCardinalityEstimator}{\symCardinality}\right)
      =
      \frac{1}{\symCardinality^2}
      \symExpectation\left(\left(\frac{1}{1 +
          \frac{1+\symGRA}{2}\frac{\symVarianceFactor}{\symNumReg}
        }
        \symStatisticX_\symNumReg^{-\frac{1}{\symGRA}}
        -\symCardinality\right)^2\right)
      \\
       & =
      \frac{1}{\symCardinality^2}
      \symExpectation\left(\frac{(\symStatisticX_\symNumReg^{-\frac{1}{\symGRA}} - \symCardinality)^2 -2(\symCardinality\frac{1+\symGRA}{2}\frac{\symVarianceFactor}{\symNumReg})(\symStatisticX_\symNumReg^{-\frac{1}{\symGRA}} - \symCardinality)+ (\symCardinality\frac{1+\symGRA}{2}\frac{\symVarianceFactor}{\symNumReg})^2}{(1 +
          \frac{1+\symGRA}{2}\frac{\symVarianceFactor}{\symNumReg})^2}
      \right)
      \\
       & \approx
      \frac{1}{\symCardinality^2}
      \left(\frac{\frac{\symVarianceFactor\symCardinality^2}{\symNumReg} - (\symCardinality\frac{1+\symGRA}{2}\frac{\symVarianceFactor}{\symNumReg})^2}{(1 +
          \frac{1+\symGRA}{2}\frac{\symVarianceFactor}{\symNumReg})^2}
      \right)
      =
      \frac{\symVarianceFactor}{\symNumReg}\frac{1 - (\frac{1+\symGRA}{2})^2\frac{\symVarianceFactor}{\symNumReg}}{(1 +
        \frac{1+\symGRA}{2}\frac{\symVarianceFactor}{\symNumReg})^2}
      =
      {\textstyle\frac{\symVarianceFactor}{\symNumReg} + \symComplexity(\frac{1}{\symNumReg^2})}.
    \end{align*}
  \end{proof}

  \begin{lemma}
    \label{lem:conditional_pmf}
    Let $\symUpdateEvent_\symUpdateVal$ be independent events that occur with probability $\symProbability(\symUpdateEvent_\symUpdateVal) =
      1 - \symZ_\symUpdateVal$ with $\symZ_\symUpdateVal := e^{-\frac{\symCardinality}{\symNumReg 2^{\symUpdateVal}}}$. Consider $\symRegisterSimple = 4\symMaxUpdateValSimple + \langle\symIndexBitSimple_1\symIndexBitSimple_2\rangle_2$ defined by $\symMaxUpdateValSimple = \max \lbrace \symUpdateVal \vert \symUpdateEvent_\symUpdateVal\rbrace$, $\symIndexBitSimple_1 = [\symUpdateEvent_{\symMaxUpdateValSimple-1}]$, and $\symIndexBitSimple_2 = [\symUpdateEvent_{\symMaxUpdateValSimple-2}]$ where we used the Iverson bracket notation.
    Furthermore, consider $\symRegister = 4 \symMaxUpdateVal + \langle\symIndexBit_1\symIndexBit_2\rangle_2$ defined by $\symMaxUpdateVal = \max(0, \min(\symMaxUpdateValMax, \symMaxUpdateValSimple))$, $\symIndexBit_1 = [\symUpdateEvent_{\symMaxUpdateVal-1}\wedge \symMaxUpdateVal > 1]$, and $\symIndexBit_2 = [\symUpdateEvent_{\symMaxUpdateVal-2}\wedge \symMaxUpdateVal > 2]$ with $\symMaxUpdateValMax\geq 3$. Then the conditional \ac{PMF} of $\symRegisterSimple$ for known $\symRegister$ is given by
    \begin{equation}
      \label{equ:cond_pmf}
      \scriptstyle\symDensityRegisterSimple(\symRegisterSimple\vert \symRegister,\symCardinality)
      =
      \begin{cases}
        \scriptstyle
        \frac{\symZ_{\symMaxUpdateVal}
          (1-\symZ_{\symMaxUpdateVal})
          \prod_{\symIndexJ = 1}^2
          \symZ_{\symMaxUpdateVal-\symIndexJ}^{1-\symIndexBit_\symIndexJ}
          (1- \symZ_{\symMaxUpdateVal-\symIndexJ})^{ \symIndexBit_\symIndexJ}}{\symZ_{0}}
          & \scriptstyle
        \symRegister=0,\ \symRegisterSimple = 4\symMaxUpdateVal + \langle\symIndexBit_1\symIndexBit_2\rangle_2,\ \symMaxUpdateVal\leq 0,
        \\ \scriptstyle
        \symZ_{0}^{1-\symIndexBit_1}
        (1- \symZ_{0})^{ \symIndexBit_1}
        \symZ_{-1}^{1-\symIndexBit_2}
        (1- \symZ_{-1})^{ \symIndexBit_2}
          & \scriptstyle
        \symRegister=4,\ \symRegisterSimple = 4 + \langle\symIndexBit_1\symIndexBit_2\rangle_2,
        \\ \scriptstyle
        \symZ_{0}^{1-\symIndexBit_1}
        (1- \symZ_{0})^{ \symIndexBit_1}
          & \scriptstyle
        \symRegister=8,\ \symRegisterSimple = 8 + \langle\symIndexBit_1\rangle_2,
        \\ \scriptstyle
        \symZ_{0}^{1-\symIndexBit_1}
        (1- \symZ_{0})^{ \symIndexBit_1}
          & \scriptstyle
        \symRegister=10,\ \symRegisterSimple = 10 + \langle\symIndexBit_1\rangle_2,
        \\ \scriptstyle
        1
          & \scriptstyle
        12 \leq \symRegister < 4\symMaxUpdateValMax,\ \symRegisterSimple=\symRegister,
        \\ \scriptstyle
        \frac{\symZ_\symMaxUpdateValMax(1-\symZ_\symMaxUpdateValMax)}{1-\symZ_{\symMaxUpdateValMax-1}},
          & \scriptstyle
        \symRegister \geq 4\symMaxUpdateValMax,\ \symRegisterSimple=\symRegister,
        \\ \scriptstyle
        \frac{\symZ_{\symMaxUpdateValMax+1}(1-\symZ_{\symMaxUpdateValMax+1})\symZ_{\symMaxUpdateValMax}}{1-\symZ_{\symMaxUpdateValMax-1}},
          & \scriptstyle
        \symRegister = 4\symMaxUpdateValMax + \langle\symIndexBit_1\symIndexBit_2\rangle_2,\ \symRegisterSimple=4\symMaxUpdateValMax+4 + \symIndexBit_1,
        \\ \scriptstyle
        \frac{\symZ_{\symMaxUpdateValMax+1}(1-\symZ_{\symMaxUpdateValMax+1})(1-\symZ_{\symMaxUpdateValMax})}{1-\symZ_{\symMaxUpdateValMax-1}},
          & \scriptstyle
        \symRegister = 4\symMaxUpdateValMax + \langle\symIndexBit_1\symIndexBit_2\rangle_2,\ \symRegisterSimple=4\symMaxUpdateValMax+6 + \symIndexBit_1,
        \\ \scriptstyle
        \frac{
          \symZ_{\symMaxUpdateVal}
          (1-\symZ_{\symMaxUpdateVal})
          \prod_{\symIndexJ = 1}^2
          \symZ_{\symMaxUpdateVal-\symIndexJ}^{1-\symIndexBit_\symIndexJ}
          (1- \symZ_{\symMaxUpdateVal-\symIndexJ})^{ \symIndexBit_\symIndexJ}}{1-\symZ_{\symMaxUpdateValMax-1}}
          & \scriptstyle
        \symRegister \geq 4\symMaxUpdateValMax,\ \symRegisterSimple= 4\symMaxUpdateVal+ \langle\symIndexBit_1\symIndexBit_2\rangle_2,\ \symMaxUpdateVal\geq \symMaxUpdateValMax+2,
        \\ \scriptstyle
        0 & \scriptstyle
        \text{else}.
      \end{cases}
    \end{equation}
  \end{lemma}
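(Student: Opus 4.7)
\textit{Proof proposal.} Since $r$ is obtained from $\tilde{r}$ by the deterministic truncation rule $u = \max(0,\min(w,\tilde{u}))$, $l_1 = [A_{u-1}\wedge u>1]$, $l_2 = [A_{u-2}\wedge u>2]$, the event $\{\tilde{R}=\tilde{r}\}$ is contained in $\{R=r\}$ for exactly one $r$, namely the image of $\tilde{r}$. Hence Bayes' rule reduces the conditional PMF to
\[
\tilde{\symDensity}_\text{reg}(\tilde{\symRegister}\mid\symRegister,\symCardinality)
=
\frac{\tilde{\symDensity}_\text{reg}(\tilde{\symRegister}\mid\symCardinality)}{\symProbability(R=\symRegister)}
\quad\text{for $\tilde{\symRegister}$ in the preimage of $\symRegister$,}
\]
with $\tilde{\symDensity}_\text{reg}(\tilde{\symRegister}\mid\symCardinality)$ taken from \eqref{equ:simple_register_pmf} specialised to $\symBase=2,\symNumExtraBits=2$. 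The plan is therefore to (i) enumerate the preimage of each $r$, and (ii) compute the marginal $\symProbability(R=\symRegister)$ in each case.

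The only algebraic input I will use is the telescoping identity
\[
\prod_{\symUpdateVal=\symIndexS+1}^{\infty}\symZ_{\symUpdateVal}
=\exp\!\Bigl(-\tfrac{\symCardinality}{\symNumReg}\sum_{\symUpdateVal=\symIndexS+1}^{\infty}2^{-\symUpdateVal}\Bigr)
=\symZ_{\symIndexS},
\]
which yields $\symProbability(\tilde{\symMaxUpdateVal}=\symIndexS)=(1-\symZ_{\symIndexS})\symZ_{\symIndexS}$ and $\symProbability(\tilde{\symMaxUpdateVal}\geq\symIndexS)=1-\symZ_{\symIndexS-1}$. I would then march through the cases for $\symRegister$, using the independence of the events $\symUpdateEvent_{\symUpdateVal}$ to split marginals into products. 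For $\symRegister=0$ the preimage is $\{\tilde{\symMaxUpdateVal}\leq 0\}$ with $\symProbability(R=0)=\prod_{\symUpdateVal\geq 1}\symZ_{\symUpdateVal}=\symZ_0$. For $\symRegister=4$ the preimage is $\{\tilde{\symMaxUpdateVal}=1\}$ (four $\tilde{\symRegister}$ values parameterised by $[\symUpdateEvent_0]$ and $[\symUpdateEvent_{-1}]$), giving $\symProbability(R=4)=(1-\symZ_1)\symZ_1$. For $\symRegister\in\{8,10\}$ the preimage is a 2-element set, and $\symProbability(R=8)=(1-\symZ_2)\symZ_2\symZ_1$, $\symProbability(R=10)=(1-\symZ_2)\symZ_2(1-\symZ_1)$. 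For $12\leq\symRegister<4\symMaxUpdateValMax$ the truncation is vacuous so the preimage is $\{\symRegister\}$ and the conditional probability is $1$.

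The subtlest case is $\symRegister\geq 4\symMaxUpdateValMax$, where I expect the main bookkeeping difficulty. Here $\symMaxUpdateVal=\symMaxUpdateValMax$ is forced but $\tilde{\symMaxUpdateVal}\geq\symMaxUpdateValMax$ is free, and the bits $\symIndexBit_1,\symIndexBit_2$ are determined by the fixed events $[\symUpdateEvent_{\symMaxUpdateValMax-1}], [\symUpdateEvent_{\symMaxUpdateValMax-2}]$, which are independent of $\{\symUpdateEvent_{\symUpdateVal}\}_{\symUpdateVal\geq\symMaxUpdateValMax}$. This independence factorises the marginal as
\[
\symProbability(R=4\symMaxUpdateValMax+\langle\symIndexBit_1\symIndexBit_2\rangle_2)
=(1-\symZ_{\symMaxUpdateValMax-1})\,\symZ_{\symMaxUpdateValMax-1}^{1-\symIndexBit_1}(1-\symZ_{\symMaxUpdateValMax-1})^{\symIndexBit_1}\,\symZ_{\symMaxUpdateValMax-2}^{1-\symIndexBit_2}(1-\symZ_{\symMaxUpdateValMax-2})^{\symIndexBit_2}.
\]
I would then split the preimage by the value of $\tilde{\symMaxUpdateVal}$: for $\tilde{\symMaxUpdateVal}=\symMaxUpdateValMax$ the two $\symIndexBit$-factors in the numerator (from \eqref{equ:simple_register_pmf}) cancel exactly those in the denominator; for $\tilde{\symMaxUpdateVal}=\symMaxUpdateValMax+1$ consistency forces $\tilde{\symIndexBit}_2=[\symUpdateEvent_{\symMaxUpdateValMax-1}]=\symIndexBit_1$ while $\tilde{\symIndexBit}_1=[\symUpdateEvent_{\symMaxUpdateValMax}]$ is free, producing the two entries with the factor $(1-\symZ_{\symMaxUpdateValMax+1})\symZ_{\symMaxUpdateValMax+1}$ and $\symZ_{\symMaxUpdateValMax}$ or $(1-\symZ_{\symMaxUpdateValMax})$; for $\tilde{\symMaxUpdateVal}\geq\symMaxUpdateValMax+2$ the indices $\tilde{\symMaxUpdateVal}-1,\tilde{\symMaxUpdateVal}-2$ no longer coincide with $\symMaxUpdateValMax-1,\symMaxUpdateValMax-2$, the $\symIndexBit$-factors of $P(R=\symRegister)$ cancel completely, and only the factor $1/(1-\symZ_{\symMaxUpdateValMax-1})$ survives, matching the last line of \eqref{equ:cond_pmf}. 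Collecting all cases reproduces \eqref{equ:cond_pmf}.
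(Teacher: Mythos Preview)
Your overall plan---Bayes' rule plus case-by-case enumeration---matches the paper's, but the opening framing contains a real error. You assert that $r$ is obtained from $\tilde r$ by a deterministic rule and hence $\{\tilde R=\tilde r\}\subseteq\{R=r\}$ for a unique $r$; this is false whenever $\tilde u\geq w+1$. In that regime $u=w$, so $l_1=[A_{w-1}]$ and $l_2=[A_{w-2}]$ depend on events \emph{not recorded in} $\tilde r$ (which only encodes $A_{\tilde u-1}$ and $A_{\tilde u-2}$). The same $\tilde r$ is therefore compatible with several values of $r$, and your displayed formula $\tilde\rho_{\text{reg}}(\tilde r\mid n)/\Pr(R=r)$ is not the conditional PMF: for $\tilde u\geq w+2$ and $r=4w$ it would give $\tilde\rho_{\text{reg}}(\tilde r\mid n)\big/\bigl((1-z_{w-1})z_{w-1}z_{w-2}\bigr)$ instead of the correct $\tilde\rho_{\text{reg}}(\tilde r\mid n)/(1-z_{w-1})$.

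What is actually needed in the numerator is the joint probability $\Pr(\tilde R=\tilde r,\,R=r)$. For $\tilde u\geq w+2$ this equals $\tilde\rho_{\text{reg}}(\tilde r\mid n)\cdot z_{w-1}^{1-l_1}(1-z_{w-1})^{l_1}z_{w-2}^{1-l_2}(1-z_{w-2})^{l_2}$, since $\{A_k\}_{k\geq w}$ is independent of $A_{w-1},A_{w-2}$; \emph{these} extra factors are what cancel against the $l$-factors in $\Pr(R=r)$, leaving $1/(1-z_{w-1})$. Your later narrative (``the $l$-factors of $P(R=r)$ cancel completely'') suggests you intuited this mechanism, but it is inconsistent with the formula you started from---there is nothing in $\tilde\rho_{\text{reg}}(\tilde r\mid n)$ alone for those denominator factors to cancel with. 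The paper sidesteps the issue by writing each case directly as $\Pr(\text{explicit conjunction of }A\text{-events})/\Pr(R=r)$ without ever claiming that $R$ is a function of $\tilde R$.
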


  \begin{proof}
    The \ac{PMF} of $\symRegisterSimple$ is given by
    \begin{equation*}
      \scriptstyle\symDensityRegisterSimple(\symRegisterSimple\vert\symCardinality) =
      \begin{cases}
        \scriptstyle\symProbability(\overline\symUpdateEvent_{\symMaxUpdateVal-2} \wedge \overline\symUpdateEvent_{\symMaxUpdateVal-1} \wedge \symUpdateEvent_\symMaxUpdateVal\wedge \bigwedge_{\symUpdateVal=\symMaxUpdateVal+1}^\infty \overline\symUpdateEvent_{\symUpdateVal})
         &
        \scriptstyle\symRegisterSimple= 4\symMaxUpdateVal,
        \\
        \scriptstyle\symProbability(\symUpdateEvent_{\symMaxUpdateVal-2} \wedge \overline\symUpdateEvent_{\symMaxUpdateVal-1} \wedge \symUpdateEvent_\symMaxUpdateVal\wedge \bigwedge_{\symUpdateVal=\symMaxUpdateVal+1}^\infty \overline\symUpdateEvent_{\symUpdateVal})
         &
        \scriptstyle\symRegisterSimple= 4\symMaxUpdateVal +1,
        \\
        \scriptstyle\symProbability(\overline \symUpdateEvent_{\symMaxUpdateVal-2} \wedge \symUpdateEvent_{\symMaxUpdateVal-1} \wedge \symUpdateEvent_\symMaxUpdateVal\wedge \bigwedge_{\symUpdateVal=\symMaxUpdateVal+1}^\infty \overline\symUpdateEvent_{\symUpdateVal})
         &
        \scriptstyle\symRegisterSimple= 4\symMaxUpdateVal +2,
        \\
        \scriptstyle\symProbability(\symUpdateEvent_{\symMaxUpdateVal-2} \wedge \symUpdateEvent_{\symMaxUpdateVal-1} \wedge \symUpdateEvent_\symMaxUpdateVal\wedge \bigwedge_{\symUpdateVal=\symMaxUpdateVal+1}^\infty \overline\symUpdateEvent_{\symUpdateVal})
         &
        \scriptstyle\symRegisterSimple= 4\symMaxUpdateVal +3,
      \end{cases}
    \end{equation*}
    and that of $\symRegister$ is given by
    \begin{equation*}
      \scriptstyle\symDensityRegister(\symRegister\vert\symCardinality) =
      \begin{cases}
        \scriptstyle\symProbability(\bigwedge_{\symUpdateVal=1}^\infty \overline\symUpdateEvent_{\symUpdateVal} )
         &
        \scriptstyle\symRegister=0,
        \\
        \scriptstyle\symProbability(\symUpdateEvent_{1} \wedge \bigwedge_{\symUpdateVal=2}^\infty \overline\symUpdateEvent_{\symUpdateVal} )
         &
        \scriptstyle\symRegister=4,
        \\
        \scriptstyle\symProbability(\overline \symUpdateEvent_{1} \wedge \symUpdateEvent_{2} \wedge \bigwedge_{\symUpdateVal=3}^\infty \overline\symUpdateEvent_{\symUpdateVal} )
         &
        \scriptstyle\symRegister=8,
        \\
        \scriptstyle\symProbability(\symUpdateEvent_{1}\wedge \symUpdateEvent_{2} \wedge \bigwedge_{\symUpdateVal=3}^\infty \overline\symUpdateEvent_{\symUpdateVal} )
         &
        \scriptstyle\symRegister=10,
        \\
        \scriptstyle\symProbability(\overline\symUpdateEvent_{\symMaxUpdateVal-2} \wedge \overline\symUpdateEvent_{\symMaxUpdateVal-1} \wedge \symUpdateEvent_\symMaxUpdateVal\wedge \bigwedge_{\symUpdateVal=\symMaxUpdateVal+1}^\infty \overline\symUpdateEvent_{\symUpdateVal})
         &
        \scriptstyle\symRegister= 4\symMaxUpdateVal,\ 3 \leq \symMaxUpdateVal < \symMaxUpdateValMax,
        \\
        \scriptstyle\symProbability(\symUpdateEvent_{\symMaxUpdateVal-2} \wedge \overline\symUpdateEvent_{\symMaxUpdateVal-1} \wedge \symUpdateEvent_\symMaxUpdateVal\wedge \bigwedge_{\symUpdateVal=\symMaxUpdateVal+1}^\infty \overline\symUpdateEvent_{\symUpdateVal})
         &
        \scriptstyle\symRegister= 4\symMaxUpdateVal +1,\ 3 \leq \symMaxUpdateVal < \symMaxUpdateValMax,
        \\
        \scriptstyle\symProbability(\overline \symUpdateEvent_{\symMaxUpdateVal-2} \wedge \symUpdateEvent_{\symMaxUpdateVal-1} \wedge \symUpdateEvent_\symMaxUpdateVal\wedge \bigwedge_{\symUpdateVal=\symMaxUpdateVal+1}^\infty \overline\symUpdateEvent_{\symUpdateVal})
         &
        \scriptstyle\symRegister= 4\symMaxUpdateVal +2,\ 3 \leq \symMaxUpdateVal < \symMaxUpdateValMax,
        \\
        \scriptstyle\symProbability(\symUpdateEvent_{\symMaxUpdateVal-2} \wedge \symUpdateEvent_{\symMaxUpdateVal-1} \wedge \symUpdateEvent_\symMaxUpdateVal\wedge \bigwedge_{\symUpdateVal=\symMaxUpdateVal+1}^\infty \overline\symUpdateEvent_{\symUpdateVal})
         &
        \scriptstyle\symRegister= 4\symMaxUpdateVal +3,\ 3 \leq \symMaxUpdateVal < \symMaxUpdateValMax,
        \\
        \scriptstyle\symProbability(\overline\symUpdateEvent_{\symMaxUpdateValMax-2} \wedge \overline\symUpdateEvent_{\symMaxUpdateValMax-1} \wedge \bigvee_{\symUpdateVal=\symMaxUpdateValMax}^\infty \symUpdateEvent_{\symUpdateVal})
         &
        \scriptstyle\symRegister= 4\symMaxUpdateValMax,
        \\
        \scriptstyle\symProbability(\symUpdateEvent_{\symMaxUpdateValMax-2} \wedge \overline\symUpdateEvent_{\symMaxUpdateValMax-1} \wedge \bigvee_{\symUpdateVal=\symMaxUpdateValMax}^\infty \symUpdateEvent_{\symUpdateVal})
         &
        \scriptstyle\symRegister= 4\symMaxUpdateValMax +1,
        \\
        \scriptstyle\symProbability(\overline\symUpdateEvent_{\symMaxUpdateValMax-2} \wedge \symUpdateEvent_{\symMaxUpdateValMax-1} \wedge \bigvee_{\symUpdateVal=\symMaxUpdateValMax}^\infty \symUpdateEvent_{\symUpdateVal})
         &
        \scriptstyle\symRegister= 4\symMaxUpdateValMax +2,
        \\
        \scriptstyle\symProbability(\symUpdateEvent_{\symMaxUpdateValMax-2} \wedge \symUpdateEvent_{\symMaxUpdateValMax-1} \wedge \bigvee_{\symUpdateVal=\symMaxUpdateValMax}^\infty \symUpdateEvent_{\symUpdateVal})
         &
        \scriptstyle\symRegister= 4\symMaxUpdateValMax +3,
        \\
        \scriptstyle 0,
         &
        \scriptstyle\text{else}.
      \end{cases}
    \end{equation*}
    Using the identity $\symProbability(\symEventX\vert\symEventY) = \symProbability(\symEventX\wedge\symEventY)/\symProbability(\symEventY)$ and considering all cases of both \acp{PMF} gives for the \ac{PMF} of $\symRegisterSimple$ conditioned on $\symRegister$
    \begin{equation*}
      \scriptstyle\symDensityRegisterSimple(\symRegisterSimple\vert \symRegister,\symCardinality)=
      \begin{cases}
        \scriptstyle\symProbability(\overline\symUpdateEvent_{\symMaxUpdateVal-2} \wedge \overline\symUpdateEvent_{\symMaxUpdateVal-1} \wedge \symUpdateEvent_\symMaxUpdateVal\wedge \bigwedge_{\symUpdateVal=\symMaxUpdateVal+1}^0 \overline\symUpdateEvent_{\symUpdateVal})
         &
        \scriptstyle\symRegister=0,\ \symRegisterSimple= 4\symMaxUpdateVal,\ \symMaxUpdateVal\leq 0,
        \\
        \scriptstyle\symProbability(\symUpdateEvent_{\symMaxUpdateVal-2} \wedge \overline\symUpdateEvent_{\symMaxUpdateVal-1} \wedge \symUpdateEvent_\symMaxUpdateVal\wedge \bigwedge_{\symUpdateVal=\symMaxUpdateVal+1}^0 \overline\symUpdateEvent_{\symUpdateVal})
         &
        \scriptstyle\symRegister=0,\ \symRegisterSimple= 4\symMaxUpdateVal+1,\ \symMaxUpdateVal\leq 0,
        \\
        \scriptstyle\symProbability(\overline\symUpdateEvent_{\symMaxUpdateVal-2} \wedge\symUpdateEvent_{\symMaxUpdateVal-1} \wedge \symUpdateEvent_\symMaxUpdateVal\wedge \bigwedge_{\symUpdateVal=\symMaxUpdateVal+1}^0 \overline\symUpdateEvent_{\symUpdateVal})
         &
        \scriptstyle\symRegister=0,\ \symRegisterSimple= 4\symMaxUpdateVal+2,\ \symMaxUpdateVal\leq 0,
        \\
        \scriptstyle\symProbability(\symUpdateEvent_{\symMaxUpdateVal-2} \wedge\symUpdateEvent_{\symMaxUpdateVal-1} \wedge \symUpdateEvent_\symMaxUpdateVal\wedge \bigwedge_{\symUpdateVal=\symMaxUpdateVal+1}^0 \overline\symUpdateEvent_{\symUpdateVal})
         &
        \scriptstyle\symRegister=0,\ \symRegisterSimple= 4\symMaxUpdateVal+3,\ \symMaxUpdateVal\leq 0,
        \\
        \scriptstyle\symProbability(\overline\symUpdateEvent_{-1}\wedge\overline\symUpdateEvent_{0})
         &
        \scriptstyle\symRegister=4,\ \symRegisterSimple= 4,
        \\
        \scriptstyle\symProbability(\symUpdateEvent_{-1}\wedge\overline\symUpdateEvent_{0})
         &
        \scriptstyle\symRegister=4,\ \symRegisterSimple= 5,
        \\
        \scriptstyle\symProbability(\overline\symUpdateEvent_{-1}\wedge\symUpdateEvent_{0})
         &
        \scriptstyle\symRegister=4,\ \symRegisterSimple= 6,
        \\
        \scriptstyle\symProbability(\symUpdateEvent_{-1}\wedge\symUpdateEvent_{0})
         &
        \scriptstyle\symRegister=4,\ \symRegisterSimple= 7,
        \\
        \scriptstyle\symProbability(\overline\symUpdateEvent_{0})
         &
        \scriptstyle\symRegister=8,\ \symRegisterSimple= 8,
        \\
        \scriptstyle\symProbability(\symUpdateEvent_{0})
         &
        \scriptstyle\symRegister=8,\ \symRegisterSimple= 9,
        \\
        \scriptstyle\symProbability(\overline\symUpdateEvent_{0})
         &
        \scriptstyle\symRegister=10,\ \symRegisterSimple= 10,
        \\
        \scriptstyle\symProbability(\symUpdateEvent_{0})
         &
        \scriptstyle\symRegister=10,\ \symRegisterSimple= 11,
        \\
        \scriptstyle 1
         &
        \scriptstyle 12 \leq \symRegister < 4\symMaxUpdateValMax,\ \symRegisterSimple=\symRegister,
        \\
        \scriptstyle\frac{
          \symProbability(\symUpdateEvent_{\symMaxUpdateValMax}\wedge \bigwedge_{\symUpdateVal=\symMaxUpdateValMax+1}^\infty\overline\symUpdateEvent_{\symUpdateVal})}
        {\symProbability(\bigvee_{\symUpdateVal=\symMaxUpdateValMax}^\infty\symUpdateEvent_{\symUpdateVal})}
         &
        \scriptstyle\symRegister = 4\symMaxUpdateValMax,\ \symRegisterSimple=4\symMaxUpdateValMax,
        \\
        \scriptstyle\frac{
          \symProbability(\symUpdateEvent_{\symMaxUpdateValMax}\wedge \bigwedge_{\symUpdateVal=\symMaxUpdateValMax+1}^\infty\overline\symUpdateEvent_{\symUpdateVal})}
        {\symProbability(\bigvee_{\symUpdateVal=\symMaxUpdateValMax}^\infty\symUpdateEvent_{\symUpdateVal})}
         &
        \scriptstyle\symRegister = 4\symMaxUpdateValMax+1,\ \symRegisterSimple=4\symMaxUpdateValMax+1,
        \\
        \scriptstyle\frac{
          \symProbability(\symUpdateEvent_{\symMaxUpdateValMax}\wedge \bigwedge_{\symUpdateVal=\symMaxUpdateValMax+1}^\infty\overline\symUpdateEvent_{\symUpdateVal})}
        {\symProbability(\bigvee_{\symUpdateVal=\symMaxUpdateValMax}^\infty\symUpdateEvent_{\symUpdateVal})}
         &
        \scriptstyle\symRegister = 4\symMaxUpdateValMax+2,\ \symRegisterSimple=4\symMaxUpdateValMax+2,
        \\
        \scriptstyle\frac{
          \symProbability(\symUpdateEvent_{\symMaxUpdateValMax}\wedge \bigwedge_{\symUpdateVal=\symMaxUpdateValMax+1}^\infty\overline\symUpdateEvent_{\symUpdateVal})}
        {\symProbability(\bigvee_{\symUpdateVal=\symMaxUpdateValMax}^\infty\symUpdateEvent_{\symUpdateVal})}
         &
        \scriptstyle\symRegister = 4\symMaxUpdateValMax+3,\ \symRegisterSimple=4\symMaxUpdateValMax+3,
        \\
        \scriptstyle\frac{
          \symProbability(\overline\symUpdateEvent_{\symMaxUpdateValMax}\wedge \symUpdateEvent_{\symMaxUpdateValMax+1}\wedge \bigwedge_{\symUpdateVal=\symMaxUpdateValMax+2}^\infty\overline\symUpdateEvent_{\symUpdateVal})}
        {\symProbability(\bigvee_{\symUpdateVal=\symMaxUpdateValMax}^\infty\symUpdateEvent_{\symUpdateVal})}
         &
        \scriptstyle\symRegister = 4\symMaxUpdateValMax,\ \symRegisterSimple=4\symMaxUpdateValMax+4,
        \\
        \scriptstyle\frac{
          \symProbability(\overline\symUpdateEvent_{\symMaxUpdateValMax}\wedge \symUpdateEvent_{\symMaxUpdateValMax+1}\wedge \bigwedge_{\symUpdateVal=\symMaxUpdateValMax+2}^\infty\overline\symUpdateEvent_{\symUpdateVal})}
        {\symProbability(\bigvee_{\symUpdateVal=\symMaxUpdateValMax}^\infty\symUpdateEvent_{\symUpdateVal})}
         &
        \scriptstyle\symRegister = 4\symMaxUpdateValMax+1,\ \symRegisterSimple=4\symMaxUpdateValMax+4,
        \\
        \scriptstyle\frac{
          \symProbability(\overline\symUpdateEvent_{\symMaxUpdateValMax}\wedge \symUpdateEvent_{\symMaxUpdateValMax+1}\wedge \bigwedge_{\symUpdateVal=\symMaxUpdateValMax+2}^\infty\overline\symUpdateEvent_{\symUpdateVal})}
        {\symProbability(\bigvee_{\symUpdateVal=\symMaxUpdateValMax}^\infty\symUpdateEvent_{\symUpdateVal})}
         &
        \scriptstyle\symRegister = 4\symMaxUpdateValMax+2,\ \symRegisterSimple=4\symMaxUpdateValMax+5,
        \\
        \scriptstyle\frac{
          \symProbability(\overline\symUpdateEvent_{\symMaxUpdateValMax}\wedge \symUpdateEvent_{\symMaxUpdateValMax+1}\wedge \bigwedge_{\symUpdateVal=\symMaxUpdateValMax+2}^\infty\overline\symUpdateEvent_{\symUpdateVal})}
        {\symProbability(\bigvee_{\symUpdateVal=\symMaxUpdateValMax}^\infty\symUpdateEvent_{\symUpdateVal})}
         &
        \scriptstyle\symRegister = 4\symMaxUpdateValMax+3,\ \symRegisterSimple=4\symMaxUpdateValMax+5,
        \\
        \scriptstyle\frac{
          \symProbability(\symUpdateEvent_{\symMaxUpdateValMax}\wedge \symUpdateEvent_{\symMaxUpdateValMax+1}\wedge \bigwedge_{\symUpdateVal=\symMaxUpdateValMax+2}^\infty\overline\symUpdateEvent_{\symUpdateVal})}
        {\symProbability(\bigvee_{\symUpdateVal=\symMaxUpdateValMax}^\infty\symUpdateEvent_{\symUpdateVal})}
         &
        \scriptstyle\symRegister = 4\symMaxUpdateValMax,\ \symRegisterSimple=4\symMaxUpdateValMax+6,
        \\
        \scriptstyle\frac{
          \symProbability(\symUpdateEvent_{\symMaxUpdateValMax}\wedge \symUpdateEvent_{\symMaxUpdateValMax+1}\wedge \bigwedge_{\symUpdateVal=\symMaxUpdateValMax+2}^\infty\overline\symUpdateEvent_{\symUpdateVal})}
        {\symProbability(\bigvee_{\symUpdateVal=\symMaxUpdateValMax}^\infty\symUpdateEvent_{\symUpdateVal})}
         &
        \scriptstyle\symRegister = 4\symMaxUpdateValMax+1,\ \symRegisterSimple=4\symMaxUpdateValMax+6,
        \\
        \scriptstyle\frac{
          \symProbability(\symUpdateEvent_{\symMaxUpdateValMax}\wedge \symUpdateEvent_{\symMaxUpdateValMax+1}\wedge \bigwedge_{\symUpdateVal=\symMaxUpdateValMax+2}^\infty\overline\symUpdateEvent_{\symUpdateVal})}
        {\symProbability(\bigvee_{\symUpdateVal=\symMaxUpdateValMax}^\infty\symUpdateEvent_{\symUpdateVal})}
         &
        \scriptstyle\symRegister = 4\symMaxUpdateValMax+2,\ \symRegisterSimple=4\symMaxUpdateValMax+7,
        \\
        \scriptstyle\frac{
          \symProbability(\symUpdateEvent_{\symMaxUpdateValMax}\wedge \symUpdateEvent_{\symMaxUpdateValMax+1}\wedge \bigwedge_{\symUpdateVal=\symMaxUpdateValMax+2}^\infty\overline\symUpdateEvent_{\symUpdateVal})}
        {\symProbability(\bigvee_{\symUpdateVal=\symMaxUpdateValMax}^\infty\symUpdateEvent_{\symUpdateVal})}
         &
        \scriptstyle\symRegister = 4\symMaxUpdateValMax+3,\ \symRegisterSimple=4\symMaxUpdateValMax+7,
        \\
        \scriptstyle\frac{
          \symProbability(\overline\symUpdateEvent_{\symMaxUpdateVal-2}\wedge\overline\symUpdateEvent_{\symMaxUpdateVal-1}\wedge \symUpdateEvent_{\symMaxUpdateVal}\wedge \bigwedge_{\symUpdateVal=\symMaxUpdateVal+1}^\infty\overline\symUpdateEvent_{\symUpdateVal})}
        {\symProbability(\bigvee_{\symUpdateVal=\symMaxUpdateValMax}^\infty\symUpdateEvent_{\symUpdateVal})}
         &
        \scriptstyle\symRegister \geq 4\symMaxUpdateValMax,\ \symRegisterSimple=4\symMaxUpdateVal,\ \symMaxUpdateVal\geq \symMaxUpdateValMax+2,
        \\
        \scriptstyle\frac{
          \symProbability(\symUpdateEvent_{\symMaxUpdateVal-2}\wedge\overline\symUpdateEvent_{\symMaxUpdateVal-1}\wedge \symUpdateEvent_{\symMaxUpdateVal}\wedge \bigwedge_{\symUpdateVal=\symMaxUpdateVal+1}^\infty\overline\symUpdateEvent_{\symUpdateVal})}
        {\symProbability(\bigvee_{\symUpdateVal=\symMaxUpdateValMax}^\infty\symUpdateEvent_{\symUpdateVal})}
         &
        \scriptstyle\symRegister \geq 4\symMaxUpdateValMax,\ \symRegisterSimple=4\symMaxUpdateVal+1,\ \symMaxUpdateVal\geq \symMaxUpdateValMax+2,
        \\
        \scriptstyle\frac{
          \symProbability(\overline\symUpdateEvent_{\symMaxUpdateVal-2}\wedge\symUpdateEvent_{\symMaxUpdateVal-1}\wedge \symUpdateEvent_{\symMaxUpdateVal}\wedge \bigwedge_{\symUpdateVal=\symMaxUpdateVal+1}^\infty\overline\symUpdateEvent_{\symUpdateVal})}
        {\symProbability(\bigvee_{\symUpdateVal=\symMaxUpdateValMax}^\infty\symUpdateEvent_{\symUpdateVal})}
         &
        \scriptstyle\symRegister \geq 4\symMaxUpdateValMax,\ \symRegisterSimple=4\symMaxUpdateVal+2,\ \symMaxUpdateVal\geq \symMaxUpdateValMax+2,
        \\
        \scriptstyle\frac{
          \symProbability(\symUpdateEvent_{\symMaxUpdateVal-2}\wedge\symUpdateEvent_{\symMaxUpdateVal-1}\wedge \symUpdateEvent_{\symMaxUpdateVal}\wedge \bigwedge_{\symUpdateVal=\symMaxUpdateVal+1}^\infty\overline\symUpdateEvent_{\symUpdateVal})}
        {\symProbability(\bigvee_{\symUpdateVal=\symMaxUpdateValMax}^\infty\symUpdateEvent_{\symUpdateVal})}
         &
        \scriptstyle\symRegister \geq 4\symMaxUpdateValMax,\ \symRegisterSimple=4\symMaxUpdateVal+3,\ \symMaxUpdateVal\geq \symMaxUpdateValMax+2,
        \\
        \scriptstyle 0
         &
        \scriptstyle\text{else}.
      \end{cases}
    \end{equation*}
    The identities
    \begin{align*}
       & \textstyle\symProbability(\symUpdateEvent_\symUpdateVal) =
      1 - \symZ_\symUpdateVal,
      \\
       & \textstyle\symProbability(\overline\symUpdateEvent_\symUpdateVal) = \symZ_\symUpdateVal,
      \\
       & \textstyle\symProbability(\bigwedge_{\symUpdateVal=\symMaxUpdateVal+1}^0 \overline\symUpdateEvent_{\symUpdateVal})=\prod_{\symUpdateVal=\symMaxUpdateVal+1}^0 \symZ_\symUpdateVal=\symZ_\symMaxUpdateVal/\symZ_0 \quad \text{for $\symMaxUpdateVal\leq 0$},
      \\
       & \textstyle\symProbability(\bigwedge_{\symUpdateVal=\symMaxUpdateValMax+2}^\infty\overline\symUpdateEvent_{\symUpdateVal}) = \prod_{\symUpdateVal=\symMaxUpdateValMax+2}^\infty \symZ_\symUpdateVal = \symZ_{\symMaxUpdateValMax+1},
      \\
       & \textstyle\symProbability(\bigwedge_{\symUpdateVal=\symMaxUpdateVal+1}^\infty\overline\symUpdateEvent_{\symUpdateVal}) = \prod_{\symUpdateVal=\symMaxUpdateVal+1}^\infty \symZ_\symUpdateVal = \symZ_\symMaxUpdateVal,
      \\
       & \textstyle\symProbability(\bigvee_{\symUpdateVal=\symMaxUpdateValMax}^\infty\symUpdateEvent_{\symUpdateVal}) = 1 - \symProbability(\bigwedge_{\symUpdateVal=\symMaxUpdateValMax}^\infty\overline\symUpdateEvent_{\symUpdateVal}) = 1 - \prod_{\symUpdateVal=\symMaxUpdateValMax}^\infty \symZ_\symUpdateVal = 1 - \symZ_{\symMaxUpdateValMax-1}
    \end{align*}
    finally lead to \eqref{equ:cond_pmf}.
  \end{proof}

  \begin{lemma}
    \label{lem:corr_reg_contrib}
    The corrected register contributions $\symRegContribCorr(\symRegister)$ defined by
    \begin{equation*}
      \symRegContribCorr(\symRegister) = \sum_{\symRegisterSimple=-\infty}^\infty \symDensityRegisterSimple(\symRegisterSimple\vert \symRegister,\symCardinality = \symCardinalityEstimatorAlt)\symRegContrib(\symRegisterSimple)
    \end{equation*}
    for
    $\symRegContrib(\symRegister) := 2^{-\symGRA\lfloor\symRegister/4\rfloor} \symContributionCoefficient_{\symRegister \bmod 4}$
    and $\symDensityRegisterSimple$ as given by \eqref{equ:cond_pmf}
    results in \eqref{equ:corrected_contributions}.
  \end{lemma}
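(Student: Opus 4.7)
The plan is to evaluate \eqref{equ:corrected_contributions_general} case-by-case using the conditional \ac{PMF} $\symDensityRegisterSimple(\symRegisterSimple\vert\symRegister,\symCardinality)$ from \cref{lem:conditional_pmf}, substituting $\symCardinality=\symCardinalityEstimatorAlt$ so that $\symZ_\symMaxUpdateVal$ is evaluated at the alternative estimate (and becomes a power of $\symZEstimate_0$ or $\symZEstimate_\symMaxUpdateValMax$ via $\symZ_\symMaxUpdateVal^2=\symZ_{\symMaxUpdateVal-1}$, the $\symBase=2$ specialization of \cref{lem:derivatives}). The intermediate range $12\le\symRegister<4\symMaxUpdateValMax$ is trivial because the conditional \ac{PMF} is a point mass at $\symRegisterSimple=\symRegister$, and the cases $\symRegister\in\{0,4,8,10\}$ and $\symRegister\in\{4\symMaxUpdateValMax,\ldots,4\symMaxUpdateValMax+3\}$ are each handled by plugging \eqref{equ:cond_pmf} into $\sum_{\symRegisterSimple}\symDensityRegisterSimple(\symRegisterSimple\vert\symRegister,\symCardinalityEstimatorAlt)\,\symRegContrib(\symRegisterSimple)$ and collecting terms.

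For $\symRegister\in\{4,8,10\}$ the support is finite (four or two states) and the expansion becomes a short polynomial in $\symZEstimate_0$. For $\symRegister=4$ the four outcomes $(\symIndexBit_1,\symIndexBit_2)\in\{0,1\}^2$ collapse, after factoring $(1-\symZEstimate_0)$ and $(1-\symZEstimate_0^2)=(1-\symZEstimate_0)(1+\symZEstimate_0)$, into $2^{-\symGRA}\symCubicContribFunc(\symZEstimate_0)$ using the definition \eqref{equ:psi_func}. For $\symRegister=8$ and $\symRegister=10$ only the $\symIndexBit_1$ bit is random, so the sum is an affine combination $4^{-\symGRA}(\symZEstimate_0(\symContributionCoefficient_i-\symContributionCoefficient_j)+\symContributionCoefficient_j)$ with the appropriate pair $(i,j)\in\{(0,1),(2,3)\}$. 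For $\symRegister=0$ the support is $\symRegisterSimple=4\symMaxUpdateVal+\langle\symIndexBit_1\symIndexBit_2\rangle_2$ with $\symMaxUpdateVal\le 0$; after grouping the four $(\symIndexBit_1,\symIndexBit_2)$ outcomes at each level into $\symCubicContribFunc$ and reindexing $\symMaxUpdateVal\mapsto-\symMaxUpdateVal$, the identity $\symZ_{-\symMaxUpdateVal}=\symZEstimate_0^{2^\symMaxUpdateVal}$ turns the series into exactly \eqref{equ:sigma_func}, i.e.\ $\symSmallRangeCorrFunc(\symZEstimate_0)$.

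The large-range case $\symRegister\in\{4\symMaxUpdateValMax,\ldots,4\symMaxUpdateValMax+3\}$ is the one requiring the most bookkeeping. The conditional \ac{PMF} decomposes into three pieces: (i) a point mass at $\symRegisterSimple=\symRegister$ with weight $\symZ_\symMaxUpdateValMax(1-\symZ_\symMaxUpdateValMax)/(1-\symZ_{\symMaxUpdateValMax-1})$, (ii) the six states with $\symMaxUpdateValSimple=\symMaxUpdateValMax+1$, which after using $\symZ_{\symMaxUpdateValMax+1}^2=\symZ_\symMaxUpdateValMax$ contribute a piece proportional to $2^{-\symGRA}\sqrt{\symZEstimate_\symMaxUpdateValMax}(\symZEstimate_\symMaxUpdateValMax(\symContributionCoefficient_i-\symContributionCoefficient_j)+\symContributionCoefficient_j)$ with $(i,j)=(0,2)$ when $\symIndexBit_1=0$ and $(i,j)=(1,3)$ when $\symIndexBit_1=1$, and (iii) the infinite tail $\symMaxUpdateValSimple\ge\symMaxUpdateValMax+2$, which after the same $\symCubicContribFunc$-collapse and the index shift $\symMaxUpdateVal\mapsto\symMaxUpdateVal-\symMaxUpdateValMax-1$ together with $\symZ_{\symMaxUpdateValMax+\symMaxUpdateVal+1}=\symZEstimate_\symMaxUpdateValMax^{2^{-\symMaxUpdateVal-1}}=\sqrt{\symZEstimate_\symMaxUpdateValMax}^{\,2^{-\symMaxUpdateVal}}$ becomes $\symLargeRangeCorrFunc(\sqrt{\symZEstimate_\symMaxUpdateValMax})$ as in \eqref{equ:phi_func}. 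Factoring the common denominator $1-\symZ_{\symMaxUpdateValMax-1}=(1-\sqrt{\symZEstimate_\symMaxUpdateValMax})(1+\sqrt{\symZEstimate_\symMaxUpdateValMax})(1+\symZEstimate_\symMaxUpdateValMax)$ together with $2^{-\symGRA\symMaxUpdateValMax}$ produces the denominator shown in \eqref{equ:corrected_contributions}, and reading off the numerator coefficient associated with each $\symRegister\bmod 4$ yields the four cases of the statement.

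The main obstacle will be the large-range bookkeeping: correctly tracking which coefficient $\symContributionCoefficient_{\symRegisterSimple\bmod 4}$ appears in each of the three pieces and verifying that the $(\symContributionCoefficient_0-\symContributionCoefficient_2)$ versus $(\symContributionCoefficient_1-\symContributionCoefficient_3)$ split localizes in piece (ii) (determined by the bit $\symIndexBit_1$ of $\symRegister$) while the tail piece (iii) is independent of $\symRegister\bmod 4$. Once this is organized, the small-range derivation is essentially the mirror image, and all other cases are straightforward substitutions, so the lemma follows by collecting the nine listed cases.
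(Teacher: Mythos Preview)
Your proposal is correct and follows essentially the same case-by-case route as the paper's proof: both evaluate \eqref{equ:corrected_contributions_general} by plugging in the conditional \ac{PMF} from \cref{lem:conditional_pmf}, collapse the $(\symIndexBit_1,\symIndexBit_2)$-sums into $\symCubicContribFunc$, and identify the small- and large-range tails with $\symSmallRangeCorrFunc$ and $\symLargeRangeCorrFunc$. Two small slips to fix in the write-up: piece~(ii) of the large-range case has \emph{two} states at $\symMaxUpdateValSimple=\symMaxUpdateValMax+1$ (not six) for each fixed $\symRegister$, and the tail reindexing should be $\symMaxUpdateVal\mapsto\symMaxUpdateVal-(\symMaxUpdateValMax+2)$ so that the new index starts at $0$ to match the summation in \eqref{equ:phi_func}.
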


  \begin{proof}
    We use $\symZEstimate_0$ and $\symZEstimate_\symMaxUpdateValMax$ to denote $\symZEstimate_0:=e^{-\frac{\symCardinalityEstimatorAlt}{\symNumReg}}$ and $\symZEstimate_\symMaxUpdateValMax:=e^{-\frac{\symCardinalityEstimatorAlt}{\symNumReg 2^\symMaxUpdateValMax}}$, respectively. The cases $\symRegister\in\lbrace 0,4,8,10\rbrace$ are shown by
    \begin{align*}
       &
      \symRegContribCorr(0)
      =
      \sum_{\symRegisterSimple=-\infty}^\infty \symDensityRegisterSimple(\symRegisterSimple \vert 0,\symCardinality = \symCardinalityEstimatorAlt)\symRegContrib(\symRegisterSimple)
      \\
       & =
      \frac{1}{\symZ_{0}}
      \sum_{\symMaxUpdateVal=-\infty}^0
      \symZEstimate_0^{2^{-\symMaxUpdateVal}}
      (1-\symZEstimate_0^{2^{-\symMaxUpdateVal}})
      2^{-\symGRA \symMaxUpdateVal}
      \left(
      \begin{array}{@{}l}
          \symContributionCoefficient_0 \symZEstimate_0^{2^{-\symMaxUpdateVal+1}}\symZEstimate_0^{2^{-\symMaxUpdateVal+2}}
          \\
          +
          \symContributionCoefficient_1 \symZEstimate_0^{2^{-\symMaxUpdateVal+1}}(1-\symZEstimate_0^{2^{-\symMaxUpdateVal+2}})
          \\
          +
          \symContributionCoefficient_2 (1-\symZEstimate_0^{2^{-\symMaxUpdateVal+1}})\symZEstimate_0^{2^{-\symMaxUpdateVal+2}}
          \\
          +
          \symContributionCoefficient_3 (1-\symZEstimate_0^{2^{-\symMaxUpdateVal+1}})(1-\symZEstimate_0^{2^{-\symMaxUpdateVal+2}})
        \end{array}
      \right)
      \\
       & =
      \frac{1}{\symZ_{0}}
      \sum_{\symMaxUpdateVal=0}^\infty
      2^{\symGRA \symMaxUpdateVal}
      (\symZEstimate_0^{2^{\symMaxUpdateVal}}-\symZEstimate_0^{2^{\symMaxUpdateVal+1}})
      \symCubicContribFunc(\symZEstimate_0^{2^{\symMaxUpdateVal+1}})
      =
      \symSmallRangeCorrFunc(\symZEstimate_0),
      \\
       &
      \symRegContribCorr(4) =
      \sum_{\symRegisterSimple=-\infty}^\infty \symDensityRegisterSimple(\symRegisterSimple \vert 4,\symCardinality = \symCardinalityEstimatorAlt)\symRegContrib(\symRegisterSimple)
      \\
       & =
      2^{-\symGRA}(
      \symContributionCoefficient_0 \symZEstimate_0^3
      +
      \symContributionCoefficient_1 \symZEstimate_0(1-\symZEstimate_0^2)
      +
      \symContributionCoefficient_2 (1- \symZEstimate_0)\symZEstimate_0^2
      +
      \symContributionCoefficient_3 (1- \symZEstimate_0)(1-\symZEstimate_0^2))
      \\
       & =
      2^{-\symGRA}\symCubicContribFunc(\symZEstimate_0),
      \\
       &
      \symRegContribCorr(8) =
      \sum_{\symRegisterSimple=-\infty}^\infty \symDensityRegisterSimple(\symRegisterSimple \vert 8,\symCardinality = \symCardinalityEstimatorAlt)\symRegContrib(\symRegisterSimple)
      \\
       & =
      2^{-2\symGRA}(
      \symContributionCoefficient_0 \symZEstimate_0 + \symContributionCoefficient_1 (1-\symZEstimate_0))
      = 4^{-\symGRA}(\symZEstimate_0(\symContributionCoefficient_0-\symContributionCoefficient_1) + \symContributionCoefficient_1),
      \\
       &
      \symRegContribCorr(10) =
      \sum_{\symRegisterSimple=-\infty}^\infty \symDensityRegisterSimple(\symRegisterSimple \vert 10,\symCardinality = \symCardinalityEstimatorAlt)\symRegContrib(\symRegisterSimple)
      \\
       & =
      2^{-2\symGRA}(
      \symContributionCoefficient_2 \symZEstimate_0 + \symContributionCoefficient_3 (1-\symZEstimate_0))
      = 4^{-\symGRA}(\symZEstimate_0(\symContributionCoefficient_2-\symContributionCoefficient_3) + \symContributionCoefficient_3).
    \end{align*}
    The case $12 \leq \symRegister < 4\symMaxUpdateValMax$ is trivial. The remaining case $\symRegister = 4\symMaxUpdateValMax + \langle \symIndexBit_1\symIndexBit_2\rangle_2$ is proven by
    \begin{align*}
       &
      \symRegContribCorr(4\symMaxUpdateValMax+ \langle \symIndexBit_1\symIndexBit_2\rangle_2)
      =
      \sum_{\symRegisterSimple=-\infty}^\infty \symDensityRegisterSimple(\symRegisterSimple \vert 4\symMaxUpdateValMax+ \langle \symIndexBit_1\symIndexBit_2\rangle_2,\symCardinality = \symCardinalityEstimatorAlt)\symRegContrib(\symRegisterSimple)
      \\
       & =
      2^{-\symGRA \symMaxUpdateValMax}\frac{\symZEstimate_\symMaxUpdateValMax(1-\symZEstimate_\symMaxUpdateValMax)}{1-\symZEstimate_{\symMaxUpdateValMax}^2}\symContributionCoefficient_{2\symIndexBit_1+\symIndexBit_2}
      \\
       & \quad + 2^{-\symGRA(\symMaxUpdateValMax+1)} \frac{\sqrt{\symZEstimate_{\symMaxUpdateValMax}}(1-\sqrt{\symZEstimate_{\symMaxUpdateValMax}})\symZEstimate_{\symMaxUpdateValMax}}{1-\symZEstimate_{\symMaxUpdateValMax}^2}\symContributionCoefficient_{\symIndexBit_1}
      \\
       &
      \quad + 2^{-\symGRA(\symMaxUpdateValMax+1)} \frac{\sqrt{\symZEstimate_{\symMaxUpdateValMax}}(1-\sqrt{\symZEstimate_{\symMaxUpdateValMax}})(1-\symZEstimate_{\symMaxUpdateValMax})}{1-\symZEstimate_{\symMaxUpdateValMax}^2}\symContributionCoefficient_{2+\symIndexBit_1}
      \\
       &
      \quad +
      \sum_{\symMaxUpdateVal=\symMaxUpdateValMax + 2}^\infty
      \frac{
      \symZEstimate_{\symMaxUpdateValMax}^{2^{\symMaxUpdateValMax-\symMaxUpdateVal}}
      (1-\symZEstimate_{\symMaxUpdateValMax}^{2^{\symMaxUpdateValMax-\symMaxUpdateVal}})
      }{2^{\symGRA\symMaxUpdateVal}(
      1-\symZEstimate_{\symMaxUpdateValMax}^2)}
      \left(
      \begin{array}{@{}l}
          \symContributionCoefficient_0 \symZEstimate_{\symMaxUpdateValMax}^{2^{\symMaxUpdateValMax-\symMaxUpdateVal+1}}\symZEstimate_{\symMaxUpdateValMax}^{2^{\symMaxUpdateValMax-\symMaxUpdateVal+2}}
          \\
          +
          \symContributionCoefficient_1 \symZEstimate_{\symMaxUpdateValMax}^{2^{\symMaxUpdateValMax-\symMaxUpdateVal+1}}(1-\symZEstimate_{\symMaxUpdateValMax}^{2^{\symMaxUpdateValMax-\symMaxUpdateVal+2}})
          \\
          +
          \symContributionCoefficient_2 (1-\symZEstimate_{\symMaxUpdateValMax}^{2^{\symMaxUpdateValMax-\symMaxUpdateVal+1}})\symZEstimate_{\symMaxUpdateValMax}^{2^{\symMaxUpdateValMax-\symMaxUpdateVal+2}}
          \\
          +
          \symContributionCoefficient_3 (1-\symZEstimate_{\symMaxUpdateValMax}^{2^{\symMaxUpdateValMax-\symMaxUpdateVal+1}})(1-\symZEstimate_{\symMaxUpdateValMax}^{2^{\symMaxUpdateValMax-\symMaxUpdateVal+2}})
        \end{array}
      \right)
      \\
       & =
      2^{-\symGRA \symMaxUpdateValMax}\frac{\symZEstimate_\symMaxUpdateValMax(1+\sqrt{\symZEstimate_\symMaxUpdateValMax})}{(1+\sqrt{\symZEstimate_{\symMaxUpdateValMax}})(1+\symZEstimate_{\symMaxUpdateValMax})}\symContributionCoefficient_{2\symIndexBit_1+\symIndexBit_2}
      \\
       & \quad + 2^{-\symGRA(\symMaxUpdateValMax+1)} \frac{\sqrt{\symZEstimate_{\symMaxUpdateValMax}}\symZEstimate_{\symMaxUpdateValMax}}{(1+\sqrt{\symZEstimate_{\symMaxUpdateValMax}})(1+\symZEstimate_{\symMaxUpdateValMax})}\symContributionCoefficient_{\symIndexBit_1}
      \\
       &
      \quad + 2^{-\symGRA(\symMaxUpdateValMax+1)} \frac{\sqrt{\symZEstimate_{\symMaxUpdateValMax}}(1-\symZEstimate_{\symMaxUpdateValMax})}{(1+\sqrt{\symZEstimate_{\symMaxUpdateValMax}})(1+\symZEstimate_{\symMaxUpdateValMax})}\symContributionCoefficient_{2+\symIndexBit_1}
      \\
       &
      \quad +
      \sum_{\symMaxUpdateVal=0}^\infty
      \textstyle\frac{
      (\sqrt{\symZEstimate_{\symMaxUpdateValMax}}^{2^{-\symMaxUpdateVal-1}} - \sqrt{\symZEstimate_{\symMaxUpdateValMax}}^{2^{-\symMaxUpdateVal}})
      }{2^{\symGRA(\symMaxUpdateValMax + 2 + \symMaxUpdateVal)}(
      1-\symZEstimate_{\symMaxUpdateValMax}^2)}
      \left(
      \begin{array}{@{}l}
          \symContributionCoefficient_0 (\sqrt{\symZEstimate_{\symMaxUpdateValMax}}^{2^{-\symMaxUpdateVal}})^3
          \\
          +
          \symContributionCoefficient_1 (\sqrt{\symZEstimate_{\symMaxUpdateValMax}}^{2^{-\symMaxUpdateVal}})(1-(\sqrt{\symZEstimate_{\symMaxUpdateValMax}}^{2^{-\symMaxUpdateVal}})^2)
          \\
          +
          \symContributionCoefficient_2 (1-\sqrt{\symZEstimate_{\symMaxUpdateValMax}}^{2^{-\symMaxUpdateVal}})(\sqrt{\symZEstimate_{\symMaxUpdateValMax}}^{2^{-\symMaxUpdateVal}})^2
          \\
          +
          \symContributionCoefficient_3 (1-\sqrt{\symZEstimate_{\symMaxUpdateValMax}}^{2^{-\symMaxUpdateVal}})(1-(\sqrt{\symZEstimate_{\symMaxUpdateValMax}}^{2^{-\symMaxUpdateVal}})^2)
        \end{array}
      \right)
      \\
       & =
      \textstyle
      \frac{\symZEstimate_\symMaxUpdateValMax(1+\sqrt{\symZEstimate_\symMaxUpdateValMax}) \symContributionCoefficient_{2\symIndexBit_1+\symIndexBit_2}
        + 2^{-\symGRA}\sqrt{\symZEstimate_{\symMaxUpdateValMax}}
        (\symZEstimate_{\symMaxUpdateValMax} (\symContributionCoefficient_{\symIndexBit_1} -\symContributionCoefficient_{2+\symIndexBit_1})
        + \symContributionCoefficient_{2+\symIndexBit_1})
        + \symLargeRangeCorrFunc(\sqrt{\symZEstimate_{\symMaxUpdateValMax}})
      }{2^{\symGRA\symMaxUpdateValMax}(1+\sqrt{\symZEstimate_{\symMaxUpdateValMax}})(1+\symZEstimate_{\symMaxUpdateValMax})}.
    \end{align*}
  \end{proof}

  \begin{lemma}
    \label{lem:small_range_estimator}
    Assume $\symNumReg$ independent and identically distributed values $\symRegister_\symRegAddr$ which take the values from $\lbrace 0, 4, 8, 10 \rbrace$ with probabilities
    (cf. \eqref{equ:register_pmf} for $\symBase=2$ and $\symNumExtraBits=2$)
    \begin{align*}
       & \symProbability(\symRegister_\symRegAddr = 0) = \symZ_0 = \symZ_2^4,
      \\
       & \symProbability(\symRegister_\symRegAddr = 4) = \symZ_1 (1-\symZ_1) = \symZ_2^2 (1-\symZ_2)(1+\symZ_2),
      \\
       & \symProbability(\symRegister_\symRegAddr = 8) = \symZ_2(1 - \symZ_2)\symZ_1 = \symZ_2^3(1 - \symZ_2),
      \\
       & \symProbability(\symRegister_\symRegAddr = 10) =\symZ_2(1 - \symZ_2)(1-\symZ_1)=\symZ_2(1 - \symZ_2)^2(1+\symZ_2),
    \end{align*}
    respectively. $\symZ_\symMaxUpdateVal$ is defined as $\symZ_\symMaxUpdateVal := e^{-\frac{\symCardinality}{\symNumReg 2^{\symMaxUpdateVal}}}$. If the corresponding observed frequencies are $\symCount_0, \symCount_4, \symCount_8, \symCount_{10}$, the \ac{ML} estimator
    for $\symZ_0 = e^{-\frac{\symCardinality}{\symNumReg}}$ is given by
    \begin{equation*}
      \symZEstimate_0 = e^{-\frac{\symCardinalityEstimatorLow }{\symNumReg}} =
      \textstyle
      \left(
      \frac{
        \sqrt{\symBeta^2 + 4\symAlpha \symGamma} - \symBeta
      }{
        2 \symAlpha
      }
      \right)^{\!4}
    \end{equation*}
    with
    \begin{align*}
      \symAlpha & := \symNumReg + 3(\symCount_0 + \symCount_4 + \symCount_8 + \symCount_{10}), \\
      \symBeta  & := \symNumReg - \symCount_0 - \symCount_4,                                   \\
      \symGamma & := 4\symCount_0 + 2\symCount_4 + 3\symCount_8 + \symCount_{10}.
    \end{align*}
  \end{lemma}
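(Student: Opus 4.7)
The plan is to set up the likelihood over a five-category partition of the register values: the four atoms $\{0\},\{4\},\{8\},\{10\}$ together with the residual category ``other,'' i.e.\ $\symRegister_\symRegAddr \geq 12$. First I would introduce $z := \symZ_2$, so that $\symZ_0 = z^4$ and $\symZ_1 = z^2$, and rewrite the four stated probabilities as $z^4$, $z^2(1-z)(1+z)$, $z^3(1-z)$, and $z(1-z)^2(1+z)$ using $1-z^2 = (1-z)(1+z)$. A direct expansion shows these four sum to $z$, so the residual category has probability $1-z$ and observed count $\symNumReg - \symCount_0 - \symCount_4 - \symCount_8 - \symCount_{10}$.

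Next I would write the multinomial log-likelihood, drop the constant, and collect exponents. The outcome has the form
\[
\ln \symLikelihood = \symGamma \ln z + A \ln(1-z) + B \ln(1+z)
\]
where reading off the exponents gives $\symGamma = 4\symCount_0 + 2\symCount_4 + 3\symCount_8 + \symCount_{10}$, the coefficient $B = \symCount_4 + \symCount_{10}$ from the $(1+z)$ factors appearing in $P(R{=}4)$ and $P(R{=}10)$, and $A = \symCount_4 + \symCount_8 + 2\symCount_{10} + (\symNumReg - \symCount_0 - \symCount_4 - \symCount_8 - \symCount_{10}) = \symNumReg - \symCount_0 + \symCount_{10}$ after combining the $(1-z)$ factors with the residual category.

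Differentiating and clearing denominators by multiplying through by $z(1-z^2)$ produces, after collecting powers of $z$,
\[
-(\symGamma + A + B)\,z^2 + (B - A)\,z + \symGamma = 0.
\]
The key algebraic simplification is to check that $\symGamma + A + B = 4\symCount_0 + 2\symCount_4 + 3\symCount_8 + \symCount_{10} + \symNumReg - \symCount_0 + \symCount_{10} + \symCount_4 + \symCount_{10} = \symNumReg + 3(\symCount_0 + \symCount_4 + \symCount_8 + \symCount_{10}) = \symAlpha$, while $A - B = \symNumReg - \symCount_0 - \symCount_4 = \symBeta$. The equation therefore collapses to $\symAlpha z^2 + \symBeta z - \symGamma = 0$. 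Solving by the quadratic formula and taking the unique root in $(0,1)$ (the positive one, since $\symAlpha,\symGamma > 0$) gives $z = (\sqrt{\symBeta^2+4\symAlpha\symGamma} - \symBeta)/(2\symAlpha)$, and raising to the fourth power gives $\symZEstimate_0 = z^4$, which is the claimed formula.

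The only real obstacle is the bookkeeping in the coefficient-collecting step: keeping the $(1+z)$ and $(1-z)$ contributions from each atom straight, and verifying that the residual category's $(1-z)^{\symNumReg - \symCount_0 - \symCount_4 - \symCount_8 - \symCount_{10}}$ combines with the rest so that the $z^2$ coefficient of the first-order condition simplifies exactly to $\symAlpha$. Once that simplification is confirmed, the remainder is the standard one-dimensional ML computation followed by a quadratic solve.
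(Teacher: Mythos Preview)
Your proposal is correct and follows essentially the same route as the paper: both set up the five-category multinomial likelihood in the variable $z=\symZ_2$, obtain the log-likelihood $\symGamma\ln z + A\ln(1-z) + B\ln(1+z)$ with $A=\symNumReg-\symCount_0+\symCount_{10}$ and $B=\symCount_4+\symCount_{10}$, differentiate, clear denominators to reach the quadratic $\symAlpha z^2+\symBeta z-\symGamma=0$, and then take the positive root to the fourth power. Your bookkeeping check that $\symGamma+A+B=\symAlpha$ and $A-B=\symBeta$ is exactly the simplification the paper relies on implicitly.
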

  \begin{proof}
    Together with $\symProbability(\symRegister_\symRegAddr \notin \lbrace0,4,8,10\rbrace) = 1 - \symProbability(\symRegister_\symRegAddr = 0) -\symProbability(\symRegister_\symRegAddr = 4)-\symProbability(\symRegister_\symRegAddr = 8)-\symProbability(\symRegister_\symRegAddr = 10) = 1 - \symZ_2$ we can write the likelihood function $\symLikelihood$ of the $\symNumReg$ registers as
    \begin{align*}
       & \symLikelihood
      =
      \symProbability(\symRegister_\symRegAddr = 0)^{\symCount_{0}}
      \symProbability(\symRegister_\symRegAddr = 4)^{\symCount_{4}}
      \symProbability(\symRegister_\symRegAddr = 8)^{\symCount_{8}}
      \symProbability(\symRegister_\symRegAddr = 10)^{\symCount_{10}}
      \cdot
      \\
       & \quad\cdot
      \symProbability(\symRegister_\symRegAddr \notin \lbrace0,4,8,10\rbrace)^{\symNumReg-\symCount_0-\symCount_4-\symCount_8-\symCount_{10}}
      \\
       & =
      \symZ_2^{4\symCount_0 + 2\symCount_4 + 3\symCount_8+\symCount_{10}}
      (1-\symZ_2)^{\symCount_{10} +\symNumReg-\symCount_0 }
      (1+\symZ_2)^{\symCount_{4} +\symCount_{10}}
    \end{align*}
    and the log-likelihood function as
    \begin{multline*}
      \ln \symLikelihood = (4\symCount_0 + 2\symCount_4 + 3\symCount_8+\symCount_{10}) \ln(\symZ_2) + (\symCount_{10} +\symNumReg-\symCount_0 ) \ln(1-\symZ_2)
      \\
      + (\symCount_{4} +\symCount_{10})\ln(1+\symZ_2).
    \end{multline*}
    The \ac{ML} estimate $\symZEstimate_2$ for $\symZ_2$ can be found by setting the first derivative
    \begin{multline*}
      (\ln \symLikelihood)' = \frac{4\symCount_0 + 2\symCount_4 + 3\symCount_8+\symCount_{10}}{\symZ_2} + \frac{\symCount_0-\symCount_{10} -\symNumReg}{1-\symZ_2} + \frac{\symCount_{4} +\symCount_{10}}{1+\symZ_2}
    \end{multline*}
    equal to 0
    \begin{align*}
       & \frac{4\symCount_0 + 2\symCount_4 + 3\symCount_8+\symCount_{10}}{\symZEstimate_2} + \frac{\symCount_0-\symCount_{10} -\symNumReg}{1-\symZEstimate_2} + \frac{\symCount_{4} +\symCount_{10}}{1+\symZEstimate_2} = 0
      \\
       & \Rightarrow \symAlpha \symZEstimate_2^2 +\symBeta \symZEstimate_2 -\symGamma = 0
      \Rightarrow \symZEstimate_2 = \textstyle
      \frac{
        \sqrt{\symBeta^2 + 4\symAlpha \symGamma} - \symBeta
      }{
        2 \symAlpha
      }.
    \end{align*}
    Using the invariance property of \ac{ML} estimators we can estimate $\symZ_0 = \symZ_2^4$ by $\symZEstimate_0 = \symZEstimate_2^4$, which completes the proof.
  \end{proof}

  \begin{lemma}
    \label{lem:large_range_estimator}
    Assume $\symNumReg$ independent and identically distributed values $\symRegister_\symRegAddr$ which take the values from $\lbrace 4\symMaxUpdateValMax, 4\symMaxUpdateValMax+1, 4\symMaxUpdateValMax+2, 4\symMaxUpdateValMax+3 \rbrace$ with probabilities (cf. \eqref{equ:register_pmf} for $\symBase=2$ and $\symNumExtraBits=2$)
    \begin{align*}
       & \symProbability(\symRegister_\symRegAddr = 4\symMaxUpdateValMax)= (1-\symZ_{\symMaxUpdateValMax-1})\symZ_{\symMaxUpdateValMax-1}\symZ_{\symMaxUpdateValMax-2}
      =(1-\symZ_{\symMaxUpdateValMax-1})\symZ_{\symMaxUpdateValMax-1}^3,
      \\
       & \symProbability(\symRegister_\symRegAddr = 4\symMaxUpdateValMax+1)= (1-\symZ_{\symMaxUpdateValMax-1})\symZ_{\symMaxUpdateValMax-1}(1-\symZ_{\symMaxUpdateValMax-2})
      \\
       & \quad = (1-\symZ_{\symMaxUpdateValMax-1})^2\symZ_{\symMaxUpdateValMax-1}(1+\symZ_{\symMaxUpdateValMax-1}),
      \\
       & \symProbability(\symRegister_\symRegAddr = 4\symMaxUpdateValMax+2) = (1-\symZ_{\symMaxUpdateValMax-1})^2\symZ_{\symMaxUpdateValMax-2}=(1-\symZ_{\symMaxUpdateValMax-1})^2\symZ_{\symMaxUpdateValMax-1}^2,
      \\
       & \symProbability(\symRegister_\symRegAddr = 4\symMaxUpdateValMax+3) =(1-\symZ_{\symMaxUpdateValMax-1})^2(1-\symZ_{\symMaxUpdateValMax-2})
      \\
       & \quad
      =(1-\symZ_{\symMaxUpdateValMax-1})^3(1+\symZ_{\symMaxUpdateValMax-1}),
    \end{align*}
    respectively. $\symZ_\symMaxUpdateVal$ is defined as $\symZ_\symMaxUpdateVal := e^{-\frac{\symCardinality}{\symNumReg 2^{\symMaxUpdateVal}}}$. If the corresponding observed frequencies are $\symCount_{ 4\symMaxUpdateValMax}, \symCount_{ 4\symMaxUpdateValMax+1}, \symCount_{ 4\symMaxUpdateValMax+2}, \symCount_{ 4\symMaxUpdateValMax+3}$, the \ac{ML} estimator
    for $\symZ_\symMaxUpdateValMax = e^{-\frac{\symCardinality}{\symNumReg 2^\symMaxUpdateValMax}}$ is given by
    \begin{equation*}
      \symZEstimate_\symMaxUpdateValMax = \textstyle
      \sqrt{
        \frac{
          \sqrt{\symBeta^2 + 4\symAlpha \symGamma} - \symBeta
        }{
          2 \symAlpha
        }}
    \end{equation*}
    with $\symAlpha$, $\symBeta$, $\symGamma$
    \begin{align*}
      \symAlpha & := \symNumReg + 3(\symCount_{4\symMaxUpdateValMax} + \symCount_{4\symMaxUpdateValMax+1} + \symCount_{4\symMaxUpdateValMax+2} +\symCount_{4\symMaxUpdateValMax+3}), \\
      \symBeta  & := \symCount_{4\symMaxUpdateValMax} + \symCount_{4\symMaxUpdateValMax+1} + 2\symCount_{4\symMaxUpdateValMax+2} +2\symCount_{4\symMaxUpdateValMax+3},               \\
      \symGamma & := \symNumReg + 2\symCount_{4\symMaxUpdateValMax} + \symCount_{4\symMaxUpdateValMax+2} - \symCount_{4\symMaxUpdateValMax+3}.
    \end{align*}
  \end{lemma}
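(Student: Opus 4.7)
The plan is to mirror the structure of the proof of \cref{lem:small_range_estimator} as closely as possible, since the setup is analogous but with the four \emph{largest} register states and with probabilities expressed naturally in terms of the "tail" variable. The convenient reparametrization here is $\symY := \symZ_{\symMaxUpdateValMax-1} = \symZ_\symMaxUpdateValMax^{2}$, which also gives $\symZ_{\symMaxUpdateValMax-2}=\symY^{2}$. Under this substitution the four stated probabilities become polynomials in $\symY$:
\begin{align*}
\symProbability(\symRegister_\symRegAddr = 4\symMaxUpdateValMax)   &= (1-\symY)\symY^{3}, &
\symProbability(\symRegister_\symRegAddr = 4\symMaxUpdateValMax+1) &= (1-\symY)^{2}\symY(1+\symY), \\
\symProbability(\symRegister_\symRegAddr = 4\symMaxUpdateValMax+2) &= (1-\symY)^{2}\symY^{2}, &
\symProbability(\symRegister_\symRegAddr = 4\symMaxUpdateValMax+3) &= (1-\symY)^{3}(1+\symY).
\end{align*}

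First I would handle the complementary event. A register is less than $4\symMaxUpdateValMax$ exactly when no update of value $\geq \symMaxUpdateValMax$ has occurred, whose probability is $\prod_{\symUpdateVal\geq \symMaxUpdateValMax}\symZ_\symUpdateVal = \exp(-\frac{\symCardinality}{\symNumReg}\sum_{\symUpdateVal\geq \symMaxUpdateValMax}2^{-\symUpdateVal}) = \symZ_{\symMaxUpdateValMax-1} = \symY$. A quick check that the four displayed probabilities sum to $1-\symY$ confirms this. The likelihood is then
\begin{equation*}
\symLikelihood = \symY^{\symNumReg-\symCount_{4\symMaxUpdateValMax}-\symCount_{4\symMaxUpdateValMax+1}-\symCount_{4\symMaxUpdateValMax+2}-\symCount_{4\symMaxUpdateValMax+3}}\prod_{\symIndexJ=0}^{3}\symProbability(\symRegister_\symRegAddr = 4\symMaxUpdateValMax+\symIndexJ)^{\symCount_{4\symMaxUpdateValMax+\symIndexJ}}.
\end{equation*}

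Next I would collect the exponents of the three factors $\symY$, $1-\symY$, and $1+\symY$ in $\ln\symLikelihood$. A careful grouping yields
\begin{equation*}
\ln\symLikelihood = (\symNumReg + 2\symCount_{4\symMaxUpdateValMax}+\symCount_{4\symMaxUpdateValMax+2}-\symCount_{4\symMaxUpdateValMax+3})\ln\symY + (\symCount_{4\symMaxUpdateValMax}+2\symCount_{4\symMaxUpdateValMax+1}+2\symCount_{4\symMaxUpdateValMax+2}+3\symCount_{4\symMaxUpdateValMax+3})\ln(1-\symY) + (\symCount_{4\symMaxUpdateValMax+1}+\symCount_{4\symMaxUpdateValMax+3})\ln(1+\symY),
\end{equation*}
so setting $\partial\ln\symLikelihood/\partial\symY=0$ and multiplying by $\symY(1-\symY^{2})$ produces a quadratic in $\symY$ whose coefficients, after cancelling terms, are exactly $-\symAlpha$, $-\symBeta$, and $+\symGamma$ as defined in the statement. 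The positive root is $\symY = (\sqrt{\symBeta^{2}+4\symAlpha\symGamma}-\symBeta)/(2\symAlpha)$.

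Finally, by the invariance of \ac{ML} estimation under reparametrization, $\symZEstimate_\symMaxUpdateValMax = \sqrt{\symY}$, which yields the claimed closed form. The only nontrivial step is the bookkeeping in grouping the log-likelihood terms so that the $\hat{\symY}^{2}$, $\hat{\symY}$, and constant coefficients of the quadratic match $-\symAlpha$, $-\symBeta$, $+\symGamma$ exactly; everything else is routine algebra parallel to the small-range proof.
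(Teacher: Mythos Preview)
Your proposal is correct and follows essentially the same approach as the paper: reparametrize by $\symZ_{\symMaxUpdateValMax-1}$, identify the complementary probability $\symProbability(\symRegister_\symRegAddr<4\symMaxUpdateValMax)=\symZ_{\symMaxUpdateValMax-1}$, group the log-likelihood into the three factors $\ln\symZ_{\symMaxUpdateValMax-1}$, $\ln(1-\symZ_{\symMaxUpdateValMax-1})$, $\ln(1+\symZ_{\symMaxUpdateValMax-1})$ with the stated exponents, solve the resulting quadratic $\symAlpha\hat\symY^{2}+\symBeta\hat\symY-\symGamma=0$, and invoke ML invariance for $\symZEstimate_\symMaxUpdateValMax=\sqrt{\hat\symY}$. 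The only cosmetic difference is your naming $\symY$ for $\symZ_{\symMaxUpdateValMax-1}$ and writing the quadratic coefficients as $(-\symAlpha,-\symBeta,+\symGamma)$ rather than $(\symAlpha,\symBeta,-\symGamma)$.
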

  \begin{proof}
    Together with $\symProbability(\symRegister_\symRegAddr < 4 \symMaxUpdateValMax) = 1 - \symProbability(\symRegister_\symRegAddr = 4 \symMaxUpdateValMax) -\symProbability(\symRegister_\symRegAddr = 4 \symMaxUpdateValMax+1)-\symProbability(\symRegister_\symRegAddr = 4 \symMaxUpdateValMax+2)-\symProbability(\symRegister_\symRegAddr = 4 \symMaxUpdateValMax+3) = \symZ_{\symMaxUpdateValMax-1}$ we can write the likelihood function $\symLikelihood$ of the $\symNumReg$ registers as
    \begin{align*}
       & \symLikelihood
      =
      \symProbability(\symRegister_\symRegAddr = 4 \symMaxUpdateValMax)^{ \symCount_{ 4\symMaxUpdateValMax}}
      \symProbability(\symRegister_\symRegAddr = 4 \symMaxUpdateValMax+1)^{ \symCount_{ 4\symMaxUpdateValMax+1}}
      \symProbability(\symRegister_\symRegAddr = 4 \symMaxUpdateValMax+2)^{ \symCount_{ 4\symMaxUpdateValMax+2}}
      \cdot
      \\
       & \quad\cdot
      \symProbability(\symRegister_\symRegAddr = 4 \symMaxUpdateValMax+3)^{ \symCount_{ 4\symMaxUpdateValMax+3}}
      \symProbability(\symRegister_\symRegAddr < 4 \symMaxUpdateValMax)^{\symNumReg-\symCount_{ 4\symMaxUpdateValMax}-\symCount_{ 4\symMaxUpdateValMax+1}-\symCount_{ 4\symMaxUpdateValMax+2}-\symCount_{ 4\symMaxUpdateValMax+3}}
      \\
       & =
      \symZ_{\symMaxUpdateValMax-1}^{\symNumReg + 2\symCount_{4\symMaxUpdateValMax} + \symCount_{4\symMaxUpdateValMax+2}-\symCount_{4\symMaxUpdateValMax+3}}
      (1-\symZ_{\symMaxUpdateValMax-1})^{\symCount_{4\symMaxUpdateValMax}+2\symCount_{4\symMaxUpdateValMax+1}+2\symCount_{4\symMaxUpdateValMax+2}+3\symCount_{4\symMaxUpdateValMax+3}}
      \cdot
      \\
       & \quad \cdot
      (1+\symZ_{\symMaxUpdateValMax-1})^{\symCount_{4\symMaxUpdateValMax+1} +\symCount_{4\symMaxUpdateValMax+3}}
    \end{align*}
    and the log-likelihood function as
    \begin{multline*}
      \ln \symLikelihood = (\symNumReg + 2\symCount_{4\symMaxUpdateValMax} + \symCount_{4\symMaxUpdateValMax+2}-\symCount_{4\symMaxUpdateValMax+3}) \ln(\symZ_{\symMaxUpdateValMax-1})
      \\
      + (\symCount_{4\symMaxUpdateValMax}+2\symCount_{4\symMaxUpdateValMax+1}+2\symCount_{4\symMaxUpdateValMax+2}+3\symCount_{4\symMaxUpdateValMax+3}) \ln(1-\symZ_{\symMaxUpdateValMax-1})
      \\
      + (\symCount_{4\symMaxUpdateValMax+1} +\symCount_{4\symMaxUpdateValMax+3})\ln(1+\symZ_{\symMaxUpdateValMax-1}).
    \end{multline*}
    The \ac{ML} estimate $\symZEstimate_{\symMaxUpdateValMax-1}$ for $\symZ_{\symMaxUpdateValMax-1}$ can be found by setting the first derivative
    \begin{multline*}
      (\ln \symLikelihood)' = \frac{\symNumReg + 2\symCount_{4\symMaxUpdateValMax} + \symCount_{4\symMaxUpdateValMax+2}-\symCount_{4\symMaxUpdateValMax+3}}{\symZ_{\symMaxUpdateValMax-1}}
      \\
      + \frac{\symCount_{4\symMaxUpdateValMax}+2\symCount_{4\symMaxUpdateValMax+1}+2\symCount_{4\symMaxUpdateValMax+2}+3\symCount_{4\symMaxUpdateValMax+3}}{\symZ_{\symMaxUpdateValMax-1}-1} + \frac{\symCount_{4\symMaxUpdateValMax+1} +\symCount_{4\symMaxUpdateValMax+3}}{1+\symZ_{\symMaxUpdateValMax-1}}
    \end{multline*}
    equal to 0
    \begin{align*}
       & \frac{\symNumReg + 2\symCount_{4\symMaxUpdateValMax} + \symCount_{4\symMaxUpdateValMax+2}-\symCount_{4\symMaxUpdateValMax+3}}{\symZEstimate_{\symMaxUpdateValMax-1}}
      \\
       & \quad + \frac{\symCount_{4\symMaxUpdateValMax}+2\symCount_{4\symMaxUpdateValMax+1}+2\symCount_{4\symMaxUpdateValMax+2}+3\symCount_{4\symMaxUpdateValMax+3}}{\symZEstimate_{\symMaxUpdateValMax-1}-1} + \frac{\symCount_{4\symMaxUpdateValMax+1} +\symCount_{4\symMaxUpdateValMax+3}}{1+\symZEstimate_{\symMaxUpdateValMax-1}}=0
      \\
       & \Rightarrow \symAlpha \symZEstimate_{\symMaxUpdateValMax-1}^2 +\symBeta \symZEstimate_{\symMaxUpdateValMax-1} -\symGamma = 0
      \Rightarrow \symZEstimate_{\symMaxUpdateValMax-1} = \textstyle
      \frac{
        \sqrt{\symBeta^2 + 4\symAlpha \symGamma} - \symBeta
      }{
        2 \symAlpha
      }.
    \end{align*}
    Using the invariance property of \ac{ML} estimators we can estimate $\symZ_{\symMaxUpdateValMax} = \sqrt{\symZ_{\symMaxUpdateValMax-1}}$ by $\symZEstimate_{\symMaxUpdateValMax} = \sqrt{\symZEstimate_{\symMaxUpdateValMax-1}}$, which completes the proof.
  \end{proof}

  \begin{lemma}
    \label{lem:martingale_contrib}
    The probability that a register $\symRegister_\symRegAddr$ of an \ac{ULL} with $\symNumReg$ registers is changed with the next new distinct element is given by $\symRegMartingale(\symRegister_\symRegAddr)$ where
    \begin{equation*}
      \symRegMartingale(\symRegister) =
      \begin{cases}
        \frac{1}{\symNumReg}                                                            & \symRegister = 0,                                                                                                              \\
        \frac{1}{2\symNumReg}                                                           & \symRegister = 4,                                                                                                              \\
        \frac{3}{4\symNumReg}                                                           & \symRegister = 8,                                                                                                              \\
        \frac{1}{4\symNumReg}                                                           & \symRegister = 10,                                                                                                             \\
        \frac{7 -2\symIndexBit_1 - 4\symIndexBit_2}{2^\symMaxUpdateVal\symNumReg}       & \symRegister = 4\symMaxUpdateVal + \langle\symIndexBit_1\symIndexBit_2\rangle_2, 3\leq \symMaxUpdateVal < \symMaxUpdateValMax, \\
        \frac{3 - \symIndexBit_1 -2\symIndexBit_2}{2^{\symMaxUpdateValMax-1}\symNumReg} & \symRegister = 4\symMaxUpdateValMax + \langle\symIndexBit_1\symIndexBit_2\rangle_2.
      \end{cases}
    \end{equation*}
  \end{lemma}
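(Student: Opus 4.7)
The plan is to condition on the hash of the incoming (distinct) element, which uniformly selects one of the $\symNumReg$ registers and, independently, yields an update value $\symUpdateVal$ distributed like the geometric distribution \eqref{equ:geometric} but truncated to $[1,\symMaxUpdateValMax]$. Concretely, from \cref{alg:insertion} with a 64-bit uniform hash one checks that $\symProbability(\symUpdateVal=\symUpdateValOther)=2^{-\symUpdateValOther}$ for $1\le\symUpdateValOther<\symMaxUpdateValMax$ and $\symProbability(\symUpdateVal=\symMaxUpdateValMax)=2^{-(\symMaxUpdateValMax-1)}$, since all hash suffixes with at least $\symMaxUpdateValMax-1$ leading zeros collapse to the maximum update value. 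Thus
\begin{equation*}
  \symRegMartingale(\symRegister)=\tfrac{1}{\symNumReg}\,\symProbability(\text{update value changes the register state}\mid\text{register has state }\symRegister).
\end{equation*}

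Next I characterize which update values alter a register whose state is $\symRegister=4\symMaxUpdateVal+\langle\symIndexBit_1\symIndexBit_2\rangle_2$. By the semantics of \cref{sec:data_structure}, the state strictly grows exactly when the incoming update value $\symUpdateValOther$ satisfies one of: (i) $\symUpdateValOther>\symMaxUpdateVal$ (it raises the maximum), (ii) $\symUpdateValOther=\symMaxUpdateVal-1$ with $\symIndexBit_1=0$ and $\symMaxUpdateVal\ge 2$ (it turns on the $\symIndexBit_1$-bit), or (iii) $\symUpdateValOther=\symMaxUpdateVal-2$ with $\symIndexBit_2=0$ and $\symMaxUpdateVal\ge 3$ (it turns on the $\symIndexBit_2$-bit). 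I plan to add up the probabilities of these disjoint events using the truncated geometric distribution above, and then split into the cases enumerated in the lemma.

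For the intermediate range $3\le\symMaxUpdateVal<\symMaxUpdateValMax$, both extra bits are meaningful and the maximum is not yet saturated, so the contribution is
\begin{equation*}
  (1-\symIndexBit_2)2^{-(\symMaxUpdateVal-2)}+(1-\symIndexBit_1)2^{-(\symMaxUpdateVal-1)}+\sum_{\symUpdateValOther=\symMaxUpdateVal+1}^{\symMaxUpdateValMax-1}2^{-\symUpdateValOther}+2^{-(\symMaxUpdateValMax-1)}=\frac{7-2\symIndexBit_1-4\symIndexBit_2}{2^{\symMaxUpdateVal}},
\end{equation*}
where the tail telescopes to $2^{-\symMaxUpdateVal}$. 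For $\symMaxUpdateVal=\symMaxUpdateValMax$ the tail sum vanishes, yielding the claimed $(3-\symIndexBit_1-2\symIndexBit_2)/2^{\symMaxUpdateValMax-1}$. The four remaining low-range cases $\symRegister\in\{0,4,8,10\}$ I will treat individually: for $\symRegister=0$ every update value causes a change, so the total probability is $1$; for $\symRegister=4$ (i.e.\ $\symMaxUpdateVal=1$) the extra bits encode nothing, and only values $\ge 2$ change the state, contributing $1/2$; for $\symRegister=8$ and $\symRegister=10$ (i.e.\ $\symMaxUpdateVal=2$, with only $\symIndexBit_1$ meaningful) analogous truncated sums give $3/4$ and $1/4$ respectively. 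Dividing by $\symNumReg$ recovers the lemma.

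The only real obstacle is bookkeeping at the two boundaries: the low-end cases where $\symMaxUpdateVal\le\symNumExtraBits$ make some extra bits spurious (so they should not be counted as potential state changes), and the high-end case $\symMaxUpdateVal=\symMaxUpdateValMax$ where the geometric tail is truncated and mass is concentrated on $\symMaxUpdateValMax$. Both are handled by carefully restricting the summation range in the step above; once that is done, the identities collapse via $\sum_{\symUpdateValOther=\symMaxUpdateVal+1}^{\symMaxUpdateValMax-1}2^{-\symUpdateValOther}+2^{-(\symMaxUpdateValMax-1)}=2^{-\symMaxUpdateVal}$ and the arithmetic is routine.
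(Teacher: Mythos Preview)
Your proposal is correct and follows essentially the same approach as the paper: factor out the $1/\symNumReg$ for register selection, use the truncated geometric distribution $\symProbability(\symUpdateVal=\symUpdateValOther)=2^{-\symUpdateValOther}$ for $\symUpdateValOther<\symMaxUpdateValMax$ and $2^{-(\symMaxUpdateValMax-1)}$ for $\symUpdateValOther=\symMaxUpdateValMax$, characterize the disjoint update values that modify the state, and handle the boundary cases $\symRegister\in\{0,4,8,10\}$ and $\symMaxUpdateVal=\symMaxUpdateValMax$ separately. The paper's proof is organized identically, only slightly more terse in the case analysis.
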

  \begin{proof}
    The probability that register $\symRegister_\symRegAddr$ is changed is given by the probability that the register is selected which is $\frac{1}{\symNumReg}$ multiplied by the probability that the register is changed by the update value $\symUpdateVal$ whose \ac{PMF} is given by
    \begin{equation*}
      \symDensityUpdate(\symUpdateVal) =
      \begin{cases}
        2^{-\symUpdateVal}        & 1 \leq \symUpdateVal < \symMaxUpdateValMax, \\
        2^{1-\symMaxUpdateValMax} & \symUpdateVal = \symMaxUpdateValMax.
      \end{cases}
    \end{equation*}
    This formula corresponds to \eqref{equ:geometric} with $\symBase=2$ and truncating the maximum update value at $\symMaxUpdateValMax$. If the register is still in its initial state ($\symRegister_\symRegAddr=0$), any update value will change the register and therefore $\symRegMartingale(\symRegister_\symRegAddr) = \frac{1}{\symNumReg}$. For $\symRegister_\symRegAddr=4$, any update value $\symUpdateVal\geq 2$ will change the register which results in $\symRegMartingale(\symRegister_\symRegAddr) = \frac{1}{2\symNumReg}$.
    If $\symRegister_\symRegAddr=8$, update values $\symUpdateVal \geq 3$ or $\symUpdateVal=1$ will change the register leading to $\symRegMartingale(\symRegister_\symRegAddr) = \frac{1}{\symNumReg}(\frac{1}{2}+\frac{1}{4}) = \frac{3}{4\symNumReg}$.
    If $\symRegister_\symRegAddr=10$, update values $\symUpdateVal \geq 3$ will change the register leading to $\symRegMartingale(\symRegister_\symRegAddr) = \frac{1}{4\symNumReg}$.
    If $\symRegister_\symRegAddr=4\symMaxUpdateVal + \langle\symIndexBit_1\symIndexBit_2\rangle_2$ with $3\leq \symMaxUpdateVal < \symMaxUpdateValMax$, update values $\symUpdateVal$ satisfying $(\symUpdateVal > \symMaxUpdateVal )\vee ((\symUpdateVal=\symMaxUpdateVal-1) \wedge (\symIndexBit_1= 0))\vee ((\symUpdateVal=\symMaxUpdateVal-2) \wedge (\symIndexBit_2 = 0))$ will change the register leading to $\symRegMartingale(\symRegister_\symRegAddr) = \frac{1}{\symNumReg}(\frac{1}{2^\symMaxUpdateVal} + \frac{1-\symIndexBit_1}{2^{\symMaxUpdateVal-1}} + \frac{1-\symIndexBit_2}{2^{\symMaxUpdateVal-2}} ) = \frac{7 -2\symIndexBit_1 - 4\symIndexBit_2}{2^\symMaxUpdateVal\symNumReg}$.
    If $\symRegister_\symRegAddr=4\symMaxUpdateValMax + \langle\symIndexBit_1\symIndexBit_2\rangle_2$, update values $\symUpdateVal$ satisfying $((\symUpdateVal=\symMaxUpdateValMax-1) \wedge (\symIndexBit_1= 0))\vee ((\symUpdateVal=\symMaxUpdateValMax-2) \wedge (\symIndexBit_2 = 0))$ will change the register leading to $\symRegMartingale(\symRegister_\symRegAddr) = \frac{1}{\symNumReg}(\frac{1-\symIndexBit_1}{2^{\symMaxUpdateValMax-1}} + \frac{1-\symIndexBit_2}{2^{\symMaxUpdateValMax-2}} ) = \frac{3 - \symIndexBit_1 -2\symIndexBit_2}{2^{\symMaxUpdateValMax-1}\symNumReg}$.
  \end{proof}

  \begin{lemma}
    \label{lem:mvp_martingale}
    The relative variance of the martingale estimator \cite{Ting2014, Cohen2015, Pettie2021a} applied to the data structure described in \cref{sec:data_structure} can be approximated by
    \begin{equation*}
      \symVariance(\symCardinalityEstimator/\symCardinality)
      \approx
      \frac{\ln(\symBase)(1+\frac{\symBase^{-\symNumExtraBits}}{\symBase-1})}{2\symNumReg}
    \end{equation*}
    if the simplified model as described in \cref{sec:stat_model} can be assumed and the distinct count $\symCardinality$ as well as the number of registers $\symNumReg$ are sufficiently large.
  \end{lemma}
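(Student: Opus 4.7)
The plan is to combine a standard continuous-time analysis of the martingale (\acs*{HIP}) estimator \cite{Ting2014,Cohen2015,Pettie2021a} with the expected register statistic already computed for the \acs*{GRA} estimator in \cref{lem:expectation_gra_reg_contrib}, specialized to $\symGRA = 1$. Interpret the distinct-counting process in continuous time indexed by $\symCardinality$. Between state changes the martingale estimator $\symCardinalityEstimatorMartingale$ stays constant, and at the $k$-th state change (occurring at cardinality $\symCardinality_k$) it jumps by $1/\symStateChangeProbability_k$, where $\symStateChangeProbability_k = \symStateChangeProbability(\symRegister_0,\ldots,\symRegister_{\symNumReg-1})$ is the current state-change probability. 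A short conditional moment calculation gives $\symExpectation(d\symCardinalityEstimatorMartingale\mid\text{state})=d\symCardinality$ (unbiasedness) and $\symExpectation((d\symCardinalityEstimatorMartingale)^2\mid\text{state})=d\symCardinality/\symStateChangeProbability(\symCardinality)+\symComplexity((d\symCardinality)^2)$. Integrating and using that for large $\symNumReg$ the random $\symStateChangeProbability(\symCardinality)$ concentrates around its expectation $\bar\symStateChangeProbability(\symCardinality):=\symExpectation(\symStateChangeProbability(\symCardinality))=\symNumReg\,\symExpectation(\symRegMartingale(\symRegister))$ yields
\begin{equation*}
\symVariance(\symCardinalityEstimatorMartingale/\symCardinality)\ \approx\ \frac{1}{\symCardinality^2}\int_0^\symCardinality \frac{d\symX}{\bar\symStateChangeProbability(\symX)}.
\end{equation*}

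Next, compute $\bar\symStateChangeProbability(\symCardinality)$ in closed form. Under the simplified model, updates with any integer value $\symUpdateVal$ arrive at a specific register at rate $(\symBase-1)/(\symNumReg\symBase^{\symUpdateVal})$ per unit $\symCardinality$, and a register $\symRegister = \symMaxUpdateVal 2^\symNumExtraBits + \langle \symIndexBit_1\ldots\symIndexBit_\symNumExtraBits\rangle_2$ changes precisely under a new update with value $\symUpdateVal>\symMaxUpdateVal$ or $\symUpdateVal=\symMaxUpdateVal-\symIndexS$ with $\symIndexBit_\symIndexS=0$. Summing the corresponding geometric series gives
\begin{equation*}
\symRegMartingale(\symRegister)\ =\ \frac{\symBase-1}{\symNumReg}\,\symBase^{-\symMaxUpdateVal}\!\left(\frac{1}{\symBase-1}+\sum_{\symIndexS=1}^{\symNumExtraBits}(1-\symIndexBit_\symIndexS)\symBase^{\symIndexS}\right),
\end{equation*}
which agrees with the enumerated formulas in \eqref{equ:martingale_contrib}. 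The bracketed factor is exactly the unnormalized \acs*{GRA} register statistic at $\symGRA=1$. Invoking \cref{lem:expectation_gra_reg_contrib} with $\symGRA=1$ (so $\symGammaFunc(\symGRA)=1$ and the normalization becomes $(\symBase-1+\symBase^{-\symNumExtraBits})\ln\symBase$) shows that
\begin{equation*}
\symExpectation\!\left(\symBase^{-\symMaxUpdateVal}\!\left(\frac{1}{\symBase-1}+\sum_{\symIndexS=1}^{\symNumExtraBits}(1-\symIndexBit_\symIndexS)\symBase^{\symIndexS}\right)\right)\ \approx\ \frac{\symNumReg}{\symCardinality(\symBase-1+\symBase^{-\symNumExtraBits})\ln\symBase},
\end{equation*}
whence $\bar\symStateChangeProbability(\symCardinality)=\symNumReg\,\symExpectation(\symRegMartingale(\symRegister)) \approx \symNumReg / \bigl(\symCardinality\ln(\symBase)(1+\symBase^{-\symNumExtraBits}/(\symBase-1))\bigr)$.

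Substituting into the variance integral gives
\begin{equation*}
\symVariance(\symCardinalityEstimatorMartingale/\symCardinality)\ \approx\ \frac{1}{\symCardinality^2}\int_0^\symCardinality \frac{\symX\,\ln(\symBase)\,(1+\tfrac{\symBase^{-\symNumExtraBits}}{\symBase-1})}{\symNumReg}\,d\symX\ =\ \frac{\ln(\symBase)(1+\tfrac{\symBase^{-\symNumExtraBits}}{\symBase-1})}{2\symNumReg},
\end{equation*}
which is the claim, and multiplying by $\symNumReg(\symBitsForMax+\symNumExtraBits)$ recovers the \acs*{MVP} formula \eqref{equ:mvp_martingale}.

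The main technical obstacles are the two approximations that are hidden in $\approx$: (i) replacing the random $\symStateChangeProbability(\symX)$ by its expectation $\bar\symStateChangeProbability(\symX)$ inside the integral, which requires controlling the relative fluctuations of $\symStateChangeProbability(\symX)$ and is the source of the ``sufficiently large $\symNumReg$'' hypothesis; and (ii) the periodic-amplitude approximation from \cref{lem:approximation} inherited through \cref{lem:expectation_gra_reg_contrib}, which needs $\symCardinality$ not to be so small that boundary effects from the truncation to $\symUpdateVal\leq\symMaxUpdateValMax$ and $\symUpdateVal\geq 1$ in the non-simplified model leak in. Everything else is routine algebra.
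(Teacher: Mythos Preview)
Your proof is correct and reaches the same conclusion as the paper, but you compute the expected state-change probability $\bar\symStateChangeProbability(\symCardinality)$ by a different route. The paper obtains $\symExpectation(\symStateChangeProbability\mid\symCardinality)$ directly: it argues that a register is changed by update value $\symUpdateVal$ precisely when neither $\symUpdateEvent_\symUpdateVal$ nor any of $\symUpdateEvent_{\symUpdateVal+\symNumExtraBits+1},\symUpdateEvent_{\symUpdateVal+\symNumExtraBits+2},\ldots$ has occurred, giving probability $\symZ_\symUpdateVal^{1+\symBase^{-\symNumExtraBits}/(\symBase-1)}$, then sums $\sum_\symUpdateVal\symDensityUpdate(\symUpdateVal)\symZ_\symUpdateVal^{1+\symBase^{-\symNumExtraBits}/(\symBase-1)}$ and applies \cref{lem:approximation} afresh. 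You instead observe that the register-wise change probability $\symRegMartingale(\symRegister)$ is, up to a scalar, exactly the \acs*{GRA} contribution $\symRegContrib(\symRegister)$ at $\symGRA=1$, and harvest $\symExpectation(\symRegMartingale(\symRegister))$ from \cref{lem:expectation_gra_reg_contrib}. This is a nice economy: it avoids redoing the telescoping-sum computation and makes the structural link between the martingale and \acs*{GRA} estimators explicit. The paper's version is a bit more self-contained and sidesteps having to unwind the normalization constant $\symSomeConstant$ in \cref{lem:expectation_gra_reg_contrib}. For the variance itself, your continuous-time integral $\frac{1}{\symCardinality^2}\int_0^\symCardinality d\symX/\bar\symStateChangeProbability(\symX)$ and the paper's discrete $\frac{1}{\symCardinality^2}\bigl(\sum_{\symCardinality'=1}^\symCardinality\symExpectation(1/\symStateChangeProbability\mid\symCardinality')-\symCardinality\bigr)$ are equivalent to leading order; both then invoke the delta method ($\symExpectation(1/\symStateChangeProbability)\approx 1/\symExpectation(\symStateChangeProbability)$ for large $\symNumReg$) in the same way.
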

  \begin{proof}
    According to \cite{Pettie2021a} the variance of the relative estimation error can be expressed as
    \begin{equation*}
      \symVariance(\symCardinalityEstimator/\symCardinality) =
      \frac{1}{\symCardinality^2}
      \left(
      \left(\sum_{\symCardinality'=1}^\symCardinality \symExpectation(1/\symStateChangeProbability(\symRegister_0, \ldots, \symRegister_{\symNumReg-1})\vert \symCardinality')\right)
      -
      \symCardinality
      \right).
    \end{equation*}
    $\symStateChangeProbability(\symRegister_0, \ldots, \symRegister_{\symNumReg-1})$ is the probability that a new distinct element triggers a state change, if the current state is given by registers $\symRegister_0, \ldots, \symRegister_{\symNumReg-1}$.
    The probability that a register is changed by some update value $\symUpdateVal$, is given by the probability that the register was never updated with the same value or a value greater than $\symUpdateVal+\symNumExtraBits$ before. The probability that this is the case after adding $\symCardinality$ distinct elements is given by $\symProbability(\overline\symUpdateEvent_\symUpdateVal \wedge \bigwedge_{\symMaxUpdateVal=\symUpdateVal+\symNumExtraBits+1}^\infty \overline\symUpdateEvent_\symMaxUpdateVal ) = \symZ_\symUpdateVal^{1+\frac{\symBase^{-\symNumExtraBits}}{\symBase-1}}$ with $\symZ_\symUpdateVal=e^{-\frac{\symCardinality(\symBase-1)}{\symNumReg\symBase^{\symUpdateVal}}}$. Therefore, the probability of
    the expectation of $\symStateChangeProbability(\symRegister_0, \ldots, \symRegister_{\symNumReg-1})$ conditioned on the distinct count $\symCardinality$ is
    \begin{align*}
       & \symExpectation(\symStateChangeProbability(\symRegister_0, \ldots, \symRegister_{\symNumReg-1}) \vert \symCardinality)
      \approx
      \sum_{\symUpdateVal=1}^\infty
      \symDensityUpdate(\symUpdateVal)
      \symZ_\symUpdateVal^{1+\frac{\symBase^{-\symNumExtraBits}}{\symBase-1}}
      \\
       & \approx
      \sum_{\symUpdateVal=1}^\infty
      (\symBase - 1)\symBase^{-\symUpdateVal}
      \symZ_\symUpdateVal^{1+\frac{\symBase^{-\symNumExtraBits}}{\symBase-1}}
      \stackrel{\symCardinality\rightarrow\infty}{\approx}
      \sum_{\symUpdateVal=-\infty}^\infty
      (\symBase - 1)\symBase^{-\symUpdateVal}
      \symZ_\symUpdateVal^{1+\frac{\symBase^{-\symNumExtraBits}}{\symBase-1}}
      \\
       & =
      -\frac{\symNumReg}{\symCardinality}
      \sum_{\symUpdateVal=-\infty}^\infty
      \symZ_\symUpdateVal^{1+\frac{\symBase^{-\symNumExtraBits}}{\symBase-1}}
      \ln \symZ_\symUpdateVal
      \approx
      \frac{\symNumReg}{\symCardinality}
      \frac{1}{\ln \symBase}\int_0^1 \symZ^{\frac{\symBase^{-\symNumExtraBits}}{\symBase-1}} d\symZ
      \\
       & =
      \frac{\symNumReg}{\symCardinality}
      \frac{1}{\ln \symBase}\frac{1}{1 + \frac{\symBase^{-\symNumExtraBits}}{\symBase-1}}.
    \end{align*}
    Here we used \cref{lem:approximation} to approximate the sum by an integral. Since the state change probability is the sum of the individual register change probabilities, the delta method can be used for $\symNumReg\rightarrow \infty$ to approximate
    $\symExpectation(1/\symStateChangeProbability(\symRegister_0, \ldots, \symRegister_{\symNumReg-1})\vert \symCardinality)$ by
    \begin{multline*}
      \symExpectation(1/\symStateChangeProbability(\symRegister_0, \ldots, \symRegister_{\symNumReg-1})\vert \symCardinality)
      \approx
      \frac{1}{\symExpectation(\symStateChangeProbability(\symRegister_0, \ldots, \symRegister_{\symNumReg-1}) \vert \symCardinality)}
      \\
      \approx
      \frac{\symCardinality}{\symNumReg}
      \ln(\symBase) \left(1 + \frac{\symBase^{-\symNumExtraBits}}{\symBase-1}\right)
    \end{multline*}
    which finally leads to
    \begin{multline*}
      \symVariance(\symCardinalityEstimator/\symCardinality)
      \stackrel{\symCardinality,\symNumReg\rightarrow\infty}{\approx}
      \frac{1}{\symCardinality^2}
      \left(
      \left(\sum_{\symCardinality'=1}^\symCardinality \frac{\symCardinality'}{\symNumReg}
        \ln(\symBase) \left(1 + \frac{\symBase^{-\symNumExtraBits}}{\symBase-1}\right)\right)
      -
      \symCardinality
      \right)
      \\
      =
      \frac{1}{\symCardinality^2}
      \frac{\symCardinality (\symCardinality+1)}{2\symNumReg}
      \ln(\symBase) \left(1 + \frac{\symBase^{-\symNumExtraBits}}{\symBase-1}\right)
      -
      \frac{1}{\symCardinality}
      \stackrel{\symCardinality\rightarrow\infty}{\approx}
      \frac{\ln(\symBase)(1+\frac{\symBase^{-\symNumExtraBits}}{\symBase-1})}{2\symNumReg}.
    \end{multline*}
  \end{proof}

  \section{Additional Figures}
  \label{app:figures}

  \begin{figure}[h]
    \centering
    \includegraphics[width=\linewidth]{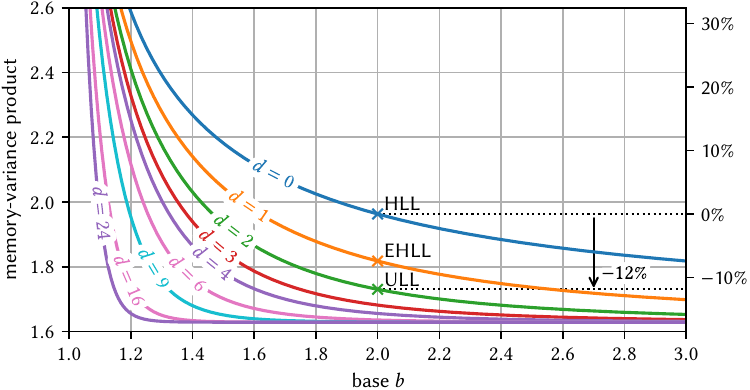}
    \caption{\boldmath The theoretical asymptotic \acf*{MVP} over the base $\symBase$ for various values of $\symNumExtraBits$ when using martingale estimation under the assumption of optimal lossless compression. The \acs*{MVP} of \acf*{ULL} is 12\% smaller than that of \acf*{HLL}. The theoretical limit of $1.63$ \cite{Pettie2021a} is reached for $\symBase^{-\symNumExtraBits}/(\symBase-1)\rightarrow 0$.}
    \label{fig:mvp_compressed_martingale}
  \end{figure}

\fi

\end{document}
\endinput